\crefname{appsec}{Appendix}{Appendices}
\theoremstyle{plain}
\newtheorem{thm}{Theorem}[section]
\newtheorem{theorem}[thm]{Theorem}
\newtheorem{lemma}[thm]{Lemma}
\newtheorem{corollary}[thm]{Corollary}
\newtheorem{claim}[thm]{Claim}
\theoremstyle{definition}
\newtheorem{definition}[thm]{Definition}
\newtheorem*{ass*}{Assumption}
\newtheorem*{assumption*}{Assumption}
\newtheorem{exercise}[thm]{Exercise}
\theoremstyle{remark}
\newtheorem{remark}[thm]{Remark}
\crefname{lem}{Lemma}{Lemmas}
\crefname{lemma}{Lemma}{Lemmas}
\crefname{thm}{Theorem}{Theorems}
\crefname{theorem}{Theorem}{Theorems}
\crefname{defn}{Definition}{Definitions}
\crefname{definition}{Definition}{Definitions}
\crefname{fact}{Fact}{Facts}
\crefname{fact}{Fact}{Facts}
\crefname{clm}{Claim}{Claims}
\crefname{claim}{Claim}{Claims}
\crefname{prop}{Proposition}{Propositions}
\crefname{proposition}{Proposition}{Propositions}
\crefname{algocf}{Algorithm}{Algorithms}
\newtheoremstyle{consequence}{}{}{\normalfont}{}{\bfseries}{.}{.5em}{#1\thmnote{ #3}}
\theoremstyle{consequence}
\newcommand{\blambda}{\boldsymbol{\lambda}}
\newcommand{\sra}{\text{\scriptsize{$\,\rightarrow\,$}}}
\newcommand{\M}{\mathcal{M}}
\newcommand{\Pinning}{\mathcal{P}}
\newcommand{\up}[1]{\mathrm{P}_{#1}^{\uparrow}}
\newcommand{\down}[1]{\mathrm{P}_{#1}^{\downarrow}}
\newcommand{\updown}[1]{\mathrm{P}_{#1}^{\wedge}}
\newcommand{\downup}[1]{\mathrm{P}_{#1}^{\vee}}
\newcommand{\Glauber}{\mathrm{P_{\textsc{gd}}}}
\newcommand{\Glaubert}{\mathrm{P^\tau_{Gl}}}
\newcommand{\HBt}{\mathrm{P_{HB}^{\tau}}}
\newcommand{\ind}{\mathbbm{1}}
\newcommand{\dif}{\,\mathrm{d}}
\newcommand{\E}{\mathbb{E}}
\newcommand{\Exp}{\E}
\newcommand{\Var}{\mathrm{Var}}
\renewcommand{\Pr}{\mathrm{Pr}}
\newcommand{\Prob}[1]{\mathrm{Pr}\left( #1 \right)}
\newcommand{\Cov}{\mathrm{Cov}}
\newcommand{\Ent}{\mathrm{Ent}}
\newcommand{\allone}{\mathbf{1}}
\newcommand{\norm}[1]{\left\lVert #1 \right\rVert}
\newcommand{\ip}[2]{\left\langle #1 , #2 \right\rangle}
\newcommand{\grad}{\nabla}
\newcommand{\diag}{\mathrm{diag}}
\newcommand{\poly}{\mathrm{poly}}
\newcommand{\sgn}{\mathrm{sgn}}
\newcommand{\est}{\mathsf{EST}}
\newcommand{\fpras}{\mathsf{FPRAS}}
\newcommand{\fptas}{\mathsf{FPTAS}}
\newcommand{\eps}{\varepsilon}
\def\cI{\mathcal{I}}
\def\cF{\mathcal{F}}
\newcommand{\N}{\mathbb{N}}
\newcommand{\R}{\mathbb{R}}
\newcommand{\C}{\mathbb{C}}
\newcommand{\CC}{\mathcal{C}}
\newcommand{\DD}{\mathcal{D}}
\newcommand{\II}{\mathcal{I}}
\newcommand{\PP}{\mathcal{P}}
\newcommand{\Dirichlet}{\mathcal{E}}
\newcommand{\ra}{\rightarrow}
\newcommand{\Trelax}{T_{\mathrm{relax}}}
\newcommand{\Tmix}{T_{\mathrm{mix}}}
\title{Spectral Independence and Local-to-Global Techniques for Optimal Mixing of Markov Chains}
\author{
Zongchen Chen\thanks{School of Computer Science, Georgia Institute of Technology.} \and
 	Daniel \v{S}tefankovi\v{c}\thanks{Department of Computer Science, University of Rochester. 
	}
  	\and Eric Vigoda\thanks{Department of Computer Science, University of California, Santa Barbara. 
	}
}
\date{\today}
\begin{document}
\maketitle

\begin{abstract}
This monograph is an exposition on an exciting new technique known as spectral independence, which has been instrumental in analyzing the convergence rate of Markov Chain Monte Carlo (MCMC) algorithms.  For a high-dimensional distribution defined on labelings of the vertices of an $n$-vertex graph, the spectral independence condition, introduced by Anari, Liu, and Oveis Gharan (2020), is a bound on the maximum eigenvalue of the $n\times n$ influence matrix whose entries capture the influence between pairs of vertices (closely related to the covariance between the variables).  In the first part of the monograph, we present results showing that spectral independence (and related techniques) imply fast mixing of simple Markov chains such as the Gibbs sampler.  These proofs utilize local-to-global theorems which we will detail in this work.  
  
  We focus on two applications: the hard-core model on independent sets of a graph (which is a binary graphical model) and random bases of a matroid. We demonstrate several methods for establishing spectral independence, including the correlation decay approach introduced by Weitz (2006) and the analytic approach introduced by Barvinok (2015) using zero-freeness of the associated complex-valued partition function.  As a consequence of the techniques presented in this monograph we obtain fast mixing of the Gibbs sampler on general graphs in the so-called tree-uniqueness region, polynomial-time mixing on general graphs at the critical point for the uniqueness threshold, and polynomial-time mixing on random regular graphs beyond the uniqueness threshold.  We also utilize the local-to-global framework of spectral independence and the Trickle-Down theorem of Oppenheim (2018) to prove fast mixing of the bases-exchange walk for generating a random basis of an arbitrary matroid, which was established by Anari, Liu, Oveis Gharan and Vinzant (2019).

  Our focus in this monograph is on the analysis of the spectral gap of the associated Markov chains from a functional analysis perspective; we present proofs of the associated local-to-global theorems and the Trickle-Down Theorem from this same Markov chain perspective. The monograph is self-contained and aims to present the proofs in a unified fashion.

\end{abstract}

\thispagestyle{empty}

\newpage

\tableofcontents

\newpage

\section{Introduction}
\label{sec:introduction}

This monograph aims to present the new technique of spectral independence together with its implications on the mixing time of the single-site update Markov chain known as the Glauber dynamics.
Spectral independence is a simple, yet seemingly quite powerful technique to obtain optimal upper bounds on the convergence rate of Markov chains to its stationary distribution.  We will introduce the spectral independence technique, explain the basic properties of the associated influence matrix, and discuss connections to simplicial complexes.  We then prove that spectral independence implies fast mixing of the Glauber dynamics.

We also present the prominent result of Anari, Liu, Oveis Gharan, and Vinzant~\cite{ALOV19} which shows fast mixing of a Markov chain, known as the bases-exchange walk, for randomly sampling bases of a matroid.   This utilizes the local-to-global theorems presented for spectral independence and the Trickle-Down Theorem of Oppenheim~\cite{Opp18} to bound the spectral gap of the associated local walks.  

Our aim in this work is to present the analyses from a
Markov chain perspective, and to be as self-contained as possible.

\subsection{Setting}

Our setting is high-dimensional probability distributions defined by a graphical representation; this setting corresponds to the equilibrium distribution for \emph{spin systems} in statistical physics and \emph{undirected graphical models} in machine learning.  Let $G=(V,E)$ be a graph on $n=|V|$ vertices.  
Consider a distribution $\mu$ on a subset of $\{0,1\}^V$.  Let $\Omega \subseteq \{0,1\}^V$ be the support of $\mu$, that is, 
\[
\Omega = \left\{ \sigma\in\{0,1\}^V:\mu(\sigma)>0 \right\}.
\]
We refer to the above setting as a binary system since each vertex has an assignment in $\{0,1\}$. 
Many of the results in these notes generalize to multispin systems which are distributions defined on $\{1,\dots,q\}^V$ for an integer $q\geq 2$; we will comment on the generalization to $q\geq 3$ spins at the end of this monograph in \cref{sec:multispin}, and outside that section of the text we only consider the binary $q=2$ case.

An example to keep in mind is the so-called {\em hard-core model} that we now define.

\subsection{Running example: Hard-core Model}

The hard-core model is defined by a graph $G=(V,E)$ and an {\em activity} $\lambda>0$.  The motivation of the hard-core model in statistical physics is as an idealized model of a gas where $\lambda$ corresponds to the fugacity of the gas (the fugactiy is related to the density).
Configurations of the model are defined on independent sets of $G$: an independent set is a subset $I\subset V$ of vertices which does not contain an edge, which means that for all $\{u,v\}\in E$ either $u\notin I$ and/or $v\notin I$.  
Let $\Omega=\Omega_G$ denote the collection of independent sets of $G$ (regardless of their sizes).

For an independent set $\sigma\in\Omega$, we can view
$\sigma$ as an $n$-dimensional indicator vector in $\{0,1\}^n$ where the $i$-th coordinate is assigned 1 if the $i$-th vertex is in $\sigma$ and is assigned 0 otherwise.  Thus, 
\[  \Omega=\Omega_G=\{\sigma\in\{0,1\}^V: \forall \{v,w\}\in E, \sigma(v)+\sigma(w)\leq 1\}.
\]  
For an independent set $\sigma$, when a vertex $v\in\sigma$ (corresponding to $\sigma(v)=1$) then we often refer to $v$ as \emph{occupied}, whereas for $w\notin\sigma$ (corresponding to $\sigma(w)=0$) we refer to $w$ as \emph{unoccupied}. 
Each independent set $\sigma\in\Omega$ is assigned the following weight:
\[ w(\sigma) = \lambda^{|\sigma|},
\]
where $|\sigma|$ is the number of vertices in the independent set $\sigma$, in other words, $|\sigma|=|\{v\in V: \sigma(v)=1\}|$.
The Gibbs distribution $\mu=\mu_{G,\lambda}$ is the probability distribution on independent sets that is proportional to the weights. Specifically, for $\sigma\in\Omega$,
\[ \mu(\sigma) = \frac{\lambda^{|\sigma|}}{Z},
\]
where the  normalizing factor $Z=Z_{G,\lambda}=\sum_{\eta\in\Omega} \lambda^{|\eta|}$ is known as the \emph{partition function}.

\subsection{Glauber dynamics/Gibbs sampler}
The following Markov chain is a simple single-site update Markov chain designed to sample from the Gibbs distribution.  The chain is known as the \emph{Glauber dynamics} in the statistical physics community and as the \emph{Gibbs sampler} in the machine learning community. We begin with the definition of the Glauber dynamics for the special case of the hard-core model (the reader new to the field can further consider the case $\lambda=1$ for simplicity).

From a state $X_t\in\Omega$, the transitions $X_t\rightarrow X_{t+1}$ of the Glauber dynamics for the hard-core model are defined as follows:
\begin{enumerate}
    \item Choose a vertex $v$ uniformly at random from $V$.
    \item Let \[ X' = \begin{cases}
    X_t\cup\{v\} & \mbox{with probability }\lambda/(1+\lambda) \\
      X_t\setminus\{v\} & \mbox{with probability }1/(1+\lambda) 
      \end{cases}
      \]
     \item If $X'\in\Omega$ (i.e., $X'$ is a valid independent set) then set $X_{t+1}=X'$, 
     and otherwise set $X_{t+1}=X_t$.
\end{enumerate}

Let $P$ denote the transition matrix of the Glauber dynamics.  Note that $P$ is an $N\times N$ stochastic matrix where $N=|\Omega|$.
The Glauber dynamics is irreducible and aperiodic and hence the Glauber dynamics is ergodic.   To see that the dynamics is irreducible, let $\tau=\emptyset$ and notice that every state $\sigma\in\Omega$ can reach the empty set, i.e., $P^t(\sigma,\tau)>0$ for some $t\geq 0$, by successively removing vertices from $\sigma$, and similarly for every $\sigma'\in\Omega$, $P^{t'}(\tau,\sigma')>0$ for some $t'\geq 0$.  Hence, $P^{s}(\sigma,\sigma')>0$ for some $s\geq 0$ where this time $s=t+t'$ depends on $\sigma,\sigma'$ (though one can easily argue that there is a universal such time $s$).  To see that the Glauber dynamics is aperiodic, notice that for every state $\sigma\in\Omega$ there is a so-called self-loop, namely $P(\sigma,\sigma)>0$, since the updated vertex $v$ can have $X'(v)=X_t(v)$ with positive probability. Since the Glauber dynamics is ergodic, there is a unique stationary distribution $\mu$.  Moreover, the Gibbs distribution $\mu$ is the unique stationary distribution, since it satisfies the \emph{detailed balance} conditions, which state that for all pairs $\sigma,\sigma'\in\Omega$, $\mu(\sigma)P(\sigma,\sigma')=\mu(\sigma')P(\sigma',\sigma)$; hence, the Glauber dynamics is a {\em reversible} chain.

Before proceeding, we present the general definition of the Glauber dynamics for more general distributions.  For a distribution $\mu$ on $\Omega\subset \{0,1\}^V$, the Glauber dynamics is defined as follows.  
From a state $X_t\in\Omega$,
\begin{enumerate}
    \item Choose a vertex $v$ uniformly at random from $V$.
    \item For all $w\neq v$, let $X_{t+1}(w)=X_t(v)$.
    \item Choose $X_{t+1}(v)$ from the conditional Gibbs distribution $\mu(\sigma(v)=\cdot \mid \sigma(w)=X_{t}(w) \mbox{ for all }w\neq v)$. 
    In words, fix the spin/label at all vertices except $v$ and resample the spin/label at $v$ conditional on the fixed configuration on the rest of the vertices.
    \end{enumerate}

    The {\em mixing time} of the Glauber dynamics, denoted as $\Tmix(\eps)$ for $\eps>0$, is the minimum number of steps $t$ such that, for the worst initial state $X_0$, the distribution of $X_t$ is within total variation distance $\leq \eps$ of the stationary distribution $\mu$.
    For a pair of distributions $\mu,\omega$ on $\Omega$, denote their \emph{total variation distance} as follows:
    \[ \|\mu-\omega\|_{\mathrm{TV}} = \frac12\sum_{\sigma\in\Omega}|\mu(\sigma)-\omega(\sigma)| = \max_{S\subset\Omega} \mu(S) - \omega(S).
    \]
Then the mixing time is defined as:
\[
    \Tmix(\eps) = \min\{t:\forall X_0\in\Omega, \|\mu-P^t(X_0,\cdot)\|_{\mathrm{TV}}\leq\eps\}.
    \]
   Setting $\eps=1/4$ we will refer to $\Tmix=T_{\mathrm{mix}}(1/4)$ as the mixing time.  The choice of constant $1/4$ is somewhat arbitrary as one can ``boost'' the mixing time to obtain, for any $\eps>0$, $\Tmix(\eps)\leq \Tmix(1/4)\times\lceil\log_2(1/\eps)\rceil$.  Oftentimes, the constant $1/(2e)$ is used in place of $1/4$ in the definition of the mixing time and then the corresponding boosting has $\ln(1/\eps)$ in place of $\log_2(1/\eps)$, see~\cite[Section 4.5]{LevinPeresWilmer}.

   We say the mixing time is \emph{optimal} if $\Tmix=O(n\log{n})$ as this matches the lower bound established by Hayes and Sinclair~\cite{HayesSinclair} for any graph of constant maximum degree $\Delta$.  Note an upper bound of $\Tmix=O(n\log{n})$ yields a more general bound of $\Tmix(\eps)=O(n\log{n}\log(1/\eps))$ via the above boosting result.  However, the proofs of optimal mixing in these notes will yield a stronger upper bound of $\Tmix(\eps)=O(n\log(n/\eps))$, and hence we will state this improved bound when relevant.

In contrast to the mixing time, a weaker notion of convergence is the relaxation time.  For $N=|\Omega|$, let $\lambda_1=1> \lambda_2\geq\lambda_3\geq \cdots \geq \lambda_N>-1$ denote the eigenvalues of the transition matrix $P$ of the Glauber dynamics (recall, the Glauber dynamics is reversible, i.e., satisfies the detailed balance conditions).  Note, $\lambda_1$ corresponds to the principal eigenvector which is the stationary distribution~$\mu$.  Moreover, $\lambda_2<1$ since the Glauber dynamics is an ergodic Markov chain and hence the stationary distribution is unique.  In addition, $\lambda_N>-1$ since the Glauber dynamics is reversible and aperiodic.

The {\em relaxation time} is defined as $\Trelax:=(1-\lambda_*)^{-1}$ where $\lambda_*=\max\{\lambda_2,|\lambda_N|\}$.  The quantity $1-\lambda_*$ is known as the absolute spectral gap, see \cref{sub:spectral-gap} for further discussion. 
The relaxation time measures the decay rate of the variance of the chain with respect to the stationary distribution; this will be formally presented in~\cref{sec:MC}.  The optimal relaxation time for the Glauber dynamics is $O(n)$, whereas the optimal mixing time is $O(n\log{n})$.  An optimal mixing time of the form $O(n\log(n/\eps))$ implies an optimal relaxation time (see \cref{rem:connection-mixing-to-relax} in \cref{sub:coupling-SI}), but the reverse implication does not necessarily hold (see \cref{sub:mixingviagap} for related results).

One of the primary applications of an (approximate) sampling algorithm is for the following problem of estimating the partition function.
Consider the algorithmic problem of obtaining a {\em Fully Polynomial-time Randomized Approximation Scheme ($\fpras$)} for estimating the partition function $Z_{G,\lambda}$ of the hard-core model for a given $G=(V,E)$ and $\lambda\geq 0$; more precisely, our goal is to construct an algorithm which, given as input a graph $G=(V,E)$ and parameters $\lambda^*\geq 0$ and $\delta>0$, produces an estimate $\est$ where:
\[ \Prob{(1-\delta)Z_{G,\lambda} \leq \est \leq (1+\delta)Z_{G,\lambda}} \geq 3/4,
\]
and runs in time $\poly(n,1/\delta)$ where $n=|V|$.  (The success probability of $3/4$ can be boosted to $\geq 1-\eps'$ for any $\eps'>0$ with an additional $\log(1/\eps')$ factor in the running time.)  Given a parameter $\lambda^*$, if for all $\lambda\leq\lambda^*$ we have an optimal mixing time of $O(n\log(n/\eps))$
for the Glauber dynamics, then there is an $\fpras$ with running time $O((n/\delta)^2\log(n)\log(n/\delta))$ for estimating the partition function $Z_{G,\lambda^*}$ at $\lambda^*$, see~\cite{SVV:annealing,Huber,Kolmogorov}. In fact, for the purposes of an approximate counting algorithm, it suffices to have a bound on the relaxation time (instead of the mixing time) of an appropriate sampling algorithm.  In particular, if we have a relaxation time of $O(n)$ for all $\lambda\leq\lambda^*$ then we obtain an $\fpras$ with
running time of $O^*((n/\delta)^2)$ where the $O^*()$ notation hides logarithmic factors, see~\cite[Section 7]{SVV:annealing}.  Moreover, we can estimate other quantities of interest, such as the number of independent sets of a particular size, see~\cite{HK,DaviesPerkins}.

Further properties and fundamental definitions regarding Markov chains are presented in \cref{sec:MC}.  We lower bound the spectral gap of the Glauber dynamics (and hence upper bound the relaxation time) using a functional analytic approach to analyze the decay rate of the variance of the Markov chain with respect to the stationary distribution; consequently, in \cref{sub:Dirichlet} 
we introduce the variance functional and the Dirichlet form which is a measure of the local variation of a Markov chain.  In \cref{sub:app-tensorization} we present the notion of approximate tensorization of variance, which is a useful technique for lower bounding the spectral gap of the Glauber dynamics.  The relations between the relaxation time and the mixing time of a Markov chain are established in \cref{sub:mixingviagap}.

\subsection{Key Definitions: Spectral Independence and the Influence Matrix}
\label{sub:SI-definition}

Spectral independence was introduced by Anari, Liu, and Oveis Gharan~\cite{ALO20}.  It is defined by an $n\times n$ influence matrix which captures the pairwise influence or correlations between pairs of vertices.

\begin{definition}[Influence Matrix]
\label{defn:inf-matrix}
Let $G=(V,E)$ be a graph where $V=\{1,\dots,n\}$, and
$\mu$ be a distribution on $\Omega$ where $\Omega\subseteq\{0,1\}^V$.
Let $\Psi$ be the following real-valued $n\times n$ matrix; we refer to $\Psi$ as the {\em influence matrix}. For $1\le i,j\le n$, let
\begin{align}
\label{eqn:Psi-empty}
\Psi_\mu(i,j) = \Psi(i\rightarrow j) 
&:= \mu\left(\sigma(j)=1 \mid \sigma(i)=1\right)
- 
\mu\left(\sigma(j)=1 \mid \sigma(i)=0\right) \\
&= \Pr_{\sigma \sim \mu} \left[ \sigma(j)=1 \mid \sigma(i)=1 \right]
- \Pr_{\sigma \sim \mu} \left[ \sigma(j)=1 \mid \sigma(i)=0 \right].\nonumber
\end{align}
\end{definition}
Note, when $i=j$ then $\Psi(i,i)=1$ and hence the diagonal entries of $\Psi$ are defined to be $1$ (see \cref{rem:diagonals} below for further discussion regarding an alternative definition of the diagonal entries).

A closely related notion is the {\em modified influence matrix} which we formally define in \cref{sub:CorrelationMatrix}.  The modified influence matrix differs from \cref{eqn:Psi-empty} in that the last term changes from 
$\mu\left(\sigma(j)=1 \mid \sigma(i)=0\right)$ to 
$\mu\left(\sigma(j)=1\right)$.  This leads to a weaker notion than spectral independence (we will later refer to the alternative notion as weak spectral independence); the connections between the influence matrix and modified influence matrix are discussed in further detail in \cref{sub:CorrelationMatrix}.

We require the generalization of the above definitions to arbitrary ``pinnings'' which are a fixed assignment~$\tau$ assigned to a subset $S$ of vertices.  
For a subset of vertices $S\subset [n]=\{1,\dots,n\}$, a {\em pinning} on $S$ is an assignment $\tau:S\rightarrow\{0,1\}$.
Let $\Omega_\tau$ denote the assignments on all of $V$ which are consistent with $\tau$; more formally, for $S\subset V$ and a pinning $\tau:S\rightarrow\{0,1\}$, let
\[
\Omega_\tau:=\{\sigma\in\Omega: \sigma(S)=\tau(S)\},
\]
where $\sigma(S)$ is the restriction of $\sigma$ to $S$, and hence $\sigma(S)=\tau(S)$ iff  $\sigma(v)=\tau(v)$ for all $v\in S$.

We only consider pinnings $\tau$ that can be extended to a valid configuration $\sigma$ on the entire vertex set~$V$.  Hence, 
we say a pinning $\tau$ is valid if $\Omega_\tau\neq\emptyset$. 
Let $\Pinning$ denote the collection of all valid pinnings (see \cref{sub:pinnings} for further introduction).

For a pinning $\tau\in\Pinning$, let $\mu_{\tau}$ denote the conditional Gibbs distribution, in other words, $\mu_\tau$ is the distribution~$\mu$ conditional on the fixed assignment $\tau$ on $S$. For $\sigma\in\Omega_\tau$, we have 
\[ \mu_{\tau}(\sigma)=\mu(\sigma\mid\sigma(S)=\tau) 
= \frac{\mu(\sigma(\overline{S})\cup\tau(S))}{\mu(\Omega_\tau)},
\]
\[ \mbox{where} \ \ \ \mu(\Omega_\tau) = \sum_{\eta\in\Omega_\tau} \mu(\eta)
= \sum_{\substack{\eta\in\Omega: \\ \eta(S)=\tau(S)}} \mu(\eta).
\]

For the case of the hard-core model on a graph $G=(V,E)$, pinnings correspond to induced subgraphs of $G$ in the following manner.  In the hard-core model, for $S\subset V$, a valid pinning $\tau$ on $S$ is an independent set on $S$.  Since $\tau$ is an independent set, an equivalent realization of the pinning $\tau$ is to consider the induced subgraph on $V'\subset V$ where $V'$ is formed by starting from $V$ and then proceeding as follows: for every vertex $v\in S$ where $\tau(v)=0$ then remove $v$ (this is because fixing $\tau(v)=0$ means $v$ is fixed to be outside of the independent set), and for every vertex $v\in S$ where $\tau(v)=1$  then remove $v$ and all $w\in N(v)$ (since $\tau(v)=1$ then $v$ is fixed to be in the independent set and hence all neighbors of $v$ are not in the independent set); let $V'$ be the remaining vertices.  Let $H$ be the induced subgraph on $V'$. The distribution $\mu_H$ (which is the Gibbs distribution on $H$ without any pinning) is identical to the distribution $\mu_\tau$ (which is the conditional Gibbs distribution on $G$ with pinning $\tau$).  Hence, for the case of the hard-core model, the collection of valid pinnings for $G$ can be realized by considering all induced subgraphs of $G$ (without any pinnings).

Returning to the general setting, for a pinning $\tau\in\Pinning$ 
where $\tau:S\rightarrow\{0,1\}$ for $S\subset V$, 
let $T\subset V\setminus S$ denote the set of vertices which are ``free'' in the following sense:
\[
i\in T \iff \mu_\tau(\sigma(i)=1)>0 \mbox{ and } \mu_\tau(\sigma(i)=0)>0.
\]
Note, a vertex $i\in V\setminus (S\cup T)$ is ``frozen'', i.e., it can only attain one spin and hence we can fix the configuration on that vertex and remove the corresponding two rows/columns from the matrix $\Psi_\tau$; henceforth, we only define the influence matrix on the free vertices.

For $i,j\in T$, let 
\begin{align}
\label{eq:inf-pin}
\Psi_{\mu_\tau}(i\rightarrow j) &:= 
 \mu_\tau\left(\sigma(j)=1 \mid \sigma(i)=1\right) - 
\mu_\tau\left(\sigma(j)=1 \mid \sigma(i)=0\right)
\\
& = \mu\left(\sigma(j)=1 \mid \sigma(i)=1,\sigma(S)=\tau\right) - 
\mu\left(\sigma(j)=1 \mid \sigma(i)=0,\sigma(S)=\tau\right).
\nonumber
\end{align}
Since the empty pinning $\emptyset$ is always a valid pinning, thus $\Psi=\Psi_\emptyset$ is identical to the definition in~\cref{eqn:Psi-empty}.
As in~\cref{eqn:Psi-empty}, observe that the diagonal entries of $\Psi_\tau$ are defined to be $1$, see~\cref{rem:diagonals} for further discussion.

The matrix $\Psi$ can be asymmetric, and the entries of $\Psi$ can be positive or negative.  Nevertheless, all of the eigenvalues of $\Psi$ are real and non-negative, this is established in~\cref{DL1} in~\cref{sub:PSD}.  Intuitively, the fact that $\Psi$ has real non-negative eigenvalues is because it is closely related to the covariance matrix $\Cov$.  In particular, the influence matrix $\Psi$ is equal to a positive diagonal matrix times the covariance matrix $\Cov$, and then the real non-negative eigenvalues of $\Psi$ follows from the positive semidefiniteness of the covariance matrix.

Since all eigenvalues of $\Psi$ are real valued we can denote the maximum eigenvalue by $\lambda_{\max}(\Psi)$.  Now we can define the spectral independence condition as follows.

\begin{definition}[Spectral Independence]
\label{defn:SI}
For $\eta\ge 0$, we say that $\mu$ is $\eta$-spectrally independent if for all pinnings $\tau\in\Pinning$, 
\begin{equation}\label{eqn:SI-defn} 
\lambda_{\max}(\Psi_{\mu_\tau})\leq 1+\eta.
\end{equation}
\end{definition}

Note that the spectral independence condition only depends on the distribution $\mu$, and it has no explicit dependence on the Glauber dynamics.  
The definition was extended to non-binary spin spaces, such as the Potts model and colorings, in~\cite{FGYZ21,CGSV21} (see also~\cite{CLV21,BCCPSV22} for a general formulation); we present these generalizations and discuss the related results in \cref{sec:multispin}.

When $\mu$ is a product distribution then 
$\eta=0$.  Our goal is to show that $\eta$ is constant, and hence in some sense it is not far from a product distribution.

\begin{remark}
\label{rem:diagonals}
Note the diagonals of the influence matrix $\Psi$ are 1 since if $i=j$ then conditioning on $i$ prescribes $j$.  We could have defined the influence matrix so that the off-diagonal entries remain the same and the diagonals are 0; this would decrease all of the eigenvalues by 1, and hence with this alternative definition we would change the spectral independence requirement from $1+\eta$ to $\eta$.
\end{remark}

For the special case of the hard-core model, the configurations of the model are independent sets.  Hence, we denote a configuration by the set of vertices in the independent set $I\subset V$.  We refer to those vertices $v\in I$ as occupied and those $w\notin I$ as unoccupied.  In this case of the hard-core model by referring to the subset of occupied vertices $I\subset V$ this simplifies the notation.  For example, returning to the definition of the influence matrix in \cref{defn:inf-matrix}, for the hard-core model on $G=(V,E)$, for $u,v\in V$, the $(u,v)$-entry of the influence matrix $\Psi$ can be compactly expressed as the following:
 $$
 \Psi(u\ra v) := \mu(v|u) - \mu(v|\overline{u}),
 $$
 where $u,v$ denote the events $u\in I, v \in I$ respectively, and $\overline{u}$ denotes the event $u \notin I$.

As discussed earlier, for the hard-core model, since the configurations correspond to independent sets then a pinning $\tau$ on $S$ can be realized by considering the appropriate induced subgraph $H$ of $G$.   Consequently, for a graph $G=(V,E)$ if we show for all induced subgraphs $H$ of $G$ that $\lambda_{\max}(\Psi_H)\leq 1+\eta$ (without any pinnings) then that implies the spectral independence condition \cref{defn:SI} holds for $G$ (since every pinning~$\tau$ can be realized by an appropriate induced subgraph $H$).

In \cref{sec:properties} we prove several fundamental properties of the influence matrix~$\Psi$, including the fact that the eigenvalues of $\Psi$ are real and non-negative; we also present several basic techniques for bounding the maximum eigenvalue of the influence matrix~$\Psi$ in \cref{sub:rowsum,sub:symmetrized-influence,sub:semidefinite-ordering}, and introduce the notion of weak spectral independence in \cref{sub:CorrelationMatrix}.

\subsection{Main Results: Fast Mixing via Spectral Independence}
\label{sub:mixing-via-SI}

Anari, Liu, and Oveis Gharan~\cite{ALO20} proved the following result that spectral independence implies polynomial mixing time of the Glauber dynamics.
\begin{theorem}[\cite{ALO20}]\label{thm:SI-mixing}
Suppose $\mu$ is a distribution fully supported on $\Omega \subset \{0,1\}^n$ such that the Glauber dynamics for $\mu$ is ergodic. 
If $\mu$ is $\eta$-spectrally independent, then the relaxation time of the Glauber dynamics for $\mu$ satisfies:
\[ \Trelax \leq O(n^{1+\eta}).
\]
Therefore, the mixing time satisfies $T_{\mathrm{mix}}=O(n^{1+\eta} \log(1/\mu^*))$ where $\mu^*:= \min_{\sigma \in \Omega} \mu(\sigma)$.
\end{theorem}

For the hard-core model with fugacity $\lambda$, note that $\log(1/\mu^*) = O_\lambda(n)$ and hence \cref{thm:SI-mixing} gives a mixing time upper bound of $O(n^{2+\eta})$.

The above result was improved by Chen, Liu, and Vigoda~\cite{CLV21} to show optimal bounds on the relaxation time when the maximum degree is constant.

\begin{theorem}[\cite{CLV21}]\label{thm:SI-constant-relax}
For all constant $\Delta\geq 2$, for any $n$-vertex graph $G=(V,E)$ of maximum degree $\Delta$ and fugacity $\lambda > 0$, if the hard-core model on $G$ is $\eta$-spectrally independent for a constant $\eta>0$, 
then the relaxation time of the Glauber dynamics satisfies:
\[ \Trelax \leq O(n).
\]
In particular, there exists a constant $C(\eta,\Delta)$
where the relaxation time satisfies:
\[ \Trelax \leq C(\eta,\Delta)n.
\]
Therefore, the mixing time satisfies $T_{\mathrm{mix}}=O(n^{2})$.
\end{theorem}

The above result shows that spectral independence implies  optimal relaxation time.  In fact, spectral independence is a necessary condition for optimal relaxation time:
Anari, Jain, Koehler, Pham, and Vuong~\cite{relax-optimal} proved that optimal relaxation time of the Glauber dynamics implies spectral independence.

\begin{restatable}{theorem}{optrelaxSI}
\label{lem:opt-relax-SI}
Let $\mu$ be a distribution on $\{0,1\}^n$. If the Glauber dynamics for $\mu$ has relaxation time bounded by $\leq Cn$ then $\lambda_{\max}(\Psi_\mu) \le C$. 
Furthermore, if for any pinning $\tau$ on a subset $S$ of vertices, the Glauber dynamics for $\mu_\tau$ has relaxation time bounded by $\leq C\cdot (n-|S|)$, then $\mu$ is $(C-1)$-spectrally independent. 
\end{restatable}

If one further assumes a lower bound on the marginal probability for every vertex and every valid pinning, then Chen, Liu and Vigoda~\cite{CLV21} proved an optimal upper bound on the mixing time, rather than the relaxation time.

\begin{definition}[Marginally bounded]
\label{defn:marg-bound}
For $b>0$, we say that $\mu$ is $b$-marginally bounded if for all pinnings $\tau\in\Pinning$, for all $v\in V$, all $s\in\{0,1\}$, then the following holds:
\[
\mu_\tau(\sigma(v)=s)>b \mbox{ or } \mu_\tau(\sigma(v)=s)=0.
\]
\end{definition}

\begin{theorem}[\cite{CLV21}]\label{thm:SI-constant-mix}
For all constant $\Delta\geq 2$, for any $n$-vertex graph $G=(V,E)$ of maximum degree $\Delta$ and fugacity $\lambda > 0$, if the hard-core model on $G$ is $\eta$-spectrally independent and $b$-marginally bounded for constants $\eta,b>0$, 
then the mixing time of the Glauber dynamics satisfies:
\[ \Tmix \leq O(n\log{n}).
\]
In particular, there exists a constant $C(\eta,\Delta,b)$ where the mixing time satisfies for all $\eps>0$,
\[ \Tmix(\eps) \leq C(\eta,\Delta,b)n\log(n/\eps).
\]
\end{theorem}

As noted in the theorem statement, \cref{thm:SI-mixing} holds for any distribution on $\Omega\subseteq\{0,1\}^n$ where the Glauber dynamics is ergodic. In contrast, \cref{thm:SI-constant-relax,thm:SI-constant-mix} hold for any 2-spin system (i.e., binary graphical model with pairwise interactions), the generalizations to $q>2$ spins are presented in \cref{sec:multispin}.
Examples of 2-spin systems are the hardcore model, the Ising model, and matchings (when viewed as independent sets of the line graph).

As pointed out earlier, \cref{thm:SI-constant-mix} establishes an optimal bound on the mixing time of $O(n\log{n})$ as Hayes and Sinclair~\cite{HayesSinclair} proved that for any graph with constant maximum degree $\Delta$, the mixing time is $\geq C'(\Delta)n\log{n}$ for some constant $C'(\Delta)$.

Closely connected to the influence matrix is a notion of a local walk, which is introduced in \cref{sec:rapid}.
The local walk is a Markov chain which is defined on spin assignments for individual vertices in such a way that the spectral gap of the local walk is closely related to the maximum eigenvalue of the influence matrix, and hence the spectral independence property yields a lower bound on the spectral gap of the local walk.

The convergence results stated above (\cref{thm:SI-mixing,thm:SI-constant-relax,thm:SI-constant-mix})
are established using a so-called local-to-global theorem of Alev and Lau~\cite{AL20}, which is known as the Random Walk Theorem, presented in \cref{sec:rapid}.  
The Random Walk Theorem relates the spectral gap of the local walk to the spectral gap of the Glauber dynamics, which corresponds to the global walk in the framework presented in \cref{sec:rapid}.
Applying the Random Walk Theorem, in \cref{sec:poly-mixing-proof}, we prove \cref{thm:SI-mixing} of Anari, Liu, and Oveis Gharan~\cite{ALO20}which shows that spectral independence implies polynomial mixing of the Glauber dynamics where the exponent in the polynomial depends on the spectral independence constant.

The Random Walk Theorem is proved in \cref{sec:RW}, where we introduce the associated up and down processes and explore their connection to the local walk.  Subsequently, in \cref{section-proof-relax}, we present the proof of \cref{thm:SI-constant-relax} of Chen, Liu, and Vigoda~\cite{CLV21}, which shows that spectral independence implies an optimal bound on the spectral gap of the Glauber dynamics; this proof utilizes an improved version of the Random Walk Theorem which is presented in \cref{sub:improved-RW-statement} and proved in \cref{sub:improved-RW-proof}.
Finally, in \cref{sec:optimal} we present the high-level ideas used to further extend the proof from variance contraction to entropy contraction and obtain optimal mixing time.  \cref{lem:opt-relax-SI} is proved in \cref{sub:coupling-SI}.

\subsection{Application: Hard-core Model}
\label{sub:hard-core-results}

An exciting consequence of the optimal mixing result stated in~\cref{thm:SI-constant-mix} is a beautiful computational phase transition for the hard-core model.
In particular, for every $\Delta\geq 3$ there is a critical point $\lambda_c(\Delta)$ where a computational phase transition occurs on graphs of maximum degree $\Delta$.  Namely, for all $\lambda<\lambda_c(\Delta)$, the Glauber dynamics has optimal mixing time on any graph of maximum degree $\Delta$; this is established using~\cref{thm:SI-constant-mix}.  On the other side, when $\lambda>\lambda_c(\Delta)$ then it is computationally hard to generate samples from a distribution which is close to the Gibbs distribution on all graphs of maximum degree $\Delta$.  

The critical point $\lambda_c(\Delta)$ is a consequence of phase transition for the Gibbs distribution on $\Delta$-regular trees.  The critical point $\lambda_c(\Delta)$ is defined as follows:
\[
\lambda_c(\Delta):= \frac{(\Delta-1)^{\Delta-1}}{(\Delta-2)^\Delta}.
\]
Note, $\lambda_c(\Delta)$ is a decreasing function, and for all $\Delta\leq 5$ then $\lambda_c(\Delta)>1$, while for $\Delta\geq 6$ then $\lambda_c(\Delta)<1$ ($\lambda_c(\Delta)=1$ at $\Delta\approx 5.14104$).

The function $\lambda_c(\Delta)$ corresponds to the critical point for the following decay of correlations property on trees.  Fix an integer $\Delta\geq 3$.  For an integer $\ell\geq 1$, let $T_\ell$ denote the complete $\Delta$-regular tree of height~$\ell$.  More precisely, $T_\ell$ is the tree where the root is at level $0$, all leaves are at level~$\ell$, and every non-leaf vertex has $\Delta-1$ children.  (Whether the root vertex has degree $\Delta$ or $\Delta-1$ makes negligible difference to the following discussion.)

Let $\mu_\ell$ denote the Gibbs distribution on the tree $T_\ell$, and let $p_\ell$ denote the probability that the root vertex is in an independent set drawn from the distribution $\mu_\ell$.  Note, the maximum sized independent set in~$T_\ell$ contains all of the leaves and then every other level (as we move up the tree).  Thus, for even $\ell$ the root is in the maximum sized independent set, and for odd $\ell$ the root is not in the maximum sized independent set.  Consequently, for large $\lambda$, we expect that $p_{2\ell}>p_{2\ell+1}$ as the root has a bias to be in the independent set for even heights and a bias to be out of the independent set for odd heights. 

There is a phase transition on complete $\Delta$-regular trees at $\lambda_c(\Delta)$ in the following sense.  More formally, for all $\Delta\geq 3$ the following holds:
\begin{align*}
    \mbox{For all }\lambda\leq\lambda_c(\Delta), & \lim_{\ell\rightarrow\infty} p_{2\ell+1}=\lim_{\ell\rightarrow\infty} p_{2\ell}
    \\
      \mbox{For all }\lambda>\lambda_c(\Delta), & \lim_{\ell\rightarrow\infty} p_{2\ell+1}<\lim_{\ell\rightarrow\infty} p_{2\ell}
\end{align*}
This phase transition is derived and explained in more detail in \cref{sec:hard-core}.  

In statistical physics terminology, $\lambda_c(\Delta)$ is the critical point for the uniqueness/non-uniqueness phase transition on the infinite $\Delta$-regular tree.  
A Gibbs measure for the hard-core model on the infinite $\Delta$-regular tree is any measure such that for every finite subset of vertices and any pinning outside the subset, the conditional distribution gives a hard-core model with the corresponding pinning. 
In particular, we can define two extremal Gibbs measures, $\mu_{\mathrm{even}}$ and $\mu_{\mathrm{odd}}$, which can be obtained by taking the limit of finite-volume Gibbs distributions on $T_\ell$ with even and odd boundary conditions, respectively, which are fixed assignments (i.e., pinnings) to all of the leaves.   

When $\lambda\leq\lambda_c(\Delta)$ then the extremal measures are the same $\mu_{\mathrm{even}} = \mu_{\mathrm{odd}}$ and hence there is a unique infinite-volume Gibbs measure; this is called the {\em tree uniqueness region}.  On the other side, when $\lambda>\lambda_c(\Delta)$ then $\mu_{\mathrm{even}}\neq\mu_{\mathrm{odd}}$ and hence there are multiple infinite-volume Gibbs measures; this is called the {\em tree non-uniqueness region}.  We refer the interested reader to \cite{Georgii,FV:book} for a more comprehensive introduction to infinite-volume Gibbs measures and related phase transitions.

The amazing aspect is that the decay of correlations property on $\Delta$-regular trees directly relates to the computational phase transition on all graphs of maximum degree $\Delta$.  We can now formally state the optimal mixing result in the tree uniqueness region.

\begin{theorem}
    \label{thm:hard-core-constant-degree}
    For all $\Delta\geq 3$, all $\delta>0$, there exists $C(\Delta,\delta)$, for all $\lambda\leq(1-\delta)\lambda_c(\Delta)$, for any $n$-vertex graph $G=(V,E)$ of maximum degree $\Delta$, the mixing time of the Glauber dynamics for the hard-core model satisfies for all $\eps>0$,
    \[ \Tmix(\eps)\leq C(\Delta,\delta)n\log(n/\epsilon).
    \]
\end{theorem}

\begin{remark}
\label{rem:unbounded-degree}
    The independent works of Chen, Feng, Yin, and Zhang \cite{CFYZ22} and Chen and Eldan \cite{CE22} improved \cref{thm:hard-core-constant-degree} to obtain $\Tmix \leq C(\delta)n\log(n/\eps)$ and thus extend the theorem to classes of graphs where the maximum degree $\Delta$ grows as a function of $n$.  We will show in \cref{sec:critical-point} (see \cref{rem:big-degree})
    how the tools presented for the proof of the following theorem (\cref{thm:hard-core-critical}) yield $\Trelax\leq C(\delta)n$, thus extending \cref{thm:hard-core-constant-degree} to unbounded degrees for the relaxation time.
\end{remark}

A remarkable result of Chen, Chen, Yin, and Zhang~\cite{CCYZ-critical-hard-core} proved polynomial mixing time for the Glauber dynamics all the way up to {\em and including} the critical point.

\begin{restatable}[{\cite[Theorem 1.1]{CCYZ-critical-hard-core}}]{theorem}{criticalpoint}
\label{thm:hard-core-critical}
    There exist constants $C,C'>0$, such that for all $\Delta\geq 3$, for any graph $G=(V,E)$ of maximum degree $\Delta$, the relaxation time of the Glauber dynamics for the hard-core model at fugacity $\lambda\leq\lambda_c(\Delta)$ satisfies
    \[
    \Trelax \leq C'n^C.
    \]
\end{restatable}

\cref{thm:hard-core-critical} is proved in \cref{sec:critical-point}.

On the other side, it is computationally hard to approximately sample from the Gibbs distribution and approximate the partition function in the tree non-uniqueness region; in fact, not only is it hard to obtain an $\fpras$ (i.e., an arbitrarily close approximation) for the partition function, it is hard to approximate the partition function within some exponential factor even when restricted to triangle-free graphs.  The below result is not proved in this monograph, we refer the reader to \cite{GSV:colorings,BIS-hardness} for more results in this context (including so-called \#BIS-hardness on bipartite graphs in the tree non-uniqueness region).

\begin{theorem}[\cite{Sly10,SlySun,GSV16,GSV:colorings}]
    \label{thm:hard-core-hardness}
    For all $\Delta\geq 3$, all $\delta>0$, there exists $\eps=\eps(\Delta,\delta)$, for all $\lambda\geq(1+\delta)\lambda_c(\Delta)$, assuming $\mathsf{RP}\neq \mathsf{NP}$, there is no algorithm which for any triangle-free graph $G=(V,E)$ of maximum degree $\Delta$ produces an estimate $\est$ where:
\[ \Prob{Z_{G,\lambda}2^{-\eps n} \leq \est \leq Z_{G,\lambda}2^{\eps n}} \geq 3/4,
\]
and runs in time polynomial in $n=|V|$.
\end{theorem}

The above hardness result is for general graphs, or triangle-free graphs.  If we restrict attention to a restricted class of graphs then the threshold for rapid mixing of the Glauber dynamics might occur at a different point.  A natural class of graphs to consider are random $\Delta$-regular graphs.  Since these graphs are locally tree-like (in particular, they have few short cycles with high probability over the choice of graphs) then we might expect that the tree uniqueness threshold $\lambda_c(\Delta)$ is still the threshold for rapid mixing.  Surprisingly, Chen, Chen, Chen, Yin and Zhang~\cite{CCCYZ-random-regular} showed that when restricted to random $\Delta$-regular graphs then we can obtain fast mixing beyond the uniqueness threshold.

\begin{restatable}[{\cite[Theorem 1.1]{CCCYZ-random-regular}}]{theorem}{randomregularthm}
\label{thm:hard-core-random-graphs}
For all $\Delta\geq 3$, there exists $C(\Delta)$ where the following holds.  For $G$ chosen uniformly at random from all $\Delta$-regular graphs, then with probability $1-o(1)$ over the choice of $G$, for all $\lambda<\frac{1}{3\sqrt{\Delta}}$ the Glauber dynamics for the hard-core model on $G$ at fugacity $\lambda$ satisfies:
\[
\Tmix(\eps) \leq C(\Delta)n\log(n/\eps).
\]
\end{restatable}

\cref{thm:hard-core-random-graphs} is proved in \cref{sec:hard-core-random}.   Inspired by \cref{thm:hard-core-random-graphs} it is natural to conjecture that the rapid mixing threshold on random $\Delta$-regular graphs is closely related to the reconstruction threshold, which considers the influence of typical boundary conditions on complete $\Delta$-regular trees, as opposed to worst-case boundaries as considered in the uniqueness threshold; we refer the interested reader to \cite{BST,RSVVY} for further discussion regarding the reconstruction threshold in the hard-core model.

We outline results for other spin systems in \cref{sec:multispin}, including the Ising model results in \cref{sub:Ising} and colorings results in \cref{sub:colorings}.

\subsection{Methods for Establishing Spectral Independence}
\label{sub:intro-methods}

In \cref{sec:methods} we present general techniques for establishing spectral independence, focusing on three approaches as outlined below.  As a consequence, for the hard-core model, we prove that spectral independence holds up to the critical point $\lambda_c(\Delta)$ on any graph of maximum degree $\Delta$.

\begin{theorem}
\label{thm:main-SI-hard-core-uniq}
    For all $\Delta\geq 3$, all $\delta>0$, for all $\lambda\leq(1-\delta)\lambda_c(\Delta)$, for any $n$-vertex graph $G=(V,E)$ of maximum degree $\Delta$, the Gibbs distribution $\mu_{G,\lambda}$ is $\eta(\delta)$-spectrally independent where $\eta(\delta)\leq 16\sqrt{2}/\delta$.
\end{theorem}

Combining \cref{thm:main-SI-hard-core-uniq} with the results from \cref{sub:mixing-via-SI}, which show that spectral independence implies fast mixing, yields the majority of the fast mixing results for the hard-core model stated in \cref{sub:hard-core-results}.
Towards proving \cref{thm:main-SI-hard-core-uniq}, we present several techniques for establishing spectral independence.

In \cref{sec:Weitz} we present the correlation-decay approach of Weitz~\cite{Wei06} and how it yields spectral independence.  At a high-level, we show that to upper bound the row-sum of the influence matrix, which is the sum of the influences of a fixed vertex, one can consider influences in a tree, where the tree corresponds to self-avoiding walks in the graph.  Then one can use the potential function method to prove that the sum of the influences in the tree decay exponentially fast (in the distance) when the fugacity is in the tree-uniqueness region.  
In \cref{sec:stability} we show that stability of the partition function, so-called zero-freeness, also implies spectral independence; such conditions were used in the approximate counting algorithm introduced by Barvinok~\cite{Bar17book}.
Finally, in \cref{sub:coupling-SI}, we show that a contractive coupling, and more generally optimal relaxation time for a local chain, implies spectral independence.

\subsection{Matroids and Trickle-Down Theorem}
\label{sub:matroids-intro}

We utilize the spectral independence approach to establish fast mixing of a Markov chain, known as the bases-exchange walk, for generating a random basis of a matroid, see \cref{sec:matroids} for a more comprehensive introduction and the associated proofs.  Anari, Liu, Oveis Gharan, and Vinzant~\cite{ALOV19} presented the first proof of fast mixing of the bases-exchange walk for arbitrary matroids.  In fact, the work of~\cite{ALOV19} for matroids inspired the idea of spectral independence, however we present these results in reverse chronological order.  We do not directly utilize spectral independence to prove fast mixing of the bases-exchange walk, instead we utilize the framework of down-up chains and the associated Random Walk Theorem (\cref{thm:RW}) which are presented in \cref{sec:RW}.
One additional tool utilized for the analysis of the bases-exchange walk is  the Trickle-Down Theorem of  Oppenheim~\cite{Opp18} which is used to bound the spectral gap of the local walks considered in the Random Walk Theorem, see \cref{sec:trickledown}.
 
We refer the reader to \cref{sec:matroids} for an introduction to matroids and associated terminology, including bases of a matroid.  A matroid  $\M=(E,\cI)$  consists of a ground set $E$ and a collection of independent sets $\cI$.  For the example of a graphic matroid the ground set is the set of edges $E$ of a graph $G=(V,E)$.  The independent sets are subsets of the ground set  with the requirement that every subset of an independent set is also an independent set (hence the collection $\cI$ is downward closed), and the collection of independent sets also need to satisfy the so-called exchange property, see \cref{sec:matroids}.
The exchange property implies that all maximal independent sets are of the same size; these maximal independents are referred to as the bases of the matroid and their size is the rank of the matroid.
For the example of a graphic matroid, the independent sets are acyclic subgraphs (viewed as subsets of edges) of a graph $G=(V,E)$ and the bases are spanning forests of $G$.

The bases-exchange walk is a Markov chain on the bases of a matroid where the transitions utilize the exchange property in the following manner: from a basis $B_t$ at time $t$, we choose an element $e\in B_t$ uniformly at random, and an element $f$ uniformly from those elements such that $B'=B_t\cup \{f\}\setminus \{e\}$ is a basis, and then we set $B_{t+1}=B'$ as the state at time $t+1$.
The main result presented in \cref{sec:matroids} is a bound on the relaxation time of the bases-exchange walk.

\begin{restatable}[{\cite{ALOV19}}]{theorem}{matroidmain}
\label{thm:matroid-main}
For a matroid $\M=(E,\cI)$, the relaxation time of the bases-exchange walk satisfies $\Trelax \leq r(\M)$, where $r(\M)$ is the rank of the matroid.
Consequently, the mixing time of the bases-exchange walk satisfies $\Tmix=O(r(\M) \log(|\cI|)) = O(r(\M)^2\log n)$ where $n=|E|$.
\end{restatable}
The mixing time bound was improved by Cryan, Guo, and Mousa~\cite{CGM19} to $O(r(\M)\log{\log(|\cI|)}) = O(r(\M) \log r(\M) + r(\M) \log{\log{n}})$; moreover, the $\log\log n$ term was subsequently removed in \cite{ALOVV21}, achieving $O(r(\M) \log r(\M))$ mixing time.

The results for generating a random basis of a matroid are presented beginning in~\cref{sec:matroids} where we review the relevant definitions for matroids and present the main result for the mixing time of the bases-exchange walk.  The analysis of the bases-exchange walk is presented in \cref{sec:trickledown}; the proof utilizes the Random Walk Theorem, presented earlier in \cref{sec:RW}, and the Trickle-down Theorem, which is presented in~\cref{sub:trickledown-statement}.  The Trickle-down Theorem is used to obtain an optimal bound on the spectral gap of the associated local walks.  The proof of fast mixing of the bases-exchange walk is presented in~\cref{sec:basesexchange}, and the proof of the Trickle-down Theorem is presented in~\cref{sec:proof-trickledown}.

\section{Properties of the Influence Matrix}
\label{sec:properties}

Here we establish several basic properties of the influence matrix $\Psi$, defined in \cref{sub:SI-definition}.  In \cref{sub:PSD} we prove that the eigenvalues of the influence matrix $\Psi$ are real-valued and non-negative.  In \cref{sub:rowsum,sub:symmetrized-influence,sub:semidefinite-ordering} we present several useful techniques for bounding the maximum eigenvalue of the influence matrix and hence establishing spectral independence.
We first present in \cref{sub:rowsum} a simple approach for bounding the maximum eigenvalue by the maximum absolute rowsum.  Since the influence matrix is not symmetric it is often convenient to consider a symmetrized version of the influence matrix, which we present in \cref{sub:symmetrized-influence}.  In \cref{sub:semidefinite-ordering} we introduce a semidefinite inequality which is equivalent to spectral independence.  

In \cref{sub:CorrelationMatrix} we present the modified influence matrix $\widetilde{\Psi}$ which is closely related to the definition of the influence matrix.  The modified influence matrix leads to a weaker notion, hence referred to as weak spectral independence; the connections are discussed in \cref{sub:CorrelationMatrix}.

\subsection{Nonnegative eigenvalues}
\label{sub:PSD}

Here we will prove that the influence matrix has non-negative real eigenvalues by relating it to the covariance matrix.

For a matrix $A$, its signature is the number of positive, negative, and zero eigenvalues.   We will utilize the following basic property regarding the signature of a symmetric matrix.

\begin{lemma}
\label{lem:sylvesters}
For a symmetric matrix $A$ and a positive diagonal matrix $D$,
the signature of $A$ and $DA$ are the same.
\end{lemma}

\begin{proof}
Note that all eigenvalues of $A$ are real since it is symmetric.
Observe that $DA$ is similar to the matrix $D^{-1/2}(DA)D^{1/2} = D^{1/2} A D^{1/2}$ which is a symmetric matrix and hence also has real eigenvalues.
Thus, $DA$ and $A$ have the same signature by Sylvester's law of inertia, which says that for a symmetric matrix $A$ and a non-singular matrix $B$,
the signature of $A$ and $BAB^T$ are the same, where the signature of a matrix refers to the number of positive, negative, and zero eigenvalues.
\end{proof}

\begin{lemma}\label{DL1}
All eigenvalues of $\Psi$ are non-negative real numbers.
\end{lemma}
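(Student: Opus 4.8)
The plan is to exhibit $\Psi$ as a product of a positive diagonal matrix and the covariance matrix of $\mu$, and then invoke positive semidefiniteness of the covariance matrix. Concretely, for $i,j\in[n]$ write $p_i=\mu[\sigma(i)=1]$ and let $\Cov$ be the $n\times n$ matrix with $\Cov(i,j)=\mu[\sigma(i)=1,\sigma(j)=1]-p_ip_j$; this is the covariance matrix of the indicator random vector $(\sigma(1),\dots,\sigma(n))$ under $\mu$, hence symmetric and positive semidefinite (it equals $\E[(X-\E X)(X-\E X)^\trans]$ for $X=\sigma$). The first step is the identity
\[
\Psi(i,j)=\frac{1}{p_i(1-p_i)}\,\Cov(i,j),
\]
valid for all $i$ such that $0<p_i<1$ (the only rows that appear after frozen vertices are removed). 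This is a direct computation: expand $\mu[\sigma(j)=1\mid\sigma(i)=1]-\mu[\sigma(j)=1\mid\sigma(i)=0]$ over a common denominator, using $\mu[\sigma(j)=1\mid\sigma(i)=1]=\mu[\sigma(i)=1,\sigma(j)=1]/p_i$ and $\mu[\sigma(j)=1\mid\sigma(i)=0]=(p_j-\mu[\sigma(i)=1,\sigma(j)=1])/(1-p_i)$, and simplify.

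Granting the identity, write $D=\diag\big(1/(p_i(1-p_i))\big)_{i\in[n]}$, so $\Psi=D\,\Cov$ with $D$ positive definite diagonal. Then $\Psi$ is similar to a symmetric positive semidefinite matrix: indeed $D^{1/2}\Psi D^{-1/2}=D^{1/2}\Cov\,D^{1/2}$, which is symmetric, and is positive semidefinite because for any $v$ we have $v^\trans D^{1/2}\Cov\,D^{1/2}v=(D^{1/2}v)^\trans\Cov\,(D^{1/2}v)\ge 0$. Since similar matrices share the same spectrum, every eigenvalue of $\Psi$ is a nonnegative real number, which is the claim.

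I expect the main (though still modest) obstacle to be bookkeeping around degenerate vertices: if $p_i\in\{0,1\}$ the entry $\Psi(i,j)$ involves conditioning on a probability-zero event, so one must first restrict attention to the submatrix indexed by the ``free'' vertices $T$ (as set up before \eqref{eq:inf-pin}), where all $p_i\in(0,1)$, and check that deleting frozen rows/columns does not affect the eigenvalue statement. One should also note that $\mu$ here is an arbitrary (possibly conditional) distribution, but the argument is oblivious to that — it uses only that $\mu$ is a probability distribution on $\{0,1\}^V$ — so the same proof applies verbatim to each $\Psi_\tau$. A remark at the end can record that this gives the stronger structural fact $\Psi=D\,\Cov$, which is what makes precise the intuition quoted in the text.
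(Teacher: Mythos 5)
Your proposal is correct and follows essentially the same route as the paper: the identity $\Cov(i,j)=p_i(1-p_i)\Psi(i,j)$, i.e.\ $\Psi$ equals a positive diagonal matrix times the covariance matrix, combined with positive semidefiniteness of $\Cov$; you merely make explicit the similarity argument the paper leaves implicit. One trivial slip: since $\Psi=D\,\Cov$, the correct conjugation is $D^{-1/2}\Psi D^{1/2}=D^{1/2}\Cov D^{1/2}$ (not $D^{1/2}\Psi D^{-1/2}$), after which your conclusion stands as written.
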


\begin{proof}
    For every $i,j \in [n]$, the covariance of $\sigma(i),\sigma(j)$ is given by
    \begin{align}
        \Cov_{\mu}(i,j) 
        = \Cov_{\sigma \sim \mu} (\sigma(i), \sigma(j))
        &= 
        \E_{\mu}[\sigma(i) \sigma(j)]
        - \E_{\mu}[\sigma(i)]\cdot\E_\mu[\sigma(j)]
        \label{cov-defn} \\
        \nonumber
        &=
        \mu\left( \sigma(i)=\sigma(j)=1 \right)
        - \mu(\sigma(i)=1)\cdot\mu(\sigma(j)=1)
        \\&=
        \mu(\sigma(i)=1)\cdot\Big(\mu(\sigma(j)=1 \mid \sigma(i)=1) - \mu(\sigma(j)=1)\Big).
\nonumber
\end{align}
    Plugging in
    \begin{align*}
        \mu(\sigma(j) = 1) = \mu(\sigma(j) = 1 \mid \sigma(i) = 1) \cdot \mu(\sigma(i) = 1) 
        + \mu(\sigma(j) = 1 \mid \sigma(i) = 0) \cdot \mu(\sigma(i) = 0), 
    \end{align*}
    we obtain that
    \begin{align}
\lefteqn{
\Cov_{\mu}(i,j) 
}
\nonumber
\\
\nonumber
        &=  
        \mu(\sigma(i)=1)\cdot\Big(\mu(\sigma(j)=1 \mid \sigma(i)=1) - 
        \mu(\sigma(j) = 1 \mid \sigma(i) = 1) \cdot \mu(\sigma(i) = 1) 
         \\ 
         \nonumber
         & \hspace*{1.25in} - \mu(\sigma(j) = 1 \mid \sigma(i) = 0) \cdot \mu(\sigma(i) = 0)
        \Big)
        \\
        \nonumber
             &=  
        \mu(\sigma(i)=1)\cdot\Big(\mu(\sigma(j)=1 \mid \sigma(i)=1) \cdot(1-\mu(\sigma(i) = 1)) 
 - \mu(\sigma(j) = 1 \mid \sigma(i) = 0) \cdot \mu(\sigma(i) = 0)
        \Big)
             \\
             \label{last-cov-calc}
             &=  
        \mu(\sigma(i)=1)\cdot\mu(\sigma(i) = 0) \cdot\Big(\mu(\sigma(j)=1 \mid \sigma(i)=1)
 - \mu(\sigma(j) = 1 \mid \sigma(i) = 0)
        \Big).
    \end{align}
Let $\Var_\mu$ denote the vector of variances, with 
\begin{align*}
    \Var_\mu(i) = \Var_{\sigma\sim\mu}(\sigma(i)) = \mu(\sigma(i) = 1) \cdot \mu(\sigma(i) = 0)
\end{align*}
for all $i\in[n]$.
Plugging in the definition of the influence matrix $\Psi$ into \cref{last-cov-calc} then we have for all $i,j \in [n]$,
\begin{align}\label{eq:cov-inf}
    \Cov_{\mu}(i,j) = \Var_\mu(i) \cdot \Psi_{\mu}(i,j).
\end{align}

    Let $D = \diag(\Var_\mu)$ denote the diagonal matrix of variances, and hence $D^{-1}$ is the diagonal matrix with $D^{-1}(i,i)=1/D(i,i)$.  Then we have the matrix identity: 
    \begin{align*}
        \Psi_{\mu} = D^{-1}\Cov_{\mu}.
    \end{align*}
    Since $\Cov_{\mu}$ is symmetric positive semidefinite (see \cite[Chapter 7]{horn2012matrix}), and $D$ (and hence $D^{-1}$ as well) is a diagonal matrix with positive diagonal entries (this follows from the definition of the set $T$ of free vertices in \cref{sub:SI-definition}), then, by \cref{lem:sylvesters}, we have that $\Psi_{\mu}$ has non-negative real eigenvalues.
\end{proof}

\subsection{Bounding Spectral Radius of Influence Matrix}
\label{sub:rowsum}   

To upper bound the maximum eigenvalue of the influence matrix, most approaches use the maximum absolute row sum.  Recall, the influence matrix $\Psi$ is defined to be a signed matrix, meaning the entries can be positive, negative, or zero.
By considering the unsigned matrix formed by taking the absolute value of each entry of~$\Psi$, and then considering the maximum row sum this yields an upper bound on the maximum eigenvalue of~$\Psi$.

\begin{lemma}
    \label{lem:rowsum}
For any $n\times n$ influence matrix $\Psi$, 
\[ \lambda_{\max}(\Psi) \leq \max_{i} \sum_{j} |\Psi(i,j)|.
\]
\end{lemma}

The below proof and statement of \cref{lem:rowsum} hold for arbitrary matrices $M$ in place of the influence matrix $\Psi$ with the largest eigenvalue replaced by the spectral radius of $M$.

\begin{proof}
Consider an $n\times n$ matrix $M$ with all real eigenvalues.
Let $v$ be the eigenvector corresponding to the largest (in absolute value) eigenvalue $\lambda$ of $M$.  Thus, 
\begin{equation}
    \label{eigen}
    Mv = \lambda v.
\end{equation} Let $1\le i\le n$ be the position in $v$ where $|v_i|$ is maximized (note that $|v_i|>0$). We have
\begin{align*}
    |\lambda|\,|v_i| & = |(M v)_i|   & \mbox{by \cref{eigen}}
    \\ & =
    \big| \sum_j M_{ij} v_j \big| 
    \\ & \leq 
    \sum_j |M_{ij}|\,|v_j| & \mbox{by the triangle inequality}
    \\ & \leq  
    |v_i| \sum_j |M_{ij}|. & \mbox{since $v_i$ is the maximum.}
\end{align*}
Dividing both sides by $|v_i|$ yields the lemma.
\end{proof}

More generally, one can consider any matrix $p$-norm $\|\Psi\|_p$ for any $p\geq 1$.  The absolute row sum in \cref{lem:rowsum} corresponds to $p=\infty$, and the absolute column sum corresponds to $p=1$.
Recall, the matrix $p$-norm is defined as follows:
\[
  \|\Psi\|_p := \max_{\mathbf{x}: \|\mathbf{x}\|_p=1} \|\Psi \mathbf{x}\|_p,
\]
where $\mathbf{x}$ is an $n$-dimensional real-valued unit column vector, and for $\mathbf{y}$, which is an $n$-dimensional real-valued column vector, the vector $p$-norm of $\mathbf{y}$ is $\|\mathbf{y}\|_p=\left(\sum_{i} |\mathbf{y}_i|^p\right)^{1/p}$.
\begin{lemma}
    \label{lem:p-norm}
For any $n\times n$ influence matrix $\Psi$, for any $p\geq 1$,
\[ \lambda_{\max}(\Psi) \leq  \|\Psi\|_p. 
\]
\end{lemma}

\begin{proof}
Let $v$ be the eigenvector corresponding to the largest (in absolute value) eigenvalue $\lambda$ of $M$.  Thus, 
\[
    Mv = \lambda v.
\]
We have
\begin{align*}
    |\lambda|\,\|v\|_p & = \|\lambda v\|_p   & 
    \\ & =
    \big\| M v\big\|_p 
    \\ & \leq 
    \|M\|_p \|v\|_p, &
\end{align*}
and dividing both sides by $\|v\|_p$ completes the proof of the lemma.
\end{proof}

\subsection{Correlation Matrix}
\label{sub:symmetrized-influence}

As pointed out earlier, the influence matrix $\Psi$ is not necessarily symmetric.  
It is often convenient to work with a symmetric matrix, and in this section we introduce a symmetric version of the influence matrix.

 Recall from \cref{defn:inf-matrix}, the influence matrix $\Psi$ has $(u,v)$-entry:
 $$
 \Psi(u\ra v) := \mu(\sigma(v)=1|\sigma(u)=1) - \mu(\sigma(v)=1|\sigma(u)=0).
 $$
 Also recall from \cref{eq:cov-inf} that the following holds for all $u,v\in V$:
 \begin{equation}\label{e3o}
 \Psi(v\ra u) \cdot \Var_\mu(v)
 = \Cov_\mu(u,v)
 = \Psi(u\ra v) \cdot \Var_\mu(u).
 \end{equation}

 Let $D = \diag(\Var_\mu)$ be diagonal matrix with entries $\Var_\mu(v) = \mu(\sigma(v)=1) \cdot \mu(\sigma(v)=0)$. Equation~\eqref{e3o} means that $D \Psi = \Cov_\mu$ is symmetric.  
 Since $D\Psi$ is symmetric then $D^{1/2} \Psi D^{-1/2}$ is also symmetric, as
 it is obtained from $D \Psi$ by multiplying by the same diagonal matrix on the left and on
 the right.  Now we can define the symmetric version of the influence matrix.

\begin{definition}\label{def:correlation-matrix}
 Let the {\em correlation matrix} be defined as $D^{1/2} \Psi D^{-1/2}$ where $D = \diag(\Var_\mu)$ is the diagonal matrix of variances, with entries $D(i,i) = \Var_\mu(i) = \mu(\sigma(i)=1) \cdot \mu(\sigma(i)=0)$.
\end{definition}

An important feature of the correlation matrix is that the largest eigenvalue is the same as for the influence matrix.

\begin{lemma} \label{lem:correlation-matrix}
For an $n\times n$ influence matrix $\Psi$, the correlation matrix 
$D^{1/2} \Psi D^{-1/2}$ has $(u,v)$-entries $\Cov_\mu(u,v)/\sqrt{\Var_\mu(u) \cdot \Var_\mu(v)}$, whose absolute value equals $\sqrt{\Psi(u\ra v)\Psi(v\ra u)}$.
Moreover, the correlation matrix $D^{1/2} \Psi D^{-1/2}$ is similar to the influence matrix $\Psi$, and thus we have:
 \begin{equation}
     \lambda_{\max}(\Psi) = \lambda_{\max}(D^{1/2} \Psi D^{-1/2}).
 \end{equation}
 \end{lemma}

 \begin{proof}
 The $(u,v)$-entry in $D^{1/2} \Psi D^{-1/2}$ is
 \begin{align*}
     \sqrt{\Var_\mu(u)} \cdot \Psi(u\ra v) \cdot \frac{1}{\sqrt{\Var_\mu(v)}}
     = \sqrt{\Var_\mu(u)} \cdot \frac{\Cov_\mu(u,v)}{\Var_\mu(u)} \cdot \frac{1}{\sqrt{\Var_\mu(v)}}
     = \frac{\Cov_\mu(u,v)}{\sqrt{\Var_\mu(u) \cdot \Var_\mu(v)}}.
 \end{align*}
 Notice that the square of the $(u,v)$-entry is
 \begin{align*}
     \frac{\left(\Cov_\mu(u,v)\right)^2}{\Var_\mu(u) \cdot \Var_\mu(v)} = \Psi(u\ra v) \cdot \Psi(v \ra u),
 \end{align*}
 establishing the lemma. 
 \end{proof}

 \subsection{Semidefinite Ordering}
 \label{sub:semidefinite-ordering}

Let $D = {\mathrm{diag}}(\Var_\mu)$ be the diagonal matrix of variances, with entries  $D(i,i) = \Var_\mu(i) = \mu(\sigma(i)=1) \cdot \mu(\sigma(i)=0)$.
 Bounding the largest eigenvalue of the influence matrix $\Psi$ by $\lambda$ is equivalent to the following statement:
 \[
 D^{1/2} \Psi D^{-1/2} \preceq \lambda I,
 \]
 which is equivalent to the following:
 \[
 D \Psi \preceq \lambda D.
 \]
 
 Recall the $n\times n$ covariance matrix $\Cov_\mu$ from \cref{cov-defn}, and $\Cov_\mu=D\Psi$.
Therefore, we have that $\lambda_{\max}(\Psi)\leq 1+\eta$ is equivalent to the following semidefinite inequality:
 \begin{equation}
  \label{cov-SI-connection}
 \Cov_\mu\preceq (1+\eta) D.
 \end{equation}
  Recall, condition \cref{eqn:SI-defn} in $\eta$-spectral independence, we have shown that this condition (namely, $\lambda_{\max}(\Psi)\leq 1+\eta$) is equivalent to \cref{cov-SI-connection}.

 \subsection{Modified Influence Matrix}
 \label{sub:CorrelationMatrix}

 \begin{definition}[Modified Influence Matrix]
\label{defn:correlation-matrix}
Let $G=(V,E)$ be a graph where $V=\{1,\dots,n\}$, and
$\mu$ be a distribution on $\Omega$ where $\Omega\subset\{0,1\}^V$.
Let $\widetilde{\Psi}$ be the following real-valued $n\times n$ matrix; we refer to $\widetilde{\Psi}$ as the \emph{modified influence matrix}. For $1\le i,j\le n$, let
\begin{equation}
\label{eqn:CM-empty}
\widetilde{\Psi}(i\rightarrow j) = \widetilde{\Psi}(i,j) := 
    \mu\left(\sigma(j)=1 \mid \sigma(i)=1\right)
- 
\mu\left(\sigma(j)=1\right).
\end{equation}
\end{definition}

\begin{remark}
    We remark that the modified influence matrix was first studied in \cite{AASV21} and was called the correlation matrix. However, we define the correlation matrix in \cref{def:correlation-matrix} which matches the standard definition in statistics, as can be seen from \cref{lem:correlation-matrix}.
\end{remark}

We consider an analog of spectral independence for the modified influence matrix, which we refer to as \emph{weak spectral independence}.

\begin{definition}[Weak Spectral Independence]
\label{defn:weak-SI}
For $\eta>0$, we say that $\mu$ is $\eta$-weak spectrally independent if for all pinnings $\tau\in\Pinning$,
\begin{equation}
    \label{cond-weak-max}
\lambda_{\max}(\widetilde{\Psi}_\tau)\leq 1+\eta.
\end{equation}
\end{definition}

Recall, for $\eta$-spectral independence, $\lambda_{\max}(\Psi)\leq 1+\eta$ is equivalent to $\Cov_\mu\preceq (1+\eta) D$, see \cref{cov-SI-connection}.  
Let $\widetilde{D}={\mathrm{diag}}(m_\mu)$ be the diagonal matrix of means, with entries  
\begin{align*}
    \widetilde{D}(i,i) = m_\mu(i) = \Exp_{\sigma \sim \mu}[\sigma(i)] = \mu(\sigma(i)=1).
\end{align*}
Analogous to \cref{cov-SI-connection}, we have that $\lambda_{\max}(\widetilde{\Psi})\leq 1+\eta$ is equivalent to:
\begin{equation}
    \label{CM-semidefinite}
    \Cov_\mu\preceq (1+\eta) \widetilde{D}.
\end{equation}

\begin{remark}
\label{rem:weak-vs-nonweak}
Note that~\eqref{cov-SI-connection} implies~\eqref{CM-semidefinite}, since for a random variable $X\in [0,1]$ we have $\Var(X)\leq\Exp(X)$. Consequently, $\eta$-spectral independence implies $\eta$-weak spectral independence.
Moreover, under the assumption of $b$-marginal boundedness the notions of $\eta$-spectral independence and $\eta'$-weak spectral independence are equivalent for some constants $\eta,\eta'>0$; namely, $\eta'$-weak spectral independence implies $((1+\eta')/b-1)$-spectral independence. 
\end{remark}

Note that for the rapid mixing results presented in \cref{sec:introduction} we need $\eta$-spectral independence for a constant $\eta>0$.  In particular, to prove \cref{thm:SI-constant-relax} which establishes optimal relaxation time we need $\eta$-spectral independence for constant $\eta>0$.  For \cref{thm:SI-constant-mix} which establishes optimal mixing time, as noted above in \cref{rem:weak-vs-nonweak}, under the assumption of $b$-marginal boundedness the notions of $\eta$-spectral independence and $\eta'$-weak spectral independence are equivalent for some constants $\eta,\eta'>0$.

In \cref{sec:matroids}, for the application to the bases-exchange walk for matroids, we implicitly show that $0$-weak spectral independence implies optimal relaxation time, see \cref{rem:weakSI-mixing}.
Moreover, in \cref{sub:logconcave} we show that the log-concavity of the partition function is closely related to $0$-weak spectral independence.

\section{Markov Chain Fundamentals}
\label{sec:MC}

To prove \cref{thm:SI-constant-relax} we will bound the spectral gap of the Glauber dynamics.  To do that, we consider a functional analysis formulation of the spectral gap.  In particular, we will bound the decay rate of the variance of the chain with respect to the stationary distribution for an arbitrary real-valued function on the state space.  This yields tight bounds on the spectral gap and relaxation time of the chain.  Subsequently in \cref{sec:optimal}, we will extend the notions presented in this section from variance to entropy, and present the main ideas for bounding the (modified) log-Sobolev constant which captures the decay rate of relative entropy; this yields optimal upper bounds on the mixing time, rather than the relaxation time.

\subsection{Mixing Time and Total Variation Distance}

    For a pair of distributions $\mu$ and $\nu$ on a finite state space $\Omega$, their total variation distance is defined as:
    \[ \|\mu-\nu\|_{\mathrm{TV}} = \frac12\sum_{\sigma\in\Omega}|\mu(\sigma)-\nu(\sigma)| = \max_{S\subset\Omega} \mu(S) - \nu(S).
    \]
    For a Markov chain $(X_t)$ with transition matrix $P$, state space $\Omega$, and unique stationary distribution $\pi$, its mixing time is the number of steps $\Tmix$ from the worst-case initial state $X_0\in\Omega$ to reach a distribution which is within total variation distance $\leq 1/4$ of $\pi$:
\[
\Tmix = \min\{t:\forall X_0\in\Omega, \|\mu-P^t(X_0,\cdot)\|_{\mathrm{TV}}\leq 1/4.
    \]
    More generally, for any $\eps>0$, let 
    \[
\Tmix(\eps) = \min\{t:\forall X_0\in\Omega, \|\mu-P^t(X_0,\cdot)\|_{\mathrm{TV}}\leq \eps.
    \]

 We say the mixing time is \emph{optimal} if $\Tmix=O(n\log{n})$ as this matches the lower bound established by Hayes and Sinclair~\cite{HayesSinclair} for any graph of constant maximum degree $\Delta$.  Note an upper bound of $\Tmix=O(n\log{n})$ yields a more general bound of $\Tmix(\eps)=O(n\log{n}\log(1/\eps))$ via the above boosting result.  However, the proofs of optimal mixing in these notes will yield a stronger upper bound of $\Tmix(\eps)=O(n\log(n/\eps))$, and hence we will state this improved bound when relevant.
 
   We will often establish a mixing time bound of $\Tmix=O(n\log{n})$, which as noted earlier is referred to as optimal mixing time.  An optimal mixing time bound of $\Tmix=O(n\log{n})$ imples the following mixing time bound for any $\eps>0$ (see~\cite[Section 4.5]{LevinPeresWilmer}):
   \[ \Tmix(\eps)\leq \Tmix(1/4)\times\lceil\log_2(1/\eps)\rceil.
   \]  In fact, many proof techniques, as is the case in this monograph, we obtain the stronger bound of 
   $\Tmix(\eps) = O(n\log(n/\eps)$.  

   Closely connected to variation distance is the coupling technique.  
For a pair of distributions $\mu,\nu$, both of which are on the same state space $\Omega$, a coupling $\omega$ of $\mu$ and $\nu$ is a distribution on $\Omega\times\Omega$ such that the projections on each coordinate follow the respective distributions.  More formally, $\omega$ is a valid coupling of $\mu$ and $\nu$ if the following hold:
\begin{itemize}
    \item For all $\tau\in\Omega$, \[ \sum_{\sigma\in\Omega} \omega(\sigma,\tau) = \nu(\tau),
    \]
        \item For all $\sigma\in\Omega$, \[ \sum_{\tau\in\Omega} \omega(\sigma,\tau) = \mu(\sigma).
    \]
\end{itemize}

The goal is to design a coupling $\omega$ which minimizes the disagreement probability which is the probability of $\sigma\neq \tau$ where $(\sigma,\tau)$ is a sample from the coupled distribution $\omega$.  This disagreement probability provides an upper bound on the total variation distance of the coupled distributions $\mu$ and $\nu$, and there always exists a coupling where the disagreement probability equals the total variation distance; this is summarized in the following lemma, often known as the coupling lemma (see, e.g., \cite[Lemma 3.6]{Aldous}).

\begin{lemma}
    For a pair of distributions $\mu$ and $\nu$ on a finite state space $\Omega$, let $\PP(\mu,\nu)$ denote the family of all valid couplings of $\mu$ and $\nu$.
    \begin{itemize}
        \item For any $\omega\in\PP(\mu,\nu)$, for a sample $(\sigma,\tau)\sim\omega$, 
        \[
        \|\mu-\nu\|_{\mathrm{TV}} \leq \Prob{\sigma\neq\tau}.
        \]
        \item There exists a coupling $\omega\in\PP(\mu,\nu)$ where:
         \[
        \|\mu-\nu\|_{\mathrm{TV}} = \Prob{\sigma\neq\tau}.
        \]
    \end{itemize}
\end{lemma}

For a more comprehensive introduction to coupling methods in the MCMC setting see \cite{LevinPeresWilmer,Jerrum:notes}.

\subsection{Reversibility}

We say an ergodic Markov chain is defined by $(\Omega,P,\mu)$ to refer to $\Omega$ as the (finite) state space, $P$ is the $N\times N$ transition matrix where $N=|\Omega|$, and $\mu$ is the stationary distribution (which is unique since we will only consider ergodic Markov chains in this text).  For some of the upcoming algebraic statements in terms of eigenvalues we will need to restrict attention to reversible chains; these are Markov chains whose transition matrix $P$ and stationary distribution $\mu$ satisfy the following condition known as the reversibility condition, namely, for all pairs $\sigma,\sigma'\in\Omega$, 
\begin{equation}
    \label{reversibility}
\mu(\sigma)P(\sigma,\sigma')=\mu(\sigma')P(\sigma',\sigma).
\end{equation}
It is straightforward to verify the reversibility condition \cref{reversibility} for the Glauber dynamics since $P(\sigma,\sigma')>0$ iff $\sigma = \sigma'$ or $\sigma\oplus\sigma'=\{v\}$ for some $v\in V$ (i.e., the Glauber dynamics changes at most one vertex in each transition).
For a discussion of non-reversible chains, we refer the interested reader to Montenegro and Tetali's monograph~\cite{MontenegroTetali}.

\subsection{Dirichlet Form and Variance}
\label{sub:Dirichlet}

Consider a function $f:\Omega\rightarrow\R$.  The expectation of $f$ with respect to the stationary distribution $\mu$ is denoted as follows:
\[
\mu(f) := \Exp_\mu[f] := \sum_{\eta\in\Omega}\mu(\eta)f(\eta).
\]
We will also study the variance of $f$ with respect to $\mu$ which has the following convenient, equivalent formulations that can be easily verified by simple algebra:
\begin{align*}
\nonumber
\Var_\mu(f) & 
:= \mu\left( \left(f - \mu(f)\right)^2 \right)
\\
& = \sum_{\sigma\in\Omega}\mu(\sigma)\left(f(\sigma)-\mu(f)\right)^2
\\
& = \frac12\sum_{\sigma,\eta\in\Omega}\mu(\sigma)\mu(\eta)(f(\sigma)-f(\eta))^2,
\end{align*}
see \cite[(5.4-5.5)]{Jerrum:notes} for a proof of the last identity.
Notice that the last formulation of variance is a sum over all pairs of states of the difference in their functional value (squared to measure the absolute value).

It will be convenient later to use inner product and $2$-norm to represent the variance functional. 
For two functions $f,g:\Omega\rightarrow\R$, the \emph{inner product} of $f$ and $g$ with respect to $\mu$ is defined as
\begin{align*}
    \ip{f}{g}_\mu := \sum_{\eta\in\Omega} \mu(\eta) f(\eta)g(\eta) = \mu(fg),
\end{align*}
and the \emph{$2$-norm} of $f$ with respect to $\mu$ is defined as
\begin{align*}
    \norm{f}_{2,\mu} = \sqrt{\ip{f}{f}_\mu}.
\end{align*}
Thus, we can represent the variance of $f$ as
\begin{align*}
    \Var_\mu(f) = \norm{f-\mu(f)}_{2,\mu}^2.
\end{align*}

The variance functional is closely related to the $\chi^2$-divergence. 
Specifically, for any distribution $\nu$ over $\Omega$, the \emph{$\chi^2$-divergence} of $\nu$ from $\mu$ is defined as
\begin{align*}
    D_{\chi^2}(\nu \,\Vert\, \mu) := \sum_{\eta \in \Omega} \mu(\eta) \left( \frac{\nu(\eta)}{\mu(\eta)} - 1 \right)^2
    = \sum_{\eta \in \Omega} \frac{\left(\nu(\eta)-\mu(\eta)\right)^2}{\mu(\eta)}.
\end{align*}
Observe that $D_{\chi^2} (\nu \,\Vert\, \mu) \ge 0$, with an equality iff $\nu = \mu$.
If we let $f = \nu/\mu$ be the relative density of $\nu$ with respect to $\mu$, i.e., $f(\eta) = \nu(\eta)/\mu(\eta)$ for each $\eta \in \Omega$, then it holds $\mu(f) = 1$ and $\Var_\mu(f) = D_{\chi^2} (\nu \,\Vert\, \mu)$.

In contrast to the variance which measures the ``global variation'' of the function $f$, the {\em Dirichlet form} is a measure of the ``local variation'' for a specified Markov chain.  The terms global vs. local are with respect to the graph $(\Omega,P)$.
The Dirichlet form measures the difference in the functional value over pairs of adjacent states with respect to the graph $(\Omega,P)$ as formulated in the following definition.
\begin{definition}[Dirichlet Form] For a Markov chain on state space $\Omega$ with transition matrix $P$, and for a function $f:\Omega\rightarrow\R$, define the \emph{Dirichlet form} as follows:
    \[
        \Dirichlet_P(f) := \frac{1}{2}
        \sum_{\sigma,\eta\in\Omega}\mu(\sigma)P(\sigma,\eta)(f(\sigma)-f(\eta))^2.
    \]
    \label{defn:Dirichlet}
\end{definition}
Equivalently, we can write the Dirichlet form as
\begin{align*}
    \Dirichlet_P(f) = \ip{f}{(I-P)f}_\mu,
\end{align*}
where $I$ is the identity matrix, and $(I-P)f$ is a function from $\Omega$ to $\R$ obtained as a matrix-vector product of $I-P$ and $f$ (viewed as a column vector).

 \subsection{Spectral Gap}
 \label{sub:spectral-gap}
 
For an ergodic, reversible Markov chain $(\Omega,P,\mu)$, recall that $P$ denotes the transition matrix and $\mu$ is the stationary distribution which means that $\mu$ is a row eigenvector of $P$ with eigenvalue $1$.  Since the Markov chain is reversible then the eigenvalues of $P$ are real-valued, and hence we denote the eigenvalues of $P$ as follows:
 \[
 1> \lambda_2\geq\lambda_3\geq \cdots \geq \lambda_N>-1.
 \] 
 The fact that $\lambda_N>-1$ follows from the aperiodicity of an ergodic Markov chain.  
The \emph{spectral gap} is defined as $\gamma=1-\lambda_2$, and the \emph{absolute spectral gap} is $1-\lambda_*$ where $\lambda_*=\max\{\lambda_2,|\lambda_N|\}$ is the second largest eigenvalue in absolute value. 

Dyer, Greenhill, and Ullrich~\cite{DGU} proved that the heat-bath block dynamics is positive semidefinite (PSD), that is, all of the eigenvalues of $P$ are non-negative.  The heat-bath block dynamics is defined by a collection of blocks $B_1,\dots,B_L$ where each $B_i\subset V$ and $\cup_i B_i= V$; the steps $X_t\rightarrow X_{t+1}$ of the block dynamics operate by first choosing a random $B_i$ and then choosing the configuration $X_{t+1}(B_i)$ on $B_i$ from the Gibbs distribution conditional on $X_t(V\setminus B_i)$ (the configuration on $V\setminus B_i$ stays the same).  
The transition matrix of heat-bath update for one block is positive semidefinite (since it is a block diagonal matrix with positive semidefinite blocks of rank 1). The positive semidefinitness now follows since the heat bath chain is a mixture of block updates.
The Glauber dynamics is the special case of the heat-bath block dynamics where the blocks are single vertices.
Hence, for the Glauber dynamics, the absolute spectral gap is the same as the spectral gap.

  The absolute spectral gap measures the decay rate of variance whereas the spectral gap is potentially off by a factor of $1/2$ (though this distinction is irrelevant for the case of the Glauber dynamics).  The spectral gap is convenient to work with due to the following formulation known as a Poincar\'{e} inequality (see \cite[Lemma 13.7]{LevinPeresWilmer} 
for a proof of the equivalence).

\begin{lemma}[Spectral gap]
    For a reversible Markov chain defined by $(\Omega,P,\mu)$, the spectral gap $\gamma$ is the largest constant such that the following inequality holds for any function $f:\Omega\rightarrow\R$,
    \begin{equation}
    \label{defn:Poincare}
        \gamma\cdot \Var_\mu(f) \leq \Dirichlet_P(f)\, .
    \end{equation}
\end{lemma}

For non-reversible Markov chains, we can define $\gamma$ as in \cref{defn:Poincare} which is referred to as the Poincar\'{e} constant (see~\cite[Remark 5.3]{Jerrum}); however it is not necessarily equivalent to the spectral gap $1-\lambda_2$ in the case of non-reversible chains.

Recall, the
relaxation time is defined as the inverse of the absolute spectral gap:
\[ \Trelax :=(1-\lambda_*)^{-1}.
\] 
Let $P_{\textsc{gd}}$ denote the transition matrix of the Glauber dynamics.  As noted above, for the Glauber dynamics the transition matrix $P_{\textsc{gd}}$ is PSD and hence the absolute spectral gap is the same as the spectral gap, and thus we have:
\begin{equation}\label{eq:relax-gap}
    \Trelax(P_{\textsc{gd}}) \leq  \gamma^{-1}.
\end{equation}

\subsection{Approximate Tensorization of Variance}
\label{sub:app-tensorization}

A useful tool for bounding the spectral gap of the Glauber dynamics is the following property of the Gibbs distribution known as approximate tensorization of variance.  

\begin{definition}
\label{defn:approx-tensorization-var}
    For a graph $G=(V,E)$, let $\mu$ be a probability distribution on $\Omega\subseteq\{0,1\}^V$.  For a constant $C\geq 1$, we say that
    $\mu$ satisfies \emph{$C$-approximate tensorization of variance} if the following holds for all $f:\Omega\rightarrow\R$,
    \begin{equation}
    \label{eq:AT}
        \Var_\mu(f) \leq C\sum_{v\in V} \Exp_{\mu}[\Var_v(f)].
    \end{equation}
\end{definition}

It is important to stress that approximate tensorization of variance is a property of the Gibbs distribution and does not explicitly involve the Glauber dynamics, but as we will see shortly it is equivalent to bounding the spectral gap.  
When $\mu$ is a product distribution, i.e., $\mu=\otimes_{v\in V}\mu(v)$, then \cref{eq:AT} holds with $C=1$, which is the optimal constant (it is necessary that $C\ge 1$, as can be seen by taking functions depending only on a single vertex).

Let us decipher the right-hand side of \cref{eq:AT} before proceeding: for the term $\Exp_{\mu}[\Var_v(f)]$, we first sample a configuration $\tau$ on $V\setminus\{v\}$ where $\tau$ is drawn from the projection of $\mu$ on $V\setminus\{v\}$, then we consider the variance of the function $f$ with respect to the marginal distribution of $\mu$ on $v$ conditional on $\tau$.  
More formally, for a configuration $\tau\in\{0,1\}^{V\setminus\{v\}}$ on all vertices except $v$, let $\mu(\tau)$ denote the marginal probability of $\tau$ in the Gibbs distribution~$\mu$, i.e., $\mu(\tau) = \sum_{\sigma\in\Omega: \sigma(V\setminus\{v\})=\tau} \mu(\sigma)$.  Conversely, let $\mu_\tau$ denote the Gibbs distribution $\mu$ conditional on $\tau$, and thus for $i\in\{0,1\}$, 
\[ \mu_\tau(\eta(v)=i) = \mu(\eta(v)=i \mid \eta(V\setminus\{v\})=\tau).
\]
Then we have the following:
\begin{equation}
\label{eq:AT-expanded}
\Exp_{\mu}[\Var_v(f)]
= 
\sum_{\tau\in\{0,1\}^{V\setminus\{v\}}}\mu(\tau)\mu_\tau(\eta(v)=0)    \mu_\tau(\eta(v)=1)\left(f(\tau_{v,0})-f(\tau_{v,1})\right)^2,
\end{equation}
where, for $i\in\{0,1\}$, $\tau_{v,i}$ is the extension of $\tau$ to an assignment on all vertices where for $w\in V$:
\[
\tau_{v,i}(w) := \begin{cases} i, & \mbox{if } w=v; \\
\tau(w), & \mbox{if } w\neq v.
\end{cases}
\]

For a configuration $\sigma\in\Omega$ on all vertices $V$, for $v\in V$, let $\sigma_v=\sigma_{v,1-\sigma(v)}$ denote the configuration which flips the spin at $v$ and keeps all other spins the same.  Let $P_{\textsc{gd}}$ denote the transition matrix for the Glauber dynamics.  Using the observation that $\mu^{\sigma(V\setminus\{v\})}(\eta(v)=1-\sigma(v)) = nP_{\textsc{gd}}(\sigma,\sigma_v)$ and recalling the definition of the Dirichlet form (see \cref{defn:Dirichlet}), we can rewrite \cref{eq:AT-expanded} as follows:
\begin{align}
\label{eq:AT-expanded-further}
\sum_{v\in V}\Exp_{\mu}[\Var_v(f)]
& = 
\frac{n}{2}\sum_{v\in V} \sum_{\sigma\in\Omega}\mu(\sigma)P_{\textsc{gd}}(\sigma,\sigma_v)\left(f(\sigma)-f(\sigma_v)\right)^2
\\
& = 
  n\Dirichlet_{P_{\textsc{gd}}}(f).
\end{align}

Therefore, 
     \[  \Var_\mu(f) \leq C\sum_{v\in V} \Exp_{\mu}[\Var_v(f)] \quad\iff\quad  
     \frac{1}{Cn} \cdot \Var_\mu(f) \le \Dirichlet_P(f) ,
     \]
     and hence $C$-approximate tensorization of variance is equivalent to the spectral gap satisfying $\gamma\geq 1/(Cn)$.

\begin{corollary}
    \label{lem:AT-gap}
    A distribution $\mu\in\{0,1\}^V$ satisfies $C$-approximate tensorization of variance iff the Glauber dynamics for $\mu$ has spectral gap $\gamma\geq 1/(Cn)$ (or equivalently, the Glauber dynamics has relaxation time $\Trelax\leq Cn$).
\end{corollary}

The analog of approximate tensorization of variance with entropy in place of variance is presented in \cref{sec:optimal}, see \cref{def:ent-tensorization}.  Whereas approximate tensorization of variance implies optimal bounds of $O(n)$ time on the relaxation time, the stronger condition of approximate tensorization of entropy implies optimal bounds of $O(n\log{n})$ time on the mixing time.  We refer the interested reader to \cref{sec:optimal} for further details.

\subsection{Preliminaries}

Here we present a functional characterization of spectral independence, utilizing the variance functional.

\begin{definition}
We say that a measure $\mu$ on $\{0,1\}^n$ satisfies \emph{approximate subadditivity of variance} with constant $C$ if for every $f: \Omega \to \R$
we have
\begin{equation}\label{eq:subadditivity}
\sum_{i=1}^n \Var_\mu [\E[f | X_i]] \leq C\cdot \Var_\mu[f].
\end{equation}
\end{definition}

\begin{lemma}
Measure $\mu$ satisfies $\lambda_{\max}(\Psi_\mu) \le C$
if and only if $\mu$ satisfies approximate subadditivity of variance with constant $C$.
\end{lemma}

\begin{proof}
Without loss of generality, we can assume $\E_\mu[f]=0$. Let $p_i=\Pr(X_i = 1) \in (0,1)$ for $i\in [n]$. Let $a_i = \E[f | X_i= 1]$ and note that  $\E[f | X_i=0]=a_i p_i / (1-p_i)$, since $\E_\mu[f]=0$. In terms of $p_i$ and $a_i$'s the LHS of~\eqref{eq:subadditivity} is 
\begin{equation}\label{eq:LHS}
\sum_{i=1}^n a_i^2 \frac{p_i}{1-p_i}.
\end{equation}
We only need to verify~\eqref{eq:subadditivity} for $f$ that minimizes $\Var_\mu[f]$ subject to  $\E[f | X_i=1]=a_i$ for $i\in [n]$ and $\E_\mu[f]=0$. From the method of Lagrange multipliers
it follows that such $f$ is linear, that is, for some $c_0,c_1,\dots,c_n$ we have
$$
f = c_0 + \sum_{i=1}^n c_i X_i.
$$
Let $c=(c_1,\dots,c_n)$. In terms of $c_i$'s the main term on the RHS of~\eqref{eq:subadditivity} is
$$\Var_\mu[f] = c^T \Cov_\mu c.$$
The $c_i$'s, $a_i$'s, and $p_i$'s satisfy the following equation (for $i\in [n]$):
\begin{equation}\label{eq:ac}
(\Cov_\mu c)_i = \E[f X_i] - \E[f] \E[X_i] = \Pr[X_i=1]\E[f|X_i=1] - 0 = p_i a_i.
\end{equation}
Let $D = \diag(\Var(X_i))$ be the variance matrix.  We have $D_{ii}=p_i(1-p_i)$ for $i\in [n]$ and recall
that $C$-spectral independence means $\Cov_\mu\preceq C D$, see \cref{cov-SI-connection}. 
Thus we can express the LHS of~\eqref{eq:subadditivity} as $c^T \Cov_\mu D^{-1} \Cov_\mu c$ and~\eqref{eq:subadditivity}
holding for every $f$ is equivalent to the following
\begin{equation}\label{eq:laststep}
\Cov_\mu D^{-1} \Cov_\mu \preceq C\, \Cov_\mu .
\end{equation}
If $\mu$ is $C$ spectrally independent then we have $D^{-1}\preceq C \Cov_\mu^{-1}$ and by multiplying by $\Cov_\mu$ on both left and right we obtain~\eqref{eq:laststep} and hence $C$-approximate subadditivity of variance
(technically one should use the pseudo-inverse for $\Cov_\mu$ since it is not guaranteed to be invertible; this does not change the proof).

If $\mu$ satisfies $C$-approximate subadditivity of entropy then we have~\eqref{eq:laststep}; multiplying both sides by $\Cov_\mu^{-1}$ (technically we are using
the pseudo-inverse for $\Cov_\mu$) we obtain $D^{-1}\preceq C\,\Cov_\mu^{-1}$, which implies
$\Cov_\mu\preceq C\cdot D$ and hence, by \cref{cov-SI-connection}, $\lambda_{\max}(\Psi_\mu) \le C$.
\end{proof}

\subsection{Mixing Time Bounds via Spectral Gap}
\label{sub:mixingviagap}

We now show how the spectral gap implies an upper bound on the mixing time.   For the Glauber dynamics~$(X_t)$, let $P_{\textsc{gd}}$ denote the transition matrix, let $\nu_t$ denote the distribution of the random variable~$X_t$ at time~$t$, and let~$\mu$ denote the stationary distribution.  Let 
\[
\mu^*=\min_{x\in\Omega} \mu(x).
\]
We will show the following bound:
 \begin{equation}\label{eq:mix-gap}
 \Tmix(P_{\textsc{gd}}) \leq \frac{1}{2\gamma}\ln(4/\mu^*).
\end{equation}
We will also establish a more general bound which accounts for a so-called warm start when the initial distribution $\nu_0$ is somewhat close to the stationary distribution $\mu$.

Let $f_t = \nu_t/\mu$ denote the relative density at time $t$, i.e., $f_t(x) = \nu_t(x)/\mu(x)$ for each $x \in \Omega$.
Notice that $\mu(f_t)=1$.
Recall the $\chi^2$-divergence from stationarity defined as:
\[
D_{\chi^2}(\nu_t \,\Vert\, \mu) =
\Var_\mu(f_t) = 
  \left\| f_t - 1\right\|_{2,\mu}^2 
  = \sum_{x\in\Omega}\mu(x)\left(\frac{\nu_t(x)}{\mu(x)} - 1\right)^2.
\]
Meanwhile, the $L^1$-distance from stationarity is related to the total variation distance:
\[
 \left\| f_t - 1\right\|_{1,\mu} = 
 \sum_{x\in\Omega} |\nu_t(x) - \mu(x)| = 2 \left\| \nu_t - \mu\right\|_{\mathrm{TV}}.
\]

We can bound the convergence time to the stationary distribution $\mu$ as a function of the initial distance.
Let 
\begin{equation}
    \label{initial-dist}
    M:= D_{\chi^2}(\nu_0 \,\Vert\, \mu)
\end{equation} 
denote the initial $\chi^2$-divergence from the stationary distribution.
Note that $M$ is typically exponentially large in $n$:
\begin{equation}
    \label{M-worst}
  M    = 
  \left(\sum_x \frac{\nu_0(x)^2}{\mu(x)}\right)-1 
  \leq \frac{1}{\mu^*}\sum_x \nu_t(x)^2
  \leq  \frac{1}{\mu^*}.
\end{equation}
In contrast, when $M=O(1)$ (e.g., if $\nu_0(x)\leq 2\mu(x)$ for all $x\in\Omega$) then we refer to $\nu_0$ as a {\em warm start}.
We will prove the following bound on the distance to stationarity of the Glauber dynamics after $t$ steps as a function of the initial distance~$M$:
\begin{equation}
\label{eq:warm-start}
 4\left\| \nu_t - \mu\right\|_{\mathrm{TV}}^2 
 \leq D_{\chi^2}(\nu_t \,\Vert\, \mu)
 \leq M\exp(-2\gamma t).
\end{equation}
The above bound holds for the Glauber dynamics, and more generally holds for any reversible chain whose transition matrix is positive semidefinite (consequently, the absolute spectral gap is the same as the spectral gap);
we will prove it below.
Plugging in the worst case bound for the initial distance $M$ from \cref{M-worst} into \cref{eq:warm-start} yields the upper bound on the mixing time stated earlier in \cref{eq:mix-gap}.

For general Markov chains (not necessarily reversible), defined by a transition matrix $P$, recall $\gamma$ is defined in \cref{defn:Poincare}.  
Note that for non-reversible chains the eigenvalues of the transition matrix are not necessarily real.
For general chains (not necessarily reversible) a similar result as in \cref{eq:warm-start} holds with $\gamma/2$ in place of $2\gamma$ for the lazy version $P_{zz} := \frac{1}{2}(P+I)$ (which transitions as in the original chain $P$ with probability $1/2$ and stays in the same state with probability $1/2$), see \cite{Mihail} and \cite[Theorem 5.6]{Jerrum:notes}.  

Notice that if we have a warm start for the initial distance then  
the mixing time is on the order of the inverse spectral gap.  Consequently, in many approximate counting algorithms, such as the simulated annealing algorithm of \cite{SVV:annealing,Huber,Kolmogorov,HK}, the running time avoids the extra $O(\log(1/\mu^*))$ factor.

The main task is to prove \cref{eq:warm-start}; then \cref{eq:mix-gap} follows as a corollary as noted above by plugging in the worst-case bound on $M$ from \cref{M-worst}.  We refer the interested reader to \cite[Chapter 5]{Jerrum:notes} for an alternative (and somewhat simpler) proof of \cref{eq:warm-start} for the lazy chain $P_{zz}$.  The below proof is stated for the Glauber dynamics but holds for any PSD transition matrix.

\begin{proof}[Proof of \cref{eq:warm-start}]
We first utilize the following key fact for the decay of variance/$\chi^2$-divergence of the Glauber dynamics chain with transition matrix $P_{\textsc{gd}}$:
\begin{align}
    & \Var(P_{\textsc{gd}}f)\leq (1-\gamma)^2 \Var(f), \quad \forall f: \Omega \to \R; \nonumber \\
    \quad\iff\quad
    & D_{\chi^2}(\nu P_{\textsc{gd}} \,\Vert\, \mu) \le (1-\gamma)^2 D_{\chi^2}(\nu \,\Vert\, \mu), \quad \text{$\forall$ distribution $\nu$ on $\Omega$}. \label{eqn:decay-variance}
\end{align}
We will establish \cref{eqn:decay-variance} later in the proof.
Note, for the analog of \cref{eqn:decay-variance} for the lazy version of general chains (including non-reversible chains), 
see \cite{Mihail} and, e.g., \cite[Corollary 5.7]{Jerrum:notes}.

To complete the proof of \cref{eq:warm-start}, assuming \cref{eqn:decay-variance} holds, we will begin by showing the first inequality in \cref{eq:warm-start}.  Namely, by Cauchy--Schwarz inequality
the $L^1$-distance is upper bounded by the $L^2$-distance (see also \cite[(7.1)]{MontenegroTetali}):
\begin{align*}
    2 \left\| \nu_t - \mu\right\|_{\mathrm{TV}}
    = \left\| f_t - 1\right\|_{1,\mu}
    = \mu\left( |f_t-1| \right)
    \le \sqrt{ \mu\left( (f_t-1)^2 \right) }
    = \left\| f_t - 1\right\|_{2,\mu}
    = \sqrt{ D_{\chi^2}(\nu_t \,\Vert\, \mu) }.
\end{align*}
For the second inequality in \cref{eq:warm-start}, we deduce from \cref{eqn:decay-variance} that
\begin{equation*}
D_{\chi^2}(\nu_t \,\Vert\, \mu)
= D_{\chi^2}(\nu_0 P^t \,\Vert\, \mu)
\leq (1-\gamma)^{2t} D_{\chi^2}(\nu_0 \,\Vert\, \mu)
\leq M\exp(-2\gamma t),
\end{equation*}
where $M$ is the $\chi^2$-divergence of the initial distribution $\nu_0$ to the stationary distribution $\mu$, see \cref{initial-dist}.

It remains to prove \cref{eqn:decay-variance}.  We claim the following holds:
\begin{equation}
    \label{ineq:var-dirichlet2}
      \Var_\mu(P_{\textsc{gd}}f) = \Var_\mu(f) - \Dirichlet_{P_{\textsc{gd}}^2}(f).
\end{equation}
Without loss of generality, we assume $\mu(f) = 0$.
We observe that
\begin{align*}
    \Var_\mu(f) = \norm{f}_{2,\mu}^2 = \langle f, f \rangle_\mu
\end{align*}
and
\begin{align*}
    \Var_\mu(P_{\textsc{gd}}f) 
    = \norm{P_{\textsc{gd}}f}_{2,\mu}^2
    = \langle P_{\textsc{gd}}f, P_{\textsc{gd}}f \rangle_\mu
    = \langle f, P_{\textsc{gd}}^2f \rangle_\mu,
\end{align*}
where the last equality follows from the reversibility of $P_{\textsc{gd}}$.
Therefore, we deduce that
\begin{align*}
    \Var_\mu(f) - \Var_\mu(P_{\textsc{gd}}f) = \langle f, (I-P_{\textsc{gd}}^2)f \rangle_\mu = \Dirichlet_{P_{\textsc{gd}}^2}(f).
\end{align*}
By the Poincar\'{e} inequality \cref{defn:Poincare} $\gamma \Var_\mu(f)\leq \Dirichlet_{P_{\textsc{gd}}}(f)$, we have that
$(1-(1-\gamma)^2) \Var_\mu(f)\leq \Dirichlet_{P_{\textsc{gd}}^2}(f)$ since every eigenvalue of $P_{\textsc{gd}}^2$ is the square of an eigenvalue of $P_{\textsc{gd}}$, and the Glauber dynamics $P_{\textsc{gd}}$ is positive semidefinite.
From \cref{ineq:var-dirichlet2}, we then obtain \cref{eqn:decay-variance} that
\begin{align*}
    \Var(P_{\textsc{gd}}f)\leq (1-\gamma)^2 \Var(f),
\end{align*}
completing the proof.
\end{proof}

We showed in \cref{eq:mix-gap} how a lower bound on the spectral gap of the Glauber dynamics (which is equivalent to an upper bound on the relaxation time) implies an upper bound on the mixing time with an extra $O(\log(1/\mu^*))$ factor.  In \cref{rem:connection-mixing-to-relax} in \cref{sub:coupling-SI} we discuss the reverse connection: namely, an optimal upper bound on the mixing time of the Glauber dynamics of the form $\Tmix(\eps)=O(n\log(n/\eps))$ for all $\eps>0$ implies an optimal upper bound on the relaxation time of $\Trelax=O(n)$.

\section{Rapid Mixing from Spectral Independence}
\label{sec:rapid}

The main result of this section is \cref{thm:SI-mixing}, which states that if $\mu$ is $\eta$-spectrally independent for a constant~$\eta$, then the Glauber dynamics has polynomial mixing time.

\subsection{Pinnings}
\label{sub:pinnings}
Recall that a pinning is a fixed assignment of spins to a subset of vertices.  For $S\subset V$ and a pinning $\tau:S\rightarrow\{0,1\}$, denote the assignments consistent with $\tau$ as $\Omega_\tau:=\{\sigma\in\Omega: \sigma(S)=\tau(S)\}$. 

We will partition the pinnings based on the number of  pinned vertices.
For $0\leq k\leq n$, denote the set of valid pinnings of size $k$ by 
\[ \Pinning_k:=\{\tau\in\{0,1\}^S: \mbox{ for some }S\subseteq V, \mbox{ such that } |S|=k, \Omega_\tau\neq\emptyset\}.
\]

Recall, 
the set of valid configurations is denoted by
\[
\Omega=\{\sigma\in\{0,1\}^V:\mu(\sigma)>0\}.
\]
Observe that $\Omega=\Pinning_n$, that is the set of valid configurations is the same as the set of pinnings on all $n$ vertices (this follows from the observation that for $\tau\in\Omega$, $\Omega_\tau=\{\tau\}\neq\emptyset$ and
hence $\tau\in\Pinning_n$).

Finally, denote the collection of all (valid) pinnings by:
\[ \Pinning = \bigcup_{k=0}^n \Pinning_k.\]
 
For a pinning $\tau\in\Pinning$, let $\mu_{\tau}$ denote the conditional Gibbs distribution (that is, the distribution $\mu$ conditional on the fixed assignment $\tau$ on $S$).

\subsection{Rapid Mixing Theorem}
We will bound the mixing time of the Glauber dynamics by considering the spectral gap.
We will prove the following result of Anari, Liu, and Oveis Gharan~\cite{ALO20}. Namely, that if the Gibbs distribution $\mu$ is $\eta$-spectrally independent (note, \cref{defn:SI} requires it for all pinnings $\tau\in\Pinning$) then the spectral gap satisfies:
\begin{equation}
\label{eqn:gap-SI}
\gamma(\Glauber) \geq \frac{C(\eta)}{n^{1+\eta}},
\end{equation}
for a constant $C(\eta)>0$.
By \cref{eq:relax-gap} this implies relaxation time $O(n^{1+\eta})$
and mixing time $O(n^{2+\eta})$, and hence 
this proves \cref{thm:SI-mixing}. 

\subsection{Local to Global: Random Walk Theorem of Alev-Lau}
\label{sec:local-walks}

The key technical tool in the proof of \cref{eqn:gap-SI} is the following local-to-global theorem of Alev and Lau~\cite{AL20}, which improved upon Kaufman and Oppenheim~\cite{KO20a}.
Our input graph $G=(V,E)$ has $n=|V|$ vertices and each vertex has two possible spin assignments $0$ or $1$.  The state space of the local walk are the $2n$ (vertex,spin) pairs $(v_i,s_i)\in V\times\{0,1\}$. The transition probabilities are defined based on their pairwise joint probabilities.  The stationary distribution of the local walk corresponds to the marginal probability $\mu(\sigma(v_i)=s_i)$ of that particular vertex $v_i$ having the specified spin $s_i$.

The matrix $Q$ is a real-valued $2n\times 2n$ matrix.  For $1\le i\neq j\le n$, $s_i,s_j\in\{0,1\}$, let
\[ Q((i,s_i),(j,s_j)) = \frac{1}{n-1}\Pr_{\sigma\sim\mu}[\sigma(j)=s_j\mid \sigma(i)=s_i],
\]
and when $i=j$, then for $s,s'\in\{0,1\}$, let $Q((i,s),(i,s'))=0$.
Moreover, for $0\leq k\leq n-2$,
for a subset $S\subset V$ where $|S|=k$, for a pinning $\tau:S\rightarrow\{0,1\}$ where $\tau\in\Pinning$, for $i,j\notin S$ where $i\neq j$, and $s_i,s_j\in\{0,1\}$, let
\begin{eqnarray}
\nonumber
Q_{\tau}((i,s_i),(j,s_j)) &=& \frac{1}{n-k-1}\Pr_{\sigma\sim\mu}(\sigma(j)=s_j\mid \sigma(i)=s_i,\sigma(S)=\tau)
\\
\label{def:Q}
& = &  \frac{1}{n-k-1}\Pr_{\sigma\sim\mu_\tau}(\sigma(j)=s_j\mid \sigma(i)=s_i),
\end{eqnarray}
and when $i=j$ we set the entries to $0$ as before.  Notice that $Q=Q_\emptyset$, that is the local walk $Q$ corresponds to the empty pinning.  
In the definition of the local walk $Q$ and the generalization in~\cref{def:Q}, we restrict attention to pairs $(i,s_i)$ where $\mu_\tau(\sigma(i)=s_i)>0$.

The Markov chain $Q$ is called the ``local walk'' because it captures local information of the Gibbs distribution, namely the marginal probabilities of (vertex,spin) pairs.  In contrast, the Glauber dynamics is referred to as a ``global walk'' because it captures the probability of a configuration on the entire graph. 

The upcoming Random Walk Theorem of Alev and Lau~\cite{AL20} is referred to as a local-to-global theorem because it relates the behavior of local chains $Q$ to the Glauber dynamics which is a global chain.
To do this we need to consider the local walk on each level.  Namely, for $k$ where $0\leq k\leq n-2$, for $\tau\in\Pinning_k$, consider the local walk $Q_\tau$.  Let $\gamma_k$ be the minimum spectral gap for a local walk $Q_\tau$ where $\tau\in\Pinning_k$.

We can now formally state the Random Walk Theorem of Alev and Lau~\cite{AL20} (which improves upon earlier work of~\cite{KO20a}).
\begin{theorem}[Random Walk Theorem \cite{AL20}]
\label{thm:RW}
\begin{equation}
    \label{eqn:RW-thm-simplified}
 \gamma(P_{\mathrm{Glauber}}) \geq \frac{1}{n}\prod_{k=0}^{n-2}\gamma_k,
\end{equation}
where $$\gamma_k=\min_{\tau\in\Pinning_k}\gamma(Q_{\tau})$$ is the spectral gap
for the local walk with a worst-case pinning of $k$ vertices.
\end{theorem}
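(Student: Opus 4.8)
The plan is to prove the bound by a telescoping argument that relates the Glauber dynamics to a chain built from "climbing up" from pinnings of size $k$ to pinnings of size $k+1$, one vertex at a time. First I would introduce, for each level $k$ with $0 \le k \le n-1$, the space of pinnings $\Pinning_k$ (equipped with the natural probability weights inherited from $\mu$) and define the \emph{up operator} $\up{k}$, which takes a pinning $\tau \in \Pinning_k$ and extends it to a pinning on one additional uniformly random free vertex $v$, drawing the spin at $v$ from the conditional marginal $\mu_\tau(\sigma(v) = \cdot)$; and the \emph{down operator} $\down{k+1}$, which takes a pinning of size $k+1$ and forgets a uniformly random one of its $k+1$ pinned coordinates. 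One checks that $\up{k}$ and $\down{k+1}$ are adjoint with respect to the level weights, so that $\updown{k} := \up{k}\down{k+1}$ is a reversible Markov chain on $\Pinning_k$, and $\down{k+1}\up{k}$ is reversible on $\Pinning_{k+1}$. A direct computation — this is the key structural identity — shows that $\updown{n-1}$ on $\Pinning_{n-1} = \Omega$ is, up to the lazy self-loop, exactly the Glauber dynamics (resampling one coordinate), so $\gamma(\Glauber)$ is controlled by $\gamma(\updown{n-1})$.

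Next I would establish the two ingredients that make the telescoping work. The first is the \emph{local-to-global step}: the spectral gap of the up-down walk $\updown{k}$ at level $k$ is bounded below in terms of the spectral gap of the up-down walk $\updown{k-1}$ at the level below, with a loss factor governed exactly by the gaps of the local walks $Q_\tau$ for $\tau \in \Pinning_{k-1}$ — and here the crucial point is identifying $Q_\tau$ (the $2|T|\times 2|T|$ matrix on (vertex,spin) pairs from \cref{def:Q}) with the one-step up-then-down walk restricted to the "link" of $\tau$, i.e.\ the walk on size-$1$ pinnings of $\mu_\tau$. This is where the decomposition of variance across levels enters: one writes $\Var$ at level $k$ as an average of "within-link" variances plus a "between-link" variance, bounds the within-link part using $\gamma_{k-1} \le \gamma(Q_\tau)$, and recognizes the between-link part as the Dirichlet form of $\updown{k-1}$. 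Iterating this recursion from level $n-1$ down to level $0$, where $\updown{0}$ lives on the single empty pinning and has gap $1$ trivially, yields $\gamma(\updown{n-1}) \ge \prod_{k=0}^{n-2}\gamma_k$. The second ingredient is the comparison between the genuine Glauber chain and the lazy up-down walk $\updown{n-1}$, which contributes the remaining factor of $1/n$: the down step forgets a uniformly random coordinate, so with probability $1/n$ it re-selects the coordinate that was just added, and accounting for this gives $\Glauber \succeq \tfrac1n(\updown{n-1} - \text{laziness})$ in the Dirichlet-form order.

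I would assemble these as: $\gamma(\Glauber) \ge \tfrac1n \gamma(\updown{n-1}) \ge \tfrac1n \prod_{k=0}^{n-2}\gamma_k$, which is precisely \cref{eqn:RW-thm-simplified}.

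The step I expect to be the main obstacle is the local-to-global recursion — specifically, carrying out the variance decomposition across a single level cleanly and verifying that the "local" piece is controlled by $\gamma(Q_\tau)$ with the correct normalization $\tfrac{1}{n-k-1}$ appearing in \cref{def:Q}. One must be careful that the measure on $\Pinning_k$ used to define $\updown{k}$ is the "weighted" one (a size-$k$ pinning $\tau$ gets weight proportional to $\mu(\tau)$ times a combinatorial factor counting orderings), that frozen vertices are handled correctly (they contribute $0$ rows/columns to $Q_\tau$ and must be excluded from the vertex count), and that the adjointness $\langle \up{k} f, g\rangle = \langle f, \down{k+1} g\rangle$ holds exactly with these weights. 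The comparison to Glauber and the base case are comparatively routine; the laziness bookkeeping in the $1/n$ factor needs a short but honest argument. I would structure the writeup by first developing the up/down formalism and adjointness as lemmas, then proving the single-level recursion as the main lemma, and finally telescoping and comparing to Glauber.
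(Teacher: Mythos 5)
Your overall architecture (up/down operators, identifying the local walk $Q_\eta$ with the up-down walk restricted to the link of a pinning $\eta$, and inducting over levels) is the same as the paper's, but the accounting of where the factor $\tfrac1n$ comes from is wrong, and this leaves a genuine gap. Your claimed recursion, ``the gap of $\updown{k}$ is bounded by the gap of $\updown{k-1}$ with a loss factor governed exactly by the gaps of the local walks,'' iterated to give $\gamma(\updown{n-1})\geq\prod_{k=0}^{n-2}\gamma_k$, is false. Within a link $\eta\in\Pinning_{k-1}$ the up-down walk $\updown{k}$ moves as $\tfrac{1}{k+1}Q_\eta$ (this is \cref{step:rem:local-downup}), so the correct one-level inequality is the paper's \cref{lem:technical-RW}, $\Dirichlet_{\updown{k}}(f)\geq\frac{k}{k+1}\gamma_{k-1}\Dirichlet_{\downup{k}}(f)$, and it is precisely the product of these unavoidable per-level factors $\frac{k}{k+1}$ over $k=1,\dots,n-1$ that telescopes to $\tfrac1n$. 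A concrete counterexample to your intermediate claim: for a product measure every $\Psi_\tau=I$, so $\gamma_k=1$ for all $k$ by \cref{lem:QandPsi}, yet $\gamma(\updown{n-1})=\gamma(\downup{n})=\gamma(\Glauber)=\tfrac1n$, not $\geq 1$.

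The compensating step you propose at the top level cannot supply the missing $\tfrac1n$. First, $\Omega=\Pinning_n$, not $\Pinning_{n-1}$, and the Glauber dynamics is \emph{exactly} the down-up walk $\downup{n}$ on $\Pinning_n$ (\cref{rem:glauber}), with no laziness discrepancy; its relation to $\updown{n-1}$ is that the two walks have identical nonzero spectra (\cref{lem:updown-downup}), hence equal spectral gaps. So there is no factor of $\tfrac1n$ to be gained by comparing $\Glauber$ with $\updown{n-1}$ — and the comparison ``$\Glauber\succeq\tfrac1n(\updown{n-1}-\text{laziness})$'' does not even typecheck, since the two chains live on different state spaces. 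To repair the proposal you should keep your framework but (i) prove the one-level inequality with the $\frac{k}{k+1}$ factor, via the two decompositions the paper uses (\cref{claim:AAA} for $\Dirichlet_{\updown{k}}$ into link Dirichlet forms, \cref{claim:DDD} for $\Dirichlet_{\downup{k}}$ into link variances — note the ``between-link'' term in your variance decomposition is $\Var_{\pi_{k-1}}$ of the projected function, not a Dirichlet form of $\updown{k-1}$), and (ii) replace your top-level comparison by the exact identities $\gamma(\Glauber)=\gamma(\downup{n})=\gamma(\updown{n-1})$, letting the telescoped $\frac{k}{k+1}$ factors produce the $\tfrac1n$.
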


We will present the proof of the Random Walk Theorem in \cref{sec:RW}.

\subsection{Local Walk Connection to Influence Matrix}
\label{sec:local-influence}

We begin by bounding $\lambda_2$ of the local walk $Q_{\tau}$ in terms of the influence matrix $\Psi_{\tau}$ for the spectral independence technique.  In fact, we can relate the entire spectrum of $Q_{\tau}$ with the spectrum of $\Psi_{\tau}$ in the following manner.

Let us first extend the definition of the influence matrix in \cref{defn:inf-matrix} to a fixed pinning.  For a subset $S\subset V$ where $|S|=k$, for a pinning $\tau:S\rightarrow\{0,1\}$ where $\tau\in\Pinning$, recall (see equation~\eqref{eq:inf-pin}) the influence matrix $\Psi_\tau$ is defined as:
  \[
\Psi_\tau(i\rightarrow j)  := \mu\left[\sigma(j)=1 \mid \sigma(i)=1,\sigma(S)=\tau\right]
- 
\mu\left[\sigma(j)=1 \mid \sigma(i)=0,\sigma(S)=\tau\right].
\]

\begin{lemma}
For a pinning $\tau\in\Pinning_k$,
\label{lem:QandPsi}
\begin{equation}
\label{eq:Inf-to-local}
\lambda_2(Q_{\tau}) = \frac{1}{n-k-1}\left(\lambda_{\mathrm{max}}(\Psi_{\tau}) - 1\right).
\end{equation}
Moreover, \begin{equation}\label{er3}
\mbox{spectrum}(Q_{\tau}) = \mbox{spectrum}\left(\frac{1}{n-k-1}\left(\Psi_{\tau} - I\right)\right)\cup\{1\}\cup\left\{n-k-1\mbox{ copies of } \frac{-1}{n-k-1}\right\}.
\end{equation}
\end{lemma}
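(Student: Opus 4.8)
The plan is to relate the $2n \times 2n$ matrix $Q_\tau$ to the $n' \times n'$ influence matrix $\Psi_\tau$ (where $n' = n-k$ is the number of free vertices, so the relevant denominator is $n'-1 = n-k-1$) by exhibiting the block structure of $Q_\tau$ and diagonalizing it. First I would fix a pinning $\tau \in \Pinning_k$ and restrict attention to the free vertices $T$ with $|T| = n-k$; frozen vertices contribute nothing. For each free vertex $i$, write $p_i = \mu_\tau(\sigma(i)=1)$ and $q_i = \mu_\tau(\sigma(i)=0) = 1-p_i$, both strictly positive by definition of $T$. Order the states of the local walk so the matrix is a $2\times 2$ block matrix with blocks indexed by pairs $(i,j)$ of free vertices: the $(i,j)$ block is $\frac{1}{n-k-1}$ times the $2\times 2$ matrix whose $(s_i,s_j)$ entry is $\mu_\tau(\sigma(j)=s_j \mid \sigma(i)=s_i)$ when $i \neq j$, and is the zero $2\times 2$ matrix when $i=j$.

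The key algebraic observation is that each off-diagonal block, being a (conditional) stochastic-type matrix, has a simple rank structure: the $2\times 2$ matrix $R_{ij}$ with rows $(\mu_\tau(\sigma(j)=0\mid\sigma(i)=s_i), \mu_\tau(\sigma(j)=1\mid\sigma(i)=s_i))$ satisfies $R_{ij} \mathbf{1} = \mathbf{1}$ and can be written as $R_{ij} = \mathbf{1}\,\rho_j^\trans + c_i \Psi_\tau(i\to j) v^\trans$ for appropriate fixed vectors, where $\rho_j = (q_j, p_j)^\trans$ is the marginal at $j$, and the ``correction'' term is exactly the influence $\Psi_\tau(i\to j)$ times a rank-one piece. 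Concretely, one checks that for $s_i \in \{0,1\}$,
\[
\mu_\tau(\sigma(j)=1 \mid \sigma(i)=s_i) = p_j + (s_i - p_i)\cdot\frac{\text{something}}{\text{something}}\,,
\]
and the coefficient of the $(s_i - p_i)$ deviation is precisely $\Psi_\tau(i\to j)$ up to the normalization coming from $\Var(\mathbf{1}_i) = p_i q_i$ — this is the same computation as in the proof of \cref{DL1}. The upshot is that, after conjugating $Q_\tau$ by a suitable block-diagonal change of basis (one $2\times 2$ invertible matrix per free vertex, sending the constant vector $\mathbf{1}$ and the ``signed indicator'' to a nice basis), $Q_\tau$ becomes block-upper-triangular with one block equal to an all-ones-type stochastic matrix whose spectrum contributes the eigenvalue $1$ (the stationary distribution) together with a Perron-type structure, and the complementary block equal to $\frac{1}{n-k-1}(\Psi_\tau - I)$ (the $-I$ coming from the vanishing diagonal blocks $R_{ii}=0$ replacing what would be a $1$ on the diagonal).

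Then I would read off both claimed identities. The full spectrum computation: the eigenvalue $1$ appears once, corresponding to the stationary distribution of $Q_\tau$; the remaining eigenvalues split as the $n-k-1$ eigenvalues associated to the ``constant-direction'' block — which, because each $(i,i)$ block is $0$ rather than $\tfrac{1}{n-k-1}I$, turn out to be $n-k-1$ copies of $\frac{-1}{n-k-1}$ — together with the $n-k$ eigenvalues of $\frac{1}{n-k-1}(\Psi_\tau - I)$, minus the spurious copy that was already counted as the $1$; tracking this bookkeeping carefully gives exactly \cref{er3}. Since $\Psi_\tau$ has all nonnegative real eigenvalues by \cref{DL1} with $\lambda_{\max}(\Psi_\tau) \geq 1$ (the all-ones vector is (close to) an eigenvector, or one argues the top eigenvalue is at least the average diagonal entry $1$), the largest eigenvalue of $\frac{1}{n-k-1}(\Psi_\tau - I)$ is $\frac{1}{n-k-1}(\lambda_{\max}(\Psi_\tau)-1) \geq 0$, which dominates the negative eigenvalue $\frac{-1}{n-k-1}$; hence $\lambda_2(Q_\tau)$, the second-largest eigenvalue of $Q_\tau$, equals $\frac{1}{n-k-1}(\lambda_{\max}(\Psi_\tau)-1)$, giving \cref{eq:Inf-to-local}.

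I expect the main obstacle to be getting the block-triangularization and the eigenvalue bookkeeping exactly right: one must choose the per-vertex change of basis so that the constant direction $\mathbf{1}$ is handled uniformly across all vertices (so the ``global'' constant eigenvector aligns), verify that the off-diagonal correction really is $\Psi_\tau(i\to j)$ with no stray factor, and correctly count multiplicities — in particular explaining why the constant-direction block yields $n-k-1$ copies of $-1/(n-k-1)$ rather than, say, the eigenvalues of a rescaled complete-graph walk. A cleaner route, which I would actually prefer to write up, is to avoid explicit change of basis and instead directly exhibit $Q_\tau$ as $\frac{1}{n-k-1}\big(\widehat{\Psi}_\tau - I_{\text{free}}\big)$ acting on an appropriate $2n$-dimensional space, where $\widehat\Psi_\tau$ is a natural $2n \times 2n$ lift of $\Psi_\tau$ whose spectrum is that of $\Psi_\tau$ together with $n-k$ copies of $0$; combined with the fact that $\frac{1}{n-k-1}(0 - 1) = \frac{-1}{n-k-1}$ for the copies coming from a complementary invariant subspace and the eigenvalue $1$ coming from the stationary distribution, this immediately yields \cref{er3}, and then \cref{eq:Inf-to-local} follows by comparing the candidate second eigenvalues as above.
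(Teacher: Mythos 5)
Your main route is, in substance, the paper's own proof: the paper also splits the $2(n-k)$-dimensional space into the spin-symmetric directions (vectors of the form $(v^T\ v^T)$, on which $Q_\tau$ acts like the walk on the complete graph over the free vertices, contributing the eigenvalue $1$ together with $n-k-1$ copies of $-1/(n-k-1)$) and a complementary ``signed'' family carrying exactly $\frac{1}{n-k-1}(\Psi_\tau-I)$; the paper implements this by subtracting the marginal (rank-one per vertex) corrections to form $M_\pi$ and reading off the block structure with left-eigenvectors $(w,-w)$, whereas you propose a per-vertex change of basis, but the decomposition is the same.

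Three corrections to your bookkeeping, one of which matters. (1) The deviation identity is exact: $\mu_\tau(\sigma(j)=1\mid\sigma(i)=s_i)=p_j+(s_i-p_i)\,\Psi_\tau(i\to j)$, with no $p_iq_i$ normalization (that factor only enters when passing to covariances as in \cref{DL1}). (2) There is no ``spurious copy'' to subtract. Writing $m=n-k$, the spin-independent functions $f(i,s_i)=v(i)$ form an $m$-dimensional invariant subspace on which $Q_\tau$ acts as $\frac{1}{m-1}(J-I)$ ($J$ the all-ones matrix), whose spectrum is $\{1\}$ plus $m-1$ copies of $-1/(m-1)$; the functions $f(i,s_i)=(s_i-p_i)w(i)$ form a complementary $m$-dimensional invariant subspace on which, by the identity in (1), $Q_\tau$ acts as $\frac{1}{m-1}(\Psi_\tau-I)$. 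Since $1+(m-1)+m=2m$, the counts already match and the union is exactly \cref{er3}; your worry about ``a rescaled complete-graph walk'' is moot because its spectrum is precisely $\{1\}\cup\{m-1 \mbox{ copies of } -1/(m-1)\}$, and subtracting a copy as you suggest would leave only $2m-1$ eigenvalues. (3) The ``cleaner route'' at the end is inconsistent as stated: a lift $\widehat{\Psi}_\tau$ with spectrum equal to that of $\Psi_\tau$ plus $m$ copies of $0$ would make $\frac{1}{m-1}(\widehat{\Psi}_\tau-I)$ have $m$ copies of $-1/(m-1)$ and no eigenvalue $1$, so adjoining the stationary eigenvalue would overcount; the two-invariant-subspace statement in (2) is the correct form. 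With (2) in place, \cref{eq:Inf-to-local} follows as you indicate: $\mathrm{tr}(\Psi_\tau)=m$ forces $\lambda_{\max}(\Psi_\tau)\geq 1$ (this is cleaner than appealing to the all-ones vector, which need not be close to an eigenvector), so $\frac{1}{m-1}(\lambda_{\max}(\Psi_\tau)-1)\geq 0>-\frac{1}{m-1}$ and it is indeed the second-largest eigenvalue of $Q_\tau$.
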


As a consequence of the above lemma we have that 
$\gamma_k\geq 1-\eta/(n-k-1)$, see \cref{sec:poly-mixing-proof} for more details.

\begin{proof}[Proof of \cref{lem:QandPsi}]
We will prove the lemma for the case without a pinning.  Recall, $\mu$ denotes the Gibbs distribution.

Consider the local random walk $Q$. Notice that this walk has an $n$-partite structure, namely, for every $v_i\in V$, the states $(v_i,0)$ and $(v_i,1)$ are an independent set, and each of these $n$ independent sets are fully connected by a (weighted) bipartite clique.  This $n$-partite structure yields eigenvalues $-1/(n-1)$ for $Q$.   The stationary distribution for $Q$ is an eigenvector with eigenvalue $1$ (note that when $n\geq 3$ then the chain is clearly aperiodic and hence ergodic).

Let $\pi$ denote the stationary distribution of the local walk $Q$. For a vertex $j\in V$ and a spin/label $s_j\in\{0,1\}$,
we have $\pi((j,s_j))=(1/n) \Pr_{\sigma\sim\mu}[\sigma(j)=s_j]$, and hence the distribution $\pi$ corresponds to the normalized marginal distribution of the Gibbs distribution $\mu$. 
Moreover, let $\pi_j$ denote the marginal distribution at vertex~$j$; that is, $\pi_j((j,s_j))=\Pr_{\sigma\sim\mu}[\sigma(j)=s_j]$ for $s_j\in\{0,1\}$, and the remaining entries set to~$0$.

We will ``zero-out'' these trivial eigenvalues $1$ and $-1/(n-1)$ in the following manner:
\[
M_\pi = Q - \frac{n}{n-1}\mathbf{1}\pi^T + \frac{1}{n-1}\sum_{i=1}^n\mathbf{1}_i\pi_i^T.
\]

Then we notice that $M_{\pi}$ has the following block structure:
\begin{equation}\label{er1}
M_{\pi} = \begin{pmatrix} 
A_{\pi}  & -A_{\pi} \\ B_{\pi} & -B_{\pi}\end{pmatrix},
\end{equation}
where 
\begin{equation}\label{er2}
A_{\pi}-B_{\pi} = \frac{1}{n-1}(\Psi-I).
\end{equation}
To see~\eqref{er1} and~\eqref{er2} note that for $i\neq j$ we have
$$
M_\pi((i,s_i),(j,s_j)) = \frac{1}{n-1}\Big(
\Pr_{\sigma\sim\mu}[\sigma(j)=s_j\mid \sigma(i)=s_i]- \Pr_{\sigma\sim\mu}[\sigma(j)=s_j]
\Big),
$$
and 
$$
M_\pi((i,s_i),(i,s_i)) = M_\pi((i,s_i),(i,1-s_i)) = 0.
$$

Note that if $w$ is a left-eigenvector of $A_\pi-B_\pi$ then  $\bigl( \begin{smallmatrix}w \\ -w\end{smallmatrix}\bigr)$ is a left-eigenvector of $M_\pi$ with the same eigenvalue. Moreover, vectors of the form $(v^T v^T)$ (a space of dimension $n$) are both: right-eigenvectors of $M_\pi$ with eigenvalue $0$, and are perpendicular to the left-eigenvectors of the form $\bigl( \begin{smallmatrix}w \\ -w\end{smallmatrix}\bigr)$. These vectors 
of the form $(v^T v^T)$ yield right-eigenvectors of $\frac{n}{n-1}\mathbf{1}\pi^T - \frac{1}{n-1}\sum_{i=1}^n\mathbf{1}_i\pi_i^T$, where $\mathbf{1}$ has eigenvalue $1$ and the subspace perpendicular to $\pi$ (a space of dimension $n-1$)
has eigenvectors with eigenvalue $-\frac{1}{n-1}$. This implies~\eqref{er3}.

This completes the proof of the lemma for the case without pinning.  The proof easily generalizes to an arbitrary pinning $\tau\in\Pinning_k$ with $(n-1)$ replaced by $(n-k-1)$.  
\end{proof}

\subsection{Proof of Rapid Mixing (Theorem \ref{thm:SI-mixing})}
\label{sec:poly-mixing-proof}
We can now utilize~\cref{lem:QandPsi} and \cref{thm:RW} to conclude $\poly(n)$ relaxation time for the Glauber dynamics and thereby prove \cref{thm:SI-mixing}.

\begin{proof}[Proof of \cref{thm:SI-mixing}]
Recall, the definition of spectral independence in~\cref{defn:SI}; it states that the maximum eigenvalue of the influence matrix $\Psi$  is at most $1+\eta$.  Hence, from~\cref{lem:QandPsi}, we get that 
$\lambda_2(Q)\leq\eta/(n-1)$.  
Moreover, \cref{defn:SI} is for the worst-case pinning and hence we can apply \cref{lem:QandPsi} to any pinning.  Therefore, we have that 
\begin{equation}
\label{eqn:gammak}
    \gamma_k\geq 1-\eta/(n-k-1),
\end{equation} where $\gamma_k$ is defined below~\eqref{eqn:RW-thm-simplified}.
However, the above bound is not useful when $k\geq n-\eta-1$, but in this case we have the following bound.  

When $k\geq n-\eta-1$ then the local walk $Q_\tau$ for $\tau\in\Pinning_k$ is a walk on $2(n-k)\leq 2(\eta+1)$ states.  Since $\eta$ is a constant, then this is a walk on a constant number of states and the marginal probabilities (and hence the transition probabilities) are independent of $n$, and hence the spectral gap $\gamma(Q_\tau)$ is lower bounded by a constant $C=C(\eta)$, independent of $n$. Therefore, the following holds:
\begin{equation}
    \label{eqn:gap-bigk}
\mbox{ if } k\geq n - \eta-1, \mbox{ then } \gamma_k\geq C \mbox{ for some constant $C>0$.}
\end{equation}

Now we can apply \eqref{eqn:RW-thm-simplified} in the Random Walk Theorem (\cref{thm:RW}) and we obtain the following:
\begin{align*}
 \gamma(P_{\mathrm{Glauber}}) &\geq \frac{1}{n}\prod_{k=0}^{n-2}\gamma_k
 \\
 &\geq C^{\lceil\eta\rceil}\times\frac{1}{n}\prod_{k=0}^{\lfloor  n-\eta-2\rfloor}\gamma_k
 & \mbox{by \cref{eqn:gap-bigk} }
 \\
 &\geq C^{\lceil \eta\rceil}\times\frac{1}{n}\prod_{k=0}^{\lfloor n-\eta-2\rfloor} \left(1-\frac{\eta}{n-k-1}\right)
 & \mbox{by \cref{eqn:gammak} }
 \\
&\geq \frac{C'}{n}\prod_{k=0}^{\lfloor n-\eta-2\rfloor} \left(1-\frac{\eta}{n-k-1}\right) 
\\
&\geq
\frac{C'}{n}\exp\left(-\sum_{k=0}^{\lfloor n-\eta-2\rfloor}\frac{\eta}{n-k-1-\eta}\right)
& \mbox{since $1-x\geq\exp\left(-\frac{x}{1-x}\right)$} \\
&\geq
\frac{C'}{n}\exp\left(-\eta\sum_{i=1}^{\lfloor n-\eta-1\rfloor}\frac{1}{i}\right)
\\
&\geq \frac{C''}{n}\exp(-\eta(1+\ln(n-\eta))) 
\\
&\geq \frac{C'''}{n}(n-\eta)^{-\eta}
\\
&\geq  \frac{C'''}{n^{1+\eta}},
\end{align*}
where $C,C',C'',C'''$ are positive constants (which may depend on $\eta$).

We have shown that the spectral gap of the Glauber dynamics is $\Omega(1/n^{1+\eta})$.  Applying \cref{eq:relax-gap}
we obtain that the relaxation time satisfies $\Trelax=O(n^{1+\eta})$. 
Moreover, since $\log(\mu^*)\leq O(n\log{q})$ 
and we are considering the case $q=2$,
then, from \cref{eq:relax-gap}, we have the mixing time bound of $\Tmix=O(n^{2+\eta})$.
This establishes the upper bounds on the relaxation time and mixing time stated in \cref{thm:SI-mixing}.
\end{proof}

\section{Random Walk Theorem: Proof of Local-to-Global}
\label{sec:RW}

In this section we will prove the Random Walk Theorem (\cref{thm:RW}).  We will use random walks between adjacent levels of the associated simplicial complex which corresponds to walks between sets $\Pinning_k$ and $\Pinning_{k+1}$.  These will be denoted as up and down walks, and their composition as up-down and down-up walks.

\subsection{Up and Down Walks}\label{sec:chains}
We begin with the definitions of the down-up and up-down chains which are at the heart of the proof.  We do not explicitly utilize simplicial complexes, though readers familiar with simplicial complexes will understand the natural connection.

We start by defining a probability distribution on the collection of partial assignments to a set of $k$ vertices where $0\leq k\leq n$.
Recall, $\tau\in\Pinning_k$ is an assignment of spins $\{0,1\}$ to a subset $S\subseteq V$ where $|S|=k$.  We define the distribution $\pi_k$ on $\Pinning_k$ as follows:
\[ \mbox{for }\tau\in\Pinning_k, \ \ \ \ \pi_k(\tau) = \frac{1}{ \binom{n}{k}}\mu(\tau),
\]
where 
\begin{equation}\label{mudef}
\mu(\tau) := \sum_{\substack{\eta\in\{0,1\}^n: \\ \tau(S)=\eta(S)}} \mu(\eta).
\end{equation}

For any $S\subset V$, observe that $\sum_{\tau:S\rightarrow\{0,1\}} \mu(\tau) = 1$, and hence $\sum_{\tau\in\Pinning_k}\pi_k(\tau)=1$.  Moreover, for $k=n$, $\Pinning_n=\Omega$ and  $\pi_n = \mu$.   (Note that $\pi_1$ is the same as the distribution $\pi$ defined in the proof of \cref{lem:QandPsi} in~\cref{sec:local-influence}.)

It is useful to observe the following basic fact. 
Consider $0\leq k\leq r\leq n$ and $\sigma\in\Pinning_k$ where $\sigma:S\rightarrow \{0,1\}$ for $S\subset V$ where $|S|=k$.
First, observe that
\[
\sum_{\eta\in\Pinning_r: \sigma\subset\eta} \mu(\eta)
 = \binom{n-k}{r-k}\mu(\sigma).
\]
Hence,
we have the following:
\begin{equation}
\label{matroid:first-step}
\pi_k(\sigma) = 
\frac{\mu(\sigma)}{{\binom{n}{k}}} = \frac{1}{\binom{n-k}{r-k}} \sum_{\eta\in\Pinning_r: \sigma\subset\eta} \frac{\mu(\eta)}{{\binom{n}{k}}} 
= \frac{1}{\binom{r}{k}} \sum_{\eta\in\Pinning_r: \sigma\subset\eta} \frac{\mu(\eta)}{\binom{n}{r}} = 
{\binom{r}{k}}^{-1}\sum_{\eta\in\Pinning_r: \sigma\subset\eta}
\pi_r(\eta).
\end{equation}

We generalize the above definition to a fixed pinning $\eta\in\Pinning$. For $0\leq\ell\leq n$, consider $\eta\in\Pinning_\ell$ where $\eta$ is on $S$ (i.e., for $S\subset V$, $\eta:S\rightarrow\{0,1\}$). 
For $1\leq j\leq n-\ell$, let $\Pinning_{\eta,j}$ be the set of assignments $\tau$ of spins $\{0,1\}$ to a subset $S'\subseteq V\setminus S$ where $|S'|=j$ and $\tau\cup\eta\in\Pinning_{j+\ell}$.

We define the probability distribution for $j$ levels above $\eta$ with respect to the conditional Gibbs distribution~$\mu_\eta$.  
Hence, for $j$ where $1\leq j\leq n-\ell$, define $\pi_{\eta,j}$ as follows:
\[ \mbox{for }\tau\in\Pinning_{\eta,j} \mbox{ where } \tau:S'\rightarrow\{0,1\} \mbox{ with } S'\subset V\setminus S, \ \ \ \ \pi_{\eta,j}(\tau) = \frac{1}{ \binom{n-\ell}{j} }\mu(\sigma(S')=\tau\mid\sigma(S)=\eta).
\]
We will only use the above definitions of $\Pinning_{\eta,j}$ and $\pi_{\eta,j}$ for the case $j=1$, in which case it is closely related to the local walk $Q_\eta$, see \cref{rem:local-downup,rem:second}.

For every $1\leq k\leq n$, we define the following random walks which are known as {\em Up} and {\em Down Operators}.
\begin{description}
\item[Down Walk:] The down-walk is denoted by $\down{k}$.  From $\tau\in \Pinning_k$ we remove an element of $\tau$ chosen uniformly at random.  Note the elements of $\tau$ are (vertex,spin) pair.
Hence, for $(j,s_j)\in\tau$, 
\[ \down{k}(\tau,\tau\setminus(j,s_j))=1/k.
\]
\item[Up Walk:] The up-walk is denoted by $\up{k}$ and corresponds to the following stochastic process from 
$\Pinning_k$ to $\Pinning_{k+1}$. Starting from $\tau \in \Pinning_k$ where $\tau$ is an assignment on $S\subset V, |S|=k$, then choose a random $j\not\in S$ and spin $s_j\in\{0,1\}$ where the probability of picking $(j,s_j)$ is proportional to $\pi_{k+1}(\tau\cup(j,s_j))$. Hence, 
\[ \up{k}(\tau,\tau\cup\{(j,s_j\}) =  
\frac{\pi_{k+1}(\tau\cup(j,s_j))}{(k+1)\pi_k(\tau)} = \frac{\mu(\tau\cup(j,s_j))}{(n-k)\mu(\tau)} = \frac{1}{(n-k)}\mu(\sigma(j)=s_j\mid \sigma(S)=\tau),
\]
where the denominator in the first identity comes from the observation that for $S\subset V, |S|=k$:
\begin{multline*}
    \sum_{\substack{(j',s_{j'}): \\ j'\notin S,s'_j\in\{0,1\}}} \pi_{k+1}(\tau\cup(j',s_{j'})) = \frac{1}{\binom{n}{k+1}} \sum_{j'\notin S}\sum_{s_{j'}\in\{0,1\}} \mu(\tau\cup(j',s_{j'}))   \\
= \frac{1}{\binom{n}{k+1}} \sum_{j'\notin S} \mu(\tau) = \frac{n-k}{\binom{n}{k+1}} \mu(\tau) = (k+1)\pi_k(\tau). 
\end{multline*}
From the definition of the up and down chains, we have that the reversibility condition is satisfied:
\begin{align}\label{eq:down-up-adjoint}
    \pi_k(\tau)\up{k}(\tau,\tau\cup (j,s_j)) =  
\pi_{k+1}(\tau\cup(j,s_j))\down{k+1}(\tau\cup (j,s_j),\tau).
\end{align}

	\item[Up-Down Walk:] The up-down walk is denoted by $\updown{k}$ and corresponds to the following Markov chain from $\Pinning_k$ to $\Pinning_k$:
 	\[ \updown{k} = \up{k}\down{k+1}.\]
Consider $\sigma\in\Pinning_{k-1}$ where $\sigma:S\rightarrow\{0,1\}$ for $S\subset V$.  For $i,j\notin S$ and $s_i,s_j\in\{0,1\}$, the transition 
$\updown{k}(\sigma\cup(i,s_i),\sigma\cup(j,s_j))$ operates as follows.
Starting from $\tau=\sigma\cup (i,s_i) \in \Pinning_k$, in the up-step, we choose a random $(j,s_j)$ where the probability of picking $(j,s_j)$ is proportional to $\pi_{k+1}(\tau\cup(j,s_j))$. Then in the down step, we remove an element $(i,s_i)$ chosen uniformly at random from $\tau\cup(j,s_j)$.  More formally, we have the following:
\begin{equation}
\label{eqn:up-down}
    \updown{k}(\sigma\cup(i,s_i),\sigma\cup(j,s_j)) = \frac{\pi_{k+1}(\sigma\cup(i,s_i)\cup(j,s_j))}{(k+1)^2\pi_k(\sigma)}.
\end{equation}
	\item[Down-Up Walk:] The down-up walk is denoted by $\downup{k}$.  From $\tau\in\Pinning_k$, we first remove a uniformly random element $(i,s_i)$ from $\tau$ and then add an element $(j,s_j)$ with probability proportional to the weight of the resulting set $\pi_k(\tau\cup(j,s_j)\setminus(i,s_i))$,
	\[ \downup{k} = \down{k}\up{k-1}.\]
\end{description}

Observe that the stationary distribution of $\updown{k}$ and $\downup{k}$ is $\pi_k$.
The reversibility condition \cref{eq:down-up-adjoint} shows that the two operators $\down{k},\up{k-1}$ are adjoint to each other, which implies that both up-down walks and down-up walks are reversible and positive semidefinite (PSD); see \cite[Section 2]{CCGP25} for more discussion.

\begin{remark}[{\bf Glauber dynamics}]\label{rem:glauber} For a spin system, the down-up chain $\downup{n}$  is equivalent to the Glauber dynamics.  
\end{remark}

The above definitions can be considered as walks on levels of the simplicial complex defined by the Gibbs distribution $\mu$ on $\Omega\subset \{0,1\}^V$ where the ground set of the simplicial complex is the $2n$ pairs of (vertex,spin) assignments $(v,\sigma(v))\in V\times\{0,1\}$.

Let us illustrate the above definitions for the special case of independent sets (this is the hard-core model with $\lambda=1$); here the Gibbs distribution is uniformly distributed over all independent sets (of any size) of~$G$. The corresponding simplicial complex has dimension $n=|V|$ since for every independent set $I$ we have 
$\{(v,1)\,|\, v\in I\}\cup \{(v,0)\,|\, v\not\in I\}$
in $\Pinning_{n}$.  The Glauber dynamics which updates the spin at a randomly chosen vertex is equivalent to $\downup{n}$.

\begin{remark}[{\bf Local Walk is Non-Backtracking $\updown{1}$}]\label{rem:local-downup} 
Notice that the local walk $Q$ defined in \cref{sec:local-walks}
is similar to the up-down chain.  For simplicity let us first consider the case without a pinning.  Observe that for the up-down chain starting from $(i,s_i)\in\Pinning_1$ the following holds when $i\neq j$:
\begin{align*}
\updown{1}((i,s_i),(j,s_j)) & = 
\up{1}((i,s_i),(i,s_i)\cup(j,s_j))\down{2}
((i,s_i)\cup(j,s_j),(j,s_j))
\\ & =
\frac{\pi_{2}((i,s_i)\cup(j,s_j))}{2\pi_1((i,s_i))}\times\frac{1}{2} 
\\ & = \frac{1}{2}\frac{n}{\binom{n}{2}}\times\frac{\mu((i,s_i)\cup(j,s_j))}{2\mu(i,s_i))}
\\ & = \frac{1}{2}\frac{1}{n-1}\times\frac{\mu((i,s_i)\cup(j,s_j))}{\mu(i,s_i))} 
\\ & = \frac{1}{2}Q((i,s_i),(j,s_j)).
\end{align*}
\end{remark}
Observe that $\pi_1$ is the stationary distribution for both $\updown{1}$ and $Q$, and that \cref{rem:local-downup} implies that \[
\updown{1} = (Q + I)/2.
\]
The extra factor $1/2$ comes from the step of the down-walk which drops an element chosen uniformly at random from $(i,s_i),(j,s_j)$, whereas the local walk $Q$ corresponds to the {\em non-backtracking walk} which chooses $(j,s_j)$ to avoid the self-loop.

\begin{remark}
\label{rem:second}
More generally, for any $1\leq k<n$, consider $\eta\in\Pinning_{k-1}$ where $\eta$ is an assignment on $S\subset V$.
Observe that $\pi_{\eta,1}$ is the stationary distribution for the local walk $Q_\eta$.
For $(i,s_i),(j,s_j)\in V\setminus S\times\{0,1\}$, we have the following for $i\neq j$:
\begin{equation}
\label{step:rem:local-downup}
\updown{k}(\eta\cup(i,s_i),\eta\cup(j,s_j)) = \frac{1}{k+1 }Q_\eta((i,s_i),(j,s_j)).
\end{equation}
\end{remark}

Before proceeding, let us verify the above identity \cref{step:rem:local-downup}.  For $i\neq j$,
\begin{align*}
\updown{k}(\eta\cup(i,s_i),\eta\cup(j,s_j)) & = 
\up{k}(\eta\cup(i,s_i),\eta\cup(i,s_i)\cup(j,s_j))
\down{k+1}(\eta\cup(i,s_i)\cup(j,s_j),\eta\cup(j,s_j))
\\ & =
\frac{\pi_{k+1}(\eta\cup(i,s_i)\cup(j,s_j))}{(k+1)\pi_k(\eta\cup(i,s_i))}\times\frac{1}{(k+1)} 
\\ & = \frac{\mu(\eta\cup(i,s_i)\cup(j,s_j))}{(k+1)(n-k)\mu(\eta\cup(i,s_i))} 
\\ & = \frac{1}{(k+1)}Q_\eta((i,s_i),(j,s_j)).
\end{align*}

\subsection{Spectrum of Up-Down Chains}
We begin by pointing out that the spectral gaps for the up-down chain and down-up chain are the same.
\begin{lemma}
    \label{lem:updown-downup}
    \[
    \gamma(\updown{k-1}) = \gamma(\downup{k}).
    \]
    Moreover, they have the same non-zero spectrum of eigenvalues:
    \[
    \mathrm{spectrum}_{\neq 0}(\updown{k-1}) = \mathrm{spectrum}_{\neq 0}(\downup{k}).
    \]
\end{lemma}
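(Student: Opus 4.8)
The plan is to recognize $\updown{k-1}$ and $\downup{k}$ as the two orders of a single product, and invoke the standard fact that $AB$ and $BA$ have the same non-zero spectrum. Concretely, set $A=\up{k-1}$, viewed as an operator sending functions on $\Pinning_k$ to functions on $\Pinning_{k-1}$, and $B=\down{k}$, sending functions on $\Pinning_{k-1}$ to functions on $\Pinning_k$. Then, composing Markov kernels, $\updown{k-1}=\up{k-1}\down{k}=AB$ is an operator on $\Pinning_{k-1}$ and $\downup{k}=\down{k}\up{k-1}=BA$ is an operator on $\Pinning_k$.

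First I would record that the reversibility identity $\pi_{k-1}(\tau)\up{k-1}(\tau,\tau')=\pi_k(\tau')\down{k}(\tau',\tau)$, already derived above, says precisely that $\down{k}$ is the adjoint of $\up{k-1}$ with respect to the inner products $\langle f,g\rangle_{\pi_{k-1}}=\sum_\tau\pi_{k-1}(\tau)f(\tau)g(\tau)$ and $\langle f,g\rangle_{\pi_k}$; that is, $B=A^{*}$. Hence $\updown{k-1}=AA^{*}$ and $\downup{k}=A^{*}A$ are self-adjoint and positive semidefinite (for instance $\langle AA^*g,g\rangle_{\pi_{k-1}}=\langle A^*g,A^*g\rangle_{\pi_k}\ge 0$), so both are diagonalizable with all eigenvalues in $[0,1]$ (they are stochastic with stationary distributions $\pi_{k-1}$ and $\pi_k$).

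Next I would prove the non-zero spectra coincide. If $\lambda\neq 0$ and $ABv=\lambda v$ with $v\neq 0$, then $Bv\neq 0$ (otherwise $\lambda v=ABv=0$), and $BA(Bv)=B(ABv)=\lambda(Bv)$, so $v\mapsto Bv$ is an injective linear map from the $\lambda$-eigenspace of $AB$ into the $\lambda$-eigenspace of $BA$; by the symmetric argument (swap $A$ and $B$) these eigenspaces have equal dimension. Since both operators are diagonalizable, these geometric multiplicities account for the entire spectrum, giving $\mathrm{spectrum}_{\neq 0}(\updown{k-1})=\mathrm{spectrum}_{\neq 0}(\downup{k})$ as multisets. (Alternatively one may quote the polynomial identity $\det(xI-AB)=x^{|\Pinning_k|-|\Pinning_{k-1}|}\det(xI-BA)$.) Finally, for the gaps: assuming the walks are irreducible, the eigenvalue $1$ is simple in each chain, and $\lambda_2$ is the largest element of the spectrum after deleting this one copy of $1$; since all eigenvalues lie in $[0,1]$, this is the largest non-zero eigenvalue below $1$ when one exists and is $0$ otherwise, and in either case it agrees between the two chains by the multiset equality. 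Hence $\gamma(\updown{k-1})=1-\lambda_2(\updown{k-1})=1-\lambda_2(\downup{k})=\gamma(\downup{k})$.

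I do not expect a genuine obstacle; the mathematical content is entirely the $AB$ versus $BA$ lemma. The only two points that deserve a sentence of care are (i) checking the adjointness carefully, since diagonalizability is what lets us upgrade equality of geometric multiplicities to equality of the full spectra, and (ii) the minor bookkeeping around the simple eigenvalue $1$ and the differing numbers of trailing zero eigenvalues when $|\Pinning_{k-1}|\neq|\Pinning_k|$, so that reading off $\lambda_2$ is unambiguous.
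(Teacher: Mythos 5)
Your proposal is correct and takes essentially the same route as the paper: identify $\updown{k-1}=AB$ and $\downup{k}=BA$ with $A=\up{k-1}$, $B=\down{k}$, and invoke the fact that $AB$ and $BA$ share the same non-zero spectrum. The paper simply cites this linear-algebra fact (deferring details to a reference), whereas you supply a self-contained proof via the adjointness $B=A^{*}$ coming from the reversibility identity, which is a fine elaboration of the same argument.
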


\begin{proof}
Note that $\updown{k-1} = P^\uparrow_{k-1}P^\downarrow_{k}$ and $\downup{k} = P^\downarrow_{k}P^\uparrow_{k+1}$.
    Hence, the lemma follows from the general linear algebra fact that for an $n\times m$ matrix $A$ and $m\times n$ matrix $B$ then the non-zero spectrum of $AB$ is the same as the non-zero spectrum of $BA$.
    For a more detailed proof see \cite[Corollary 3.3.2]{Mousa-thesis}.
\end{proof}

\subsection{Proof Setup}
We can now state a more general version of the Random Walk Theorem (\cref{thm:RW}) which bounds the spectral gap of the down-up chain at any level.
We will prove, for all $2\leq k\leq n$:
\begin{equation}
    \label{eqn:RW-thm}
 \gamma(\downup{k}) \geq \frac{1}{k}\prod_{i=0}^{k-2}\gamma_i,
\end{equation}
where $$\gamma_i=\min_{\tau\in\Pinning_i}\gamma(Q_{\tau})$$ is the spectral gap
for the local walk with a worst-case pinning of $i$ vertices.
Note, the claim trivially holds for $k=1$ as $\gamma(\downup{1})=1$.
Since the Glauber dynamics is equivalent to $\downup{n}$, then \cref{thm:RW} is the special case of \cref{eqn:RW-thm-simplified} where $k=n$.

\subsection{Key Technical Lemma}
The following lemma will be the key tool in the inductive proof of~\cref{thm:RW}.  
It relates the Dirichlet form for the up-down walk $\updown{k}$ at level $k$ with the down-up walk $\downup{k}$ at level $k$.  Recall that the spectral gap of the $\updown{k}$ is equal to the spectral gap of $\updown{k+1}$ (see~\cref{lem:updown-downup}).  Hence, the lemma is in essence an inductive statement as it relates the chain $\updown{k}$ to $\downup{k}$, rather than $\downup{k+1}$.  In this manner the lemma will immediately yield an inductive proof of the Random Walk Theorem.

\begin{lemma}
\label{lem:technical-RW}
For all $0\leq k< n$, and any $f: \Pinning_k \to\mathbb{R}$, the following holds:
\[
\Dirichlet_{\updown{k}}(f) \geq \frac{k}{k+1}\gamma_{k-1}\Dirichlet_{\downup{k}}(f).
\]
    \end{lemma}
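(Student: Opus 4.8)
The plan is to relate all three Dirichlet forms — $\Dirichlet_{\updown{k}}(f)$, $\Dirichlet_{\downup{k}}(f)$, and the Dirichlet forms of the local walks $Q_\eta$ — by decomposing each into contributions indexed by the ``$(k-1)$-faces'' $\eta\in\Pinning_{k-1}$. The key structural fact is that both $\updown{k}$ and $\downup{k}$ move between two $k$-sets $\sigma,\tau\in\Pinning_k$ only when $|\sigma\cap\tau|=k-1$, i.e., when they share a common pinning $\eta=\sigma\cap\tau\in\Pinning_{k-1}$; and conditioned on that shared $\eta$, the relevant single-step transition is (up to normalization) exactly a step of the local walk $Q_\eta$, by \cref{rem:second} / \eqref{step:rem:local-downup}. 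So I would first write
\[
\Dirichlet_{\updown{k}}(f) = \frac{1}{2}\sum_{\eta\in\Pinning_{k-1}} \sum_{\substack{(i,s_i),(j,s_j)\\ i\neq j}} \pi_k(\eta\cup(i,s_i))\,\updown{k}(\eta\cup(i,s_i),\eta\cup(j,s_j))\,(f(\eta\cup(i,s_i))-f(\eta\cup(j,s_j)))^2,
\]
noting that the diagonal $i=j$ terms contribute $0$. Using \eqref{step:rem:local-downup} and $\pi_k(\eta\cup(i,s_i)) = \tfrac{1}{k}\,(\text{something})\cdot\pi_{k-1}(\eta)\,\pi_{\eta,1}((i,s_i))$ — precisely, $\pi_k(\eta\cup(i,s_i)) = \binom{n}{k}^{-1}\mu(\eta)\,\mu_\eta(\sigma(i)=s_i)\cdot(\text{combinatorial factor})$ — each inner double sum becomes, up to a scalar depending only on $n,k$, the Dirichlet form $\Dirichlet_{Q_\eta}(f_\eta)$ of the local walk at $\eta$ applied to the restricted function $f_\eta((i,s_i)):=f(\eta\cup(i,s_i))$.

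Second, I would do the same decomposition for $\downup{k}=\down{k}\up{k-1}$. A step of $\downup{k}$ from $\tau\in\Pinning_k$ first removes a uniformly random element, landing at some $\eta\in\Pinning_{k-1}$, then re-adds an element according to $\up{k-1}$. The subtlety is that $\downup{k}$ has a nonzero probability of returning to $\tau$ (it can re-add the element it just dropped), so its off-diagonal part is again supported on pairs sharing a common $\eta$, and conditioned on the dropped-to set being $\eta$ (probability $1/k$ for each of the $k$ sub-faces of $\tau$), the re-add step is $\up{k-1}(\eta,\cdot)=\pi_{\eta,1}(\cdot)$ — i.e., it's the ``trivial'' walk that jumps to stationarity. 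I would thus identify $\Dirichlet_{\downup{k}}(f)$ with (a scalar times) $\sum_{\eta}\pi_{k-1}(\eta)\Var_{\pi_{\eta,1}}(f_\eta)$, the sum of local variances. Now the punchline: for each $\eta\in\Pinning_{k-1}$, the Poincaré inequality \eqref{defn:poincare} for the local walk $Q_\eta$ gives $\gamma(Q_\eta)\Var_{\pi_{\eta,1}}(f_\eta)\le \Dirichlet_{Q_\eta}(f_\eta)$, and by definition $\gamma(Q_\eta)\ge\gamma_{k-1}$. Summing over $\eta$ against the weights $\pi_{k-1}(\eta)$ and tracking the $n,k$-dependent scalars (which is where the factor $\tfrac{k}{k+1}$ must materialize — the $\updown{k}$ side carries a $\tfrac{1}{k+1}$ from \eqref{step:rem:local-downup} while the $\downup{k}$ side carries a $\tfrac{1}{k}$ from the uniform drop) yields exactly $\Dirichlet_{\updown{k}}(f)\ge \tfrac{k}{k+1}\gamma_{k-1}\Dirichlet_{\downup{k}}(f)$.

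The main obstacle I anticipate is bookkeeping the normalization constants correctly and, relatedly, making sure the ``local variance'' identity $\Dirichlet_{\downup{k}}(f) \propto \sum_\eta \pi_{k-1}(\eta)\Var_{\pi_{\eta,1}}(f_\eta)$ is exactly right including the diagonal/self-loop contributions — one must verify that the $\tau\to\tau$ transitions of $\downup{k}$ contribute nothing to its Dirichlet form (true, since $f(\tau)-f(\tau)=0$) and that the total weight assigned by $\downup{k}$ to moves out of $\tau$ through a fixed $\eta\subset\tau$ matches $\tfrac{1}{k}$ times a $\pi_{\eta,1}$-step. A secondary care point is boundary behavior: for $k$ near $n$ the set $\Pinning_{\eta,1}$ may have ``frozen'' vertices with zero marginal, but these are simply excluded (as the excerpt notes), so the local walk $Q_\eta$ and its stationary distribution $\pi_{\eta,1}$ are well-defined on the free coordinates and the argument goes through verbatim. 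Once the two decompositions are lined up term-by-term over $\eta$, the inequality is immediate from the local Poincaré inequalities, so the real content is purely in setting up these identities cleanly.
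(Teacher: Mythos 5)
Your proposal is correct and follows essentially the same route as the paper: the decomposition of $\Dirichlet_{\updown{k}}(f)$ over shared faces $\eta\in\Pinning_{k-1}$ into local-walk Dirichlet forms is exactly \cref{claim:AAA}, the identification of $\Dirichlet_{\downup{k}}(f)$ with $\sum_\eta \pi_{k-1}(\eta)\Var_{\pi_{\eta,1}}(f_\eta)$ is exactly \cref{claim:DDD}, and the conclusion via the local Poincar\'e inequalities $\Dirichlet_{Q_\eta}(f_\eta)\geq\gamma_{k-1}\Var_{\pi_{\eta,1}}(f_\eta)$ is the paper's final step. The normalization bookkeeping you flag works out as you anticipate, via $\pi_k(\eta\cup a)=k\,\pi_{k-1}(\eta)\,\pi_{\eta,1}(a)$ together with \cref{step:rem:local-downup} on the up-down side and $\up{k-1}(\eta,\cdot)=\pi_{\eta,1}(\cdot)$ on the down-up side.
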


    At an intuitive level, the reader can understand the factor $k/(k+1)$ because in the LHS the down operator is from level $k+1$ to $k$ and hence a transition occurs with probability $1/(k+1)$, whereas in the RHS the down operator is from level $k$ to $k-1$ and hence occurs with probability $1/k$.

We will utilize the following two claims to prove \cref{lem:technical-RW}.  The first claim captures the local-to-global nature of the process.  In particular, for the chain $\updown{k}$ we can express its Dirichlet form in terms of local walks for pinnings at level $k-1$.  Recall that the Dirichlet form for a chain $P$ is the local variation of a functional $f$ over pairs of states connected by a transition of $P$.  

For the chain $\updown{k}$, consider a pair of distinct states $\sigma,\tau$ 
where $\updown{k}(\sigma,\tau)>0$.  Notice that it must be the case that $|\sigma\oplus\tau|=2$ where $\oplus$ is the symmetric difference, i.e., $\sigma\oplus\tau=(\sigma\setminus\tau)\cup(\tau\setminus\sigma)$. This is because in the up-step we can add one element $(v_i,s_i)$ to $\sigma$ and then in the down-step we can remove one element $(v_j,s_j)$ from $\sigma\cup(v_i,s_i)$.  Hence, $\sigma\cap\tau=\eta$ where $\eta\in\Pinning_{k-1}$.  Then the transition 
$\updown{k}(\sigma,\tau)$ is closely related to the transition $Q_{\eta}((v_i,s_i),(v_j,s_j))$, which  yields the following identity.

\begin{claim}For all $0< k\leq n$, and all $f: \Pinning_k \to\mathbb{R}$,
    \label{claim:AAA}
   \[    \Dirichlet_{\updown{k}}(f)  =
    \frac{k}{k+1}\sum_{\eta\in\Pinning_{k-1}} \pi_{k-1}(\eta) \Dirichlet_{Q_{\eta}}(f_\eta), \]
where for $\eta:S\rightarrow\{0,1\}$ and $(j,s_j)\in (V\setminus S)\times\{0,1\}$, then $f_\eta((j,s_j)) = f(\eta\cup(j,s_j))$.
\end{claim}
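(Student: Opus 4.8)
The plan is to start from the definition of the Dirichlet form for $\updown{k}$ and split the sum over pairs $\sigma,\tau$ according to their common intersection $\eta := \sigma\cap\tau \in \Pinning_{k-1}$. First I would write
\[
\Dirichlet_{\updown{k}}(f) = \frac12\sum_{\sigma,\tau\in\Pinning_k}\pi_k(\sigma)\,\updown{k}(\sigma,\tau)\,(f(\sigma)-f(\tau))^2,
\]
and observe that the only contributing ordered pairs $(\sigma,\tau)$ with $\sigma\neq\tau$ are those with $|\sigma\oplus\tau|=2$, in which case there is a unique $\eta\in\Pinning_{k-1}$ with $\sigma=\eta\cup(i,s_i)$ and $\tau=\eta\cup(j,s_j)$ for distinct vertices $i\neq j$ outside the support of $\eta$. (The diagonal terms $\sigma=\tau$ contribute zero to the Dirichlet form, so I can ignore the self-loop of $\updown{k}$.) Thus the double sum reorganizes as a sum over $\eta\in\Pinning_{k-1}$ of an inner sum over pairs $(i,s_i)\neq(j,s_j)$.

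Next I would substitute the two key identities already established in the excerpt. By~\eqref{step:rem:local-downup} (\cref{rem:second}), for $i\neq j$ we have $\updown{k}(\eta\cup(i,s_i),\eta\cup(j,s_j)) = \tfrac{1}{k+1}Q_\eta((i,s_i),(j,s_j))$. And the stationary weight factors as $\pi_k(\eta\cup(i,s_i)) = \pi_{k-1}(\eta)\cdot\pi_{\eta,1}((i,s_i))$ — this follows from the definitions of $\pi_k$, $\pi_{k-1}$ and $\pi_{\eta,1}$ together with the relation ${n\choose k} = {n\choose k-1}\cdot\frac{n-k+1}{k}$ and the observation (from \cref{rem:second}) that $\pi_{\eta,1}$ is the stationary distribution of $Q_\eta$; I would check the binomial bookkeeping carefully since that is where the $k/(k+1)$ prefactor ultimately comes from. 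Plugging both in, and using the definition $f_\eta((i,s_i)) = f(\eta\cup(i,s_i))$, the inner sum over $(i,s_i),(j,s_j)$ becomes exactly $\tfrac{1}{k+1}\cdot 2\Dirichlet_{Q_\eta}(f_\eta)$ up to the constant factors, and the $\pi_{k-1}(\eta)$ pulls out front. Collecting constants yields
\[
\Dirichlet_{\updown{k}}(f) = \frac{k}{k+1}\sum_{\eta\in\Pinning_{k-1}}\pi_{k-1}(\eta)\,\Dirichlet_{Q_\eta}(f_\eta),
\]
as claimed.

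The main obstacle I anticipate is purely bookkeeping: getting the normalization constants to line up, in particular tracking where the factor $\frac{k}{k+1}$ arises. It is the product of (i) the $\frac{1}{k+1}$ from the down-step in $\updown{k} = \up{k}\down{k+1}$ (appearing via~\eqref{step:rem:local-downup}), and (ii) the ratio $\binom{n}{k-1}/\binom{n}{k}$-type factor relating $\pi_k$ restricted above $\eta$ to $\pi_{k-1}(\eta)$ times the local stationary measure $\pi_{\eta,1}$. One must also be slightly careful that the factorization $\pi_k(\eta\cup(i,s_i)) = \pi_{k-1}(\eta)\pi_{\eta,1}((i,s_i))$ holds with the right constants and that the $\frac12$ in front of the Dirichlet form is handled consistently when passing between the double sum on $\Pinning_k$ and the double sums defining each $\Dirichlet_{Q_\eta}$. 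Once the constants are verified the identity is immediate; there is no inequality or spectral argument needed here — that comes in the companion claim bounding $\Dirichlet_{Q_\eta}(f_\eta)$ below via $\gamma_{k-1}$.
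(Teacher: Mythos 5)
Your approach is essentially the paper's: expand the Dirichlet form of $\updown{k}$, group the transitions by the common intersection $\eta=\sigma\cap\tau\in\Pinning_{k-1}$, substitute \eqref{step:rem:local-downup}, and factor the stationary weight through $\pi_{k-1}(\eta)$ and $\pi_{\eta,1}$. The one correction to your bookkeeping: the factorization is not $\pi_k(\eta\cup(i,s_i))=\pi_{k-1}(\eta)\,\pi_{\eta,1}((i,s_i))$ but rather $\pi_k(\eta\cup(i,s_i))=k\,\pi_{k-1}(\eta)\,\pi_{\eta,1}((i,s_i))$ (the paper's identity \eqref{step:claim1}), because $\pi_{\eta,1}$ carries the extra normalization $1/(n-k+1)$ while the binomial ratio $\binom{n}{k-1}/\binom{n}{k}$ contributes $k/(n-k+1)$; this factor of $k$ is exactly the numerator of the $k/(k+1)$ prefactor you anticipated, and with it your computation closes precisely as in the paper.
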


The second claim works in the reverse manner and relates the variance on level $1$ conditional on $\eta\in\Pinning_{k-1}$ to the global chain $\downup{k}$.

\begin{claim}For all $0< k\leq n$, and all $f: \Pinning_k\to\mathbb{R}$,
\label{claim:DDD}
\[
   \Dirichlet_{\downup{k}}(f) = \sum_{\eta\in\Pinning_{k-1}} \pi_{k-1}(\eta) \Var_{\pi_{\eta,1}}(f_\eta).
   \]
\end{claim}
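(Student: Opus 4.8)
The plan is to unfold both sides as explicit sums over states and match them term by term, exploiting the reversibility relations between the up and down walks. First I would write out the left-hand side using the definition of the Dirichlet form: $\Dirichlet_{\downup{k}}(f) = \frac12\sum_{\tau,\tau'\in\Pinning_k}\pi_k(\tau)\downup{k}(\tau,\tau')(f(\tau)-f(\tau'))^2$. Since $\downup{k}=\down{k}\up{k-1}$, a transition $\tau\to\tau'$ with $\tau\neq\tau'$ occurs only when $\tau$ and $\tau'$ agree on a common pinning $\eta\in\Pinning_{k-1}$ (namely $\eta=\tau\cap\tau'$ is obtained by deleting one $(i,s_i)$ from $\tau$, after which some $(j,s_j)$ is added). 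Then I would expand $\pi_k(\tau)\downup{k}(\tau,\tau')$ by first choosing which element to delete (probability $1/k$, producing $\eta$) and then the up-step probability, and rewrite the whole double sum as an outer sum over $\eta\in\Pinning_{k-1}$ and an inner sum over the two free coordinates $(i,s_i),(j,s_j)$ with $i\neq j$ (the $i=j$ terms contribute nothing since then $\tau=\tau'$).

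The key computation is to identify the inner sum, for a fixed $\eta$, with $\pi_{k-1}(\eta)\Var_{\pi_{\eta,1}}(f_\eta)$. Using the reversibility identity $\pi_k(\tau)\down{k}(\tau,\eta)=\pi_{k-1}(\eta)\up{k-1}(\eta,\tau)$ stated in the Up-Walk paragraph (with indices shifted), the factor $\pi_k(\tau)\cdot\frac1k$ coming from the down-step becomes $\pi_{k-1}(\eta)\up{k-1}(\eta,\tau)$; and $\up{k-1}(\eta,\eta\cup(i,s_i))$ is exactly $\pi_{\eta,1}((i,s_i))$ up to the normalization appearing in the definition of $\pi_{\eta,1}$ and $\up{k-1}$. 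After pulling out $\pi_{k-1}(\eta)$, what remains inside is $\frac12\sum_{(i,s_i),(j,s_j)}\pi_{\eta,1}((i,s_i))\pi_{\eta,1}((j,s_j))(f_\eta((i,s_i))-f_\eta((j,s_j)))^2$ — but I must check the normalization carefully: the down-up walk, having deleted $(i,s_i)$ to reach $\eta$, then re-adds $(j,s_j)$ with probability proportional to $\pi_k(\eta\cup(j,s_j))$, i.e.\ with probability $\up{k-1}(\eta,\eta\cup(j,s_j))$, which is $\pi_{\eta,1}((j,s_j))$. So the inner sum is precisely $\pi_{k-1}(\eta)$ times the pair-formulation of $\Var_{\pi_{\eta,1}}(f_\eta)$ recalled in~\cref{sec:MC}, namely $\frac12\sum_{x,y}\pi_{\eta,1}(x)\pi_{\eta,1}(y)(f_\eta(x)-f_\eta(y))^2$.

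One subtlety to handle with care is that $\downup{k}(\tau,\tau)>0$ in general (the walk may delete and re-add the same element, or re-add a different element that leads back to $\tau$ — actually re-adding a different element always changes $\tau$, but re-adding $(i,s_i)$ returns to $\tau$), yet the diagonal terms $\tau=\tau'$ contribute $0$ to the Dirichlet form, so they may simply be dropped; symmetrically, the variance pair-sum includes the $x=y$ diagonal which also contributes $0$. I expect the main obstacle to be bookkeeping the combinatorial normalization constants — tracking the $1/k$ from the down-step, the $1/(n-k+1)$-type factors hidden in $\up{k-1}$, and the $\binom{n-\ell}{1}$ in the definition of $\pi_{\eta,1}$ — and verifying they cancel to leave exactly the clean identity with no stray constant. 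Once the reversibility substitution is made and $\pi_{k-1}(\eta)$ is factored out, everything should collapse, but the constant-chasing is where an error would most likely creep in.
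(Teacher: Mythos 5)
Your plan is essentially the paper's own computation run in the opposite direction: the paper starts from $\sum_{\eta}\pi_{k-1}(\eta)\Var_{\pi_{\eta,1}}(f_\eta)$, expands the variance in its pair form, and uses the identity $\pi_k(\eta\cup a)=k\,\pi_{k-1}(\eta)\,\pi_{\eta,1}(a)$ — which is exactly the reversibility relation $\pi_k(\tau)\down{k}(\tau,\eta)=\pi_{k-1}(\eta)\up{k-1}(\eta,\tau)$ you invoke — to recognize the result as $\Dirichlet_{\downup{k}}(f)$; you propose to unfold $\Dirichlet_{\downup{k}}(f)$ and collapse it to the variance side with the same two ingredients. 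Your normalization checks are also right: $\up{k-1}(\eta,\eta\cup(j,s_j))=\mu_\eta(j,s_j)/(n-k+1)=\pi_{\eta,1}((j,s_j))$, so no stray constants survive.

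However, one step as written is wrong and would sink the computation if implemented literally: you restrict the inner sum to pairs $(i,s_i),(j,s_j)$ with $i\neq j$, on the grounds that ``the $i=j$ terms contribute nothing since then $\tau=\tau'$.'' That is false: the down-up walk may delete $(i,s_i)$ and re-add $(i,1-s_i)$, a same-vertex, opposite-spin move with $\tau\neq\tau'$, and these terms contribute on both sides of the identity. Indeed at $k=n$ they are the only nontrivial moves, since $\downup{n}$ is the Glauber dynamics (\cref{rem:glauber}); with your restriction the left-hand side would collapse to $0$ while the right-hand side stays positive. The only vanishing terms are those where the re-added element equals the deleted one, i.e.\ $(j,s_j)=(i,s_i)$, matching the $x=y$ diagonal of the variance pair-sum — which is exactly what your later ``subtlety'' paragraph says, and which contradicts the $i\neq j$ restriction. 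The fix is simply to let the inner sum range over all pairs $a,b\in(V\setminus S)\times\{0,1\}$, as in the paper's proof, dropping only the $a=b$ diagonal; with that correction the rest of your bookkeeping (the $1/k$ from the down-step, the $1/(n-k+1)$ in the up-step, the factored-out $\pi_{k-1}(\eta)$) goes through exactly as you anticipate.
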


Using these two claims it is straightforward to prove the key technical lemma.

\begin{proof}[Proof of \cref{lem:technical-RW}]
For $1\leq k\leq n$, and $\eta\in\Pinning_{k-1}$ we know that     $\gamma(Q_\eta) \geq \gamma_{k-1}$,
    and hence since $\pi_{\eta,1}$ is the stationary distribution for $Q_\eta$ then for any 
    $g:(V\setminus S)\times\{0,1\}\rightarrow\mathbb{R}$
    we have that 
    \begin{equation}
        \label{eq:DDD}
    \Dirichlet_{Q_{\eta}}(g)\geq \gamma_{k-1}\Var_{\pi_{\eta,1}}(g).
    \end{equation}

We can now prove the lemma:
\begin{align*}
\nonumber
    \Dirichlet_{\updown{k}}(f)
&=
    \frac{k}{k+1}\sum_{\eta\in\Pinning_{k-1}} \pi_{k-1}(\eta) \Dirichlet_{Q_{\eta}}(f_\eta)
   & \mbox{by \cref{claim:AAA}} \\
    &\geq 
    \frac{k}{k+1}\gamma_{k-1} \sum_{\eta\in\Pinning_{k-1}} \pi_{k-1}(\eta) \Var_{\pi_{\eta,1}}(f_\eta)
       & \mbox{by \cref{eq:DDD} with $g=f_\eta$}
       \\
      &=
   \frac{k}{k+1}\gamma_{k-1} \Dirichlet_{\downup{k}}(f),
      & \mbox{by \cref{claim:DDD}} 
    \end{align*}
    as claimed.
\end{proof}

\begin{proof}[Proof of \cref{claim:AAA}] 
As a warm-up let us consider the simpler case where $k=1$.  Recall that $\pi_1$ is the stationary distribution for $\updown{1}$ and $Q$.  In \cref{rem:local-downup} we showed that for $(i,s_i),(j,s_j)\in V\times\{0,1\}$ where $i\neq j$, \[ \updown{1}((i,s_i),(j,s_j))=\frac{1}{2}Q((i,s_i),(j,s_j)).
    \]
    For any $f:\Pinning_1\to\mathbb{R}$ (which is $f:V\times\{0,1\}\rightarrow\mathbb{R}$), 
    we have the following:
       \begin{align*}
           \Dirichlet_{\updown{1}}(f)  & =
           \frac{1}{2}\sum_{(i,s_i),(j,s_j)} \pi_1((i,s_i))\updown{1}((i,s_i),(j,s_j))(f((i,s_i))- f((j,s_j)))^2 \\
           & = \frac{1}{4}\sum_{(i,s_i),(j,s_j)} \pi_1((i,s_i))Q((i,s_i),(j,s_j))(f((i,s_i))- f((j,s_j)))^2 \\
           & = \frac{1}{2}\Dirichlet_{Q}(f),
           \end{align*}
           where the summation is over $(i,s_i),(j,s_j)\in V\times\{0,1\}$. 

    Since $\Pinning_{0}=\{\emptyset\}$, this proves the claim for the case $k=1$.

Before proving the general claim, consider the following observation.
For $\eta\cup a\in\Pinning_k$, note that:
\begin{equation}
    \label{step:claim1}
\pi_k(\eta\cup a) = \frac{1}{\binom{n}{k}}\mu(\eta\cup a) = \frac{k}{(n-k+1)}\pi_{k-1}(\eta)\mu_\eta(a) = k\pi_{k-1}(\eta)\pi_{\eta,1}(a).
\end{equation}

In general, for $k\geq 1$ we have the following for any $f:\Pinning_k\to\mathbb{R}$:
      \begin{align*}
           \Dirichlet_{\updown{k}}(f)  & =
           \frac{1}{2}\sum_{\eta\in\Pinning_{k-1}}\sum_{\substack{a,b\in V\setminus S\times\{0,1\}: \\ \eta:S\rightarrow \{0,1\}}} \pi_k(\eta\cup a)\updown{k}(\eta\cup a,\eta\cup b)(f(\eta\cup a)- f(\eta\cup b))^2 \\
        & =
           \frac{1}{2(k+1)}\sum_{\eta\in\Pinning_{k-1}}\sum_{a,b} \pi_k(\eta\cup a)Q_\eta(a,b)(f_\eta(a)- f_\eta(b))^2 & \mbox{by \cref{step:rem:local-downup}}
           \\
                        & =
           \frac{k}{2(k+1)}
           \sum_{\eta\in\Pinning_{k-1}}\pi_{k-1}(\eta)\sum_{a,b} \pi_{\eta,1}(a)Q_\eta(a,b)(f_\eta(a)- f_\eta(b))^2 & \mbox{by \cref{step:claim1}}
           \\
           & = \frac{k}{k+1}\sum_{\eta\in\Pinning_{k-1}}\pi_{k-1}(\eta)\Dirichlet_{Q_\eta}(f_\eta). \qedhere
           \end{align*}
\end{proof}

   \begin{proof}[Proof of \cref{claim:DDD}]
\begin{align*}
\lefteqn{
\sum_{\eta\in\Pinning_{k-1}} \pi_{k-1}(\eta) \Var_{\pi_{\eta,1}}(f_\eta)
} \hspace{1in}
\\
               &= 
    \frac{1}{2}\sum_{\eta\in\Pinning_{k-1}} \pi_{k-1}(\eta) \sum_{\substack{a,b\in V\setminus S\times\{0,1\}:\\ \eta:S\rightarrow\{0,1\}}} \pi_{\eta,1}(\eta\cup a)\pi_{\eta,1}(\eta\cup b)(f_\eta(a)-f_\eta(b))^2
    \\
                       &= 
   \frac{1}{2}\sum_{\eta,a,b} \pi_{k-1}(\eta)\frac{\mu_\eta(a)}{(n-k+1)}\times\frac{\mu_\eta(b)}{(n-k+1)}(f(\eta\cup\{a\})-f(\eta\cup\{b\}))^2
   \\
   &= 
   \frac{1}{2}\sum_{\eta,a,b} \pi_{k}(\eta\cup\{a\})\times\frac{1}{k}\times\frac{\mu_\eta(b)}{(n-k+1)}(f(\eta\cup\{a\})-f(\eta\cup\{b\}))^2
      \\
      &= 
   \frac{1}{2}\sum_{\eta,a,b} \pi_{k}(\eta\cup\{a\})P^{\downarrow}_{k}(\eta\cup\{a\},\eta)P^{\uparrow}_{k-1}(\eta,\eta\cup\{b\})(f(\eta\cup\{a\})-f(\eta\cup\{b\}))^2
      \\
      &=
   \Dirichlet_{\downup{k}}(f). \qedhere
    \end{align*}
   \end{proof}

\subsection{Inductive Proof of Random Walk Theorem: Proof of \texorpdfstring{\cref{thm:RW}}{Theorem 4.1}}
We can now prove \cref{eqn:RW-thm}, which implies the Random Walk Theorem (\cref{thm:RW}).

\begin{proof}[Proof of \cref{thm:RW}]

Our goal is to prove by induction that for all $2\leq\ell\leq n$:
\begin{equation}
    \label{eqn:ind-hyp-RW}
\gamma(\downup{\ell}) \geq \frac{1}{\ell}\prod_{i=0}^{\ell-2}\gamma_i.
\end{equation}
This is equivalent to the following statement in terms of the Dirichlet form and variance:
\begin{equation}
\label{eqn:ind-hyp-Dir-downup}
\forall f:\Omega\rightarrow\R, \ \  \Dirichlet_{\downup{\ell}}(f)  \geq \frac{1}{\ell}\prod_{i=0}^{\ell-2}\gamma_i\Var_{\pi_{\ell}}(f).
\end{equation}
Recall $\gamma(\updown{\ell-1})=\gamma(\downup{\ell})$ was established in \cref{lem:updown-downup}.  Hence, \cref{eqn:ind-hyp-RW} 
is equivalent to the following statement:
\begin{equation}
\label{eqn:ind-hyp-Dir-updown}
\forall f:\Omega\rightarrow\R, \ \ \Dirichlet_{\updown{\ell-1}}(f)  \geq \frac{1}{\ell}\prod_{i=0}^{\ell-2}\gamma_i\Var_{\pi_{\ell-1}}(f) 
\end{equation}
We will prove \cref{eqn:ind-hyp-Dir-updown} by induction and use \cref{eqn:ind-hyp-Dir-downup} in the inductive step.

Now let us assume the inductive hypothesis \cref{eqn:ind-hyp-RW} for all $\ell<k$ and we will establish it for $\ell=k$.
\begin{align*}
\Dirichlet_{\updown{k-1}}(f) 
& \geq 
\frac{(k-1)\gamma_{k-2}}{k}\Dirichlet_{\downup{k-1}}(f)
&\mbox{by \cref{lem:technical-RW}}
\\
&\geq 
\frac{(k-1)\gamma_{k-2}}{k}\times\frac{1}{k-1}\prod_{i=0}^{k-3}\gamma_i\Var_{\pi_{k-1}}(f)
&\mbox{by \cref{eqn:ind-hyp-Dir-downup} for $\ell=k-1$}
\\
& = \frac{1}{k}\prod_{i=0}^{k-2}\gamma_i\Var_{\pi_{k-1}}(f),
\end{align*}
which establishes \cref{eqn:ind-hyp-Dir-updown} (and hence \cref{eqn:ind-hyp-RW}) by induction.
\end{proof}

\section{Optimal Mixing Time for Glauber from Spectral Independence}
\label{section-proof-relax}

The goal of this section is to prove that when the system is $\eta$-spectrally independent, then the relaxation time of the Glauber dynamics is $O(n)$, thereby proving~\cref{thm:SI-constant-relax} of~\cite{CLV21}.
Our proof follows the same approach as in~\cite{CLV21}.  In \cref{sec:optimal} we comment on how the proof changes to obtain \cref{thm:SI-constant-mix} establishing an optimal $O(n\log{n})$ mixing time, instead of relaxation time.

\subsection{Uniform Block Dynamics}
Consider the following dynamics, parameterized by $0<\alpha<1$, which we will refer to as the {\em $\alpha n$-uniform block dynamics}.  The dynamics updates $\alpha n$ random vertices $S$ in each step.     Assume $\alpha n$ is an integer.  From $Y_t\in\Omega$, the transitions $Y_t\rightarrow Y_{t+1}$ of the $\alpha n$-uniform block dynamics are as follows:
\begin{enumerate}
    \item 
    Select $\alpha n$ random vertices, chosen uniformly at random from all $\binom{n}{\alpha n}$ vertices.   Let $S$ denote the chosen set.
    \item For all $w\notin S$, let $Y_{t+1}(w)=Y_t(w)$.
    \item Sample $Y_{t+1}(S)$ from the conditional Gibbs distribution $\mu(\sigma(S)\mid \sigma(w)=Y_{t+1}(w) \mbox{ for all } w\notin S).$
\end{enumerate}

Note the $\alpha n$-uniform block dynamics is identical to $\downup{n,(1-\alpha)n}$.  Using the Random Walk Theorem framework we will first prove fast mixing of the $\alpha$-block dynamics.  In particular we will show that the relaxation time is $O(1)$ where the constant in the big-$O()$ notation is a function of $1/\alpha$.  
We will then show that this implies fast mixing of the Glauber dynamics when $\alpha=O(1/\Delta)$ where $\Delta$ is the maximum degree of the graph.  

\subsection{Improved Random Walk Theorem}
\label{sub:improved-RW-statement}
The Random Walk Theorem is not strong enough to establish fast mixing of the $\alpha n$-uniform block dynamics, we will need the following improved local-to-global theorem which was presented by \cite[Theorem 5.4]{CLV21} and independently \cite[Theorem 4]{GM20}.  For simplicity we only present the spectral gap version, the stronger version for entropy contraction is presented in~\cite{CLV21,GM20}.

For $0<i\leq n$, let \[ \Gamma_i := \prod_{j=0}^{i-1}(2\gamma_j-1),
\] and let $\Gamma_0=1$.
Note that since, for all $j$, $\gamma_j\leq 1$ as the spectral gap is at most $1$, then $(2\gamma_j-1)\leq 1$ and hence $\Gamma_i\leq 1$ for all $i$.  

We will prove, for $0<k\leq n$, \begin{equation}
    \label{eqn:RW-one-improved}
 \gamma(\downup{k}) \geq \frac{\Gamma_{k-1}} {\sum_{i=0}^{k-1} \Gamma_i} 
\end{equation}

The above inequality will be a special case of the following improvement of the Random Walk Theorem. 
\begin{theorem}[Improved RW Theorem {\cite[Theorem A.9]{CLV21},\cite[Theorem 4]{GM20}}]
\label{lem:impr-RW-thm} 
    For $0\leq\ell<n$, 
    \begin{equation}
    \label{eqn:RW-improved-general}
 \gamma(\downup{n,\ell}) \geq \frac{\sum_{i=\ell}^{n-1} \Gamma_i}{\sum_{i=0}^{n-1} \Gamma_i} 
\end{equation}
\end{theorem}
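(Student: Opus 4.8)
The plan is to prove \cref{lem:impr-RW-thm} by induction on $n - \ell$, mirroring the inductive structure used for the Random Walk Theorem (\cref{thm:RW}) but carrying a weighted average of the $\Gamma_i$ rather than a single product. The base case $\ell = n-1$ should reduce to the statement that $\gamma(\downup{n,n-1}) = \gamma(\downup{n}) \geq \Gamma_{n-1}/\sum_{i=0}^{n-1}\Gamma_i$; since $\downup{n}$ is the up-down walk at the top level, I would deduce this from \cref{eqn:RW-one-improved} with $k = n$, which in turn should itself be proven as a consequence of the general statement, so more carefully I would organize the whole argument around proving \cref{eqn:RW-improved-general} directly and extract \cref{eqn:RW-one-improved} as the special case $\ell$ replaced by $k-1$, $n$ replaced by $k$ (using that $\downup{k,k-1} = \downup{k}$).

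The heart of the argument is a strengthened version of the Key Technical Lemma (\cref{lem:technical-RW}). I would prove an analogue saying roughly that
\[
\Dirichlet_{\updown{k}}(f) \geq \frac{k}{k+1}\bigl(2\gamma_{k-1} - 1\bigr)\Dirichlet_{\downup{k}}(f) + (\text{correction term}),
\]
where the correction term accounts for the ``backtracking'' part of the up-down walk that the plain bound threw away. Concretely, recall from \cref{rem:local-downup,rem:second} that $\updown{1} = (Q+I)/2$ and more generally the up-down walk at level $k$ decomposes into the non-backtracking local walk $Q_\eta$ plus a holding part; using $\Dirichlet_{Q_\eta}(g) \geq \gamma_{k-1}\Var_{\pi_{\eta,1}}(g)$ together with the trivial bound $\Var(g) \geq \Dirichlet_{(Q_\eta+I)/2}(g)$ — i.e., not discarding the identity contribution — should yield the factor $(2\gamma_{k-1}-1)$ in place of $\gamma_{k-1}$, which is exactly what makes $\Gamma_i = \prod(2\gamma_j-1)$ appear. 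I would then feed this into the recursion: if $\gamma(\downup{k}) \geq \Gamma_{k-1}/\sum_{i=0}^{k-1}\Gamma_i$, combining with the strengthened technical lemma and the identity $\gamma(\updown{k-1}) = \gamma(\downup{k})$ (\cref{lem:updown-downup}) should give $\gamma(\downup{k+1}) \geq \Gamma_k/\sum_{i=0}^{k}\Gamma_i$ after algebraic simplification of the telescoping sum. For the general $\downup{n,\ell}$ statement, I would iterate the down operators $\ell$ times — i.e., relate $\downup{n,\ell}$ to a composition $\down{n}\cdots\down{n-\ell+1}$ followed by the corresponding up steps — and track how the weighted sum $\sum_{i=\ell}^{n-1}\Gamma_i$ accumulates across the levels.

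I expect the main obstacle to be the bookkeeping in the strengthened technical lemma: isolating precisely the $+I$ contribution in the up-down Dirichlet form and showing it yields exactly the extra $-\Dirichlet_{\downup{k}}(f)$ needed to upgrade $\gamma_{k-1}$ to $2\gamma_{k-1}-1$, without slack. This requires re-deriving \cref{claim:AAA,claim:DDD} with the holding probability kept explicit, which is where the factor-of-two subtleties (the $(Q+I)/2$ versus $Q$ distinction flagged in \cref{rem:local-downup}) must be handled with care. A secondary difficulty is verifying the induction closes cleanly: the recursion $a_{k+1} = \Gamma_k/\sum_{i=0}^k \Gamma_i$ in terms of $a_k = \Gamma_{k-1}/\sum_{i=0}^{k-1}\Gamma_i$ needs the identity $\sum_{i=0}^k\Gamma_i = \Gamma_k/a_{k+1}$ and a short manipulation using $\Gamma_k = (2\gamma_{k-1}-1)\Gamma_{k-1}$, together with the $k/(k+1)$ factor from the technical lemma — I would check that the $k/(k+1)$ factors telescope against the $1/\ell$ that appeared in the original Random Walk Theorem, or more likely disappear entirely because the improved bound is stated without the explicit $1/k$ prefactor (it is absorbed into the denominator $\sum_i \Gamma_i \leq k$).
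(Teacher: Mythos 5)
There is a genuine gap, and it sits exactly where you locate your ``main obstacle.'' Your plan keeps the shape of \cref{lem:technical-RW} --- an inequality between $\Dirichlet_{\updown{k}}(f)$ and $\Dirichlet_{\downup{k}}(f)$ for the \emph{same} function $f$, with the prefactor $\frac{k}{k+1}$ --- and hopes to upgrade $\gamma_{k-1}$ to $2\gamma_{k-1}-1$ plus an unspecified correction term. But a lemma of that shape, fed multiplicatively into the recursion via $\gamma(\updown{k})=\gamma(\downup{k+1})$ (i.e.\ with the correction dropped), cannot close the induction: from $\gamma(\downup{k+1})\geq\frac{k}{k+1}(2\gamma_{k-1}-1)\,\gamma(\downup{k})$ and the hypothesis $\gamma(\downup{k})\geq\Gamma_{k-1}/\sum_{i=0}^{k-1}\Gamma_i$ you obtain $\frac{k}{k+1}\,\Gamma_k/\sum_{i=0}^{k-1}\Gamma_i$, and for this to be at least $\Gamma_k/\sum_{i=0}^{k}\Gamma_i$ you would need $k\Gamma_k\geq\sum_{i=0}^{k-1}\Gamma_i$, which is false in general since $\Gamma_0\geq\Gamma_1\geq\cdots\geq\Gamma_k$ (it fails strictly as soon as some $\gamma_j<1$, $j\leq k-1$). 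So the entire content of the improvement must come from the ``correction term,'' and that is precisely what the proposal does not supply. The mechanism you suggest for producing $2\gamma_{k-1}-1$ --- combining $\Dirichlet_{Q_\eta}(g)\geq\gamma_{k-1}\Var(g)$ with the trivial bound $\Var(g)\geq\Dirichlet_{(Q_\eta+I)/2}(g)=\tfrac12\Dirichlet_{Q_\eta}(g)$ --- only gives the two-sided relation $\gamma_{k-1}\Var\leq\Dirichlet_{Q_\eta}\leq 2\Var$ and does not produce the needed inequality.

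The paper's proof changes what is compared: it works with the projected functions $f^{(k)}=\up{k}f^{(k+1)}$ and with down-up Dirichlet forms \emph{across} levels, using the identity $\Dirichlet_{\downup{i,j}}(f^{(i)})=\Var_{\pi_i}(f^{(i)})-\Var_{\pi_j}(f^{(j)})$ (\cref{lem:diff-var}), which your outline never invokes. The improved technical lemma (\cref{lem:improved-technical}) reads $\Dirichlet_{\downup{k+1}}(f^{(k+1)})\geq(2\gamma_{k-1}-1)\Dirichlet_{\downup{k}}(f^{(k)})$ with \emph{no} $\frac{k}{k+1}$ factor; it is proved by writing $\Var_{\pi_{k+1}}(f^{(k+1)})-\Var_{\pi_{k-1}}(f^{(k-1)})$ as an average over links $\tau\in\Pinning_{k-1}$ of two-level local variances (\cref{claim:first-step}), and inside each link using $\gamma(\downup{\tau,2})=\gamma(\updown{\tau,1})\geq\gamma_{k-1}/2$ together with \cref{lem:diff-var} to get $\Var_{\pi_{\tau,2}}(f^{(2)}_\tau)\geq\frac{1}{1-\gamma_{k-1}/2}\Var_{\pi_{\tau,1}}(f^{(1)}_\tau)\geq 2\gamma_{k-1}\Var_{\pi_{\tau,1}}(f^{(1)}_\tau)$; the ``$-1$'' appears when the term $\Dirichlet_{\downup{k}}(f^{(k)})$ is moved to the other side (via \cref{claim:DDD}). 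Your intuition about laziness is in the right spirit --- the factor of two does come from $\updown{\tau,1}=(Q_\tau+I)/2$ --- but it enters through this two-level argument inside each link, not through bookkeeping of the holding part of $\updown{k}$ at the global level. Finally, for general $\ell$ the paper does not iterate down operators and ``track the weights'': the one-level inductive inequality is rewritten, again via \cref{lem:diff-var}, as monotonicity in $k$ of the ratio $\Var_{\pi_k}(f^{(k)})/\sum_{i=0}^{k-1}\Gamma_i$, these ratios are chained from level $n$ down to level $\ell$, and one more application of \cref{lem:diff-var} with $i=n$, $j=\ell$ gives \eqref{eqn:RW-improved-general}. Without \cref{lem:diff-var} and the projections $f^{(k)}$, both the inductive step and the passage to $\downup{n,\ell}$ are missing.
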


Moreover, one could further improve the above result by replacing $(2\gamma_j-1)$ by $\gamma_j/(2-\gamma_j)$ in the definition of $\Gamma_i$; see the proof of~\cref{lem:improved-technical} for details. 

Before proving the above theorem let us first compare it to the Random Walk Theorem (\cref{thm:RW}) stated earlier.
Looking at \cref{eqn:RW-one-improved} for $k=n$ (or \eqref{eqn:RW-improved-general} for $\ell=n-1$) we have:
\[
 \gamma(\downup{n}) \geq \frac{\Gamma_{n-1}}{\sum_{i=0}^{n-1}\Gamma_i} = \frac{\prod_{j=0}^{n-1}(2\gamma_j-1)}{\sum_{i=0}^{n-1}\Gamma_i} \geq 
\frac{1}{n} \prod_{j=0}^{n-1}(2\gamma_j-1)  \ \ \ \mbox{ since $\Gamma_i\leq 1$.}
\]
In contrast, \cref{thm:RW} has $2\gamma_j-1$ replaced by $\gamma_j$ in the lower bound on the spectral gap of $\downup{n}$.

\subsection{Fast Mixing of Uniform Block Dynamics}
Applying \cref{lem:impr-RW-thm} we establish fast mixing of the $\alpha n$-uniform block dynamics.  

\begin{lemma}\label{lem:gap-global-block}
For all $\eta>0$ and $0<\alpha<1$ there exists a constant $C=C(\eta,\alpha)$ such that if the system is $\eta$-spectral independence then the spectral gap of the $\alpha n$-uniform block dynamics is $\gamma(\downup{n,(1-\alpha)n}) \geq C$.
\end{lemma}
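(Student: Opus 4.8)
The plan is to apply the Improved Random Walk Theorem (\cref{lem:impr-RW-thm}) with $\ell = (1-\alpha)n$, which reduces the problem to lower bounding the ratio $\left(\sum_{i=\ell}^{n-1}\Gamma_i\right)\big/\left(\sum_{i=0}^{n-1}\Gamma_i\right)$ in terms of $\eta$ and $\alpha$ only. First I would recall that $\eta$-spectral independence, combined with \cref{lem:QandPsi} (as noted in the discussion below \cref{eqn:gammak}), gives $\gamma_j \geq 1 - \eta/(n-j-1)$ for $j \leq n-\eta-2$, and $\gamma_j \geq C_0(\eta)$ for a constant $C_0$ when $j$ is within $O(\eta)$ of $n$ (the local walk is on $O(\eta)$ states). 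Hence $2\gamma_j - 1 \geq 1 - 2\eta/(n-j-1)$ for small $j$, and is bounded below by a positive constant for large $j$; in particular every factor $2\gamma_j-1$ is bounded below by some constant $c_1 = c_1(\eta) > 0$ provided $n - j - 1 \geq 4\eta$, say. The key point is that $\Gamma_i = \prod_{j=0}^{i-1}(2\gamma_j-1)$ is a decreasing sequence in $i$ (each factor is $\leq 1$), so $\sum_{i=0}^{n-1}\Gamma_i \leq n$, while $\sum_{i=\ell}^{n-1}\Gamma_i \geq (n-\ell)\,\Gamma_{n-1} = \alpha n\,\Gamma_{n-1}$.

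So the crux is a lower bound on $\Gamma_{n-1} = \prod_{j=0}^{n-2}(2\gamma_j-1)$ that does not decay with $n$. Here I would split the product: for $j \leq n - \lceil 4\eta\rceil - 1$ use $2\gamma_j - 1 \geq 1 - 2\eta/(n-j-1)$ and estimate
\[
\prod_{j=0}^{n-\lceil 4\eta\rceil-1}\left(1 - \frac{2\eta}{n-j-1}\right) \geq \exp\!\left(-\sum_{j}\frac{2\eta}{n-j-1-2\eta}\right) \geq \exp\!\left(-2\eta\sum_{i=1}^{n}\frac 1 i\right) \geq \frac{C_2(\eta)}{n^{2\eta}},
\]
using $1-x \geq \exp(-x/(1-x))$ exactly as in the proof of \cref{thm:SI-mixing}; and for the remaining $O(\eta)$ values of $j$ near the top, each factor $2\gamma_j-1$ is at least a constant depending only on $\eta$, contributing a further constant factor. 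This gives $\Gamma_{n-1} \geq C_3(\eta)/n^{2\eta}$, hence
\[
\gamma(\downup{n,(1-\alpha)n}) \geq \frac{\alpha n\,\Gamma_{n-1}}{n} = \alpha\,\Gamma_{n-1} \geq \frac{\alpha\, C_3(\eta)}{n^{2\eta}},
\]
which unfortunately still decays in $n$ — so the naive bound $\sum_{i=\ell}^{n-1}\Gamma_i \geq \alpha n \Gamma_{n-1}$ is too lossy.

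The main obstacle, and the step requiring care, is therefore to avoid paying $\Gamma_{n-1}$ in the numerator: instead I would bound $\sum_{i=\ell}^{n-1}\Gamma_i$ from below by retaining the \emph{largest} terms in that range, namely those with $i$ close to $\ell = (1-\alpha)n$, and simultaneously bound $\sum_{i=0}^{n-1}\Gamma_i$ from above by comparing consecutive terms. Concretely, for $i$ in the range $[\ell, n - 4\eta]$ one has $\Gamma_{i+1}/\Gamma_i = 2\gamma_i - 1 \geq 1 - 2\eta/(n-i-1) \geq 1 - 2\eta/(\alpha n - 1)$, so $\Gamma_i$ does not drop by more than a constant factor $(1 - O(\eta/(\alpha n)))^{\alpha n} = \Theta_{\eta,\alpha}(1)$ over the whole window of length $\alpha n - O(\eta)$; thus $\sum_{i=\ell}^{n-1}\Gamma_i \geq \Omega_{\eta,\alpha}(n)\cdot\Gamma_\ell$. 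Meanwhile $\sum_{i=0}^{\ell}\Gamma_i \leq \sum_{i=0}^{\ell}\Gamma_\ell \cdot \prod_{j=i}^{\ell-1}(2\gamma_j-1)^{-1}$; since each $(2\gamma_j-1)^{-1} = 1 + O(\eta/(n-j-1))$ the tail sum $\sum_{i=0}^{\ell}\Gamma_i/\Gamma_\ell$ is $O(n)$ with the constant depending only on $\eta$ (the same harmonic-sum/exponential estimate as above shows $\Gamma_i/\Gamma_\ell = O_\eta(n^{2\eta})$ uniformly, but more carefully a geometric-type comparison keeps it $O_\eta(n)$). Dividing, the factors of $\Gamma_\ell$ cancel and the factors of $n$ cancel, leaving $\gamma(\downup{n,(1-\alpha)n}) \geq C(\eta,\alpha)$ for a constant independent of $n$, as required. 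I expect the delicate bookkeeping to be in verifying that the ratio $\sum_{i=0}^{\ell}\Gamma_i\big/\sum_{i=\ell}^{n-1}\Gamma_i$ is genuinely $O_{\eta,\alpha}(1)$ rather than growing with $n$ — this is exactly where the improvement $2\gamma_j-1$ (as opposed to $\gamma_j$) and the near-$1$ values of $\gamma_j$ for small $j$ are both essential.
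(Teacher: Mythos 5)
Your overall route is the same as the paper's: apply \cref{lem:impr-RW-thm} with $\ell=(1-\alpha)n$, note that paying $\Gamma_{n-1}$ in the numerator loses a factor $n^{-\Theta(\eta)}$, and instead keep only the terms with $i$ close to $\ell$, where $\Gamma_i$ is still large. However, the key quantitative step as you wrote it is incorrect: for $i$ in your window $[\ell,\,n-4\eta]$, the chain $2\gamma_i-1\ge 1-2\eta/(n-i-1)\ge 1-2\eta/(\alpha n-1)$ holds only at the left end (it requires $n-i-1\ge\alpha n-1$, i.e.\ $i\le\ell$); for $i$ near $n-4\eta$ one has $n-i-1\approx 4\eta$ and the factor is only about $1/2$. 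Consequently the cumulative drop of $\Gamma_i$ across the whole window $[\ell,\,n-4\eta]$ is of order $\prod_{m\approx 4\eta}^{\alpha n}\left(1-2\eta/m\right)\approx\left(\eta/(\alpha n)\right)^{2\eta}$, which vanishes polynomially in $n$ rather than being $\Theta_{\eta,\alpha}(1)$ --- exactly the loss you set out to avoid, so the claim ``$\Gamma_i$ does not drop by more than a constant factor over the whole window of length $\alpha n-O(\eta)$'' does not stand as justified.

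The repair is precisely the paper's device: truncate the window at $(1-\alpha/2)n$ (any $(1-c\alpha)n$ with constant $0<c<1$ works), so that every level used satisfies $n-i-1\ge\alpha n/2$; then $\Gamma_{(1-\alpha/2)n}\ge\left(1-4\eta/(\alpha n)\right)^{n}\ge\exp(-8\eta/\alpha)$ as in \cref{eqn:alpha-bound}, hence $\sum_{i=\ell}^{n-1}\Gamma_i\ge\sum_{i=\ell}^{(1-\alpha/2)n}\Gamma_i\ge(\alpha/2)\,n\,\exp(-8\eta/\alpha)$, while the denominator is at most $n$ simply because $\Gamma_i\le 1$, giving $\gamma(\downup{n,(1-\alpha)n})\ge(\alpha/2)e^{-8\eta/\alpha}$. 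With that one fix your argument goes through; your additional normalization of numerator and denominator by $\Gamma_\ell$ is correct but unnecessary (and its constants depend on $\alpha$ as well as $\eta$), since the crude bound $\Gamma_i\le1$ already handles the denominator.
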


\begin{proof}
We begin with a useful lower bound on $\Gamma_{(1-\alpha/2)n}$:
\begin{multline}   
\Gamma_{(1-\alpha/2)n}
= \prod_{i=0}^{(1-\alpha/2)n-2}(2\gamma_i-1) \\
\geq \prod_{i=0}^{(1-\alpha/2)n-1} \left(1-\frac{2\eta}{n-i-1}\right) 
\geq \left(1 - \frac{4\eta}{\alpha n}\right)^n
\geq \exp(-8\eta/\alpha),
\label{eqn:alpha-bound} 
\end{multline}
for $n\geq 8\eta/\alpha$, where the last inequality uses that $1-x\geq \exp(-2x)$ for $x\leq 1/2$.

We can now proceed to lower bound the spectral gap of the $\alpha n$-uniform block dynamics:
    \begin{align*}
    \label{gap:global-block}
 \gamma(\downup{n,(1-\alpha)n}) 
&\geq
\frac{\sum_{i=(1-\alpha)n}^{n-1} \Gamma_i}{\sum_{i=0}^{n-1} \Gamma_i} 
 & \mbox{by \cref{eqn:RW-improved-general}}
 \\
 &\geq
\frac{\sum_{i=(1-\alpha)n}^{(1-\alpha/2)n} \Gamma_i}{n} 
 & \mbox{since $\Gamma_i\leq 1$}
 \\
   &\geq
 \frac{(\alpha/2)n\Gamma_{(1-\alpha/2)n}}{n} 
 & \mbox{since $\Gamma_i\geq\Gamma_{i+1}$}
 \\
    &\geq
 (\alpha/2)\exp(-8\eta/\alpha) 
 & \mbox{by \cref{eqn:alpha-bound}}
 \\
 & = C(\alpha,\eta). \qedhere
\end{align*}
\end{proof}

\subsection{Shattering}
\label{sec:shattering}

We established fast mixing of the $\alpha n$-uniform block dynamics, see \cref{lem:gap-global-block}, which updates $\alpha n$ random vertices in each step.  We will use that to establish fast mixing of the Glauber dynamics.  

The first step is to look at the properties of the updated vertices in a step of the block dynamics.  In particular, \cite{CLV21} showed that a random subset of $\alpha n$ vertices is ``shattered'' in the sense that the expected size of each component is $O(1)$ when $\alpha<1/(6\Delta)$ where $\Delta$ is the maximum degree of the graph $G$.
The shattering occurs because $\alpha\Delta<1$ and hence this corresponds to a sub-critical branching process.

For a subset $S\subset V$, let $\CC_S$ denote the collection of connected components in the induced subgraph on $S$, and for $v\in V$, let $C_v\in \CC_S$ denote the component containing $v$.

\begin{lemma}{\cite[Lemma 4.3]{CLV21}} 
\label{lem:shattering}
For a graph $G=(V,E)$ of maximum degree $\Delta$, for $\alpha>0$, choose a random subset $S\subset V$ where $|S|=\alpha n$.  Then for every integer $k\geq 1$, for every $v\in V$,
\[ \Prob{|T_v|=k} \leq \alpha(6\Delta\alpha)^{k-1},\]
where $T_v$ is the component containing $v$ in the induced subgraph on $S$.
\end{lemma}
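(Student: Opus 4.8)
The plan is to bound the probability that the component $T_v$ containing $v$ (in the subgraph induced on the random set $S$) has size exactly $k$ by a union bound over all candidate vertex sets that could form such a component, using a branching-process / connected-subgraph counting argument. First I would recall the standard counting fact that in a graph of maximum degree $\Delta$, the number of connected subgraphs (subtrees) of size $k$ containing a fixed vertex $v$ is at most $(e\Delta)^{k-1}$, or more crudely at most $\Delta^{k-1} \cdot C^k$ for a suitable constant; a clean version is that the number of subtrees of size $k$ rooted at $v$ is at most $\frac{(e\Delta)^{k-1}}{2}$ or similar — in any case bounded by $(c\Delta)^{k-1}$ for an absolute constant $c$. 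I would pin down a version where the constant works out so that $c \cdot 6 \le 6$, i.e. use the bound that the number of such connected subsets is at most $(6\Delta)^{k-1}/$(something) — actually the cleanest route is: the number of connected subgraphs of size $k$ containing $v$ is at most $\frac{1}{k}\binom{\Delta k}{k-1} \le (e\Delta)^{k-1}$, and $e < 6$, so this is at most $(6\Delta)^{k-1}$ up to the harmless factor.

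Second, for any fixed vertex subset $A$ with $v \in A$ and $|A| = k$, the event ``$A \subseteq S$'' requires all $k$ vertices of $A$ to be among the $\alpha n$ chosen vertices; since $S$ is a uniformly random $\alpha n$-subset of $V$, this probability is $\binom{n-k}{\alpha n - k}/\binom{n}{\alpha n} \le \alpha^k$ (each successive vertex is included with probability at most $\alpha n / n = \alpha$, and in fact slightly less). Third, I would combine: the event $\{|T_v| = k\}$ forces the existence of some connected subset $A \ni v$ of size exactly $k$ with $A \subseteq S$ (namely $A = T_v$ itself). A union bound over all such $A$ gives
\[
\Prob{|T_v| = k} \le \#\{\text{connected }A \ni v,\ |A|=k\}\cdot \max_A \Prob{A \subseteq S} \le (6\Delta)^{k-1}\cdot \alpha^k = \alpha(6\Delta\alpha)^{k-1},
\]
which is exactly the claimed bound, provided the constant in the subtree-counting bound is taken as $6\Delta$ rather than $e\Delta$ (and the small slack absorbs the $\binom{n-k}{\alpha n-k}/\binom{n}{\alpha n} \le \alpha^k$ estimate cleanly).

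The main obstacle is getting the combinatorial constant exactly right so the final bound reads $\alpha(6\Delta\alpha)^{k-1}$ and not $\alpha(c\Delta\alpha)^{k-1}$ for some other $c$: one must be careful whether to count connected induced subgraphs or spanning trees of them, and whether the correct bound on the number of size-$k$ connected vertex subsets containing $v$ is $(e\Delta)^{k-1}$, $(e\Delta)^{k}$, or something with an extra factor of $k$ or $\Delta$. The safe move is to use the bound ``number of subtrees of size $k$ containing a fixed root in a graph of max degree $\Delta$ is at most $(e(\Delta-1))^{k-1} \le (e\Delta)^{k-1}$'' (a classical estimate, e.g. via Cayley-type counting or the encoding of a DFS walk), note $e\Delta < 6\Delta$, and observe that the probability bound $\alpha^k$ for $A\subseteq S$ has a tiny amount of extra room. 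All the other steps are routine, so I expect the writeup to be short once the counting constant is nailed down.
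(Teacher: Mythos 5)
Your union-bound argument is correct: the event $\{|T_v|=k\}$ forces $T_v$ itself to be a $k$-subset containing $v$, connected in $G$, with $T_v\subseteq S$; the number of such candidate subsets is at most $(e\Delta)^{k-1}$ by the standard subtree-counting bound, and $\Prob{A\subseteq S}=\binom{n-k}{\alpha n-k}/\binom{n}{\alpha n}\le\alpha^k$ since $(\alpha n-i)/(n-i)\le\alpha$ for each $i$, which yields $\Prob{|T_v|=k}\le (e\Delta)^{k-1}\alpha^k\le\alpha(6\Delta\alpha)^{k-1}$. The paper gives no proof of its own (it defers to Lemma 4.3 of \cite{CLV21}), and your counting/union-bound argument is essentially the same one used there, with the slack between $e$ and $6$ absorbing the combinatorial constant exactly as you anticipated.
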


\begin{proof}
The number of distinct subgraphs of size $k$ that contain $v$ is bounded by $\binom{\Delta k}{k-1}$ (we can enccode them by a binary string of length $\Delta k$ with exactly $k-1$ ones as follows: running the DFS algorithm starting from~$v$, for each encountered edge output $1$ if it is traversed and $0$ otherwise, and at the end pad the string with zeros). The probability that
a particular set of size $k$ is contained in $S$ is $\binom{n-k}{\alpha n-k}/\binom{n}{\alpha n}\leq \alpha^k$. Combining the 
two estimates with $\binom{n}{k}\leq (n{\mathrm{e}}/k)^k$ we obtain
\begin{align*}
    \Prob{|T_v|=k}\leq \left(\frac{\Delta {\mathrm{e}} k}{k-1}\right)^k \alpha^k \leq \alpha(6\Delta\alpha)^{k-1}. &\qedhere
\end{align*}
\end{proof}

\subsection{Optimal Relaxation Time of Glauber: Proof of \texorpdfstring{\cref{thm:SI-constant-relax}}{Theorem 1.5}}
\label{sec:proof-relax}

Using the above shattering result (\cref{lem:shattering}) with the fast mixing result for the $\alpha n$-uniform block dynamics (\cref{lem:gap-global-block}), we can prove optimal upper bound on the relaxation time of the Glauber dynamics and hence prove~\cref{thm:SI-constant-relax}.

The high-level proof idea for \cref{thm:SI-constant-mix} is the following.  Let $S\subset V$ be a random subset where $|S|=\alpha n$ as in \cref{lem:shattering}.  Let $\mathcal{C}_S$ denote the connected components in the induced subgraph on $S$.  When applying the heat-bath dynamics on $S$, then each $T,T'\in\mathcal{C}_S$ are independent of each other.  Hence, given a configuration $\tau$ on $\overline{S}$, then since we have a product distribution over each component $T\in\mathcal{C}_S$ we have:
\[ \Var_S[F|\tau] = \sum_{T\in \mathcal{C}_S}\Var_T[F|\tau].
\]
Within a component $T\in\mathcal{C}_S$ we can apply \cref{thm:SI-mixing} to conclude a polynomial bound on the spectral gap of the Glauber dynamics within $T$ (with fixed configuration $\tau$ on the boundary of $T$), and hence 
\begin{equation}
    \label{var-not-ent}
 \Var_T[F|\tau] \leq C'|T|^{\eta+1}\sum_{v\in T}\Dirichlet_{\Glaubert(T)}(f).
\end{equation}
Finally, we apply \cref{lem:shattering} to utilize the upper bound on the size of $T$ and conclude the proof.

The bound in \cref{var-not-ent} is the only step in the proof where the analogous bound does not hold with respect to entropy in place of variance.  This is an issue when we aim to upper bound the mixing time instead of the relaxation time.  To obtain a bound on the entropy analog of \cref{var-not-ent} we need the additional assumption of marginal boundedness; see \cref{sec:optimal} for further details.

\begin{proof}[Proof of \cref{thm:SI-constant-relax}]
In the following, for a subset $S\subset V$, we use $\HBt(S)$ to denote the heat-bath block dynamics on $S$;
this dynamics updates the configuration on all of $S$ in one step from the Gibbs distribution conditional on the fixed configuration $\tau$ on $\overline{S}$.
Similarly, we use $\Glaubert$ to denote the Glauber dynamics on a set $S$ with a fixed configuration $\tau$ on $\overline{S}$; in particular, for $v\in S$, with probability $1/|S|$ we update the configuration at $v$ from the Gibbs distribution conditional on the fixed configuration for the other vertices. 

For a function $f:\Omega\rightarrow\R$, we will consider the conditional variance.  For $\sigma\sim\mu$, let \[ F = f(\sigma). \]
Note that when we write $\Var(f)$ we are implicitly writing $\Var[F]$ since we first draw a sample $\sigma$ from the Gibbs distribution $\mu$ and then look at the variance of $F=f(\sigma)$.  

Consider a configuration $\tau\in\Omega$, and a subset $S\subset V$.  In the upcoming analysis we will consider the variance of $f$ within $S$ where we fix $\tau$ on $\overline{S}=V\setminus S$. We denote this as:
\[ \Var_S[F\mid\tau] := \Var[F\mid \sigma(\overline{S})=\tau(\overline{S})].
\]
Once again, in words $\Var_S[F\mid \tau]$ is the variance of the function $f$ over the choice of $\sigma$ from the Gibbs distribution $\mu$ conditional on $\sigma(\overline{S})=\tau(\overline{S})$.

Again fix $\tau\in\Omega$ and $S\subset V$.  Consider the heat-bath dynamics on $S$ with a fixed $\tau$ on $\overline{S}$.  The dynamics chooses the configuration on $S$ from $\mu_{\tau(\overline{S})}$, which is the Gibbs distribution $\mu$ conditional on the configuration on $\overline{S}$ being $\tau(\overline{S})$.  Hence,
\begin{equation}
\label{eqn:Cond-Var-Dirichlet}
  \Dirichlet_{\HBt(S)}(f) 
 = \Var[F\mid \sigma(\overline{S})=\tau(\overline{S})]  
\end{equation}
Moreover, if $S=\{v\}$ for $v\in V$ then this is heat-bath on the single vertex $v$.  The Glauber dynamics operates by choosing a random vertex $v$ and then applying the heat-bath dynamics on $v$.  Hence, we have the following:
\begin{align}
\nonumber
  \Dirichlet_{\Glauber}(f) 
 & = \frac{1}{2}\sum_{\tau\in\Omega}\sum_{\sigma\in\Omega}\mu(\tau)\Glauber(\tau,\sigma)(f(\tau)-f(\sigma))^2
 \\
 \nonumber
  & = \sum_{\tau\in\Omega}\mu(\tau)\frac{1}{n}\sum_{v\in V}\left[\frac12\sum_{\substack{\sigma\in\Omega:\\ \sigma(V\setminus\{v\})=\tau(V\setminus\{v\})}}\mu(\sigma)(f(\tau)-f(\sigma))^2\right]
 \\
 & = \frac{1}{n}\sum_{\tau\in\Omega}\mu(\tau)\Var[F\mid \sigma(V\setminus\{v\})=\tau(V\setminus\{v\})].  
 \label{eqn:Glauber-Cond-Var}
\end{align}

Recall, for a subset $S\subset V$, $\CC_S$ denotes the collection of connected components in the induced subgraph on $S$.
For any function $f:\Omega\rightarrow\R$,
\begin{align*}
\Var(f) 
& \leq
\frac{1}{C}
\Dirichlet_{\downup{n,(1-\alpha)n}}(f)
& \mbox{by \cref{lem:gap-global-block}}
\\
& =
\frac{1}{C}\frac{1}{\binom{n}{\alpha n}}\sum_{\substack{S\subset V: \\ |S|=\alpha n}} 
\sum_{\tau\in\Omega} \mu(\tau)\Dirichlet_{\HBt(S)}(f)
& \mbox{(see below)}
\\
& =
\frac{1}{C}\frac{1}{\binom{n}{\alpha n}}\sum_{\substack{S\subset V: \\ |S|=\alpha n}} 
\sum_{\tau\in\Omega} \mu(\tau)\Var_S[F\mid \tau]
& \mbox{by \cref{eqn:Cond-Var-Dirichlet}}
\\
& =
\frac{1}{C}\frac{1}{\binom{n}{\alpha n}}\sum_{\substack{S\subset V: \\ |S|=\alpha n}} 
\sum_{\tau\in\Omega} \mu(\tau) \sum_{T\in\CC_S}\Var_T[F\mid \tau]
& \mbox{by independence of $T,T'\in\CC_S$}
\\
& \leq 
 \frac{1}{C}\frac{1}{\binom{n}{\alpha n}}\sum_{\substack{S\subset V: \\ |S|=\alpha n}} \sum_{\tau\in\Omega} \mu(\tau) 
 \sum_{T\in\CC_S}
  C'|T|^{\eta+1}\sum_{v\in T}\Dirichlet_{\Glaubert(T)}(f)
& \mbox{by \cref{thm:SI-mixing}, (see below)}
\\
& =
 \frac{1}{C}\frac{1}{\binom{n}{\alpha n}}\sum_{\substack{S\subset V: \\ |S|=\alpha n}} \sum_{\tau\in\Omega} \mu(\tau) 
 \sum_{T\in\CC_S}
  C'|T|^{\eta+1}\frac{1}{|T|}\sum_{v\in T}\Var_{v}[F\mid \tau]
& \mbox{by \cref{eqn:Glauber-Cond-Var}}
\\
& =
 \frac{1}{C} \sum_{\tau\in\Omega} \mu(\tau) 
 \sum_{v\in V}\Var_{v}[F\mid \tau]
 \sum_{k=1}^{\alpha n}\Prob{|T_v|=k}
\times C'k^{\eta-1}
& \mbox{rearranging}
\\
& \leq 
 \frac{1}{C} \sum_{\tau\in\Omega} \mu(\tau) 
 \sum_{v\in V}\Var_{v}[F\mid \tau]
 \sum_{k=1}^{\alpha n}
 C'(6\Delta\alpha)^{k-1}k^{\eta-1}
  & \mbox{by \cref{lem:shattering}}
\\
& \leq 
 \frac{1}{C} \sum_{\tau\in\Omega} \mu(\tau) 
 \sum_{v\in V}\Var_{v}[F\mid \tau]
 \sum_{k=1}^{\alpha n}
 C' 2^{-k}
 & \mbox{for $\alpha<\exp(-\eta)/(100\Delta)$}
  \\
& \leq 
 \frac{1}{C''} \sum_{\tau\in\Omega} \mu(\tau) 
 \sum_{v\in C}\Var_{v}[F\mid \tau]
\\
& =
\frac{n}{C''}\Dirichlet_{\Glauber}(f)
 & \mbox{by \cref{eqn:Glauber-Cond-Var}.}
\end{align*}
In the second line we are implementing the $(n,(1-\alpha)n)$ down-up dynamics by first choosing the subset $S$ of size $\alpha n$ for update and then applying the heat-bath dynamics on this set $S$ with those vertices outside $S$ fixed to $\tau$ chosen from the Gibbs distribution $\mu$.
And in the fifth line where we are applying \cref{thm:SI-mixing}, note we are considering the Glauber dynamics which only operates on the vertices in this component $T$, where the vertices outside $T$ have the fixed configuration $\tau$.

This proves that the spectral gap of the Glauber dynamics is $\geq C''/n$ for a constant $C''=C''(\alpha,\Delta)$.  Hence, by applying \cref{eq:relax-gap}, the relaxation time of the Glauber dynamics is $\leq Cn$ for some constant $C=C(\alpha,\Delta)$.  For the mixing time bound, since $\log(1/\mu^*)=O(n\log{q})$ and we are considering the case $q=2$, then by \cref{eq:relax-gap}, we have $\Tmix\leq Cn^2$ for some constant $C=C(\alpha,\Delta)$.
This establishes the bounds on the relaxation time and mixing time stated in \cref{thm:SI-constant-relax}.
\end{proof}

\subsection{Improved Random Walk Theorem: Proof of \texorpdfstring{\cref{lem:impr-RW-thm}}{Theorem 6.1}}
\label{sub:improved-RW-proof}

In this section we will prove the improved Random Walk Theorem presented in~\cref{lem:impr-RW-thm} in \cref{sub:improved-RW-statement}.
The proof will involve statements comparing Dirichlet forms and variance on different levels and hence we will need to define the projection of a function $f$ on lower levels.  

Consider an arbitrary function $f:\Omega\rightarrow\R$.  Recall $\Pinning_n=\Omega$, and let $f^{(n)}=f$.  We will define $f^{(k)}$ for $0\leq k<n$ in the following inductive manner.

For $f^{(k+1)}:\Pinning_{k+1}\rightarrow\R$, let 
\[ f^{(k)}=\up{k}f^{(k+1)}.
\]
Hence, for $\sigma\in\Pinning_{k}$ we have that $f^{(k)}(\sigma) = \sum_{\tau\in\Pinning_{k+1}}\up{k}(\sigma,\tau)f^{(k+1)}(\tau)$.

We will use the two identities involving the (global) variance and the local variance, measured by the Dirichlet form.  The following is an interesting and useful identity for relating the Dirichlet form for the down-up walk from level $i$ to $j$ in terms of the difference of the variance at levels $i$ and $j$.

\begin{lemma}
    \label{lem:diff-var}
For all $n\geq i>j\geq 0$,  for all $f^{(i)}:\Pinning_i\rightarrow \R$, the following holds:
\begin{equation}
   \label{eqn:general-basic-fact}
    \Dirichlet_{\downup{i,j}}(f^{(i)}) = \Var_{\pi_i}(f^{(i)}) - \Var_{\pi_{j}}(f^{(j)}).
\end{equation}
\end{lemma}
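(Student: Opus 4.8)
The plan is to prove the identity by a telescoping argument over the intermediate levels, reducing the claim to the single-level case $i$ to $i-1$, and then to verify that single-level case directly from the definitions of the up and down walks together with Claims \ref{claim:AAA} and \ref{claim:DDD}. First I would record the two facts that make the telescoping work: for the down-up walk we have the factorization $\downup{i,j} = \down{i}\down{i-1}\cdots\down{j+1}\up{j}\up{j+1}\cdots\up{i-1}$, but more usefully $\downup{i,j}$ can be realized as first stepping $\downup{i,i-1}$ and then, on the projected function, stepping $\downup{i-1,j}$; correspondingly $f^{(k)} = \up{k}\up{k+1}\cdots\up{i-1} f^{(i)}$, so the projections compose. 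Thus it suffices to show
\[
\Dirichlet_{\downup{i,i-1}}(f^{(i)}) = \Var_{\pi_i}(f^{(i)}) - \Var_{\pi_{i-1}}(f^{(i-1)}),
\]
because summing this telescoping identity over consecutive levels from $j+1$ up to $i$ and using $\Dirichlet_{\downup{i,j}}(f^{(i)}) = \sum_{k=j+1}^{i}\Dirichlet_{\downup{k,k-1}}(f^{(k)})$ (which itself needs a short justification via the composition of down-up steps and the fact that $f^{(k)}$ is exactly the $\up{}$-projection of $f^{(k+1)}$) yields the general statement.

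For the single-level identity, I would use the standard variance-Dirichlet decomposition for a reversible chain together with the self-adjointness of $\up{i-1}$ as an operator from $\ell^2(\pi_{i-1})$ into $\ell^2(\pi_i)$ (equivalently, $\up{i-1}$ and $\down{i}$ are adjoints under the reversibility relation $\pi_{i-1}(\sigma)\up{i-1}(\sigma,\tau) = \pi_i(\tau)\down{i}(\tau,\sigma)$ established in \cref{sec:chains}). Concretely, $\Dirichlet_{\downup{i,i-1}}(f^{(i)}) = \langle f^{(i)}, (I - \down{i}\up{i-1})f^{(i)}\rangle_{\pi_i} = \Var_{\pi_i}(f^{(i)}) + (\mu(f^{(i)}))^2 - \langle \up{i-1}f^{(i)}, \up{i-1}f^{(i)}\rangle_{\pi_{i-1}}$, where I used $\langle f^{(i)}, \down{i}\up{i-1}f^{(i)}\rangle_{\pi_i} = \langle \down{i}f^{(i)}? \rangle$ — more carefully, since $\down{i}$ is the adjoint of $\up{i-1}$, we get $\langle f^{(i)}, \down{i}(\up{i-1}f^{(i)})\rangle_{\pi_i}$ is not quite right; the correct move is $\langle f^{(i)}, \down{i}\up{i-1}f^{(i)}\rangle_{\pi_i} = \langle \up{i-1}^{*}f^{(i)}, \up{i-1}f^{(i)}\rangle_{\pi_{i-1}}$ — but $\up{i-1}^*$ with respect to these inner products is precisely $\down{i}$ acting the other way, and one checks $\up{i-1}^* f^{(i)} = \down{i} f^{(i)}$ is the function on $\Pinning_{i-1}$ given by averaging, which by the definition of $f^{(i-1)}$ as $\up{i-1}f^{(i)}$... . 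The cleanest route, to avoid this bookkeeping, is to expand $\langle f^{(i)}, \down{i}\up{i-1}f^{(i)}\rangle_{\pi_i}$ directly as a double sum over levels $i$ and $i-1$, recognize $\sum_\sigma \pi_{i-1}(\sigma) (f^{(i-1)}(\sigma))^2 = \langle f^{(i-1)}, f^{(i-1)}\rangle_{\pi_{i-1}}$, and note $\mu$-averages agree across levels (i.e.\ $\pi_i(f^{(i)}) = \pi_{i-1}(f^{(i-1)})$, which follows since $\up{i-1}$ preserves the stationary average). Then $\Var_{\pi_i}(f^{(i)}) - \Var_{\pi_{i-1}}(f^{(i-1)}) = \langle f^{(i)},f^{(i)}\rangle_{\pi_i} - \langle f^{(i-1)},f^{(i-1)}\rangle_{\pi_{i-1}}$, and this equals $\Dirichlet_{\downup{i,i-1}}(f^{(i)})$ after checking $\langle f^{(i)}, \down{i}\up{i-1}f^{(i)}\rangle_{\pi_i} = \langle f^{(i-1)}, f^{(i-1)}\rangle_{\pi_{i-1}}$.

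I expect the main obstacle to be this last equality, $\langle f^{(i)}, \down{i}\up{i-1}f^{(i)}\rangle_{\pi_i} = \|f^{(i-1)}\|_{\pi_{i-1}}^2$, which is really the statement that $\up{i-1}\down{i}$ acting on the range appropriately, or rather that projecting down and viewing the result is an orthogonal-projection-like operation; the key algebraic input is exactly the reversibility relation between $\up{}$ and $\down{}$, so that $\down{i}$ is the adjoint of $\up{i-1}$, whence $\langle f^{(i)}, \down{i}\up{i-1}f^{(i)}\rangle_{\pi_i} = \langle \up{i-1}f^{(i)}? \rangle$ — the point being that $\down{i}$ as adjoint of $\up{i-1}$ means $\langle \up{i-1} g, h\rangle_{\pi_i} = \langle g, \down{i} h\rangle_{\pi_{i-1}}$, so setting $g = f^{(i-1)} = \up{i-1}f^{(i)}$... no: the correct instantiation is $\langle f^{(i)}, \down{i} u\rangle_{\pi_i}$ with $u = \up{i-1}f^{(i)} = f^{(i-1)}$, giving $\langle \up{i-1}^* f^{(i)}, f^{(i-1)}\rangle$; and $\up{i-1}^* = \down{i}$ as a map $\ell^2(\pi_i)\to\ell^2(\pi_{i-1})$ sends $f^{(i)}$ to... precisely $f^{(i-1)}$? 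That would need $\down{i}f^{(i)} = \up{i-1}f^{(i)}$, which is false in general. So I will instead simply expand everything as explicit sums using \eqref{eqn:up-down}-style formulas and \cref{step:claim1}, matching terms, and cite \cref{lem:updown-downup} to pass between $\updown{}$ and $\downup{}$ if convenient; this is routine but must be done carefully. An alternative, possibly slicker, proof avoids the lemma's proof of the single-level case entirely by using Claims \ref{claim:AAA} and \ref{claim:DDD} with $\gamma$-free identities — summing $\Dirichlet_{\updown{k}}$ over a telescoping range — which I would try first if the direct computation gets unwieldy.
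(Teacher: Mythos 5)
Your proposal is correct, and after the detour through adjoint bookkeeping it lands on exactly the identity the paper verifies: with means preserved under projection (so one may take $\Exp_{\pi_i}[f^{(i)}]=0$), everything reduces to the cross-term computation $\sum_{\sigma_1,\sigma_2}\pi_i(\sigma_1)\downup{i,j}(\sigma_1,\sigma_2)f^{(i)}(\sigma_1)f^{(i)}(\sigma_2)=\Var_{\pi_j}(f^{(j)})$ (equivalently $\langle f^{(i)},\downup{i,j}f^{(i)}\rangle_{\pi_i}=\langle f^{(j)},f^{(j)}\rangle_{\pi_j}$, which is the reversibility/adjointness relation between $\up{}$ and $\down{}$ written out as sums), after which $\Dirichlet_{\downup{i,j}}(f^{(i)})=\Var_{\pi_i}(f^{(i)})-\langle f^{(i)},\downup{i,j}f^{(i)}\rangle_{\pi_i}$ finishes the argument. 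The only structural difference from the paper is your reduction to the consecutive-level case followed by telescoping: the paper performs the explicit expansion for the general pair $i>j$ in one shot, so no telescoping is needed. Your route is valid, but be aware that the decomposition $\Dirichlet_{\downup{i,j}}(f^{(i)})=\sum_{k=j+1}^{i}\Dirichlet_{\downup{k}}(f^{(k)})$ you invoke is not free: its ``short justification'' is precisely the same adjointness identity applied at each level (it is equivalent to the chain $\langle f^{(i)},\downup{i,j}f^{(i)}\rangle_{\pi_i}=\langle f^{(k)},\downup{k,j}f^{(k)}\rangle_{\pi_k}$ for intermediate $k$), so the reduction buys no savings over the direct multi-level expansion; just make sure you prove it from adjointness and not from the lemma itself, to avoid circularity. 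Your worry that one would need $\down{i}f^{(i)}=\up{i-1}f^{(i)}$ is a red herring: the correct application is $\langle f^{(i)},\down{i}g\rangle_{\pi_i}=\langle \up{i-1}f^{(i)},g\rangle_{\pi_{i-1}}$ with $g=f^{(i-1)}=\up{i-1}f^{(i)}$, which immediately gives $\Vert f^{(i-1)}\Vert_{\pi_{i-1}}^2$. The alternative closing suggestion via \cref{claim:AAA} and \cref{claim:DDD} is not needed and would not directly yield the variance difference; the explicit-expansion route you settle on is the right one and coincides with the paper's proof.
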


In the proof of the Random Walk Theorem (\cref{eqn:RW-thm}), the key technical inequality stated in \cref{lem:technical-RW} was the following:
\[
\Dirichlet_{\updown{k}}(f) \geq \frac{k}{k+1}\gamma_{k-1}\Dirichlet_{\downup{k}}(f).
\]

For the improved result we will use the following variant.  Recall, $f^{(k)}=\up{k}f^{(k+1)}$.

\begin{lemma}
    \label{lem:improved-technical}
For any $f^{(k+1)}:\Pinning_{k+1}\rightarrow\R$, the following holds:
\begin{equation}\label{eqn:NEW-D}
\Dirichlet_{\downup{k+1}}(f^{(k+1)}) \geq (2\gamma_{k-1}-1)\Dirichlet_{\downup{k}}(f^{(k)}).
\end{equation}
\end{lemma}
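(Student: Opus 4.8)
The plan is to combine two ingredients already available: the ``difference of variances'' identity (\cref{lem:diff-var}) and the key technical inequality from the Random Walk Theorem proof (\cref{lem:technical-RW}), applied at level $k$. First I would rewrite $\Dirichlet_{\updown{k}}(f^{(k)})$ using \cref{lem:technical-RW}, which gives $\Dirichlet_{\updown{k}}(f^{(k)}) \ge \frac{k}{k+1}\gamma_{k-1}\Dirichlet_{\downup{k}}(f^{(k)})$. The difficulty is that the left-hand side of the target inequality involves $\Dirichlet_{\downup{k+1}}(f^{(k+1)})$, not $\Dirichlet_{\updown{k}}(f^{(k)})$, so I need an identity relating these two objects. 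This is exactly where \cref{lem:diff-var} enters: applying it with $(i,j)=(k+1,k)$ gives $\Dirichlet_{\downup{k+1,k}}(f^{(k+1)}) = \Var_{\pi_{k+1}}(f^{(k+1)}) - \Var_{\pi_k}(f^{(k)})$, and note $\downup{k+1,k}=\downup{k+1}$ is the usual single-level down-up walk. Similarly, $\Dirichlet_{\downup{k,k-1}}(f^{(k)}) = \Var_{\pi_k}(f^{(k)}) - \Var_{\pi_{k-1}}(f^{(k-1)})$, i.e.\ $\Dirichlet_{\downup{k}}(f^{(k)})$ is a difference of variances one level down.

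The second step is to express everything in terms of $\Var$ and chain the inequalities. The relation $\updown{k} = \up{k}\down{k+1}$ versus $\downup{k+1}=\down{k+1}\up{k+1}$ means $\updown{k}$ and $\downup{k+1}$ have the same spectrum on the relevant subspace (\cref{lem:updown-downup}); more usefully, I expect a direct identity of the form $\Dirichlet_{\downup{k+1}}(f^{(k+1)}) = \Dirichlet_{\updown{k}}(f^{(k)}) + \bigl(\text{something nonnegative}\bigr)$, coming from the fact that projecting $f^{(k+1)}$ down to $f^{(k)}=\up{k}f^{(k+1)}$ and then measuring its Dirichlet form at level $k$ loses the ``orthogonal'' part. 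Concretely, using $\Var_{\pi_{k+1}}(f^{(k+1)}) = \Dirichlet_{\downup{k+1}}(f^{(k+1)}) + \Var_{\pi_k}(f^{(k)})$ and iterating, together with the variational characterization of $f^{(k)}$ as the $\up{k}$-image, should yield $\Dirichlet_{\updown{k}}(f^{(k)}) = \Var_{\pi_k}(f^{(k)}) - \Var_{\pi_{k-1}}(f^{(k-1)})$ on one hand and let me bound $\Dirichlet_{\downup{k+1}}(f^{(k+1)})$ from below by $2\Dirichlet_{\updown{k}}(f^{(k)}) - \Dirichlet_{\downup{k}}(f^{(k)})$ or similar. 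The factor $2$ and the ``$-1$'' in $2\gamma_{k-1}-1$ strongly suggest that the argument writes
\[
\Dirichlet_{\downup{k+1}}(f^{(k+1)}) \;\ge\; 2\,\Dirichlet_{\updown{k}}(f^{(k)}) \;-\; \Dirichlet_{\downup{k}}(f^{(k)}),
\]
then substitutes the bound $\Dirichlet_{\updown{k}}(f^{(k)}) \ge \gamma_{k-1}\Dirichlet_{\downup{k}}(f^{(k)})$ (the clean version of \cref{lem:technical-RW} after accounting for the $\frac{k}{k+1}$ factor, which in the improved setup is replaced by $\gamma_j/(2-\gamma_j)$ as the remark after \cref{lem:impr-RW-thm} indicates), giving $\Dirichlet_{\downup{k+1}}(f^{(k+1)}) \ge (2\gamma_{k-1}-1)\Dirichlet_{\downup{k}}(f^{(k)})$.

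The main obstacle I anticipate is establishing the inequality $\Dirichlet_{\downup{k+1}}(f^{(k+1)}) \ge 2\Dirichlet_{\updown{k}}(f^{(k)}) - \Dirichlet_{\downup{k}}(f^{(k)})$ — that is, correctly locating where the extra factor of $2$ and the subtracted term come from. I expect this rests on a decomposition of $\Var_{\pi_{k+1}}(f^{(k+1)})$ into the variance of the projection $f^{(k)}$ plus a nonnegative ``fluctuation'' term, combined with the observation that $\updown{k}$ acts as $(Q_\eta+I)/2$-type averaging on each fiber (\cref{rem:second}), so that $\Dirichlet_{\updown{k}}(f^{(k)})$ underestimates the true local Dirichlet form by exactly a factor $2$, up to the backtracking correction — this is the same $\frac12$ discrepancy between $\updown{1}$ and $Q$ noted in \cref{rem:local-downup}. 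Once that structural identity is in place, the rest is substitution of \cref{lem:technical-RW} and elementary algebra, and one should double-check the edge cases $k=0$ (where $\gamma_{-1}$ is vacuous or set to $1$) and $k=1$ separately.
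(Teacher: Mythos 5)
There is a genuine gap: the inequality your whole plan hinges on, namely $\Dirichlet_{\downup{k+1}}(f^{(k+1)}) \geq 2\,\Dirichlet_{\updown{k}}(f^{(k)}) - \Dirichlet_{\downup{k}}(f^{(k)})$, is never established --- you yourself flag it as the main obstacle, and the sketched justification (``variance of projection plus a nonnegative fluctuation term'' plus a factor $2$ from backtracking) is not an argument. Moreover, the second ingredient you invoke, $\Dirichlet_{\updown{k}}(f^{(k)}) \geq \gamma_{k-1}\Dirichlet_{\downup{k}}(f^{(k)})$, is not a ``clean version'' of \cref{lem:technical-RW}: the factor $\frac{k}{k+1}$ there is intrinsic (it comes from the backtracking term in \cref{claim:AAA}) and cannot be dropped; indeed, for a general function for which each local walk saturates its spectral gap one gets exactly $\Dirichlet_{\updown{k}}(f) = \frac{k}{k+1}\gamma_{k-1}\Dirichlet_{\downup{k}}(f)$, so your ``clean'' bound would have to exploit the special structure $f^{(k)}=\up{k}f^{(k+1)}$ --- which is precisely the content that is missing. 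The parenthetical about $\gamma_j/(2-\gamma_j)$ also misreads the remark after \cref{lem:impr-RW-thm}: that concerns a sharper constant in the final statement, not a removal of the $k/(k+1)$ factor.

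The paper's proof avoids $\updown{k}$ at level $k$ entirely and works with variances. By \cref{lem:diff-var} the claim is equivalent to $\Var_{\pi_{k+1}}(f^{(k+1)}) - \Var_{\pi_{k-1}}(f^{(k-1)}) \geq 2\gamma_{k-1}\bigl(\Var_{\pi_k}(f^{(k)}) - \Var_{\pi_{k-1}}(f^{(k-1)})\bigr)$. The new ingredient is a two-level conditional decomposition (\cref{claim:first-step}): $\Var_{\pi_{k+1}}(f^{(k+1)}) - \Var_{\pi_{k-1}}(f^{(k-1)}) = \sum_{\tau\in\Pinning_{k-1}}\pi_{k-1}(\tau)\Var_{\pi_{\tau,2}}(f^{(2)}_\tau)$. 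Then, for each pinning $\tau\in\Pinning_{k-1}$, one uses $\updownnon{}$-type laziness in the right place: $\updown{\tau,1}=(Q_\tau+I)/2$ gives $\gamma(\downup{\tau,2})\geq \gamma_{k-1}/2$ via \cref{lem:updown-downup}, so by \cref{lem:diff-var} applied locally, $\Var_{\pi_{\tau,2}}(f^{(2)}_\tau) \geq \frac{1}{1-\gamma_{k-1}/2}\Var_{\pi_{\tau,1}}(f^{(1)}_\tau) \geq 2\gamma_{k-1}\Var_{\pi_{\tau,1}}(f^{(1)}_\tau)$, and \cref{claim:DDD} converts $\sum_\tau\pi_{k-1}(\tau)\Var_{\pi_{\tau,1}}(f^{(1)}_\tau)$ back into $\Dirichlet_{\downup{k}}(f^{(k)})$. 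Your intuition that the ``$2$'' comes from the $(Q+I)/2$ relation is correct in spirit, but it enters through the two-step walk $\downup{\tau,2}$ above a level-$(k-1)$ pinning, not through the level-$k$ up-down walk; without the decomposition of \cref{claim:first-step} your plan cannot be completed as stated.
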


We defer the proofs of \cref{lem:diff-var,lem:improved-technical} to~\cref{sec:missing}, and we proceed with the proof of~\cref{lem:impr-RW-thm}.

\begin{proof}[Proof of \cref{lem:impr-RW-thm}]
We begin by proving \cref{eqn:RW-one-improved} which states that the spectral gap of $\downup{k}$ is at least $\Gamma_{k-1}/(\sum_{i=0}^{k-1}\Gamma_i$ where $\Gamma_i=\prod_{j=0}^{i-1}(2\gamma_{j}-1)$.  We will then use this result for the one-level down-up walk to derive the more general statement in \cref{lem:impr-RW-thm} for the down-up walk $\downup{n,\ell}$.

Our inductive hypothesis is the following statement about the spectral gap of $\downup{k}$:
\begin{equation}
\label{induct:simpler}
\left(\sum_{i=0}^{k-1}\Gamma_i\right)\Dirichlet_{\downup{k}}(f^{(k)}) 
 \geq 
\Gamma_{k-1}\Var_{\pi_{k}}(f^{(k)}).
\end{equation}

We can now complete the proof by induction as follows:
\begin{align*}
\lefteqn{
\left(\sum_{i=0}^{k}\Gamma_i\right)\Dirichlet_{\downup{k+1}}(f^{(k+1)}) 
} \hspace{.3in}
\\& = 
\Gamma_{k}\Dirichlet_{\downup{k+1}}(f^{(k+1)}) + \left(\sum_{i=0}^{k-1}\Gamma_i\right)\Dirichlet_{\downup{k+1}}(f^{(k+1)})
\\
& \geq 
\Gamma_{k}\Dirichlet_{\downup{k+1}}(f^{(k+1)}) + \left(\sum_{i=0}^{k-1}\Gamma_i\right)(2\gamma_{k-1}-1)\Dirichlet_{\downup{k}}(f^{(k)})
&\mbox{by \cref{eqn:NEW-D}}
\\
& \geq 
\Gamma_{k}\Dirichlet_{\downup{k+1}}(f^{(k+1)}) + \Gamma_{k-1}(2\gamma_{k-1}-1)\Var_{\pi_{k}}(f^{(k)})
&\mbox{by inductive hypothesis}
\\
& =
\Gamma_{k}\Dirichlet_{\downup{k+1}}(f^{(k+1)}) + \Gamma_{k}\Var_{\pi_{k}}(f^{(k)})
&\mbox{since } \Gamma_{k}=\gamma_{k-1}\Gamma_{k-1}
\\
& =
\Gamma_{k}\Var_{\pi_{k+1}}(f^{(k+1)})
&\mbox{by \cref{lem:diff-var} with $i=k+1,j=k$.}
\end{align*}
This completes the proof of \cref{eqn:RW-one-improved}.

Let us now prove the more general statement in~\cref{eqn:RW-improved-general}.
Establishing the bound stated in \cref{eqn:RW-improved-general} on the spectral gap of $\downup{n,\ell}$ is equivalent to proving the following for all $0<\ell<n$ and all $f:\Omega\rightarrow\R$:
\begin{equation}
    \label{induct:RW-improved-general}
\left(\sum_{i=0}^{n-1} \Gamma_i\right) \Dirichlet_{\downup{n,\ell}}(f) \geq 
\left( \sum_{i=\ell}^{n-1} \Gamma_i \right)\Var_{\pi_{n}}(f).
\end{equation}

Now we can prove \cref{induct:RW-improved-general} using \cref{induct:simpler}. 
From \cref{lem:diff-var} we have that $\Dirichlet_{\downup{k}}(f^{(k)}) = \Var_{\pi_k}(f^{(k)}) - \Var_{\pi_{k-1}}(f^{(k-1)})$, and hence \cref{induct:simpler} is equivalent to the following:
\begin{equation}
\label{induct:AAA-simpler}
\frac{\Var_{\pi_{k}}(f^{(k)}) }
{\left(\sum_{i=0}^{k-1}\Gamma_{i}\right)}
 \geq 
\frac{\Var_{\pi_{k-1}}(f^{(k-1)})}{\left(\sum_{i=0}^{k-2}\Gamma_i\right)}.
\end{equation}
Then by chaining these inequalities (with $f=f^{(n)}$ we obtain:
\begin{equation}
\label{induct:GGG-simpler}
\frac{\Var_{\pi_{n}}(f) }
{\left(\sum_{i=0}^{n-1}\Gamma_{i}\right)}
 \geq 
\frac{\Var_{\pi_{\ell}}(f^{(\ell)})}{\left(\sum_{i=0}^{\ell-1}\Gamma_i\right)}.
\end{equation}
Finally, applying \cref{lem:diff-var} with $i=n,j=\ell$, we have that \cref{induct:GGG-simpler} is equivalent to \cref{induct:RW-improved-general}, which completes the proof of~\cref{lem:impr-RW-thm}.
\end{proof}

\subsection{Proofs of Basic Facts for Dirichlet Form and Variance}
\label{sec:missing}

Here we provide the missing proofs of \cref{lem:diff-var,lem:improved-technical}.
We begin with the proof of \cref{lem:diff-var}.


\begin{proof}[Proof of \cref{lem:diff-var}]
 Without loss of generality we can assume that $\Exp_{\pi_i}[f^{(i)}] = \Exp_{\sigma\sim\pi_i}[f^{(i)}(\sigma)]=0$ and hence:
\begin{equation}
    \label{eqn:Var-A}
    \Var_{\pi_i}(f^{(i)}) = \sum_{\sigma\in\Pinning_i}\pi_i(\sigma)f^{(k)}(\sigma)^2.
\end{equation}

Since we have $\Exp_{\pi_i}[f^{(i)}]=0$, it follows that $\Exp_{\eta\sim\pi_{i-1}}[f^{(i-1)}(\eta)]=
\Exp_{\sigma\sim\pi_i}[f^{(i)}(\sigma)]=0$ and hence by induction we have that 
$\Exp_{\eta\sim\pi_{j}}[f^{(j)}(\eta)]=0$.

Now we can consider the variance at level $j$:
\begin{align}
\nonumber
    \Var_{\pi_j}(f^{(j)}) &= \sum_{\eta\in\Pinning_j}\pi_j(\eta)f^{(j)}(\eta)^2 
    \\
    \nonumber
    &= \sum_{\eta\in\Pinning_j}\pi_j(\eta)
    \left[\sum_{\sigma\in\Pinning_i}\up{j,i}(\eta,\sigma)f^{(i)}(\sigma)\right]^2 
    \\
    \nonumber
    & =
\sum_{\eta\in\Pinning_j}\pi_j(\eta)\sum_{\tau_1,\tau_2}
\up{j,i}(\eta,\eta\cup\tau_1)f^{(i)}(\eta\cup\tau_1)\up{j,i}(\eta,\eta\cup\tau_2)f^{(i)}(\eta\cup\tau_2)
\\
    \nonumber
    & =
\sum_{\eta\in\Pinning_j}\pi_j(\eta)\sum_{\tau_1,\tau_2}
\frac{\pi_i(\eta\cup\tau_1)\pi_i(\eta\cup\tau_2)}{(\pi_j(\eta)\times (j+1)\times\dots\times n)^2 }f^{(i)}(\eta\cup\tau_1)f^{(i)}(\eta\cup\tau_2)
    \nonumber
    \\
    & =
\sum_{\substack{\sigma_1,\sigma_2\in\Pinning_i: \\ \sigma_1\cap\sigma_2=\eta\in\Pinning_j}}
\frac{\pi_i(\sigma_1)\pi_i(\sigma_2)}{\pi_j(\eta)\times (j+1)\times\dots\times n)^2 }f^{(i)}(\sigma_1)f^{(i)}(\sigma_2)
   \nonumber
   \\
    & =
\sum_{\sigma_1,\sigma_2\in\Pinning_i}
\pi_i(\sigma_1)\downup{i,j}(\sigma_1,\sigma_2)f^{(i)}(\sigma_1)f^{(i)}(\sigma_2),
       \label{eqn:Var-B}
\end{align}
in the third line, let $S$ denote the set of vertices where $\eta$ is an assignment $\eta:S\rightarrow\{0,1\}$, then $\tau_1,\tau_2$ are assignments $\tau_1:T_1\rightarrow\{0,1\}$ and $\tau_2:T_2\rightarrow\{0,1\}$ where $T_1,T_2\subset V\setminus S, |T_1|=|T_2|=i-j$ and $T_1\cap T_2=\emptyset$. 

We can complete the proof by  decomposing the Dirichlet form of the down-up walk into two terms as follows:
\begin{align*}
 \Dirichlet_{\downup{i,j}}(f^{(i)}) 
  &= \frac12\sum_{\sigma_1,\sigma_2\in\Pinning_i}\pi_i(\sigma_1)\downup{i,j}(\sigma_1,\sigma_2)(f^{(i)}(\sigma_1)-f^{(i)}(\sigma_2))^2.
\\
 & = \sum_{\sigma\in\Pinning_i}\pi_i(\sigma)f^{(i)}(\sigma)^2
- \sum_{\sigma_1,\sigma_2\in\Pinning_i}\pi_i(\sigma_1)\downup{i,j}(\sigma_1,\sigma_2)f^{(i)}(\sigma_1)f^{(i)}(\sigma_2)
\\
& =: \Var_{\pi_i}f^{(k)} - \Var_{\pi_{j}}f^{(j)} & \mbox{by \cref{eqn:Var-A,eqn:Var-B}}
\end{align*}
where the second line follows from the reversibility condition $\pi_i(\sigma)\downup{i}(\sigma,\tau)=\pi_i(\tau)\downup{i}(\tau,\sigma)$ and the fact that  $\sum_{\tau\in\Pinning_i}\downup{i}(\sigma,\tau)=1$.  This completes the proof of the lemma.
\end{proof}


We now prove \cref{lem:improved-technical}.  We will use the following identity in the proof of~\cref{lem:improved-technical}. 

\begin{claim}
\label{claim:first-step}
\begin{equation}
\label{eqn:first-step}
\Var_{\pi_{k+1}}(f^{(k+1)}) - \Var_{\pi_{k-1}}(f^{(k-1)}) = \sum_{\tau\in\Pinning_{k-1}}\pi_{k-1}(\tau)\Var_{\pi_{\tau,2}}(f_\tau^{(2)}).
\end{equation}    
\end{claim}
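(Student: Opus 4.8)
The plan is to view the left-hand side as a Dirichlet form and then decompose it over the ``core'' at level $k-1$, generalizing the proof of \cref{claim:DDD} from one-level to two-level down-up walks. First, apply \cref{lem:diff-var} with $i=k+1$ and $j=k-1$ to the function $f^{(k+1)}$: since $f^{(k-1)}=\up{k-1}f^{(k)}=\up{k-1,k+1}f^{(k+1)}$ is exactly the two-level up-projection of $f^{(k+1)}$, this gives
\[
\Var_{\pi_{k+1}}(f^{(k+1)}) - \Var_{\pi_{k-1}}(f^{(k-1)}) = \Dirichlet_{\downup{k+1,k-1}}(f^{(k+1)}).
\]
So it suffices to prove $\Dirichlet_{\downup{k+1,k-1}}(f^{(k+1)}) = \sum_{\tau\in\Pinning_{k-1}}\pi_{k-1}(\tau)\Var_{\pi_{\tau,2}}(f_\tau^{(2)})$, where $f_\tau^{(2)}$ is the restriction of $f^{(k+1)}$ to the link of $\tau$, i.e.\ $f_\tau^{(2)}(a)=f^{(k+1)}(\tau\cup a)$ for size-$2$ assignments $a$ on $V\setminus S$ with $\tau\cup a\in\Pinning_{k+1}$ (here $\tau:S\to\{0,1\}$).

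The two ingredients I would establish first are both two-level analogues of \cref{step:claim1}. (i) A combinatorial identity: iterating \cref{step:claim1} twice (or computing directly from $\pi_{k+1}=\mu/\binom{n}{k+1}$, $\pi_{k-1}=\mu/\binom{n}{k-1}$ and the definition of $\pi_{\tau,2}$) yields
\[
\pi_{k+1}(\tau\cup a) = \binom{k+1}{2}\,\pi_{k-1}(\tau)\,\pi_{\tau,2}(a).
\]
(ii) A conditional-expectation identity: combining (i) with $\up{k-1,k+1}(\tau,\tau\cup a) = \pi_{k+1}(\tau\cup a)/\bigl(\binom{k+1}{2}\pi_{k-1}(\tau)\bigr)$ (the two-level up operator, as in the proof of \cref{lem:diff-var}) gives $\up{k-1,k+1}(\tau,\tau\cup a)=\pi_{\tau,2}(a)$, hence $f^{(k-1)}(\tau)=\E_{a\sim\pi_{\tau,2}}[f^{(k+1)}(\tau\cup a)]=\E_{\pi_{\tau,2}}[f_\tau^{(2)}]$.

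With these in hand, expand $\Dirichlet_{\downup{k+1,k-1}}(f^{(k+1)})$ as the usual sum over ordered pairs $\sigma_1,\sigma_2\in\Pinning_{k+1}$, and split the transition $\downup{k+1,k-1}(\sigma_1,\sigma_2)$ according to the intermediate core $\tau\in\Pinning_{k-1}$ reached after the two down-steps: the down phase hits a fixed $\tau\subseteq\sigma_1$ with probability $\binom{k+1}{2}^{-1}$, and the up phase produces $\sigma_2\supseteq\tau$ with probability $\pi_{k+1}(\sigma_2)/\bigl(\binom{k+1}{2}\pi_{k-1}(\tau)\bigr)$. Swapping the order of summation (sum over $\tau$ first, then $\sigma_i=\tau\cup a_i$) and substituting identity (i) collapses all the $\binom{k+1}{2}$ factors, leaving
\[
\Dirichlet_{\downup{k+1,k-1}}(f^{(k+1)}) = \frac12\sum_{\tau\in\Pinning_{k-1}}\pi_{k-1}(\tau)\sum_{a_1,a_2}\pi_{\tau,2}(a_1)\pi_{\tau,2}(a_2)\bigl(f_\tau^{(2)}(a_1)-f_\tau^{(2)}(a_2)\bigr)^2,
\]
which is $\sum_{\tau}\pi_{k-1}(\tau)\Var_{\pi_{\tau,2}}(f_\tau^{(2)})$ by the ordered-pair formula for variance, completing the proof. (An equivalent route avoids \cref{lem:diff-var}: normalize $\E_{\pi_{k+1}}[f^{(k+1)}]=0$, use (i) to write $\Var_{\pi_{k+1}}(f^{(k+1)})=\sum_\tau\pi_{k-1}(\tau)\E_{\pi_{\tau,2}}[(f_\tau^{(2)})^2]$ — each $\sigma$ is counted with weight $\binom{k+1}{2}^{-1}$ over its $(k-1)$-subsets — and split $\E_{\pi_{\tau,2}}[(f_\tau^{(2)})^2]=\Var_{\pi_{\tau,2}}(f_\tau^{(2)})+f^{(k-1)}(\tau)^2$ via (ii), summing the second term to $\Var_{\pi_{k-1}}(f^{(k-1)})$ since projection preserves the mean.) The only genuine work is the combinatorial bookkeeping in establishing identity (i) and tracking the $\binom{k+1}{2}$ normalizations through the two-step down and up walks so that they cancel; everything after that is a routine reindexing of sums.
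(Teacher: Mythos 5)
Your proposal is correct, and both of its ingredients are verifiable: identity (i) is exactly the paper's~\cref{eqn:step222} (stated there in summed form, since each of the $\binom{k+1}{2}$ sub-pinnings $\tau\subset\sigma$ contributes equally), and identity (ii) is exactly the paper's~\cref{eqn:step111}; your bookkeeping of the $\binom{k+1}{2}$ normalizations through the two-step down and up phases checks out. Where you differ from the paper is in the packaging of the main argument: you first invoke \cref{lem:diff-var} with $i=k+1$, $j=k-1$ to rewrite the left-hand side as $\Dirichlet_{\downup{k+1,k-1}}(f^{(k+1)})$, and then prove a two-level analogue of \cref{claim:DDD} by splitting each transition according to the intermediate core $\tau\in\Pinning_{k-1}$. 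The paper instead never mentions the two-level walk here: it expands $\sum_\tau\pi_{k-1}(\tau)\Var_{\pi_{\tau,2}}(f_\tau^{(2)})$ directly, writes $\Var_{\pi_{\tau,2}}(f_\tau^{(2)})=\Exp_{\pi_{\tau,2}}[(f_\tau^{(2)}-f^{(k-1)}(\tau))^2]$ using \cref{eqn:step111}, expands the square, and collapses the result to $\Var_{\pi_{k+1}}(f^{(k+1)})-\Var_{\pi_{k-1}}(f^{(k-1)})$ via \cref{eqn:step222} --- i.e.\ a law-of-total-variance computation, which is essentially the alternative route you sketch in your final parenthetical. Your primary route buys a cleaner structural picture (it makes explicit that \cref{claim:first-step} is just \cref{claim:DDD} one level up, reusing \cref{lem:diff-var} which is proved anyway), at the cost of having to define and manipulate the kernel of $\downup{k+1,k-1}$; the paper's route stays entirely at the level of elementary sums over $\tau$ and the size-two extensions, which keeps the proof self-contained. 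Either way the genuine content is the pair of identities you isolate as (i) and (ii), so your write-up is a valid, and arguably more conceptual, substitute for the paper's proof.
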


\begin{proof}
Let us begin by examining each of the terms in the LHS of \cref{eqn:first-step}.  Once again we can assume $\Exp_{\pi_{k+1}}[f^{(k+1)}]=\Exp_{\pi_{k-1}}[f^{(k-1)}]=0$.
\begin{equation}
\label{eqn:Var-first-k+1}
\Var_{\pi_{k+1}}(f^{(k+1)}) 
 = 
\sum_{\sigma\in\Pinning_{k+1}}\pi_{k+1}(\sigma)f^{(k+1)}(\sigma)^2
\end{equation}
And for the variance at level $k-1$ we have:
\begin{equation}
\Var_{\pi_{k-1}}(f^{(k-1)}) 
= 
\sum_{\eta\in\Pinning_{k-1}}\pi_{k-1}(\eta)f^{(k-1)}(\eta)^2
\label{eqn:Var-second-k-1}
\end{equation}

Now let us analyze the variance of the local walk on the RHS of \cref{eqn:first-step}.
 First note that for $\eta\in\Pinning_{k-1}$ where $\eta:S\rightarrow\{0,1\}$ for $S\subset V$, for $\tau:T\rightarrow\{0,1\}$ where $T\subset V\setminus S, |T|=2$, then $f_\eta^{(2)}(\tau)=f^{(k+1)}(\eta\cup\tau)$.
 In the following equation array, the sum over $\tau$ is over all $T\subset V\setminus S, |T|=S$, where $\eta$ is on $S$, and all $\tau:T\rightarrow \{0,1\}$. 

 \begin{align*}
\lefteqn{
\sum_{\eta\in\Pinning_{k-1}}\pi_{k-1}(\eta)\Var_{\pi_{\eta,2}}(f_\eta^{(2)})
} 
\\
 & =
 \sum_{\eta\in\Pinning_{k-1}}\pi_{k-1}(\eta)\sum_{\tau}\pi_{\eta,2}(\tau)(f^{(k+1)}(\eta\cup\tau) - 
 f^{(k-1)}(\eta))^2
 \\
  & =
 \sum_{\eta\in\Pinning_{k-1}}\pi_{k-1}(\eta)\sum_{\tau}\pi_{\eta,2}(\tau)f^{(k+1)}(\eta\cup\tau)^2 - 2 \sum_{\eta\in\Pinning_{k-1}}\pi_{k-1}(\eta)f^{(k-1)}(\eta)\sum_{\tau}\pi_{\eta,2}(\tau)f^{(k+1)}(\eta\cup\tau)
 \\ & \hspace{.5in}
 +  \sum_{\eta\in\Pinning_{k-1}}\pi_{k-1}(\eta)f^{(k-1)}(\eta)^2\sum_{\tau}\pi_{\eta,2}(\tau)
   \\
   & = \sum_{\eta\in\Pinning_{k-1}}\pi_{k-1}(\eta)\sum_{\tau}\pi_{\eta,2}(\tau)f^{(k+1)}(\eta\cup\tau)^2 
 - \sum_{\eta\in\Pinning_{k-1}}\pi_{k-1}(\eta)f^{(k-1)}(\eta)^2
 \\
   & = \sum_{\sigma\in\Pinning_{k+1}}\pi_{k+1}(\sigma)f^{(k+1)}(\sigma)^2 
 - \sum_{\eta\in\Pinning_{k-1}}\pi_{k-1}(\eta)f^{(k-1)}(\eta)^2
\\
& = \Var_{\pi_{k+1}}(f^{(k+1)}) 
-  \Var_{\pi_{k-1}}(f^{(k-1)}),
\end{align*}
where we used the following two identities:
\begin{equation}
    \label{eqn:step111}
\mbox{for every $\eta\in\Pinning_{k-1}$, } 
f^{(k-1)}(\eta) = \sum_{\tau}\pi_{\eta,2}(\tau)f^{(k+1)}(\eta\cup\tau)
\end{equation}
 \begin{equation}
 \label{eqn:step222}
 \mbox{for every $\sigma\in\Pinning_{k+1}$, } \sum_{\substack{\eta\in\Pinning_{k-1}: \\ \eta\subset\sigma}}\pi_{k-1}(\eta)\pi_{\eta,2}(\sigma\setminus\eta)
 = \pi_{k+1}(\sigma)
 \end{equation}
This completes the proof of the claim.
\end{proof}

Now we can complete the proof of \cref{lem:improved-technical}.

\begin{proof}[Proof of \cref{lem:improved-technical}]
Recall, our aim is to prove, that for any $f^{(k+1)}:\Pinning_{k+1}\rightarrow\R$, the following holds:
\begin{equation}
\tag{\ref{eqn:NEW-D}}
\Dirichlet_{\downup{k+1}}(f^{k+1}) \geq (2\gamma_{k-1}-1)\Dirichlet_{\downup{k}}(f^{(k)}).
\end{equation}

Fix $\tau\in\Pinning_{k-1}$.  Note that by definition of $\gamma_j$ we have that the spectral gap of the local walk $Q$ satisfies $\gamma(Q_\tau)\geq\gamma_{k-1}$ and hence $\gamma(\updown{\tau,1})\geq\gamma_{k-1}/2$; therefore, by~\cref{lem:updown-downup}, we also have that
$\gamma(\downup{\tau,2})\geq\gamma_{k-1}/2$ which means that for any $f$:
\begin{equation}
\label{miss-222}
\Dirichlet_{\downup{\tau,2}}(f_\tau^{(2)})\geq \frac12\gamma_{k-1}\Var_{\pi_{\tau,2}}(f_\tau^{(2)}).
\end{equation}
\cref{lem:diff-var} yields that
    $\Dirichlet_{\downup{\tau,2}}(f^{(2)}_\tau) = \Var_{\tau,2}(f^{(2)}_\tau) - \Var_{\tau,1}(f^{(j)}_\tau)$, and hence combining with~\cref{miss-222}
we have the following:
    \begin{align}
\nonumber
\Var_{\pi_{\tau,2}}(f_\tau^{(2)})
&\geq 
\frac{1}{1-\gamma_{k-1}/2}\Var_{\pi_{\tau,1}}(f_\tau^{(1)})
\\
&\geq 
2\gamma_{k-1}\Var_{\pi_{\tau,1}}(f_\tau^{(1)}).
\label{missing-step}
\end{align}

We can now complete the proof of \cref{eqn:NEW-D}:
\begin{align*}
    \Dirichlet_{\downup{k+1}}(f^{(k+1)}) + 
      \Dirichlet_{\downup{k}}(f^{(k)})
       & = \Var_{\pi_{k+1}}(f^{(k+1)}) - \Var_{\pi_{k-1}}(f^{(k-1)})
       & \mbox{by \cref{lem:diff-var}}
       \\
       & = 
 \sum_{\tau\in\Pinning_{k-1}}\pi_{k-1}(\tau)\Var_{\pi_{\tau,2}}(f_\tau^{(2)})
       &\mbox{by \cref{claim:first-step}}
\\
&\geq  2\gamma_{k-1}\sum_{\tau\in\Pinning_{k-1}}\pi_{k-1}(\tau)\Var_{\pi_{\tau,1}}(f_\tau^{(1)})
& \mbox{ by \cref{missing-step}}
\\
& = 2\gamma_{k-1}\Dirichlet_{\downup{k}}(f^{(k)}),
& \mbox{ by \cref{claim:DDD}}
\end{align*}
which proves that $  \Dirichlet_{\downup{k+1}}(f^{(k+1)})\geq  (2\gamma_{k-1} -1 )\Dirichlet_{\downup{k}}(f^{(k)})$, and hence proves \cref{eqn:NEW-D} and \cref{lem:improved-technical}.

If we used the bound $1/(1-\gamma_{k-1}/2)$ in \eqref{missing-step} instead of $2\gamma_{k-1}$ then the statement of \cref{lem:improved-technical} has $\gamma_{k-1}/(2-\gamma_{k-1})$ in place of $(2\gamma_{k-1}-1)$; this matches the statement of \cite[Fact A.8 and Theorem~A.9]{CLV21}.
\end{proof}

\subsection{Optimal Mixing Time via Entropy Decay}
\label{sec:optimal}

In the previous section we established \cref{thm:SI-constant-relax} which says that spectral independence implies optimal $O(n)$ relaxation time of the Glauber dynamics.  In contrast, 
\cref{thm:SI-constant-mix} says that spectral independence and marginal boundedness implies optimal $O(n\log{n})$ mixing time.  Both \cref{thm:SI-constant-relax,thm:SI-constant-mix} were proved by Chen, Liu, and Vigoda~\cite{CLV21}, and our proofs of these theorems follow their proof approach.  To obtain \cref{thm:SI-constant-mix}, we follow the same proof approach as in the proof of \cref{thm:SI-constant-relax} with the general objective of replacing the variance functional $\Var$ by the entropy functional $\Ent$.

We first recall the definition of entropy of a functional.

\begin{definition}[Entropy]The entropy of a function $f:\Omega\rightarrow\R$ with respect to $\mu$ is the following:
\[
    \Ent[f] := \mu[f \log f] - \mu[f] \log \left(\mu[f]\right)
     = \sum_{\sigma\in\Omega} \mu(\sigma)f(\sigma)\log(f(\sigma)) - \sum_{\sigma\in\Omega}\mu(\sigma)f(\sigma)\log[\sum_{\eta\in\Omega}\mu(\eta)f(\eta)].
\]
\end{definition}

For a vertex $x$, denote the expected entropy at $x$ by:
\[
\mathrm{Ent}_{x}[f]=  \sum_{\tau}\mu(\tau)\mathrm{Ent}^{\tau}_{x}[f]  = \sum_{\tau}\mu(\tau)\Ent[F\mid \sigma(V\setminus\{x\})=\tau(V\setminus\{x\}].\]

We utilize entropy tensorization, which is the analog of approximate tensorization of variance, see~\cref{defn:approx-tensorization-var}.

\begin{definition}[Entropy Tensorization]\label{def:ent-tensorization}
A distribution $\mu$ with support $\Omega \subseteq \{0, 1\}^V$ satisfies approximate tensorization of 
entropy with constant $C>0$, if for 
all $f:\Omega\to\mathbb{R}_{\geq 0}$ we have that
\begin{align}\nonumber
\mathrm{Ent}(f)\leq C\sum_{v\in V}\mu\left( \mathrm{Ent}_v(f)\right).
\end{align}
\end{definition}
Note, the corresponding notion of variance tensorization is equivalent to the spectral gap of the (heat-bath) Glauber dynamics as the RHS in the above is equivalent to the Dirichlet form, see~\cref{lem:AT-gap}.

Entropy tensorization with constant $C$ yields an optimal mixing time bound for the Glauber dynamics by bounding the decay rate of entropy of the Glauber dynamics with respect to the Gibbs distribution, see~\cite{CMT,Caputo-notes}:
\begin{equation}\label{eq:mix-ET}
   \Tmix \leq  C n\log(\log(1/\mu^*)).  
\end{equation}       

As suggested above, to establish approximate tensorization of entropy we aim to replace $\Var$ by $\Ent$ in the proof of \cref{thm:SI-constant-relax}.  The first is that the proof of \cref{thm:SI-constant-relax} uses the spectral gap of the Glauber dynamics.  Recall the formulation of the spectral gap in the Poincar\'{e} inequality, see \cref{defn:Poincare}, where the spectral gap $\gamma = \min_f \Dirichlet(f)/\Var(f)$.  Moreover, for the down-up chain $\downup{i,j}$ one can replace the $\Dirichlet_{\downup{i,j}}(f)$ by $\Var_{\pi_i}(f)-\Var_{\pi_j}(f)$, see \cref{lem:diff-var}. 
Notice that in \cref{section-proof-relax} for the proof of \cref{thm:SI-constant-relax} we only considered the down-up chain and hence we can rephrase the proof only in terms of $\Var(f)$, without use of $\Dirichlet(f)$.  Therefore we can aim to replace all occurrences of $\Var(f)$ with $\Ent(f)$ in order to establish entropy decay instead of variance decay.

The only technical obstacle in replacing $\Var$ by $\Ent$ is in the application of \cref{thm:SI-mixing} in the proof of \cref{thm:SI-constant-relax} in \cref{sec:proof-relax}; in particular, see \cref{var-not-ent}.  \cref{thm:SI-mixing} established a lower bound on the spectral gap of the Glauber dynamics but it does not establish a bound on its entropy decay.  To get an analgous bound on entropy decay we need to use an additional property known as {\em marginal boundedness}, see \cref{defn:marg-bound}.  Using $b$-marginal boundedness, \cite[Lemma 4.2]{CLV21} proved a corresponding bound on $\Ent_T(f)$ in terms of $\sum_{v\in T}\mu(\Ent_v(f))$ which depends on $b$ and $|T|$:
\begin{equation}
\label{entropy-marginal}
\Ent_T(F|\tau) \leq C'\sum_{v\in T}\mu_T^\tau(\Ent_v[F|\tau]).
\end{equation}

In \cite[Lemma 4.2]{CLV21} they established $C'=C'(k,b)=\frac{3k^2\log(1/b)}{2b^{2k+2}}$ where $k=|T|$ (note $b$ typically depends on $\Delta$, and also depends on $\lambda$ or $\eta$).  Below, we provide a simple proof that $C'=C'(|T|,b,\Delta,\eta)= 2C(\eta,\Delta)k\log(1/b)$ where $C(\eta,\Delta)$ is the constant from \cref{thm:SI-constant-relax}.  
The bound in \cref{entropy-marginal} (with either constant $C'$) can be used in the analogous step of \cref{var-not-ent} to instead establish entropy decay, and then conclude
approximate tensorization of entropy of the Gibbs distribution (see~\cref{def:ent-tensorization} above) which implies an optimal $O(n\log{n})$ bound on the mixing time, via~\cref{eq:mix-ET}.

This outlines the approach of \cite{CLV21} to obtain optimal mixing time bounds.
Further improvements are obtained in \cite{CFYZ22,CE22}; these works upper bound the mixing time by $C(\eta)n\log{n}$ instead of $C(\eta,\Delta)n\log{n}$ based on a stronger form of spectral independence.
The work of Chen, Feng, Yin, and Zhang~\cite{CFYZ22} introduce the field dynamics which is presented in~\cref{sec:critical-point}.  In contrast, the work of Chen and Eldan~\cite{CE22} apply the framework of stochastic localization.

\begin{proof}[Proof of \cref{entropy-marginal}]
Consider the graph $G=(V,E)$ where $T\subseteq V$.  Let $k=|T|$.  We are running the Glauber dynamics on $T$ where for every $v\notin T$ the configuration at $v$ is fixed to $\tau(v)$.  Let $\mu=\mu_\tau$ denote the Gibbs distribution on $T$ with fixed configuration $\tau$ outside $T$.

From Diaconis and Saloff-Coste \cite{DiaSal96} (see the proof of Corollary A.4), we know the following inequality:
\begin{equation}
    \label{ent-var}
    \Ent_\mu(f) \leq \left[\frac{\log(1/\mu^*)-1}{1-2\mu^*}\right]\Var_\mu(\sqrt{f}),
\end{equation}
where $\mu^*=\min_{\sigma\in\Omega}\mu(\sigma)$.
From \cref{thm:SI-constant-relax} and \cref{lem:AT-gap}, we have approximation tensorization of variance:
\begin{equation}
    \label{AT-induct}
    \Var_\mu(f)\leq C(\eta,\Delta)\sum_{v\in V} \Exp_{\mu}[\Var_v(f)].
\end{equation}
We will use the following elementary fact (e.g., see \cite{DiaSal96}), which we will prove momentarily:
\begin{equation}
\label{Z3}
    \Var_v(\sqrt{f}) \leq \Ent_v(f).
\end{equation}

Therefore, by combining \cref{ent-var} and \cref{AT-induct}, and since $\mu^*\geq b^k$ we have:
\begin{align}
    \label{AE-conclude}
    \Ent_\mu(f)
   & \leq C(\eta,\Delta)\left[\frac{\log(1/\mu^*)-1}{1-2\mu^*}\right] \sum_{v\in V} \Exp_{\mu}[\Var_v(\sqrt{f})] &\mbox{by \cref{ent-var}}
   \\
   & \leq 2C(\eta,\Delta)k\log(1/b)\sum_{v\in V} \Exp_{\mu}[\Var_v(\sqrt{f})]
   & \mbox{by \cref{AT-induct}}
   \\
   &
   \leq 2C(\eta,\Delta)k\log(1/b)\sum_{v\in V} \Exp_{\mu}[\Ent_v(f)]  & \mbox{by \cref{Z3}.}
\end{align}
This completes the proof of \cref{entropy-marginal} with constant $C'= 2C(\eta,\Delta)k\log(1/b)$.  

For completeness we now prove \cref{Z3}.
Without loss of generality, we assume $f$ satisfies $\Exp_{\mu}[f] = 1$.
We then have the following:
\begin{align}
\nonumber
    \Ent_\mu(f) & = \Exp_\mu[f\log(f)]
    \\
    \nonumber
    & = 2\Exp_\mu[f\log(\sqrt{f})] \\
    & \geq 2\Exp_\mu\left[f\left(1-\frac{1}{\sqrt{f}}\right)\right]
    \label{BBB1}
    \\
    \nonumber
    & = 2(1-\Exp_\mu[\sqrt{f}])
    \\ 
    &\geq 1-(\Exp_\mu[\sqrt{f}])^2 
\label{CCC2}
    \\ & = \Var_\mu(\sqrt{f}),
    \nonumber
\end{align}
where \cref{BBB1} follows from the fact that $\log(x)=\log(1+x-1)\leq x-1$ and hence $\log(1/x)\geq 1-x$, 
and \cref{CCC2} follows from $2(1-x)\geq 1-x^2$.
\end{proof}

\section{Matroids}
\label{sec:matroids}

We now utilize the spectral independence approach to establish fast mixing of a Markov chain for generating a random basis of a matroid.  The main result for matroids is the work of Anari, Liu, Oveis Gharan, and Vinzant~\cite{ALOV19}.  In fact, the work of~\cite{ALOV19} for matroids inspired the idea of spectral independence as we presented earlier and introduced in~\cite{ALO20}, however we present these results in the opposite order.  

The general approach to prove fast mixing of the chain on bases of a matroid is to apply the Random Walk Theorem (\cref{thm:RW}).  To bound the spectral gap of the local walks we will use a result of Oppenheim~\cite{Opp18} known as the Trickle-Down Theorem.  The Trickle-Down Theorem bounds the spectral gap of the local walk $Q_S$ for a pinning $S$ by induction on $|S|$.  The base case will be a bound on the gap of the local walk $Q_S$ when $k:=|S|=n-2$ which will correspond to rank-2 matroids; this is a significantly easier case to analyze as it corresponds to an unweighted walk.  We will then use the Trickle-Down Theorem to deduce a bound on the spectral gap of $Q_S$ for pinnings of size $k-1$ and hence any pinning $S$ by induction.  

\subsection{Matroids: Definitions}

A matroid $\M=(E,\cI)$ consists of the following:
\begin{itemize}
    \item {\em Ground Set} denoted by $E$;
    \item Collection $\cI$ of subsets of the ground set.  We refer to each $S\in \cI$ as an {\em independent set}.
\end{itemize}
To be a matroid we require the following properties:
\begin{description}
    \item[P1: Downward closure:]
    \label{matroid:P1}
    Every subset of an independent set is also an independent set.  This is means that $\cI$ is downward closed, or equivalently that $\cI$ forms a simplicial complex.  To be clear, this property says that for every $S \in \cI$, if $T\subset S$ then  $T\in \cI$.
    \item[P2: Exchange property:] If $S\in \cI$ and $T\in \cI$, and if $|S|<|T|$ then there exists $e\in S\setminus T$ where $S\cup e\in I$.
\end{description}

Note, the empty set $\emptyset$ is always an independent set by the downward closure property.  A set $R\subset E$ which is {\em not} an independent set is called  dependent.

One important consequence of the exchange axiom is that all maximal  independent sets are of the same size (maximal independent set means it is not contained in another independent set).

\begin{definition}
Consider a matroid $\M=(E,\cI)$. 
A {\em basis} is a maximal independent set.  In words, a basis is an independent set $S\in\cI$ where for all $e\not\in S$, we have that $S\cup \{e\}\not\in \cI$.  Note that the set of bases uniquely determines a matroid. We will use $\cF$ to denote the collection of bases of a matroid.

All bases of a matroid have the same size which is called the {\em rank} of the matroid $\M$.
\end{definition}

\begin{definition}[Truncation]
Let $M$ be a matroid. Let $k$ be a positive integer. The set $\{S\in \cI: |S| \le k\}$
is called the truncation of $M$. Note that the truncation of a matroid is also a matroid.
\end{definition}

We state a few exercises that detail important, fundamental properties of matroids.

\begin{exercise}\label{exc1}
Let $\cF$ be a set of bases of a matroid $\M$. Let $\cF^*$ be the set of complements, that is, $\cF^*=\{[n]-B : B\in \cF\}$. Show that $\cF^*$ are the bases of a matroid $\M^*$ (called the dual of the matroid $\M$). 
\end{exercise}

\begin{exercise}\label{exc2}
Let $\cI$ be a set of independent sets of a matroid $M=(E,\cI)$. Let $S\subseteq [n]$. Let $\cI''$ be the set of independent 
sets contained in $S$, that is, $\{T : T\in\cI\ \mbox{and}\ T\subseteq S\}$. Show that $\M''=(E,\cI'')$ is a matroid  (called the restriction of the matroid $\M$; denoted by $\M|S$).
\end{exercise}

\begin{exercise}\label{exc3}
Let $\cI$ be a set of independent sets of a matroid $\M=(E,\cI)$. Let $S\subseteq [n]$. Let $\cI'$ be the set of independent  sets containing $S$ with $S$ removed, that is, $\{T\setminus S : T\in\cI\ \mbox{and}\ S\subseteq T\}$. Show that $\M'=(E,\cI')$ is a matroid  (called the contraction of the matroid $\M$; denoted by $\M/S$).
\end{exercise}

\begin{remark}
A matroid $\M'$ that can be obtained from $\M$ by contraction and restriction is called a minor of~$\M$.
\end{remark}

\subsection{Matroids: Examples}

Here are some natural examples of matroids.

\begin{description}
    \item[{\bf Graphic matroid:}]
Let $G=(V,E)$ be a connected graph.  Let $\cI=\{S\subset E: (V,S) \mbox{ is acyclic}\}$ be the collection of all forests of $G$.  
    The collection $\M=(E,\cI)$ is a matroid. 
    The bases of $\M$ are maximal acyclic subgraphs.  Since $G$ is assumed to be connected then the bases of $\M$ are the spanning trees of $G$ since $G$, and hence this is also called the spanning tree matroid.
\item[{\bf Transversal Matroid:}]
Let $G=(V,E)$ be a bipartite graph where  $V=L\cup R$ is the bipartition.  Let $\cI$ denote the subsets of $L$ that are the endpoints of some matching of $G$.  Note, $S\in \cI$ is a set of vertices and they can correspond to multiple matchings.  

The collection $\M=(L,\cI)$ is a matroid.  The bases are the subsets of $L$ covered by a maximum matching.  
\item[{\bf Linear matroid:}]
Let $V = \{v_1,\ldots, v_n\}$ be vectors in some vector space.
Let 
\[
\cI = \{S \subseteq V : S \text{ is linear independent} \}.
\]
Then $\M=(V,\cI)$ is a matroid.
\end{description}

\subsection{Bases-exchange Walk}

Our central question is, for a given matroid $\M=(E,\cI)$, can we generate a basis, uniformly at random from the set $\cF$ of all bases, in time polynomial in $n=|E|$.  We are given the matroid in implicit form, so we can efficiently check whether a given subset $S\subset E$ is independent or dependent.
We will use the following natural Markov chain known as the bases-exchange walk to generate a random bases.

The bases-exchange walk is the following Markov chain $(B_t)$ on the set of bases of a given matroid $\M$.
Let $\cF$ denote the collection of bases of a matroid $M$.  From $B_t\in\cF$, the transitions $B_t\rightarrow B_{t+1}$ are defined as follows:
\begin{itemize}
    \item Choose an element $e\in B_t$ uniformly at random.
    \item Let $F=\{f\in E: B_t\cup \{f\}\setminus \{e\}\in\cI\}$ denote the set of edges that we can add to $B_t\setminus e$ while remaining independent.
    \item Choose an element $f\in F$ uniformly at random.  Let $B_{t+1} = B_t\cup \{f\}\setminus \{e\}$. 
\end{itemize}

Note the edge $e$ is in $F$ and hence $P(B_t,B_t)>0$; thus, the chain is aperiodic. 

\begin{exercise}
Using the exchange-axiom, prove by induction that the chain is irreducible.
\end{exercise}

  Therefore, the bases-exchange walk is an ergodic Markov chain.  Since the chain is symmetric then the unique stationary distribution is uniform over the set $\cF$ of bases of the matroid $M$.
In the terminology from the spectral independence section, the bases-exchange walk is equivalent to the down-up walk on the bases of a matroid. 

\subsection{Main Theorem: Fast Mixing of Bases-Exchange Walk}

The main result is a bound on the spectral gap of the bases-exchange walk.

\matroidmain*

By proving entropy contraction, \cite{CGM19} improves the mixing time bound to $O(r(M)\log{\log(|\cF|)}) = O(r(M) \log r(M) + r(M) \log{\log{n}})$.
Furthermore, the $\log\log n$ term was subsequently removed in \cite{ALOVV21}, achieving $O(r(M) \log r(M))$ mixing time.
This is tight because we need $\Omega(r(M) \log r(M))$ time to replace every element due to coupon collector.

\subsection{Application: Reliability}

 Before proving fast convergence of the bases-exchange walk we discuss an important application for sampling/approximate counting bases of a matroid.
 
An interesting application is to compute reliability.
Let $X\subseteq [n]$ be a random set where every element is included independently with probability $p$.
The {\em reliability} is defined as the $\Pr(X\supseteq \text{basis})$,
and the {\em unreliability} is defined as $\Pr(X\not\supseteq \text{basis})$.

For a graphical matroid these problems are known as network reliability/unreliability as the problem corresponds to whether $X$ is connected/disconnected.  Karger~\cite{Karger} presented an $\fpras$ for network unreliability,
and Guo and Jerrum~\cite{GJ19} presented an $\fpras$ for network reliability.

\cref{thm:matroid-main} yields an $\fpras$ for the reliability of a matroid.
In particular, rapid mixing of the bases-exchange walk yields an efficient sampling algorithm for generating a basis of a matroid (almost) uniformly at random;
the classical result of \cite{JVV86} then yields 
an $\fpras$ for computing the number of bases.

Consider a matroid $\M$ for which we would like to compute the reliability, which is the probability that the random subset $X$ contains a basis of $\M$.
Let $\M^*$ denote the dual of $\M$, see \cref{exc1}; note $\M^*$ is a matroid.
Note,
\begin{align*}
    \Prob{X\supseteq \text{basis of $\M$}} = 
    \Prob{\overline{X}\subseteq \text{basis of $\M^*$}}=
     \sum_k (1-p)^k p^{n-k} \text{\#\{indep set of $\M^*$ of size $k$\}},
\end{align*}
and we can obtain an $\fpras$ for the number of independent sets (of a matroid $\M^*$) of size $k$ since the truncation of a matroid is a matroid.  This yields an $\fpras$ for the reliability of the matroid $\M$.

\section{Trickle-Down Theorem}
\label{sec:trickledown}

Here we present Oppenheim's Trickle-Down Theorem~\cite{Opp18} which is the new tool needed to prove \cref{thm:matroid-main}.  We present the theorem in the general context of simplicial complexes; this enables one to apply it both in the context of spin systems and for bases of a matroid.

\subsection{Simplical Complexes}

Let $\Lambda$ be a finite set.  Let $\Omega$ denote a collection of subsets of $\Lambda$ that is 1) closed 
under taking subsets (that is, if $S\in\Omega$
and $Z\subseteq S$ then $Z\in\Omega$), and 2) all 
maximal (under inclusion) sets in $\Lambda$
have the same size $r$. Such a pair $(\Lambda,\Omega)$
is called a pure abstract simplicial complex. 

For the example of spectral independence on a (binary) spin system, the set $\Lambda=V\times\{0,1\}$ for an input graph $G=(V,E)$.  Here $r=|V|$ while $|\Lambda|=2r$.  The distribution $\mu$ is the Gibbs distribution on assignments $\sigma\in \{0,1\}^V$. Now we construct $\Omega$. The maximal sets in $\Omega$ are the support of $\mu$. Defining the distributions $\pi_k$
as in \cref{sec:chains}, then the union of the supports of $\pi_0,\dots,\pi_r$ yields $\Omega$.
 

For the example of bases of a matroid, let $\Lambda=E$, the ground set of the matroid $\M=(E,\cI)$.  Then let $\Omega=\cI$ be the independent sets of $\M$.  The maximal sets in $\Omega$ are the bases of $\M$ and hence $r$ is the rank of the matroid.  All bases in $\Omega$ have the same weight, thus we set $\mu$ to be the uniform distribution over the set of bases $\cF$. Defining the distributions $\pi_k$
as in \cref{sec:chains}, then the union of the supports of $\pi_0,\dots,\pi_r$ is $\Omega$.

More generally, let $\mu$ be a distribution on sets of size $r$. Define the distributions $\pi_k$ and the up/down chains exactly as in \cref{sec:chains}.  
The union of the supports of distributions $\pi_0,\pi_1,\dots,\pi_r$ is a pure abstract simplicial complex. 


\subsection{Trickle-Down Statement}
\label{sub:trickledown-statement}

Define the distributions $\pi_k$ and the up/down chains exactly as in \cref{sec:chains}, and the local walks $Q_S$ the same as in \cref{sec:local-walks}. Let $(\Lambda,\Omega)$ be the simplicial complex that is 
the support of $\pi_k$'s. We can now state the Trickle-Down theorem.

\begin{theorem}[\cite{Opp18}]
\label{thm:trickle-down}
    Consider $S\in\Omega$.  Let $i=|S|$ and assume $0\leq i<r-2$.  Let $\gamma_i=\gamma(Q_S)$.  Assume that $Q_S$ is irreducible and hence $\gamma_i>0$.  Suppose there exists $\gamma_{i+1}>0$ where for all $Z\in\Omega$ such that  $S\subset Z$ and $|Z|=|S|+1$, 
    \begin{equation}
        \label{TD:assume}
        \gamma(Q_{Z})\geq \gamma_{i+1}.
    \end{equation} Then the following holds:
    \[  \gamma_i \geq 2 - \frac{1}{\gamma_{i+1}}.\]
    \end{theorem}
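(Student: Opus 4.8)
The plan is to reduce to the base case $S=\emptyset$ and then run Oppenheim's short spectral argument, which rests on two ``local decomposition'' identities for the local walk. For the reduction, note that replacing $(\Lambda,\Omega)$ by the link of $S$ — which is again a pure simplicial complex, of rank $r-i\geq 3$ (this is where the hypothesis $i<r-2$ enters), carrying the induced weighting — turns $Q_S$ into the level-$0$ local walk $Q:=Q_\emptyset$ and turns the walks $Q_Z$ with $Z\supset S$, $|Z|=i+1$, into the vertex-links $Q_v$; for matroids this uses that minors of matroids are matroids, and in general that conditioning preserves the structure. So it suffices to show: if $Q$ is irreducible and $\gamma(Q_v)\geq\gamma_{i+1}$ for every vertex $v$, then $\gamma(Q)\geq 2-1/\gamma_{i+1}$. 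Write $\pi:=\pi_1$ for the stationary distribution of $Q$ on the vertex set and $\pi_v$ for the stationary distribution of $Q_v$ on $\mathrm{lk}(v)$; both $Q$ and each $Q_v$ are reversible, and by the definitions one has $Q(v,\cdot)=\pi_v(\cdot)$ as a distribution on $\mathrm{lk}(v)$, so $(Qf)(v)=\Exp_{\pi_v}[f]$.

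The two identities I would establish, for every function $f$ on the vertex set, are
\[ \Dirichlet_{Q}(f)=\sum_{v}\pi(v)\,\Dirichlet_{Q_v}\!\left(f|_{\mathrm{lk}(v)}\right) \]
and
\[ \norm{f}_{\pi}^{2}-\norm{Qf}_{\pi}^{2}=\sum_{v}\pi(v)\,\Var_{\pi_v}\!\left(f|_{\mathrm{lk}(v)}\right). \]
Both are proved by unwinding the conditional-probability definitions of $Q$, $Q_v$, $\pi$, $\pi_v$ and using purity (every $\sigma\in\Omega$ has $|\sigma|=r$); the key pointwise fact behind the first is $\sum_{v}\pi(v)\pi_v(v')Q_v(v',v'')=\pi(v')Q(v',v'')$ for $v'\neq v''$, where summing over the ``third'' element $v$ of the triple produces exactly the factor cancelling the $\tfrac{1}{r-1},\tfrac{1}{r-2}$ normalizations. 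For the second identity, write $\Dirichlet_Q(f)=\tfrac12\sum_v\pi(v)\,\Exp_{v'\sim\pi_v}[(f(v)-f(v'))^{2}]$ and split the local second moment around its mean $(Qf)(v)=\Exp_{\pi_v}[f]$, getting $\Dirichlet_Q(f)=\tfrac12\norm{f-Qf}_\pi^{2}+\tfrac12\sum_v\pi(v)\Var_{\pi_v}(f|_{\mathrm{lk}(v)})$; comparing with $\Dirichlet_Q(f)=\norm{f}_\pi^{2}-\ip{f}{Qf}_\pi$ gives the claim.

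With the identities in hand the proof finishes quickly. Apply the hypothesis in Poincaré form at each link, $\Dirichlet_{Q_v}(g)\geq\gamma_{i+1}\Var_{\pi_v}(g)$, average against $\pi(v)$, and substitute the two identities to obtain $\norm{f}_\pi^{2}-\ip{f}{Qf}_\pi\geq\gamma_{i+1}\bigl(\norm{f}_\pi^{2}-\norm{Qf}_\pi^{2}\bigr)$ for all $f$. Now take $f$ to be an eigenfunction of $Q$ for its second-largest eigenvalue $\lambda:=\lambda_2(Q)$, normalized so that $\norm{f}_\pi=1$ and $\ip{f}{\allone}_\pi=0$; then $\ip{f}{Qf}_\pi=\lambda$ and $\norm{Qf}_\pi^{2}=\lambda^{2}$, so $1-\lambda\geq\gamma_{i+1}(1-\lambda^{2})=\gamma_{i+1}(1-\lambda)(1+\lambda)$. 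Irreducibility of $Q$ — the one place it is used — gives $\lambda<1$, so dividing by $1-\lambda>0$ yields $1\geq\gamma_{i+1}(1+\lambda)$, i.e.\ $\lambda\leq 1/\gamma_{i+1}-1$ and hence $\gamma(Q)=1-\lambda\geq 2-1/\gamma_{i+1}$; if instead $\lambda<0$ this bound is immediate since $2-1/\gamma_{i+1}\leq 1$. Undoing the reduction gives $\gamma_i\geq 2-1/\gamma_{i+1}$, as wanted.

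The main obstacle is the verification of the two local decomposition identities, and especially the second one: one must correctly isolate the ``backtracking'' contribution $\tfrac12\norm{f-Qf}_\pi^{2}$, and both identities require careful bookkeeping of the normalizing constants together with the use of purity. The reduction to $S=\emptyset$ also rests on the (standard, but worth spelling out) fact that the relevant links are themselves pure complexes carrying the induced weights, so that all the walks and distributions restrict as claimed. After that, the spectral computation is essentially two lines.
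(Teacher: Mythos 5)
Your proposal is correct and follows essentially the same route as the paper: the two one-step decomposition identities (the paper's \cref{lem:TD111}) combined with the fact that the extremizer of the Rayleigh quotient satisfies $\Exp_{\pi_{S\cup a}}(f)=(1-\gamma_i)f(a)$ (the paper's \cref{TD:blue}), which in your write-up appears as $(Qf)(v)=\Exp_{\pi_v}[f]$ evaluated at an eigenfunction for $\lambda_2$. The only differences are cosmetic: you pass to the link of $S$ rather than carrying a general pinning, you package the second identity as $\norm{f}_\pi^2-\norm{Qf}_\pi^2=\sum_v\pi(v)\Var_{\pi_v}(f)$ instead of the paper's expectation decomposition, and you invoke reversibility/spectral theory directly where the paper gives a self-contained Lagrange-multiplier argument (a shortcut the paper itself acknowledges).
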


As a consequence, if there is optimal spectral gap, namely, $\gamma(Q_{S'})\geq 1$, for all $S'\in\Omega$ where $|S'|=r-2$, then by induction we obtain, via the Trickle-Down Theorem, that there is optimal spectral gap of $\gamma(Q_S)\geq 1$ for all $S\in\Omega$.

    \begin{corollary}[Trickle-Down Without Loss]
    \label{cor:trickle-without-loss}
        If for all $S'\in\Omega$ with $|S'|=r-2$ we have $\gamma(Q_{S'})\geq 1$. Further assume that for all $S\in\Omega$ we have that $Q_S$ is irreducible. Then for all $S\in\Omega$ it holds that 
        \[ \gamma(Q_S)\geq 1.
        \]
    \end{corollary}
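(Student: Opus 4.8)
The plan is to prove the corollary by downward induction on $k = |S|$, using the Trickle-Down Theorem (\cref{thm:trickle-down}) as the inductive step. The key point is purely arithmetic: the map $x \mapsto 2 - 1/x$ fixes the value $1$ (and is increasing), so the bound $\gamma \geq 1$ propagates \emph{without loss} from each level to the level below.

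First I would fix the precise statement to be proved by induction: for every $0 \leq k \leq r-2$ and every $S \in \Omega$ with $|S| = k$, we have $\gamma(Q_S) \geq 1$. (Note that $Q_S$ is only defined for $|S| \leq r-2$, in analogy with the restriction $k \leq n-2$ in \cref{sec:local-walks}, so this is the natural reading of ``for all $S \in \Omega$''.) The base case $k = r-2$ is exactly the hypothesis of the corollary, so there is nothing to prove there. For the inductive step, I would fix $k$ with $0 \leq k \leq r-3$ and $S \in \Omega$ with $|S| = k$, and assume that $\gamma(Q_Z) \geq 1$ for every $Z \in \Omega$ with $|Z| = k+1$ — in particular for every $Z \supset S$ with $|Z| = |S|+1$. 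Since $k \leq r-3 < r-2$, and since $Q_S$ is irreducible by the blanket assumption of the corollary (so $\gamma(Q_S) > 0$), \cref{thm:trickle-down} applies with $i = k$ and with $\gamma_{i+1} = 1$, giving
\[
\gamma(Q_S) \;\geq\; 2 - \frac{1}{\gamma_{i+1}} \;=\; 2 - \frac{1}{1} \;=\; 1 .
\]
This closes the induction and proves the corollary.

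The argument is essentially bookkeeping once the Trickle-Down Theorem is available; the two points that need care are (i) respecting the index constraint $i < r-2$ at every invocation of \cref{thm:trickle-down}, which forces $k = r-2$ to be taken as the base case rather than derived, and (ii) confirming that the irreducibility hypothesis required by \cref{thm:trickle-down} holds at each level, which is exactly what the corollary's standing assumption supplies. I do not expect any substantive obstacle beyond this; more generally, the same induction with a base-case bound $\gamma \geq c$ would yield the recursion $\gamma_i \geq 2 - 1/\gamma_{i+1}$, and the corollary is just the observation that $c = 1$ is its fixed point.
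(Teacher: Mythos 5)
Your proposal is correct and matches the paper's own argument: the paper proves the corollary exactly by downward induction on $|S|$ via \cref{thm:trickle-down}, using that $2-\tfrac{1}{1}=1$, with the level $r-2$ hypothesis as the base case. Your additional care about the index constraint $i<r-2$ and the irreducibility hypothesis is consistent with how the theorem is stated and introduces no new issues.
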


\subsection{Rapid Mixing of Bases-Exchange Walk: Proof of \texorpdfstring{\cref{thm:matroid-main}}{Theorem 1.15}}
\label{sec:basesexchange}

The proof of \cref{thm:matroid-main} uses the Trickle-down Theorem (\cref{thm:trickle-down}), which bounds the spectral gap of the local walks, and the Random Walk Theorem (\cref{thm:RW}), which is the local to global theorem.  

Our goal is to bound the spectral gap of the bases-exchange walk using the Random Walk Theorem.  To that end we need to bound the spectral gap of the local walk (this is the up-down walk $\updown{1}$) for every pinning (or link in the terminology of simplicial complexes). The pinnings are defined for every independent set $S$.  In particular, for an independent set $S$ the pinning is defined as $\{B\setminus S : S\subseteq B, B\in \cI\}$, which is a contraction of the matroid. \cref{exc3} establishes that the contraction of a matroid is also a matroid.

These local walks for a particular pinning are weighted as these are the projections in the original simplicial complex.  To handle these weighted local walks we utilize the Trickle-down Theorem.  The Trickle-down Theorem says that if we have an ``optimal'' bound on the spectral gaps for all local walks on $(r-2)$-dimensional pinning/link then we obtain an optimal bound on the spectral gap for the local walks for all pinnings/links, and then we can apply the Random Walk Theorem.  This greatly simplifies matters because the local walks on $(r-2)$-dimensional pinnings/links are unweighted,
and hence they simply correspond to (unweighted) rank~2 matroids.

\begin{lemma}\label{lem:spectralgap-rank2}
For a rank 2 matroid $\M$, the spectral gap of the local walk satisfies
\[ 
\gamma(Q_{\M}) \geq 1.
\]
\end{lemma}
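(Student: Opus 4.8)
The plan is to identify the local walk $Q_\M$ of a rank $2$ matroid with the simple random walk on a complete multipartite graph and then read off its second eigenvalue directly; recall that $\gamma(Q_\M)\ge 1$ is equivalent to $\lambda_2(Q_\M)\le 0$.

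First I would record the standard structure of rank $2$ matroids. Loops can be ignored, since a loop $e$ has $\mu(\{e\})=0$, so $\{e\}$ is not in the support of $\pi_1$, i.e.\ not a state of $Q_\M$. Call two non-loop elements $a,b$ \emph{parallel} if $\{a,b\}$ is dependent; because $b\in\mathrm{cl}(\{a\})$ forces $\mathrm{cl}(\{b\})\subseteq\mathrm{cl}(\{a\})$, this is an equivalence relation, so the non-loop elements partition into parallel classes $P_1,\dots,P_m$, and $\{a,b\}$ is a basis exactly when $a,b$ lie in different classes. Since all bases have size $2$ we must have $m\ge 2$. Writing $n_i:=|P_i|$, $N:=\sum_i n_i$ and $d_i:=N-n_i$, each element of $P_i$ lies in exactly $d_i$ bases, and chasing the definitions of $\pi_k$ and the up/down walks in \cref{sec:chains,sec:local-walks} together with the non-backtracking identity $\updown{1}=(Q+I)/2$ from \cref{rem:local-downup} gives, for $a\in P_i$ and $b\neq a$,
\[
\pi_1(\{a\})=\frac{d_i}{2|\cF|},\qquad Q_\M(\{a\},\{b\})=\frac{\ind[\{a,b\}\in\cF]}{d_i},
\]
so $Q_\M$ is the simple random walk on the complete multipartite graph $K_{n_1,\dots,n_m}$, reversible with respect to $\pi_1$.

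Next, for an arbitrary $f:\Pinning_1\to\R$ I would set $F_i:=\sum_{a\in P_i}f(a)$ and compute, using the formulas above and the fact that $\{a,b\}\in\cF$ iff $a,b$ lie in different classes,
\[
\langle f,Q_\M f\rangle_{\pi_1}=\frac{1}{2|\cF|}\Bigl[\bigl(\sum_i F_i\bigr)^2-\sum_i F_i^2\Bigr],\qquad \Exp_{\pi_1}[f]=\frac{1}{2|\cF|}\sum_i d_i F_i.
\]
By the Rayleigh characterization $\lambda_2(Q_\M)=\max\{\langle f,Q_\M f\rangle_{\pi_1}/\langle f,f\rangle_{\pi_1}:\Exp_{\pi_1}[f]=0,\ f\not\equiv 0\}$, it then suffices to prove the elementary fact: whenever $\vec F\in\R^m$ satisfies $\sum_i d_i F_i=0$, one has $(\sum_i F_i)^2\le\sum_i F_i^2$. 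Decomposing the all-ones vector as $\mathbf{1}=\alpha\vec d+\vec w$ with $\vec w\perp\vec d$, we get $\sum_i F_i=\langle\vec w,\vec F\rangle$, so Cauchy--Schwarz bounds $(\sum_i F_i)^2$ by $\|\vec w\|^2\|\vec F\|^2$ with $\|\vec w\|^2=m-(\sum_i d_i)^2/(\sum_i d_i^2)$; hence it is enough that $(m-1)\sum_i d_i^2\le(\sum_i d_i)^2$, and substituting $d_i=N-n_i$ with $\sum_i n_i=N$ this reduces to $\sum_i n_i^2\le N^2=(\sum_i n_i)^2$, which is immediate. Therefore $\langle f,Q_\M f\rangle_{\pi_1}\le 0$ for every mean-zero $f$, so $\lambda_2(Q_\M)\le 0$ and $\gamma(Q_\M)=1-\lambda_2(Q_\M)\ge 1$.

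I expect the only genuinely delicate part to be the first two steps: correctly invoking the parallel-class description of rank $2$ matroids (in particular that $m\ge 2$ is exactly ``rank two'', and that loops lie harmlessly outside the walk) and tracing the chain of definitions $\pi_k\to$ up/down walks $\to$ non-backtracking $Q$ down to the clean ``SRW on $K_{n_1,\dots,n_m}$'' picture. Once that translation is in place, the spectral estimate is the one-line Cauchy--Schwarz computation above, and in fact it shows $\lambda_2(Q_\M)=0$ with equality (the value $0$ being attained by functions supported on a single parallel class that sum to zero there).
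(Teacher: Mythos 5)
Your proof is correct, and it arrives at the complete-multipartite picture by essentially the same structural observation as the paper (the paper gets it from the exchange axiom applied to a triple of elements; your parallel-class/closure argument is the same fact in standard matroid language, with loops correctly discarded since they lie outside the support of $\pi_1$). Where you genuinely diverge is the spectral step. The paper stays at the matrix level: it writes the adjacency matrix $A$ of the multipartite graph as the all-ones matrix minus a block-diagonal positive semidefinite matrix to get $\lambda_2(A)\le 0$, and then transfers the sign pattern to the walk matrix $D^{-1}A$ via similarity with $D^{-1/2}AD^{-1/2}$ and Sylvester's law of inertia. You instead argue directly with the Rayleigh quotient in $\ell^2(\pi_1)$: using reversibility you compute $\langle f,Q_\M f\rangle_{\pi_1}=\frac{1}{2|\cF|}\bigl[\bigl(\sum_i F_i\bigr)^2-\sum_i F_i^2\bigr]$ with $F_i$ the class sums, and show this is $\le 0$ under the mean-zero constraint $\sum_i d_iF_i=0$ by a Cauchy--Schwarz estimate that reduces to $\sum_i n_i^2\le N^2$ (your reduction and the divisibility by $m-1>0$, using $m\ge 2$, both check out). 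Your route is more elementary and self-contained, avoiding the congruence/inertia argument, at the cost of a computation tied to the multipartite structure; the paper's route encapsulates a reusable principle: for the random walk on any graph whose adjacency matrix has at most one positive eigenvalue, invariance of the signature under congruence immediately gives $\lambda_2(D^{-1}A)\le 0$. One small caveat about your closing parenthetical: the equality $\lambda_2(Q_\M)=0$ holds only when some parallel class has at least two elements; if every class is a singleton, $Q_\M$ is the walk on the complete graph $K_m$ and $\lambda_2=-1/(m-1)<0$. This does not affect the lemma, which only needs $\lambda_2\le 0$.
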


\begin{proof}
Consider a rank 2 matroid $\M$.  Consider a graph $G_\M$ whose vertex set are the elements 
of the matroid $\M$ and whose edges are the bases of the matroid $\M$. We will argue that this graph consists of a 
complete multipartite graph plus some isolated vertices.
The isolated vertices $v$ are those for which $\{v\}$ is not independent. Remove the isolated vertices.  
Now, by the Exchange Axiom, for any three vertices $i,j,k$ with $\{j,k\}\in E$, either $\{i,j\}\in E$ or $\{i,k\}\in E$. Equivalently, if $\{i,j\}\not\in E$ and $\{i,k\}\not\in E$ then 
$\{j,k\}\not\in E$. This means that we have a multipartite graph (where the partitions are maximal sets of pairwise dependent elements). 

The adjacency matrix $A$ of a complete multipartite graph is the all ones matrix (that has only one non-zero eigenvalue) minus a block diagonal matrix, so it has the second eigenvalue $\le 0$. (Adding the isolated vertices
does not change the non-zero part of the spectrum.)

The local walk $Q_\M$ is equivalent to the random walk on the graph $G_\M$.     The transition matrix of the random walk on $G_\M$ is the matrix $D^{-1} A$ where $D$ is the diagonal matrix with the degrees.
Note, $D^{-1} A$ is similar to $D^{-1/2} A D^{-1/2}$.
Hence, by Sylvester's law of inertia, see \cref{lem:sylvesters} (which states that a change of basis does not change the number of eigenvalues of a given sign) we have that $\lambda_2(A) \le 0 \iff \lambda_2(D^{-1/2} A D^{-1/2}) \le 0$.  
This completes the proof.
\end{proof}

\begin{lemma}\label{lem:matroid-local}
Let $\M=(E,\cI)$ be a matroid of rank $r$. For any $S\in\cI$, $|S|\leq r-2$, we have that the spectral gap of the local walk $Q_S$ satisfies $\gamma(Q_S)\geq 1$.
\end{lemma}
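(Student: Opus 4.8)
The plan is to obtain the lemma as a direct consequence of the Trickle-Down Theorem (\cref{thm:trickle-down}), in the form of its corollary \cref{cor:trickle-without-loss}, with the base case supplied by the rank-$2$ computation in \cref{lem:spectralgap-rank2}. We work throughout with the pure simplicial complex $(\Lambda,\Omega)=(E,\cI)$, with $\mu$ the uniform distribution on the set $\cF$ of bases, and with the local walks $Q_S$ defined exactly as in \cref{sec:local-walks} (and the $\pi_k$'s and up/down walks as in \cref{sec:chains}).

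The first step is to identify the link of a set $S\in\cI$ with a contracted matroid. For $S\in\cI$, the conditional distribution $\mu_S$ is uniform over those bases of $\M$ that contain $S$, and deleting $S$ from each such basis identifies this with the uniform distribution over the bases of the contraction $\M/S$, which by \cref{exc3} is a matroid of rank $r-|S|$. Under this identification the local walk $Q_S$ of $\M$ coincides with the local walk $Q_{\M/S}$ of the matroid $\M/S$. In particular, when $|S|=r-2$ the matroid $\M/S$ has rank $2$, so \cref{lem:spectralgap-rank2} gives $\gamma(Q_S)=\gamma(Q_{\M/S})\geq 1$; this is exactly the base-case hypothesis required by \cref{cor:trickle-without-loss}.

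The second step is to verify the remaining hypothesis of \cref{cor:trickle-without-loss}, namely that $Q_S$ is irreducible for every $S\in\cI$ with $|S|\leq r-2$. Since $\M/S$ then has rank at least $2$, a short exchange-axiom argument (mirroring the irreducibility of the bases-exchange walk) shows that the support graph of $Q_{\M/S}$ is connected on its non-loop elements: two such elements are either joined by an edge (when their pair is independent in $\M/S$) or else are parallel, in which case any element forming an independent pair with one of them forms one with the other, giving a path of length two. With both hypotheses in place, \cref{cor:trickle-without-loss} applies verbatim to $(E,\cI)$ and yields $\gamma(Q_S)\geq 1$ for all $S\in\Omega$, in particular for all $S\in\cI$ with $|S|\leq r-2$.

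Since essentially all of the content is carried inside the Trickle-Down Theorem, I do not anticipate a genuine obstacle here; the only points requiring a little care are the identification of the link of $S$ with the contraction $\M/S$ (so that \cref{lem:spectralgap-rank2} can be invoked at level $r-2$) and the routine verification that every $Q_S$ is irreducible, which is what licenses the inductive step of \cref{thm:trickle-down}.
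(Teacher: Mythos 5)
Your proposal is correct and follows essentially the same route as the paper: the paper's proof of \cref{lem:matroid-local} is exactly the observation that the $(r-2)$-links are rank-$2$ matroids, so \cref{lem:spectralgap-rank2} supplies the base case and \cref{cor:trickle-without-loss} (the Trickle-Down Theorem) gives the conclusion for all pinnings. Your additional details---identifying the link of $S$ with the contraction $\M/S$ via \cref{exc3} and checking irreducibility of each $Q_S$ through the exchange axiom---are exactly the checks the paper leaves implicit, so there is no substantive difference in approach.
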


\begin{proof}[Proof of~\cref{lem:matroid-local}]
This lemma follows from \cref{lem:spectralgap-rank2} and \cref{cor:trickle-without-loss} (since the $(r-2)$-links are rank $2$ matroids).
\end{proof}

We can now prove the main theorem establishing rapid mixing of the bases-exchange walk.

\begin{proof}[Proof of Theorem~\ref{thm:matroid-main}]
 The random walk theorem \cref{thm:RW} implies that the spectral gap $\gamma$ for the transition matrix $P$ of the bases-exchange walk satisfies:
\begin{equation}
    \label{eqn:RW-thm222}
 \gamma(P) \geq \frac{1}{r}\prod_{k=0}^{r-2}\gamma_k,
\end{equation}
where $$\gamma_k=\min_{S\in\cI:|S|=k}\gamma(Q_{S})$$ and $\gamma(Q_S)$ is the spectral gap for the local walk $Q_S$, which by \cref{lem:matroid-local} we have that $\gamma_k\geq 1$ for every~$k$.
\end{proof}

\subsection{Connections to Spectral Independence and Log-Concavity}
\label{sub:logconcave}

Recall, the definition of the modified influence matrix $\widetilde{\Psi}$ in \cref{defn:correlation-matrix}, which is slightly different than the influence matrix $\Psi$ in \cref{defn:inf-matrix}; the former $\widetilde{\Psi}$ leads to the notion of weak spectral independence (see \cref{defn:weak-SI}) and the latter $\Psi$ leads to spectral independence (see \cref{defn:SI}).  Both weak spectral independence and spectral independence upper bound the maximum eigenvalue $\lambda_{\max}$ of the corresponding matrix but do so for all valid pinnings (which corresponds to all induced subgraphs in the case of the hard-core model on independent sets).

We first observe that the local walks having an optimal spectral gap is equivalent to optimal spectral independence.  

\begin{remark}
\label{rem:weakSI-mixing}
The following are equivalent by \cref{lem:QandPsi}:
\begin{enumerate}
    \item For every pinning $S$, the local walk $Q_S$ satisfies $\gamma(Q_S)\geq 1$.
    \item $0$-spectral independence holds. 
\end{enumerate}
\end{remark}

Recall, $0$-spectral independence implies $0$-weak spectral independence.  Now we relate $0$-weak spectral independence with log-concavity.

Given a distribution $\mu$ on $\{0,1\}^n$ consider the following generating polynomial
$$
f(x_1,\dots,x_n) = \sum_{\sigma\in \{0,1\}^n} \mu(\sigma) \prod_{i=1}^n x_i^{\sigma_i}.
$$

\begin{definition}[log-concavity at a point]
We say that $f$ is log-concave at a point $x$ if $(\nabla^2 \log f)(x)\preceq 0$.
\end{definition}

As noted above, $0$-weak spectral independence is defined as $\lambda_{\max}(\widetilde{\Psi_\tau})\leq 1$ for all pinnings $\tau$.  The following lemma shows that $\lambda_{\max}(\widetilde{\Psi})\leq~1$ is equivalent to log-concavity of $f$.

\begin{theorem}
    Log-concavity of $f$ at the point $(1,\dots,1)$ is equivalent to the influence matrix $\widetilde{\Psi}$ having $\lambda_{\max}(\widetilde{\Psi})\leq~1$. 
\end{theorem}

\begin{proof}

For $1\leq i\neq j\leq n$ we have
\begin{align}\label{eq:off-diag}
x_i x_j (\nabla^2 \log f(1,\dots,1))_{ij} & =  x_i x_j \left(\frac{(\frac{\partial^2}{\partial x_i \partial x_j} f ) }{f}(1,\dots,1) - \frac{ (\frac{\partial}{\partial x_i} f)(\frac{\partial}{\partial x_i} f)  }{f^2}(1,\dots,1) \right) \\ & = \Pr_{\sigma\sim\mu}(\sigma(i)=\sigma(j)=1) - \Pr_{\sigma\sim\mu}(\sigma(i)=1)\Pr_{\sigma\sim\mu}(\sigma(j)=1),
\end{align}
and for $1\leq i\leq n$ we have
\begin{align}\label{eq:diag}
x_i^2 (\nabla^2 \log f(1,\dots,1))_{ii} & = - x_i^2 \left(\frac{ \frac{\partial}{\partial x_i} f }{f}(1,\dots,1) \right)^2 = - \Pr_{\sigma\sim\mu}(\sigma(i)=1)^2.
\end{align}

Recall from \cref{cov-defn} the definition of the covariance matrix:
   \begin{align*}
        \Cov_{\mu}(i,j) 
        &= 
               \Pr_{\sigma\sim\mu}[\sigma(i)=\sigma(j)=1] 
        - \Pr_{\sigma\sim\mu}[\sigma(i)=1]\cdot\Pr_{\sigma\sim\mu}[\sigma(j)=1].
\end{align*}
Let $\widetilde{D}={\mathrm{diag}}(\Exp(\mu))$ be the diagonal matrix with entries  $\widetilde{D}(i,i)=\mu(\sigma(i)=1)$.
Hence $(\nabla^2 f)(1,\dots,1)\preceq 0$ is equivalent to 
\begin{equation}
\label{cov-diag}
    \Cov_\mu - \widetilde{D} = \diag( x_i ) (\nabla^2 f (1,\dots,1))\diag( x_i )\preceq 0,
\end{equation}
where in the first equality we used~\cref{eq:off-diag} and~\eqref{eq:diag}.

Recall from \cref{CM-semidefinite} that $\lambda_{\max}(\widetilde{\Psi})\leq 1$ is equivalent to $\Cov_\mu\preceq  \widetilde{D}$.  Therefore, \cref{cov-diag} establishes the condition \cref{cond-weak-max} in the definition of $0$-weak spectral independence, see \cref{defn:weak-SI}.
\end{proof}

We refer interested readers to the works \cite{ALOV19,BH20} for more discussion on (strongly) log-concave polynomials.

\subsection{Proof of the Trickle-Down Theorem: Proof of \texorpdfstring{\cref{thm:trickle-down}}{Theorem 8.1}}
\label{sec:proof-trickledown}

In this section, we complete the proof of the Trickle-Down Theorem (\cref{thm:trickle-down}); this is the only remaining task to complete the proof of fast mixing of the bases-exchange walk (\cref{thm:matroid-main}).

\begin{remark}
    For an alternative proof of the Trickle-Down Theorem (\cref{thm:trickle-down}) in the language of covariance matrices (see \cref{cov-SI-connection}), see \cite{Frederic-note,AKV24}. We highlight that \cite{AKV24} established a trickle-down theorem for general localization schemes, which can be applied to the analysis of the field dynamics for the hardcore model; see \cref{sec:hard-core-random} for more details.
    \end{remark}

Fix $S\subset E$. Let $\pi_S = \pi_{S,1}$ and hence $\pi_S$ is a distribution over $a\in E\setminus S$.
Recall that the local walk $Q_S$ has stationary distribution $\pi_S$.

We will use the following technical lemma.

\begin{lemma}
\label{lem:TD111}
Let $S\subset E$ where $i=|S|$.  For any function $f:E\rightarrow\R$,
\begin{align}
\label{lem:TD1-Dir}
\Dirichlet_{Q_S}(f) & = \sum_{a\in E\setminus S}\pi_S(a)\Dirichlet_{Q_{S\cup a}}(f)
 \\
 \label{lem:TD1-Var}
 \Exp_{\pi_S}(f) 
& = \sum_{a\in E\setminus S}\pi_S(a)
\Exp_{\pi_{S\cup a}}(f).
\end{align}
\end{lemma}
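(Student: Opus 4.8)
The plan is to prove \cref{lem:TD111} by reducing both identities to a single ``edge--weight'' formula for the local walk together with one elementary counting fact. Throughout I take $i=|S|<r-2$, which is the range in which the lemma is used and in which $Q_S$, all the $Q_{S\cup a}$, and the relevant distributions $\pi_k$ are defined. The first step is to record the concrete form of a row of $Q_S$: for $a\neq b$,
\[
Q_S(a,b)=\pi_{S\cup a,1}(b).
\]
This follows by unwinding \cref{rem:second}: a step of $Q_S$ from $a$ is the non-backtracking version of the $\updown{|S|+1}$ step on the link of $S$, so it first adds $b$ with probability $\up{|S|+1}(S\cup a,\,S\cup a\cup b)$, which equals $\pi_{S\cup a,1}(b)$ by \eqref{step:claim1}, and then deletes $a$ in the down-step. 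Combining this with the identity $\pi_{T,1}(c)=\mu(T\cup c)\big/\bigl((r-|T|)\mu(T)\bigr)$ (immediate from the definition of $\pi_{T,1}$ and $\mu(T)=\sum_{B\supseteq T}\mu(B)$) gives the symmetric expression
\[
\pi_{S,1}(a)\,Q_S(a,b)=\pi_{S,1}(a)\,\pi_{S\cup a,1}(b)=\frac{\mu(S\cup a\cup b)}{(r-i)(r-i-1)\,\mu(S)}=:w_S(a,b),
\]
and likewise $\pi_{S\cup a,1}(b)\,Q_{S\cup a}(b,c)=w_{S\cup a}(b,c)=\mu(S\cup a\cup b\cup c)\big/\bigl((r-i-1)(r-i-2)\mu(S\cup a)\bigr)$. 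In particular $\pi_S(a)Q_S(a,b)$ is symmetric in $a,b$, so $Q_S$ is reversible with respect to $\pi_S=\pi_{S,1}$.

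The second ingredient is the counting identity $\sum_{a}\mu(T\cup a)=(r-|T|)\,\mu(T)$, valid for any $T\in\Omega$ with $|T|<r$ (the sum over $a$ with $T\cup a\in\Omega$); this is the one-level instance of the fact used in \eqref{matroid:first-step}, and it holds because each maximal face $B\supseteq T$ contributes $\mu(B)$ to $\mu(T\cup a)$ for exactly the $r-|T|$ choices $a\in B\setminus T$.

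Given these, both parts of the lemma are reindexing. For \eqref{lem:TD1-Var} I would write $\sum_{a}\pi_S(a)\Exp_{\pi_{S\cup a}}(f)=\sum_{b}f(b)\sum_{a}\pi_S(a)\pi_{S\cup a,1}(b)$, substitute $\pi_S(a)\pi_{S\cup a,1}(b)=w_S(a,b)$, and collapse the inner sum via the counting identity with $T=S\cup b$ to obtain $\sum_b\pi_S(b)f(b)=\Exp_{\pi_S}(f)$. For \eqref{lem:TD1-Dir} I would expand $\Dirichlet_{Q_S}(f)=\tfrac12\sum_{a,b}w_S(a,b)(f(a)-f(b))^2$ and $\sum_a\pi_S(a)\Dirichlet_{Q_{S\cup a}}(f)=\tfrac12\sum_{a}\pi_S(a)\sum_{b,c}w_{S\cup a}(b,c)(f(b)-f(c))^2$, use $\pi_S(a)/\mu(S\cup a)=1/\bigl((r-i)\mu(S)\bigr)$, swap the order of summation, and apply the counting identity with $T=S\cup b\cup c$ (the diagonal $b=c$ contributing nothing to either side); this turns the right-hand side into $\tfrac12\sum_{b,c}w_S(b,c)(f(b)-f(c))^2=\Dirichlet_{Q_S}(f)$.

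I do not expect a genuine obstacle here; the only points needing care are the bookkeeping of which face-normalization of $\mu$ occurs at each level, checking that $r-i,\;r-i-1,\;r-i-2$ are all positive in the range $i<r-2$ so that the walks and distributions are legitimate, and keeping the vanishing diagonal terms consistent when passing between ordered and unordered pairs. The conceptual step that makes everything collapse is the identity $Q_S(a,b)=\pi_{S\cup a,1}(b)$ and the resulting symmetric weight $w_S$.
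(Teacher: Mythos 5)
Your proposal is correct and takes essentially the same route as the paper: your identity $\pi_S(a)Q_S(a,b)=\mu(S\cup a\cup b)/\bigl((r-i)(r-i-1)\mu(S)\bigr)$ together with the counting fact $\sum_{a}\mu(T\cup a)=(r-|T|)\mu(T)$ is exactly the combination of \cref{matroid:basic-fact}, \cref{step:rem:local-downup}/\cref{eqn:up-down}, and \cref{matroid:sum-a} that the paper uses, and both identities then follow by the same insert-a-sum-over-$a$-and-regroup computation. Packaging this through the symmetric edge weight $w_S$ is a tidy presentational simplification, not a different argument.
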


We can now prove the Trickle-Down Theorem.

\begin{proof}[Proof of \cref{thm:trickle-down}]
Our goal is to bound $\gamma_i=\gamma(Q_S)$ which is the spectral gap for the 
local walk $Q_S$ with pinning $S\subset E$.  Recall, 
for all $f:E\rightarrow\R$,
\begin{equation}
    \label{TD:dirichlet}
\gamma_i = \min_{f} \frac{\Dirichlet_{Q_S}(f)}{\Var_{\pi_S}(f)},
\end{equation}
where the minimization is over all $f:E\rightarrow\R$ where $f$ is not constant on $E\setminus S$ and hence $\Var_{\pi_S}\neq 0$.
Let $f^*$ be the function $f$ which achieves the minimum in \cref{TD:dirichlet}, and hence
\[ \gamma_i\Var_{\pi_S}(f) = \Dirichlet_{Q_S}(f).
\]

The minimizer $f^*$ of~\eqref{TD:dirichlet} has to be an ``eigenvector'' in the following sense (this follows more directly by 
viewing Dirichlet forms through eigenvalues; we include
a self-contained proof).
\begin{lemma}
\label{TD:blue}
For any $a\in E\setminus S$,
\[
    \Exp_{\pi_{S\cup a}}(f^*) = (1-\gamma_i)f^*(a).
\]    
\end{lemma}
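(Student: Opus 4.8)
The plan is to show that the minimizer $f^*$ of the Rayleigh quotient in~\eqref{TD:dirichlet} is, after a harmless normalization, a second eigenfunction of the reversible local walk $Q_S$, and then to read off the asserted identity from the fact that the transition distribution of $Q_S$ out of $a$ is exactly $\pi_{S\cup a}=\pi_{S\cup a,1}$. Rather than invoke spectral theory, I would carry this out through a first-order optimality (Euler--Lagrange) argument, which is the self-contained route the statement advertises.

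First I would normalize. Since $\Dirichlet_{Q_S}(\cdot)$ and $\Var_{\pi_S}(\cdot)$ are both unchanged by adding a constant to the function, replace $f^*$ by $f^*-\Exp_{\pi_S}(f^*)$ so that $\Exp_{\pi_S}(f^*)=0$, whence $\Var_{\pi_S}(f^*)=\Exp_{\pi_S}[(f^*)^2]$. This normalization is in fact forced: averaging the claimed identity against $\pi_S(a)$ and using the second identity of~\cref{lem:TD111} gives $\Exp_{\pi_S}(f^*)=(1-\gamma_i)\Exp_{\pi_S}(f^*)$, hence $\Exp_{\pi_S}(f^*)=0$ because $\gamma_i>0$.

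Next comes the heart of the argument. For an arbitrary perturbation $g\colon E\setminus S\to\R$, set $\phi(\eps):=\Dirichlet_{Q_S}(f^*+\eps g)-\gamma_i\Var_{\pi_S}(f^*+\eps g)$. Because $f^*$ attains the minimum value $\gamma_i$ of the ratio in~\eqref{TD:dirichlet}, we have $\phi(\eps)\ge 0=\phi(0)$ for every $\eps$, so $\phi'(0)=0$. Expanding the two quadratics in $\eps$ — using reversibility of $Q_S$ with respect to $\pi_S$ so the cross term of the Dirichlet form equals $\langle (I-Q_S)f^*,g\rangle_{\pi_S}$, and using $\Exp_{\pi_S}(f^*)=0$ for the cross term of the variance — I obtain $\phi'(0)=2\langle (I-Q_S)f^*-\gamma_i f^*,\;g\rangle_{\pi_S}$. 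Since this vanishes for all $g$ and $\pi_S$ has full support on $E\setminus S$ ($Q_S$ is assumed irreducible), it follows that $(Q_Sf^*)(a)=(1-\gamma_i)f^*(a)$ for every $a\in E\setminus S$.

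Finally I would identify the left-hand side: by the definition of the local walk, from $a$ the chain $Q_S$ has no self-loop and moves to $b$ with probability proportional to the weight of $S\cup\{a,b\}$, which is precisely the distribution $\pi_{S\cup a}=\pi_{S\cup a,1}$, so $(Q_Sf^*)(a)=\sum_b Q_S(a,b)f^*(b)=\Exp_{\pi_{S\cup a}}(f^*)$; combined with the previous step this gives $\Exp_{\pi_{S\cup a}}(f^*)=(1-\gamma_i)f^*(a)$. The point requiring the most care is bookkeeping rather than any hard estimate: keeping the normalization $\Exp_{\pi_S}(f^*)=0$ in force throughout, since the target identity is genuinely \emph{not} invariant under adding constants to $f^*$, and checking cleanly that the non-backtracking local walk $Q_S$ out of $a$ is literally the distribution $\pi_{S\cup a}$ and not merely proportional to it.
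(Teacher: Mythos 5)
Your proof is correct and takes essentially the same route as the paper: a first-order optimality argument showing the minimizer satisfies $Q_S f^* = (1-\gamma_i)f^*$, followed by the observation that the outgoing row $Q_S(a,\cdot)$ is exactly the distribution $\pi_{S\cup a}$ — the paper packages the first step as a Lagrange-multiplier condition with constraint $\Var_{\pi_S}(f)=1$ and identifies $\lambda=\gamma_i$, while you perturb the nonnegative functional $\Dirichlet_{Q_S}(f)-\gamma_i\Var_{\pi_S}(f)$, which is the same computation in operator form. Your explicit insistence on the normalization $\Exp_{\pi_S}(f^*)=0$ is a welcome clarification, since $f^*$ is only determined up to additive constants and the stated identity genuinely requires mean zero; the paper handles this point only implicitly.
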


Without loss of generality, in \cref{TD:dirichlet} we can restrict attention to functions $f$ where $\Exp_{\pi_S}[f] = 0$ (since translating $f$ by a constant function does not change the numerator and the denominator), and hence for $f^*$ we can also assume that $\Exp_{\pi_S}[f^*]=0$.

\begin{align*}
    \gamma_i \Var_{\pi_S}(f^*) & = \Dirichlet_{Q_S}(f^*) 
\\
& = \sum_{a\in E\setminus S}\pi_S(a)\Dirichlet_{Q_{S\cup a}}(f^*) & \mbox{by \cref{lem:TD1-Dir}}
\\
& \geq \gamma_{i+1}\sum_{a\in E\setminus S}\pi_{S}(a)\Var_{\pi_{S\cup a}}(f^*)
& \mbox{by \cref{TD:assume}}
\\
& = \gamma_{i+1}\sum_a\pi_{S}(a)\Exp_{\pi_{S\cup a}}[(f^*)^2]
- \gamma_{i+1}\sum_a\pi_{S}(a)[\Exp_{\pi_{S\cup a}}(f^*)]^2
\\
& = \gamma_{i+1}\sum_a\pi_{S}(a)\Exp_{\pi_{S\cup a}}[(f^*)^2]
- \gamma_{i+1}(1-\gamma_i)^2\sum_a\pi_{S}(a)(f^*(a))^2
&\mbox{by \cref{TD:blue}}
\\
& = \gamma_{i+1}\sum_a\pi_{S}(a)\Exp_{\pi_{S\cup a}}[(f^*)^2]
- \gamma_{i+1}(1-\gamma_i)^2\Exp_{\pi_S}[(f^*)^2]
\\
& = \gamma_{i+1}\Exp_{\pi_S}[(f^*)^2]
- \gamma_{i+1}(1-\gamma_i)^2\Exp_{\pi_S}[(f^*)^2]&\mbox{by \cref{lem:TD1-Var}}
\\
& = \gamma_{i+1}\Var_{\pi_{S}}(f^*)
- \gamma_{i+1}(1-\gamma_i)^2\Var_{\pi_S}(f^*).
\end{align*}

Now, since $\Var_{\pi_S}(f^*)>0$ we have
$$
\gamma_i\geq \gamma_{i+1}(1-(1-\gamma_i)^2) = \gamma_{i+1}\gamma_i (2- \gamma_i).
$$
Since $\gamma_i>0$ (we assumed that $Q_S$ is irreducible) we have
$$
1 \geq \gamma_{i+1} (2- \gamma_i),
$$
which is equivalent to:
$$
\gamma_i \geq 2 - \frac{1}{\gamma_{i+1}},$$
which completes the proof.
\end{proof}

\subsection{Proofs of Technical Lemmas}

It remains to prove \cref{lem:TD111} and \cref{TD:blue}.

\begin{proof}[Proof of \cref{lem:TD111}]
Recall $i=|S|$, and $\mu$ is the uniform distribution over subsets $S\subset E$ which are bases of the matroid.
Let us begin with two basic facts. Observe that
\begin{equation}
    \label{matroid:basic-fact}
\pi_{S}(a) = \pi_{S,1}(a) = \frac{\mu(S\cup a)}{(n-i)\mu(S)} = \frac{\pi_{i+1}(S\cup a)\binom{n}{i+1}}{(n-i)\pi_i(S)\binom{n}{i}}
= \frac{\pi_{i+1}(S\cup a)}{(i+1)\pi_i(S)}.
\end{equation}


Applying \cref{matroid:first-step} we have the following:
\begin{equation}
\label{matroid:sum-a}
\sum_{a\in E\setminus T} \pi_{k+1}(T\cup a)
 = (k+1)\pi_{k}(T).
\end{equation}

We can now proceed with the proof of \cref{lem:TD1-Var}:
\begin{align*}
\Exp_{\pi_S}(f)  & = \sum_{b\in E\setminus S}\pi_S(b)f(b)
 \\
& = \sum_{b\in E\setminus S}\frac{\pi_{i+1}(S\cup b)}{(i+1)\pi_i(S)} f(b)
&\mbox{by \cref{matroid:basic-fact}}
 \\
& =  \sum_{b\in E\setminus S} \sum_{a\in E\setminus (S\cup b)} \frac{\pi_{i+2}(S\cup b\cup a)}{(i+1)(i+2)\pi_i(S)} f(b)
&\mbox{by \cref{matroid:sum-a} with $T=S\cup b$}
 \\
& =  \sum_{a\in E\setminus S} \sum_{b\in E\setminus (S\cup a)}  \frac{\pi_{i+2}(S\cup b\cup a)}{(i+2)\pi_{i+1}(S\cup a)} \frac{\pi_{i+1}(S\cup a)}{(i+1)\pi_i(S)} f(b)
\\
& =  \sum_{a\in E\setminus S} \pi_S(a) \sum_{b\in E\setminus (S\cup a)}  \frac{\pi_{i+2}(S\cup b\cup a)}{(i+2)\pi_{i+1}(S\cup a)}  f(b)
&\mbox{by \cref{matroid:sum-a}}
\\
& =  \sum_{a\in E\setminus S} \pi_S(a) \sum_{b\in E\setminus (S\cup a)}  \pi_{S\cup a}(b) f(b)
&\mbox{by \cref{matroid:basic-fact}}
\\
& = \sum_{a\in E\setminus S}\pi_S(a)
\Exp_{\pi_{S\cup a}}(f).
\end{align*}

We can establish \cref{lem:TD1-Dir} in a similar manner:
\begin{align*}
\Dirichlet_{Q_S}&(f) = \sum_{b,c\in E\setminus S}\pi_S(b) Q_S(b,c) (f(b)-f(c))^2
 \\
 & = \sum_{b,c\in E\setminus S}\pi_S(b)
 (i+2)\updown{i+1}(S\cup b,S\cup c)(f(b)-f(c))^2
 &\mbox{by \cref{step:rem:local-downup}}
 \\
 & = \sum_{b,c\in E\setminus S}\pi_S(b)
 \frac{\pi_{i+2}(S\cup b\cup c)}{(i+2)\pi_{i+1}(S\cup b)}  (f(b)-f(c))^2
 &\mbox{by \cref{eqn:up-down}}
 \\
 & = \sum_{b,c\in E\setminus S}\frac{\pi_{i+1}(S\cup b)}{(i+1)\pi_i(S)}\frac{\pi_{i+2}(S\cup b\cup c)}{(i+2)\pi_{i+1}(S\cup b)}  (f(b)-f(c))^2
 &
 \mbox{ by \cref{matroid:basic-fact}}
 \\
& = \sum_{b,c\in E\setminus S}\frac{\pi_{i+2}(S\cup b\cup c)}{(i+1)(i+2)\pi_i(S)}  (f(b)-f(c))^2
 \\
& =  \sum_{b,c\in E\setminus S} \sum_{a\in E\setminus (S\cup b\cup c)} \frac{\pi_{i+3}(S\cup a\cup b\cup c)}{(i+3)(i+2)(i+1)\pi_i(S)} (f(b)-f(c))^2
& \mbox{by \cref{matroid:sum-a}}
 \\
& =  \sum_{a\in E\setminus S} \sum_{b,c\in E\setminus (S\cup a)}  \frac{\pi_{i+3}(S\cup a\cup b\cup c)}{(i+3)(i+2)\pi_{i+1}(S\cup a)} \frac{\pi_{i+1}(S\cup a)}{(i+1)\pi_i(S)} (f(b)-f(c))^2
\\
& =  \sum_{a\in E\setminus S} \pi_S(a) \sum_{b,c\in E\setminus (S\cup a)}  \frac{\pi_{i+3}(S\cup a\cup b\cup c)}{(i+3)(i+2)\pi_{i+1}(S\cup a)}  (f(b)-f(c))^2
 &
 \mbox{ by \cref{matroid:basic-fact}}
\\
& =  \sum_{a\in E\setminus S} \pi_S(a) \sum_{b,c\in E\setminus (S\cup a)}  \pi_{S\cup a}(b)\frac{\pi_{i+3}(S\cup a\cup b\cup c)}{(i+3)\pi_{i+2}(S\cup a\cup b)}(f(b)-f(c))^2
 & \mbox{by \cref{matroid:basic-fact}}
\\
& =  \sum_{a\in E\setminus S} \pi_S(a) \sum_{b,c\in E\setminus (S\cup a)} \pi_{S\cup a}(b)(i+3)\updown{i+2}(S\cup a\cup b,S\cup a\cup c)(f(b)-f(c))^2
&\mbox{by \cref{eqn:up-down}}
\\
& =  \sum_{a\in E\setminus S} \pi_S(a) \sum_{b,c\in E\setminus (S\cup a)}  \pi_{S\cup a}(b)Q_{S\cup a}(b,c) (f(b)-f(c))^2
 &
 \mbox{by  \cref{step:rem:local-downup}}
\\
& = \sum_{a\in E\setminus S}\pi_S(a)
\Dirichlet_{Q_{S\cup a}}(f). \qedhere
\end{align*}
\end{proof}

We now prove \cref{TD:blue}, which will complete the proof of the Trickle-Down Theorem (\cref{thm:trickle-down}).

\begin{proof}[Proof of \cref{TD:blue}]
It will be more convenient to
work with the following 
minimization problem equivalent to~\eqref{TD:dirichlet}:
\begin{equation}\label{minim}
\gamma_i = \min_{f; \Var_{\pi_S}(f^*)=1} \Dirichlet_{Q_S}(f).
\end{equation}
By Lagrange multipliers a critical point of~\eqref{minim} satisfies for every $a\in E\setminus S$
\begin{equation}\label{lagra}
\sum_{b\in E\setminus S} \frac{\pi_{i+2}(S\cup a\cup b)}{(i+2)(i+1)\pi_{i}(S)} (f(a)-f(b)) = \frac{\pi_{i+1}(S\cup a)}{(i+1)\pi_{i}(S)}\lambda f(a),
\end{equation}
for some multiplier $\lambda$. Plugging~\eqref{lagra} and 
$\Var_{\pi_S}(f^*)=1$ into $\Dirichlet_{Q_S}(f)$ we obtain
$\lambda=\gamma_i$.

Note that we can rewrite~\eqref{lagra} as follows.
\begin{align*}
\sum_{b\in E\setminus S} \frac{\pi_{i+2}(S\cup a\cup b)}{(i+2)(i+1)\pi_{i}(S)} f(b)
& =
f(a) \sum_{b\in E\setminus S} \frac{\pi_{i+2}(S\cup a\cup b)}{(i+2)(i+1)\pi_{i}(S)} - \frac{\pi_{i+1}(S\cup a)}{(i+1)\pi_{i}(S)}\lambda f(a)
\\ &
= 
f(a) \frac{\pi_{i+1}(S\cup a)}{(i+1)\pi_{i}(S)} - \frac{\pi_{i+1}(S\cup a)}{(i+1)\pi_{i}(S)}\lambda f(a) 
& \mbox{by \cref{matroid:first-step}}
\\
& = (1 - \lambda) f(a) \frac{\pi_{i+1}(S\cup a)}{(i+1)\pi_{i}(S)}.
\end{align*}
Dividing both sides by $\frac{\pi_{i+1}(S\cup a)}{(i+1)\pi_{i}(S)}$ we obtain
$$
\sum_{b\in E\setminus S} \frac{\pi_{i+2}(S\cup a\cup b)}{(i+2)\pi_{i+1}(S\cup a)} f(b) = (1 - \lambda) f(a).
$$
Since $$\Exp_{\pi_{S\cup a}}(f^*) =\sum_{b\in E\setminus S} \frac{\pi_{i+2}(S\cup a\cup b)}{(i+2)\pi_{i+1}(S\cup a)} f(b),$$
this completes the proof of the lemma.
\end{proof}

\section{Methods for Establishing Spectral Independence}
\label{sec:methods}

In this section, we present general techniques for establishing spectral independence, using the hard-core model over independent sets as an example and thereby establish \cref{thm:main-SI-hard-core-uniq}.

As we saw previously, spectral independence immediately implies $O(n\log{n})$ mixing time of the Glauber dynamics for constant-degree graphs.  
We focus on three approaches. In \cref{sec:Weitz}, we show that correlation decay approaches as used in Weitz's algorithm~\cite{Wei06} imply spectral independence.
In \cref{sec:stability}, we show that stability of the partition function, so-called zero-freeness, also implies spectral independence; such conditions were used in the approximate counting algorithm introduced by Barvinok~\cite{Bar17book}.
Finally, in \cref{sec:CI} we discuss the coupling independence method introduced in \cite{CZ23} which gives a simple and intuitive way to establish spectral independence.
In \cref{sub:coupling-SI}, we show that an optimal relaxation time for a local chain implies spectral independence, implying the necessity of spectral independence for optimal mixing of the Glauber dynamics.

\subsection{Preliminaries}
\label{sec:hard-core}
\subsubsection{Hard-core model}

For a graph $G$, recall $\Omega(G)$ is the collection of all independent sets of $G$. 
For an independent set $I\in\Omega(G)$, we say a vertex $v \in V$ is \emph{occupied} if $v \in I$, and \emph{unoccupied} otherwise; we often denote $v\in I$ as $v$ and $v\notin I$ as $\overline{v}$.
Consequently, for the hard-core model we can simplify the notation in the following manner.  For a pair of vertices $u,v \in V$, the influence of $u$ on $v$ is denoted as follows:
\begin{align*}
\Psi_{G,\lambda}(u \sra v) &
= \mu_{G,\lambda}(v\in I \mid u\in I) - \mu_{G,\lambda}(v\in I \mid u\notin I)
\\
& = \mu_{G,\lambda}(v \mid u) - \mu_{G,\lambda}(v \mid \bar{u}).
\end{align*}

We define the \emph{occupancy ratio} $R_{G,\lambda}(v)$ at $v$ as follows:
\begin{equation}
R_{G,\lambda}(v) = \frac{\mu_{G,\lambda}(v)}{\mu_{G,\lambda}(\bar{v})}.
\end{equation}

\subsubsection{Tree-uniqueness threshold}
Fix an integer $d \ge 2$ and a real $\lambda > 0$. 
Consider the hard-core model on a complete $d$-ary tree of height $h$ denoted by $T_h$; this is the complete $\Delta$-regular tree where $d=\Delta-1$.
For $v\in V$, define the occupancy ratio at $v$ as
\begin{equation}\label{eq:ratio-pf}
R_{G,\lambda}(v) = \frac{\lambda Z_G^v(\lambda)}{Z_G^{\bar{v}}(\lambda)}, 
\end{equation}
where ``$v$'' represents the pinning $\tau(v) = 1$ (i.e., $v$ is occupied), and ``$\bar{v}$'' represents the pinning $\tau(v) = 0$ ($v$ is unoccupied).

The \emph{tree recursion} is a function $F = F_{d,\lambda}$ that can be used to compute the occupancy ratio at the root, defined as
\begin{equation}
F: \R_{\ge 0} \to \R_{\ge 0}, \quad F(R) = \frac{\lambda}{(1+R)^d}. 
\end{equation}
Denote by $R_h = R_{d,\lambda,h}$ the root occupancy ratio for $T_h$; e.g., $R_0 = \lambda$, $R_1 = \lambda/(1+\lambda)^d$. 
Then one can easily show that $R_{h} = F(R_{h-1})$.
A natural and important question is whether the sequence $\{R_h\}$ converges when $h$ tends to infinity, 
which is closely related to the Gibbs measure on the \emph{infinite $d$-ary tree}. 
The answer to this question is determined by whether the (unique) fixed point of $F$ is attractive or repulsive. 
Denote the unique positive fixed point of $F$ by $R^*$, i.e., $R^* (1+R^*)^d = \lambda$. 
Define the \emph{critical fugacity} by
\begin{equation}
\lambda_c(\Delta) = \frac{(\Delta-1)^{\Delta-1}}{(\Delta-2)^\Delta},
\end{equation}
where $\Delta = d+1$ is the maximum degree of complete $d$-ary trees. 
It can be shown that if $\lambda \le \lambda_c(\Delta)$ then the fixed point $R^*$ is attractive and $R_h \to R^*$ as $h\to \infty$, and if instead $\lambda > \lambda_c(\Delta)$ then the fixed point $R^*$ is repulsive and $R_{2h-1} \to R'$, $R_{2h} \to R''$ as $h\to \infty$ for some $R' < R^* < R''$. 

Let $\Delta \ge 3$ be an integer. 
The critical fugacity $\lambda_c(\Delta)$ captures phase transitions for the hard-core model in multiple aspects. 
\begin{itemize}
\item When $\lambda \le \lambda_c(\Delta)$ there exists a unique Gibbs measure on the infinite $d$-array tree; 
meanwhile, when $\lambda > \lambda_c(\Delta)$ there are multiple Gibbs measures. For this reason the critical value $\lambda_c(\Delta)$ is called the \emph{tree-uniqueness threshold}. 

\item When $\lambda < \lambda_c(\Delta)$, for complete $d$-ary trees we have $|R_{h} - R_{h-1}| = \exp(-\Theta(h))$, which can be viewed as the difference of root occupancy ratios on $T_{h+1}$ between  
fixing all leaves to be unoccupied (corresponding to $R_h$ on $T_h$) and fixing all leaves to be occupied (corresponding to $R_{h-1}$ on $T_{h-1}$). This describes a spatial mixing/correlation decay property with exponential decay rate, which fails when $\lambda > \lambda_c(\Delta)$.  

More generally, when $\lambda < \lambda_c(\Delta)$, for any graph $G$ of maximum degree at most $\Delta$, for any vertex $v \in V$ and any two pinnings $\sigma,\tau$ on a subset of vertices $\Lambda \subseteq V \setminus v$, it holds $|R^\sigma(v) - R^\tau(v)| = \exp(-\Omega(\ell))$ where $\ell$ is the distance from $v$ to a closest vertex $u \in \Lambda$ such that $\sigma(u) \neq \tau(u)$. This is known as the \emph{strong spatial mixing} property with exponential decay rate; see \cite{Wei06}. 

\item There exists an open set $\Gamma$ of complex numbers containing the interval $[0,\lambda_c(\Delta))$ such that, for all graph $G$ of maximum degree at most $\Delta$, one has $Z_G(\lambda) \neq 0$ whenever $\lambda \in \Gamma$. 
Meanwhile, the (complex) zeros of $Z_G(\lambda)$ can be arbitrarily close to $\lambda_c(\Delta)$ for a graph $G$ of maximum degree at most $\Delta$. See \cite{PR19}.

\item When $\lambda < \lambda_c(\Delta)$, there exists a \emph{fully polynomial-time approximation scheme (FPTAS)} for the partition function $Z_G(\lambda)$ for all graphs $G$ of maximum degree at most $\Delta$ \cite{Wei06,Bar15,PR17}, and the Glauber dynamics for sampling from $\mu_{G,\lambda}$ converges in $O(n\log n)$ steps (see \cref{lem:plan} below). 
Meanwhile, when $\lambda > \lambda_c(\Delta)$ there is no $\fptas/\fpras$ for estimating the partition function for a graph $G$ of maximum degree at most $\Delta$ assuming $\mathsf{RP}\neq \mathsf{NP}$ \cite{Sly10,SlySun,GSV16}, see \cref{thm:hard-core-hardness}; this hardness reduction utilizes as a gadget that the Gibbs distribution on random $\Delta$-regular bipartite graphs is bimodal when $\lambda > \lambda_c(\Delta)$ and hence the Glauber dynamics has exponential mixing time.  
\end{itemize}

\subsubsection{Spectral independence}
Consider the hard-core model on a graph $G=(V,E)$ with fugacity $\lambda>0$. 
Recall, for two distinct vertices $u,v \in V$, the \emph{(pairwise) influence} of $u$ on $v$ is defined by:
\[
\Psi_{G,\lambda}(u \sra v)
 = \mu_{G,\lambda}(v \mid u) - \mu_{G,\lambda}(v \mid \bar{u}).
\]
And for the diagonal entries we set $\Psi_{G,\lambda}(u \sra u) = 1$.
For a pinning $\tau$, we also define the influence matrix $\Psi_{G,\lambda}^\tau$ for the conditional Gibbs distribution $\mu_{G,\lambda}^\tau$, where we let $\Psi_{G,\lambda}^\tau(u \sra v) = 0$ if $\tau$ forces $u$ to be unoccupied (note that in this case $\Psi_{G,\lambda}^\tau(v \sra u) = 0$ by definition). 
Finally, recall the Gibbs distribution $\mu_{G,\lambda}$ is said to be $\eta$-spectrally independent if for any pinning $\tau$, the maximum eigenvalue of the influence matrix $\Psi_{G,\lambda}^\tau$ is at most $1+\eta$. 

The main aim in this section is to establish the following spectral independence result for the hard-core model in the tree-uniqueness regime. 

\begin{theorem}
\label{lem:plan}
For all integer $\Delta \ge 3$ and real $\delta \in (0,1)$, there exists $\eta = \eta(\delta) = O(1/\delta)$, such that 
for any graph $G=(V,E)$ of maximum degree at most $\Delta$, any vertex $u \in V$, and any $\lambda \le (1-\delta) \lambda_c(\Delta)$, it holds
\begin{equation}\label{eq:main_goal}
\sum_{v \in V} \left| \Psi_{G,\lambda}(u \sra v) \right| \le 1+\eta. 
\end{equation}
As a consequence, for any graph $G$ of maximum degree at most $\Delta$ the Gibbs distribution $\mu_{G,\lambda}$ is $\eta$-spectrally independent, and the mixing time of the Glauber dynamics is $O(n \log n)$. 
\end{theorem}

Note, \cref{thm:main-SI-hard-core-uniq} is a corollary of \cref{lem:plan}.
In this monograph, we prove \cref{lem:plan} with a constant $1+\eta\leq 16\sqrt{2}/\delta$.
A better bound was given by \cite[Lemma 3.18]{CCYZ-critical-hard-core} which shows that spectral independence holds with $1+\eta \le 4e(1+1/(\Delta-2))/\delta$.
The constant here matters as it determines the mixing time upper bound of the Glauber dynamics for the \emph{critical} hard-core model, i.e., when $\lambda=\lambda_c(\Delta)$; see \cite{CCYZ-critical-hard-core}.

\subsubsection{Relating influences and occupancy ratios}
Here we give a lemma relating the influences of a vertex $u$ and the occupancy ratio at $u$. 
It is helpful to consider a more general setting where every vertex $v$ has a distinct fugacity $\lambda_v$. 
Let $\blambda = (\lambda_v)_{v \in V}$ be a vector of fugacities, and the hard-core distribution is then defined by, for $\sigma\in\Omega(G)$,
\begin{equation}\label{eq:gibbs}
\mu_{G,\blambda}(\sigma) = \frac{\prod_{v \in \sigma} \lambda_v}{Z_G(\blambda)},
\end{equation}
where the multivariate partition function (independence polynomial) is defined as
\begin{equation}\label{eq:partition-function}
Z_G(\blambda) = \sum_{\sigma \in \Omega(G)} \prod_{v \in \sigma} \lambda_v.
\end{equation}

Viewing the influences and the occupancy ratios as rational functions of $\blambda$, we have the following relationship. 
\begin{claim}
\label{clm:inf-derivative}
For two distinct vertices $u,v \in V$, we have
\[
\Psi_{G}(u \sra v; \blambda) = \frac{\partial \log R_G(u; \blambda)}{\partial \log \lambda_v} = \frac{\lambda_v}{R_G(u; \blambda)} \frac{\partial R_G(u; \blambda)}{\partial \lambda_v}.
\]
\end{claim}

\begin{proof}
Similarly as \cref{eq:ratio-pf}, we have
\begin{equation}\label{eq:ratio-pf2}
R_G(u; \blambda) = \frac{\lambda_u Z_G^u(\blambda)}{Z_G^{\bar{u}}(\blambda)}, 
\end{equation}
and hence
\[
\frac{\partial \log R_G(u; \blambda)}{\partial \log \lambda_v}
= \frac{\partial}{\partial \log \lambda_v} \log\left( Z_G^u(\blambda) \right)
- \frac{\partial}{\partial \log \lambda_v} \log\left( Z_G^{\bar{u}}(\blambda) \right).
\]
We compute that
\begin{align*}
\frac{\partial}{\partial \log \lambda_v} \log\left( Z_G^u(\blambda) \right)
&= \frac{\lambda_v}{Z_G^u(\blambda)} \frac{\partial}{\partial \lambda_v} Z_G^u(\blambda) \\
&= \frac{\lambda_v}{Z_G^u(\blambda)} \sum_{I \in \II(G):\, u \in I} \frac{\partial}{\partial \lambda_v} \prod_{w \in I \setminus \{u\}} \lambda_w \\
&= \frac{\lambda_v}{Z_G^u(\blambda)} \sum_{I \in \II(G):\, u,v \in I} \prod_{w \in I \setminus \{u,v\}} \lambda_w \\
&= \frac{\lambda_v Z_G^{uv}(\blambda)}{Z_G^u(\blambda)}
= \mu_{G,\blambda}(v \mid u).
\end{align*}
Similarly, we have
\[
\frac{\partial}{\partial \log \lambda_v} \log\left( Z_G^{\bar{u}}(\blambda) \right) = \mu_{G,\blambda}(v \mid \bar{u}).
\]
Therefore, we conclude that
\[
\frac{\partial \log R_G(u; \blambda)}{\partial \log \lambda_v} 
= \mu_{G,\blambda}(v \mid u) - \mu_{G,\blambda}(v \mid \bar{u}) 
= \Psi_{G}(u \sra v; \blambda),
\]
as claimed.
\end{proof}

Since the occupancy ratios were intensively studied in previous works for establishing properties like correlation decay or zero-freeness, by \cref{clm:inf-derivative} we can transform these properties or their proof approaches into results for influences and thus establish spectral independence.

\subsection{Spectral Independence via Correlation Decay}
\label{sec:Weitz}

In this section, we show \cref{lem:plan} establishing $\eta$-spectral independence with $\eta = O(1/\delta)$ when $\lambda\leq(1-\delta)\lambda_c(\Delta)$ using an approach based on the strong spatial mixing (correlation decay) property, which appeared in \cite{ALO20,CLV20} and was based on techniques in \cite{Wei06,LLY13}.

\subsubsection{Proof approach}
\begin{itemize}
\item We need to show that for any graph $G=(V,E)$ of maximum degree at most $\Delta$ and any vertex $u \in V$ it holds
\[
\sum_{v \in V} \left| \Psi_{G}(u \sra v) \right| = O(1/\delta). 
\]

\item Fix a graph $G=(V,E)$ of maximum degree at most $\Delta$ and a vertex $u \in V$.  We construct a tree rooted at $u$ called the \emph{self-avoiding walk tree} $T = T_{\textsc{saw}}(G,u)$, which enumerates all self-avoiding walks (in $G$) starting from $u$. 
The tree $T$ is, in general, exponentially large in $n=|V|$, and each vertex of $G$ can appear multiple times in $T$.
The maximum degree of $T$ is at most $\Delta$.

We will define a hard-core model on $T$, such that the following properties hold:
\begin{itemize}
\item The occupancy ratio at $u$ is preserved: $$R_G(u) = R_T(u).$$
\item The influence from $u$ to another vertex $v$ is preserved: $$\Psi_G(u \sra v) = \sum_{w \in \CC_T(v)} \Psi_T(u \sra w),$$ 
where $\CC_T(v)$ denotes the set of all copies of $v$ in $T$. 
\end{itemize}

\item Then, it suffices to show that for any \emph{tree} $T=(V,E)$ of maximum degree at most $\Delta$ and any vertex $u \in V$ it holds
\[
\sum_{v \in V} \left| \Psi_T(u \sra v) \right| = O(1/\delta). 
\]
This will be proved via the so-called \emph{potential function method} \cite{RSTVY13,LLY13}. 
\end{itemize}

\subsubsection{Self-avoiding walk tree}
\label{sub:Tsaw}

We now define the self-avoiding walk tree more formally. 
Fix an arbitrary total order ``$<$'' on the vertices of~$G$. 

\begin{figure}
\centering
\includegraphics[width=0.9\textwidth]{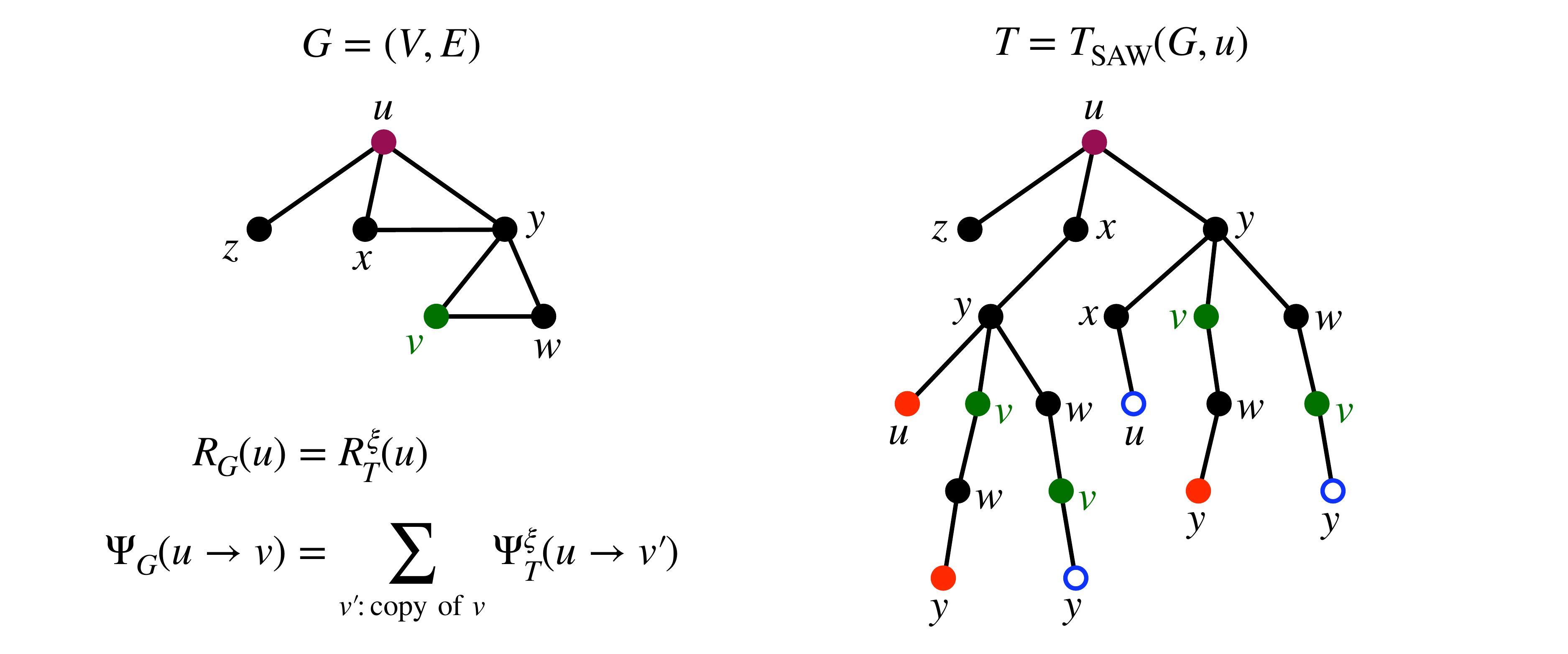}
\caption{An example of the self-avoiding walk tree and the hard-core model on it. Solid red vertices are those fixed to be occupied in the pinning $\xi$, and hollow blue vertices are those fixed to be unoccupied.}
\end{figure}

\begin{definition}[Self-avoiding walk tree]
Let $G = (V,E)$ be a connected graph and $u \in V$ be a vertex. 
The \emph{self-avoiding walk (SAW) tree} $T = T_{\textsc{saw}}(G,u)$ of $G$ rooted at $u$ is a tree consisting of all self-avoiding walks starting from $u$, defined as follows.
\begin{itemize}
\item The root of $T$ is $u$;
\item Every path in $T$ from the root $u$ to a leaf corresponds to a ``maximal'' self-avoiding walk in $G$. 
More precisely, if $u=v_0$-$v_1$-$\cdots$-$v_{\ell-1}$-$v_{\ell} = v$ is a path from $u$ to a leaf $v$, then it corresponds to a walk in $G$ (i.e., $\{v_{i-1},v_i\} \in E$) such that:
\begin{itemize}
\item either $u=v_0, v_1,\dots,v_{\ell-1}, v_\ell = v$ are all distinct vertices (so they form a self-avoiding walk), and $\deg_G(v) = 1$ (so the self-avoiding walk is maximal);
\item or $u=v_0, v_1,\dots,v_{\ell-1}$ are all distinct vertices (so they form a self-avoiding walk), and $v= v_\ell = v_i$ for some $i \le \ell-2$ (so the last vertex $v$ makes a cycle, and the self-avoiding walk is ``maximal'' in some sense). 
\end{itemize}
\end{itemize}
\end{definition}

\begin{remark}
Observe the following basic properties of $T=T_{\textsc{saw}}(G,u)$:
\begin{enumerate}[(1)]
\item The maximum degree of $T$ is the same as that of $G$. 
\item Leaves in $T$ correspond to either ``pendant vertices'' (those of degree $1$) in $G$ or to the completion of a cycle.
\item Each vertex of $G$ possibly appears multiple times in $T$. We denote the set of all copies in $T$ of a vertex $v \in V$ by $\CC_T(v)$. 
\item If $G$ itself is a tree, then $T = G$. However, in general $T$ can be exponentially larger than $G$.  
\end{enumerate}
\end{remark}

Now for the hard-core model defined on $G$ with the fugacity vector $\blambda = (\lambda_v)_{v \in V}$, we define an associated hard-core model on the self-avoiding walk tree with a specific pinning on some leaves.  

\begin{definition}[Hard-core model on $T_{\textsc{saw}}(G,u)$]
\label{def:hc-saw}
Let $G = (V,E)$ be a connected graph and $u \in V$ be a vertex. 
Let $T = T_{\textsc{saw}}(G,u)$ be the SAW tree of $G$ rooted at $u$. 
Define the hard-core model on $T$ with a pinning $\xi$ as follows.
\begin{itemize}
\item For each vertex $v \in V$, every copy $w \in \CC_T(v)$ of $v$ has the same fugacity $\lambda_w = \lambda_v$. 
\item We define a pinning $\xi$ on a subset of leaves in the following way. 
Let $u=v_0$-$v_1$-$\cdots$-$v_{\ell-1}$-$v_{\ell} = v$ be a path in $T$ from $u$ to a leaf $v$.
\begin{itemize}
\item If $\deg_G(v) = 1$, then $v$ is not pinned;
\item Otherwise $v = v_\ell = v_i$ for some $i \le \ell-2$, and we fix $v_i$ to be occupied if $v_{i+1} < v_{\ell-1}$, and unoccupied if instead $v_{i+1} > v_{\ell-1}$.
\end{itemize} 
\end{itemize}
\end{definition}

\begin{remark}
For a path $u$--$v_1$--$\cdots$--$v_{i-1}$--$v$--$v_{i+1}$--$\cdots$--$v_{\ell-1}$--$v$ from the root $u$ to a leaf $v$ such that $v = v_\ell = v_i$, 
there is also a path from $u$ to another copy of $v$ in the SAW tree given by $u$--$v_1$--$\cdots$--$v_{i-1}$--$v$--$v_{\ell-1}$--$\cdots$--$v_{i+1}$--$v$, i.e., the order of the cycle is reversed. 
The pinnings at the two copies of $v$ (both are leaves) are opposite of each other. 
\end{remark}

For each leaf $v$ pinned to be occupied we can remove $v$ and its parent, and for each leaf $v$ pinned to be unoccupied we can remove $v$.  In this way we obtain an equivalent hard-core model on a tree $T'$ without any pinned vertices, and the root vertex $u$ only appears once in $T'$ since any additional occurences of $u$ in $T$ must have been as leaf vertices since they corresponded to closing a cycle. 

Below we still use $\blambda$ to denote the fugacity vector for the hard-core model on $T$. 
The following lemma relates the hard-core models on $G$ and on the corresponding SAW tree $T$. 

\begin{lemma}[\cite{CLV20}]
\label{lem:weitz}
Let $T = T_{\textsc{saw}}(G,u)$ be the SAW tree of $G$ rooted at $u$. 
Consider the hard-core model on $T$ with the pinning $\xi$ as defined in \cref{def:hc-saw}. 
\begin{enumerate}[(1)]
\item $Z_G(\blambda)$ divides $Z_T^\xi(\blambda)$. 
Moreover, there exists a polynomial $P(\blambda)$ \emph{independent of $\lambda_u$}, such that
\[
Z_T^\xi(\blambda) = Z_G(\blambda) P(\blambda). 
\]
\item {\cite{Wei06}} The occupancy ratio at $u$ is preserved:
\[
R_G(u; \blambda) = R_T^\xi(u; \blambda). 
\]
\item The influence of $u$ on another vertex $v$ is preserved:
\[
\Psi_G(u \sra v; \blambda) = \sum_{w \in \CC_T(v)} \Psi_T^\xi(u \sra w; \blambda).
\]
\end{enumerate}
\end{lemma}

\begin{proof}[Proof of ``(1)'']
We will prove (1) by induction on the lexicographic ordering of the following pair of numbers: number of vertices of degree more than one, number of vertices of degree one. 

First suppose $u$ has degree one. Let $w$ be the neighbor of $u$. Let $G'$ be $G$ with vertex $u$ removed. Let $T'=T_{\textsc{saw}}(G',w)$. By the induction hypothesis there exists a polynomial 
$P'(\blambda)$ independent of $\lambda_w$ such that 
\begin{equation}\label{eq:indhyp}
Z_{T'}^\xi(\blambda) = Z_{G'}(\blambda) P'(\blambda). 
\end{equation}
Let $\blambda'$ be $\blambda$ with $\lambda_w$ replaced by $\lambda_w/(1+\lambda_u)$. Note that
\begin{equation}\label{eq:indstep}
Z_{T}^\xi(\blambda) = (1+\lambda_u) Z_{T'}^\xi(\blambda')
\quad\mbox{and}\quad
Z_{G}(\blambda) = (1+\lambda_u) Z_{G'}(\blambda').
\end{equation}
Plugging~\eqref{eq:indstep} into~\eqref{eq:indhyp} we obtain (1) for $G$ and $T$ (since $P'(\blambda)$
is independent of $\lambda_w$ we have $P'(\blambda) = P'(\blambda')$).

Now suppose $u$ has degree $d>1$. Let $w_1,\dots,w_d$ be the neighbors of $v$. Let $G'$ be the graph obtained by removing $u$ form $G$ and by adding $u_1,\dots,u_d$ where $u_i$ is connected to $w_i$ for $i\in [d]$. For $i\in\{0,\dots,d+1\}$ let $\xi'_i$ be a partial assignment of spins to $u_1,\dots,u_d$ where $u_j$ for $j<i$ are assigned unoccupied and $u_j$ for $j>i$ are assigned occupied. Let $G_i$ be the graph that corresponds to the assignment $\xi'_i$ in $G'$ (we obtain $G_i$ from $G'$ by removing $u_j$ for $j<i$ and removing
$u_j$ and $N(u_j)=\{w_j\}$ for $j>i$). Let $T_i=T_{\textsc{saw}}(G',u_i)$. Note that $T_i$ is a subtree of $T$ (it includes the root, one child of the root, and the entire subtree rooted at the child) and the pinning $\xi_i$ on $T_i$ defined by~\cref{def:hc-saw} is compatible with $\xi'_i$. 
By the induction hypothesis there exist polynomials $P_i$ independent of $\lambda_{u_i}$ such that
\begin{equation}\label{eq:eee0}
Z_{T_i}^{\xi_i}(\blambda) = Z_{G'}^{\xi_i'}(\blambda) P_i(\blambda) = Z_{G_i}(\blambda) P_i(\blambda). 
\end{equation}
Recall that $Z_G^u(\blambda)$ is $(\partial/\partial \lambda_u)Z_G(\blambda)$, that is, the part of $Z_G(\blambda)$ with $u$ occupied and $Z_G^{\bar{u}}(\blambda)$ is $Z_G(\blambda)|_{\lambda_u=0}$, that is, the part of $Z_G(\blambda)$ with $u$ unoccupied. Note that we have
\begin{equation}\label{eq:eee1}
Z_G^{u}(\blambda) = Z_{G'}^{\xi'_0}(\blambda) = Z_{G'}^{\xi'_1,u_1}(\blambda) \quad\mbox{and}\quad
Z_G^{\bar{u}}(\blambda) = Z_{G'}^{\xi'_{d+1}}(\blambda) = Z_{G'}^{\xi'_{d},\bar{u_d}}(\blambda).
\end{equation}
We also have
\begin{equation}\label{eq:eee2}
Z_T^{\xi,u} = \prod_{i=1}^d Z_{T_i}^{\xi_i,u_i}(\blambda),\quad
 Z_T^{\xi,\bar{u}} = \prod_{i=1}^d Z_{T_i}^{\xi_i,\bar{u_i}}(\blambda).
\end{equation}
Since $P_i$ does not depend on $\lambda_{u_i}$ from~\eqref{eq:eee0} we have
\begin{equation}\label{eq:eee3}
Z_{T_i}^{\xi_i,u_i}(\blambda) = Z_{G'}^{\xi_i',u_i}(\blambda) P_i(\blambda)\quad\mbox{and}\quad
Z_{T_i}^{\xi_i,\bar{u_i}}(\blambda) = Z_{G'}^{\xi_i',\bar{u_i}}(\blambda) P_i(\blambda).
\end{equation}
Plugging~\eqref{eq:eee3} into~\eqref{eq:eee2} we obtain 
$$
Z_T^{\xi,u} = \prod_{i=1}^d Z_{G'}^{\xi_i,u_i}(\blambda) P_i(\blambda) = 
Z_G^u \prod_{i=2}^d Z_{G'}^{\xi_i,u_i}(\blambda) \prod_{i=1}^d P_i(\blambda).
$$
Similarly we have
$$
Z_T^{\xi,\bar{u}} = \prod_{i=1}^d Z_{G'}^{\xi_i,\bar{u_i}}(\blambda) P_i(\blambda)
= Z_G^{\bar{u}} \prod_{i=1}^{d-1} Z_{G'}^{\xi_i,\bar{u_i}}(\blambda) \prod_{i=1}^d P_i(\blambda) = 
Z_G^{\bar{u}} \prod_{i=2}^{d} Z_{G'}^{\xi_i,u_i}(\blambda) \prod_{i=1}^d P_i(\blambda).
$$
Letting $P(\blambda) = \prod_{i=2}^{d} Z_{G'}^{\xi_i,u_i}(\blambda) \prod_{i=1}^d P_i(\blambda)$ we have
\begin{align*}
    Z_T^{\xi} = Z_T^{\xi,\bar{u}} + \lambda_u Z_T^{\xi,u} = \left(Z_G^{\bar{u}} + \lambda_u Z_G^{u}\right) P(\blambda) = Z_G P(\blambda). &\qedhere
\end{align*}
\end{proof}

\begin{proof}[Proof of ``(1) $\Rightarrow$ (2)'']
Note that
\begin{align*}
R_G(u; \blambda) = \lambda_u \frac{Z_G^u(\blambda)}{Z_G^{\bar{u}}(\blambda)} =
\lambda_u \frac{Z_G^u(\blambda) P(\blambda) }{Z_G^{\bar{u}}(\blambda) P(\blambda)}
=  \lambda_u \frac{Z_T^{\xi,u}(\blambda)}{Z_T^{\xi,\bar{u}}(\blambda)} = R_T^\xi(u; \blambda).
&\qedhere
\end{align*}
\end{proof}

\begin{proof}[Proof of ``(2) $\Rightarrow$ (3)'']
We deduce from \cref{clm:inf-derivative} and the chain rule that
\begin{align}
\Psi_G(u \sra v; \blambda)
&= \frac{\partial \log R_G(u; \blambda)}{\partial \log \lambda_v} \tag{\cref{clm:inf-derivative}} \\
&= \frac{\partial \log R_T^\xi(u; \blambda)}{\partial \log \lambda_v} \tag{Part (2)} \\
&= \sum_{w \in \CC_T(v)} \frac{\partial \log R_T^\xi(u; \blambda)}{\partial \log \lambda_w} \frac{\partial \log \lambda_w}{\partial \log \lambda_v} \tag{Chain rule} \\
&= \sum_{w \in \CC_T(v)} \Psi_T^\xi(u \sra w; \blambda) \tag{\cref{clm:inf-derivative}},
\end{align}
which shows Part (3). 
\end{proof}

Note, for a vertex $w$ in $G$, if $w$ is fixed by $\xi$, then $\Psi_T^\xi(u \sra w; \blambda) = 0$ by definition.

\begin{corollary}
\label{cor:tsaw}
By the triangle inequality, we deduce that
\[
\sum_{v \in V(G)} \left| \Psi_G(u \sra v) \right| 
\le 
\sum_{w \in V(T)} \left| \Psi_T(u \sra w) \right|.
\]
Hence, we reduce the problem to trees (albeit a possibly exponentially large tree). 
\end{corollary}

\subsubsection{Bounding influences on trees}
In this subsection, we bound the absolute sum of influences of the root on bounded-degree trees. 
In particular, we show the following result of Chen, Liu, and Vigoda~\cite{CLV20}.

\begin{lemma}
\label{lem:tree-inf}
Let $T=(V,E)$ be a tree of maximum degree at most $\Delta$ rooted at $u$. For an integer $k \in \N^+$ and a vertex $v \in V$, let $L_v(k)$ denote the set of all descendants at distance $k$ from $v$. 
Then for all $k \in \N^+$ we have
\[
\sum_{v \in L_u(k)} \left| \Psi_T(u \sra v) \right| \le  2\sqrt{2}(1-\delta/8)^{k}.
\]
\end{lemma}

The following claim is helpful to us. 

\begin{claim}
\label{clm:tree-inf}
Let $T=(V,E)$ be a tree and $u,v \in V$ be two distinct vertices. If $w$ is a vertex on the unique path from $u$ to $v$, then
\[
\Psi_T(u \sra v) = \Psi_T(u \sra w) \,\Psi_T(w \sra v).
\]
\end{claim}

\begin{proof}
Using the Markov property of the hard-core model (i.e., conditional on the value of $w$, the two vertices $u$ and $v$ are independent), we deduce that
\begin{align*}
\Psi_T(u \sra v)
={}& \mu_T(v \mid u) - \mu_T(v \mid \bar{u}) \\
={}& \mu_T(w \mid u) \mu_T(v \mid w) + \mu_T(\bar{w} \mid u) \mu_T(v \mid \bar{w}) \\
&- \mu_T(w \mid \bar{u}) \mu_T(v \mid w) - \mu_T(\bar{w} \mid \bar{u}) \mu_T(v \mid \bar{w}) \\
={}& \left( \mu_T(w \mid u) - \mu_T(w \mid \bar{u}) \right) \mu_T(v \mid w) - \left( \mu_T(\bar{w} \mid \bar{u}) - \mu_T(\bar{w} \mid u) \right) \mu_T(v \mid \bar{w}) \\
={}& \left( \mu_T(w \mid u) - \mu_T(w \mid \bar{u}) \right) \left( \mu_T(v \mid w) - \mu_T(v \mid \bar{w}) \right) \\
={}& \Psi_T(u \sra w) \,\Psi_T(w \sra v),
\end{align*}
as claimed.
\end{proof}

For an integer $k\geq 1$ and a vertex $u$, recall $L_u(k)$ is the set of all descendants at distance $k$ from $u$. 
We can apply \cref{clm:tree-inf} to build an inductive approach for bounding the sum of the influences from a vertex $u$ as follows:
\begin{align}
\nonumber
\sum_{v \in L_u(k)} \left| \Psi_T(u \sra v) \right| 
&= \sum_{w \in L_u(k-1)} \sum_{v \in L_w(1)} \left| \Psi_T(u \sra v) \right| \\
&= \sum_{w \in L_u(k-1)} \left| \Psi_T(u \sra w) \right| \sum_{v \in L_w(1)} \left| \Psi_T(w \sra v) \right|.
\label{eq:tree-expand}
\end{align}
Hence, an easy approach to establish \cref{lem:tree-inf} is to show for all $w$,
\begin{equation}\label{eq:ind?}
\sum_{v \in L_w(1)} \left| \Psi_T(w \sra v) \right| \le (1-\delta/8).
\end{equation}
Then \cref{lem:tree-inf} follows by induction.
But such an approach does not work since \cref{eq:ind?} is not true for all $\lambda \le (1-\delta) \lambda_c(\Delta)$, though it is easy to check that \cref{eq:ind?} holds when $\lambda \le \frac{1-\delta}{\Delta-2}$ which is below the uniqueness threshold.
Note, for the Ising model (see \cref{sub:Ising}), the analog of \cref{eq:ind?} holds for all $\beta$ where $|\beta| < \beta_c(\Delta)$, and hence we do not need to consider a potential function for the Ising model.

To prove \cref{lem:tree-inf} we use the \emph{potential function method}, which establishes a generalization of \cref{eq:ind?}.

\begin{proof}[Proof of \cref{lem:tree-inf}]
The multivariate tree recursion is a function $F = F_{d,\lambda}: \R_{\ge 0}^d \to \R_{\ge 0}$ given by 
\[
R = F(R_1,\dots,R_d) := \frac{\lambda}{\prod_{i=1}^d (1+R_i)},
\]
which means that for a tree rooted at $w$ with $d$ children $v_1,\dots,v_d$, the occupancy ratio $R_T(w)$ at the root is given by $R_T(w) = F(R_{T_1}(v_1),\dots,R_{T_d}(v_d))$ where $T_i$ is the subtree rooted at $v_i$ and $R_{T_i}(v_i)$ is the root occupancy ratio of $T_i$. 
It would be helpful to consider the logarithm of occupancy ratios in the spirit of \cref{clm:inf-derivative}. 
Writing $x = \log R$ and $x_i = \log R_i$, we define a multivariate function $H = H_{d,\lambda}: \R^d \to \R$ by
\[
x = H(x_1,\dots,x_d) := \log \lambda - \sum_{i=1}^d \log(1+e^{x_i}). 
\]
One can check that
\[
\frac{\partial H}{\partial x_i} = \Psi_T(w \sra v_i)
\]
which is similar to \cref{clm:inf-derivative}, 
and therefore
\[
\norm{\grad H}_1 = \sum_{i=1}^d \left| \Psi_T(w \sra v_i) \right|.
\]

Let $\varphi: \R \to \R$ be a suitable potential function that is monotone increasing, and we consider the tree recursion composed with $\varphi$. 
That is, let $y = \varphi(x)$ and $y_i = \varphi(x_i)$, and then for $H^\varphi = \varphi \circ H \circ \varphi^{-1}$ we have
\[
y = H^\varphi(y_1,\dots,y_d).
\]
Moreover, it is easy to check that
\begin{equation}\label{eq:deriv_H}
\norm{\grad H^\varphi}_1 = \sum_{i=1}^d \frac{\varphi'(x)}{\varphi'(x_i)} \left| \Psi_T(w \sra v_i) \right|. 
\end{equation}
Therefore, we have the following:
\begin{align}
\nonumber
\sum_{v \in L_u(k)} \left| \Psi_T(u \sra v) \right| 
&=
\sum_{v \in L_u(k)} \frac{\varphi'(x_u)}{\varphi'(x_v)} \left| \Psi_T(u \sra v) \right|
\\
&= \sum_{w \in L_u(k-1)} \frac{\varphi'(x_u)}{\varphi'(x_w)} \left| \Psi_T(u \sra w) \right| \sum_{v \in L_w(1)} \frac{\varphi'(x_w)}{\varphi'(x_v)} \left| \Psi_T(w \sra v) \right| 
& \mbox{by \cref{eq:tree-expand}}
\end{align}
It remains to choose a suitable potential function. 
We use an ingenious potential function given in \cite{LLY13}:
\[
\varphi(x) = \log\left( e^{x/2} + \sqrt{e^x+1} \right).
\]
Note that we have
\begin{equation}
\label{phi-interval}
\varphi'(x) = \sqrt{\frac{e^{x/2}}{e^{x/2} +1 }} \in \left[\frac{1}{\sqrt{2}},1\right].
\end{equation}
We will prove the following contraction for this potential function.
\begin{claim}
    \label{claim:contraction}
For any integer $d \le \Delta -1$, and for any $x_i$ and $x = H(x_1,\dots,x_d)$, for any $\delta>0$, for $\lambda\leq (1-\delta)\lambda_c(\Delta)$, 
\begin{align}\label{eq:contraction}
\norm{\grad H^\varphi}_1 \le (1-\delta/8), 
\end{align}
and when $d=\Delta$ then we have $\norm{\grad H^\varphi}_1 \le 2(1-\delta/8)$.
\end{claim}

Using \cref{claim:contraction} we will prove by induction that for every integer $k\geq 1$:
\begin{equation}\label{eq:induction}
\sum_{v \in L_u(k)} \frac{\varphi'(x_u)}{\varphi'(x_v)} \left| \Psi_T(u \sra v) \right| \le 2(1-\delta/8)^k
\end{equation}
For the base case $k = 1$ we have the following bound by \cref{claim:contraction}:
\begin{equation}
    \label{eq:base-case-cor-decay}
\sum_{v \in L_u(1)} \frac{\varphi'(x_u)}{\varphi'(x_v)} \left| \Psi_T(u \sra v) \right| = \norm{\grad H^\varphi}_1 \le 2(1-\delta/8).
\end{equation}
Now suppose \cref{eq:induction} holds for $k-1$, and then we have:
\begin{align*}
& \sum_{v \in L_u(k)} \frac{\varphi'(x_u)}{\varphi'(x_v)} \left| \Psi_T(u \sra v) \right| \\
={}& \sum_{w \in L_u(k-1)} \frac{\varphi'(x_u)}{\varphi'(x_w)} \left| \Psi_T(u \sra w) \right| \sum_{v \in L_w(1)} \frac{\varphi'(x_w)}{\varphi'(x_v)} \left| \Psi_T(w \sra v) \right| 
& \mbox{by \cref{claim:contraction}}
\\
\le{}& \sum_{w \in L_u(k-1)} \frac{\varphi'(x_u)}{\varphi'(x_w)} \left| \Psi_T(u \sra w) \right| \cdot (1-\delta/8)
& \mbox{by \cref{eq:base-case-cor-decay}}
\\
\le{}& 2(1-\delta/8)^{k} &\mbox{by induction.} 
\end{align*}
We can now complete the proof of the lemma as follows:
\begin{align*}
\sum_{v \in L_u(k)} \left| \Psi_T(u \sra v) \right|  & = \sum_{v \in L_u(k)} \frac{\varphi'(x_w)} {\varphi'(x_u)}\frac{\varphi'(x_u)}{\varphi'(x_w)}  \left| \Psi_T(u \sra v) \right| 
\\
& \leq \frac{\max_x \varphi'(x)}{\min_x \varphi'(x)} \sum_{v \in L_u(k)} \frac{\varphi'(x_u)}{\varphi'(x_w)}  \left| \Psi_T(u \sra v) \right| 
\\
& \leq \sqrt{2}\sum_{v \in L_u(k)} \frac{\varphi'(x_u)}{\varphi'(x_w)}  \left| \Psi_T(u \sra v) \right| 
& \mbox{by \cref{phi-interval}}
\\
& \leq 2\sqrt{2}(1-\delta/8)^k, & \mbox{by \cref{eq:induction}}
\end{align*}
as claimed.
\end{proof}

It remains to prove \cref{claim:contraction} which we do so now.

\begin{proof}[Proof of \cref{claim:contraction}]
Note that the root $u$ can have $\Delta$ children while any other vertex has at most $\Delta-1$ children.

First, observe that:
\begin{equation}\label{eq:gradient-cor-decay}
\norm{\grad H^\varphi}_1 = \sum_{i=1}^d \sqrt{\frac{e^x}{e^x+1}} \sqrt{\frac{e^{x_i}}{e^{x_i} + 1}}.
\end{equation}
Recalling $x = \log R$, $x_i = \log R_i$ and writing $p = R/(R+1)$, $p_i = R_i/(R_i+1)$ for the occupancy probability, and letting $q = \frac{1}{d} \sum_{i=1}^d p_i$, then we deduce by the Cauchy-Schwarz inequality that
\begin{align}
\label{almost-111}
\norm{\grad H^\varphi}_1^2 
= \left( \sum_{i=1}^d \sqrt{p p_i} \right)^2 \le d p \sum_{i=1}^d p_i = d^2pq.
\end{align}
The tree recursion, in terms of the occupancy probability, is given by
\begin{align}
\label{almost-222}
\frac{p}{1-p} = \lambda \prod_{i=1}^d (1-p_i) \le \lambda \left( 1 - \frac{1}{d} \sum_{i=1}^d p_i \right)^d = \lambda(1-q)^d,
\end{align}
where the inequality is by the arithmetic-geometric mean inequality (AM-GM).  
Combining \cref{almost-111,almost-222} yields:
\begin{equation}
\label{ineq:DDD}
\norm{\grad H^\varphi}_1^2 \le d^2 p q \le d^2 q \cdot \frac{\lambda (1-q)^d}{\lambda (1-q)^d + 1},
\end{equation}
where we used the fact that $p/(1-p)\leq x$ is equivalent to $p\leq x/(1+x)$.

To bound \cref{ineq:DDD} we use the following property of the fixpoints of the tree recursions. 
\begin{claim}
    \label{claim:pstar}
    Let $p^*$ be the fixed point of the (univariate) tree recursion such that $\lambda = p^*/(1-p^*)^{d+1}$.
For    $\lambda \le (1-\delta) \lambda_c(\Delta)$ we have
\[
    p^*\leq \frac{1-\delta/4}{d}.
\]
\end{claim}

 Now we prove that
\begin{align}\label{eq:cctech}
q \cdot \frac{\lambda (1-q)^d}{\lambda (1-q)^d + 1} \le \frac{p^*}{d}.
\end{align}
We observe~\cref{eq:cctech}  is equivalent to 
\begin{align*}
\lambda (1-q)^d \le \frac{p^*}{dq - p^*} 
\quad\Leftrightarrow \quad
(dq-p^*)(1-q)^d \le (1-p^*)^{d+1}.
\end{align*} 
By the AM--GM inequality, we have
\begin{align*}
(dq-p^*)(1-q)^d \le \left( \frac{(dq-p^*) + d(1-q)}{d+1} \right)^{d+1} = \left( \frac{d-p^*}{d+1} \right)^{d+1} \le (1-p^*)^{d+1},
\end{align*}
where the last inequality follows from $p^* \le 1/d$, which is implied by \cref{claim:DDD}. This establishes \cref{eq:cctech}.

Finally, \cref{eq:contraction}, and hence \cref{claim:contraction}, follows from plugging the bound from \cref{eq:cctech} and \cref{claim:pstar} into \cref{ineq:DDD}.
\begin{align*}
\norm{\grad H^\varphi}_1^2 & \le d^2 q \cdot \frac{\lambda (1-q)^d}{\lambda (1-q)^d + 1}.
& \mbox{by \cref{ineq:DDD}}
\\
& \leq dp^* 
& \mbox{by \cref{eq:cctech}}
\\
& \leq (1-\delta/4) 
& \mbox{by \cref{claim:pstar}.}
\end{align*}

Since $\norm{\grad H^\varphi}_1^2  \le (1-\delta/4)$ then we have that $\norm{\grad H^\varphi}_1  \le (1-\delta/8)$, which completes the proof of \cref{lem:tree-inf}.
\end{proof}

Using  \cref{lem:tree-inf} we can complete the proof of the main result (\cref{lem:plan}) of this section.

\begin{proof}[Proof of \cref{lem:plan}]
We can now conclude the main technical result \cref{eq:main_goal} in \cref{lem:plan}.
\begin{align*}
\sum_{v \in V(G)} \left| \Psi_{G,\lambda}(u \sra v) \right| & \le  
\sum_{w \in V(T)} \left| \Psi_T(u \sra w) \right|.
& \mbox{by \cref{cor:tsaw}}
\\
& = \sum_k \sum_{w\in L_u(k)} \left| \Psi_T(u \sra w) \right| 
\\
& \le 2\sqrt{2}\sum_k (1-\delta/8)^k 
&\mbox{by \cref{lem:tree-inf}}
\\
& \le \frac{16 \sqrt{2}}{\delta},
\end{align*}
which proves \cref{eq:main_goal} in \cref{lem:plan} with $1+\eta=16\sqrt{2}/\delta$.

To see why \cref{eq:main_goal} implies spectral independence, recall from \cref{lem:rowsum} in \cref{sub:rowsum} that for any graph $G$ of maximum degree at most $\Delta$, the maximum eigenvalue of $\Psi_{G,\lambda}$ is upper bounded by $\norm{\Psi_{G,\lambda}}_\infty = \max_{u \in V} \sum_{v \in V} \left| \Psi_{G,\lambda}(u \sra v) \right|$, which is at most $1+\eta$ by \cref{eq:main_goal}. 
Meanwhile, for any pinning $\tau \in \{0,1\}^\Lambda$, the conditional Gibbs distribution corresponds to the hard-core model on a smaller graph (removing $\Lambda$ and neighbors of occupied vertices in $\Lambda$) which also has maximum degree at most $\Delta$. 
Since \cref{eq:main_goal} applies to all graphs of maximum degree at most $\Delta$, the maximum eigenvalue of $\Psi_{G,\lambda}^\tau$ is also at most $1+\eta$.
Hence, $\eta$-spectral independence follows and then optimal mixing of the Glauber dynamics follows from \cref{thm:SI-constant-mix}.
\end{proof}

\begin{proof}[Proof of \cref{claim:pstar}]
It is enough to show that for $p = \frac{1-\delta/4}{d}$ and $\lambda = (1-\delta)\frac{(\Delta-1)^{\Delta-1}}{(\Delta-2)^\Delta}$ we have
\begin{equation}\label{eeer}
p > \lambda (1 -p)^{\Delta},
\end{equation}
since the fixpoint cannot occur in $[p,\infty)$ (note that the LHS of~\eqref{eeer} is increasing and the RHS of~\eqref{eeer} is decreasing).
Plugging in the values of $p$ and $\lambda$ in~\eqref{eeer} and simplifying we obtain the following equivalent inequality.
\begin{equation}\label{eeer2}
1-\delta/4 > (1-\delta) \left(1 + \frac{\delta/4}{\Delta-2}\right)^\Delta
\end{equation}
The RHS of~\eqref{eeer2} is maximized for $\Delta=3$ and hence it is enough to show
$$
1-\delta/4 > (1-\delta) \left(1 + \delta/4\right)^3,
$$
which is equivalent to $\delta^2 (\delta^2+11\delta+36)>0$.
\end{proof}

\begin{remark}
Another choice of the potential function is from \cite{RSTVY13}:
\[
\varphi(x) = \log\left( e^x + \frac{1}{\Delta} \right).
\]
\end{remark}

\subsection{Spectral Independence via Zero-Freeness}
\label{sec:stability}


In this section, we give a proof of a weaker version \cref{lem:plan} using the zero-freeness of the partition function; the version proved here is weaker in the sense that it does not have an explicit bound on $\eta$.
The approach here is based on \cite{AASV21} with appropriate modifications and generalizations; see also \cite{CLV21zerofree} for a more general setting. 

\subsubsection{Some preliminaries}

For a complex number $\zeta \in \C$ and a real number $r > 0$, let
\[
\DD(\zeta,r) = \{z \in \C: |z-\zeta| < r\}
\]
be the open disk around $\zeta$ of radius $r$. 
Furthermore, for a subset $A \subseteq \C$ of complex numbers define
\[
\DD(A,r) = \bigcup_{\zeta \in A} \DD(\zeta,r). 
\]
Let $\overline{\DD}(\zeta,r)$ and $\overline{\DD}(A,r)$ denote their closure.

Consider the multivariate independence polynomial defined by \cref{eq:partition-function}. 
The following stability (zero-freeness) result was established by Peters and Regts~\cite{PR19} and is the basis of our approach.

\begin{theorem}[{\cite[Theorem 4.2]{PR19}}]
\label{thm:zero-free}
Let $\Delta \ge 3$ be an integer. 
For any $\delta \in (0,1)$, there exists $\eps = \eps(\delta) > 0$ such that for any graph $G$ of maximum degree at most $\Delta$, we have
$Z(\blambda) \neq 0$
whenever $\lambda_v \in \DD( [0,(1-\delta)\lambda_c(\Delta)], \eps )$ for each vertex $v$. 
\end{theorem}

The zero-freeness result of \cref{thm:zero-free} yielded an $\fptas$ for partition function for all graphs of maximum degree $\Delta$ when $\lambda\leq (1-\delta)\lambda_c(\Delta)$ for $\delta>0$ via the algorithm of Patel and Regts~\cite{PR17} which builds upon the Taylor polynomial interpolation method introduced by Barvinok~\cite{Bar15,Bar16,Bar17book,PRsurvey}.

We also need the following lemma from complex analysis. 
Recall, a holomorphic function on an open set $U\subseteq\C$ is differentiable at every point in $U$.
\begin{lemma}[Schwarz--Pick Lemma]
\label{lem:Schwarz}
Let $f: \DD(0,1) \to \overline{\DD}(0,1)$ be a holomorphic function. 
Then 
\[
|f'(0)| \le 1 - |f(0)|^2 \le 1. 
\] 
\end{lemma}

\subsubsection{Proof approach}
\label{subsec:approach-0}

We need to show that for any graph $G$ of maximum degree at most $\Delta$ and any vertex $u \in V$, it holds
\[
\sum_{v \in V} \left| \Psi_{G,\lambda}(u \sra v) \right| = O(1). 
\]
At a high level, the proof proceeds in the following manner:

\begin{enumerate}
\item Consider the multivariate case where every vertex has its own fugacity. For a complex number $\zeta \in \C$, define $\blambda(\zeta)$ to be some perturbation of the fugacity vector such that: 
\begin{enumerate}
\item $\blambda(0) = \lambda \allone$ is the uniform fugacity vector.
\item Consider the complex function 
\[
f(\zeta) = \lambda \log \big( R_G\left(u;\blambda(\zeta)\right) \big),
\]
where $R_G$ is the occupancy ratio at $u$ defined in \cref{eq:ratio-pf}.
The derivative of $f$ is a rational function and \cref{thm:zero-free} implies that its denominator does not vanish in $\DD(0,\varepsilon)$ and hence $f$ is holomorphic on $\DD(0,\varepsilon)$.
\item For a suitable choice of $\blambda(\zeta)$ we will show, using \cref{clm:inf-derivative}, that we have
\[
f'(0) = \sum_{v \in V} \left| \Psi_{G,\lambda}(u \sra v) \right|.
\]
\end{enumerate}
In the actual proof we define $f$ differently from above, so that it is easier to describe the image of $f$ as needed in the next step. 

\item We need to show the image of $f$ is contained in $\overline{\DD}(0,B)$ where $B > 0$ is a constant. 
Then, \cref{lem:Schwarz} applied to the function $g(z) = \frac{1}{B} f(\eps z)$, implies that
\[
\sum_{v \in V} |\Psi_{G,\lambda}(u \sra v)| = f'(0) = \frac{B}{\eps} g'(0) \le \frac{B}{\eps}. 
\]
\end{enumerate}

\subsubsection{Proofs}

Now we present the proof of \cref{lem:plan}; our bound here on spectral independence depends on the radius of the zero-free region that appears in \cref{thm:zero-free} which implicitly depends on $\delta$.

\begin{proof}[Proof of (the weaker version of) \cref{lem:plan}]

Fix the graph $G$ and the vertex $u$. 
For each $v \neq u$ define
\[
s_v = \sgn\left(\Psi_{G,\lambda}(u \sra v) \right) :=
\begin{cases}
1, &\Psi_{G,\lambda}(u \sra v) \ge 0;\\
-1, &\Psi_{G,\lambda}(u \sra v) < 0.
\end{cases} 
\]
Note that $|\Psi_{G,\lambda}(u \sra v)| = s_v \Psi_{G,\lambda}(u \sra v)$. 
We then define the perturbed fugacity vector $\blambda(\zeta)$ by $\lambda_v(\zeta) = \lambda + s_v \zeta$ for $v \neq u$ and $\lambda_u(\zeta) = \lambda$.
Consider the complex function 
\[
f(\zeta) = \frac{\lambda}{R_{G,\lambda}(u)} R_G\big( u; \blambda(\zeta) \big).
\]

\begin{claim}
\label{clm:domain}
For $\eps$ given by \cref{thm:zero-free}, the complex function $f$ is holomorphic in $\DD(0,\eps)$. 
\end{claim}

\begin{proof}
As in \cref{eq:ratio-pf,eq:ratio-pf2}, we can write
\[
R_G\big( u; \blambda(\zeta) \big) = \frac{\lambda Z_G^u\big( \blambda(\zeta) \big)}{Z_G^{\bar{u}}\big( \blambda(\zeta) \big)}. 
\]
For any $\zeta \in \DD(0,\eps)$, we observe that $\lambda_v(\zeta) = \lambda + s_v \zeta \in \DD(\lambda,\eps) \subseteq \DD( [0,(1-\delta)\lambda_c(\Delta)], \eps )$ for any $v \neq u$, and hence, by \cref{thm:zero-free}, $Z_G^{\bar{u}}\big( \blambda(\zeta) \big) \neq 0$  (note that $Z_G^{\bar{u}}$ is the independence polynomial for the graph $G \setminus u$ of maximum degree at most $\Delta$). A rational function is holomorphic if the denominator is zero-free. 
Thus, $R_G\big( u; \blambda(\zeta) \big)$ is holomorphic in $\DD(0,\eps)$ and so is $f$ (since the term $\lambda/R_{G,\lambda}(u)$ is a real number).
\end{proof}

\begin{claim}
\label{clm:derivative-f}
We have
\[
f'(0) = \sum_{v \in V} \left| \Psi_{G,\lambda}(u \sra v) \right|. 
\]
\end{claim}

\begin{proof}
By the chain rule, we have
\begin{align*}
f'(0) &= \frac{\lambda}{R_{G,\lambda}(u)} \frac{\dif}{\dif \zeta} R_G\big( u; \blambda(\zeta) \big) \bigg|_{\zeta = 0} \\
&= \sum_{v \in V \setminus u} 
\underbrace{ \frac{\lambda}{R_{G,\lambda}(u)} \left(\frac{\partial}{\partial \lambda_v} R_G\big( u; \blambda(\zeta) \big)\right) \bigg|_{\zeta = 0} }_{= \Psi_{G,\lambda}(u \text{\tiny$\,\rightarrow\,$} v) \text{~by \cref{clm:inf-derivative}}}
\underbrace{ \left( \frac{\dif \lambda_v}{\dif \zeta} \right) \bigg|_{\zeta = 0} }_{= s_v} \\
&= \sum_{v \in V \setminus u} \left| \Psi_{G,\lambda}(u \sra v) \right|,
\end{align*}
as claimed. 
\end{proof}

\begin{claim}
The image of $f$ is contained in $\overline{\DD}\big(0,\lambda^2/(\eps R_{G,\lambda}(u)) \big)$. 
\end{claim}

\begin{proof}
Observe that 
\begin{align*}
& R_G\big( u; \blambda(\zeta) \big) = y \\
\Longleftrightarrow\qquad& \frac{\lambda Z_G^u\big( \blambda(\zeta) \big)}{Z_G^{\bar{u}}\big( \blambda(\zeta) \big)} = y \\
\Longleftrightarrow\qquad& \left(-\frac{\lambda}{y}\right) Z_G^u\big( \blambda(\zeta) \big) + Z_G^{\bar{u}}\big( \blambda(\zeta) \big) = 0\\
\Longleftrightarrow\qquad& Z_G\big( \boldsymbol{\rho}(\zeta) \big) = 0,
\end{align*}
where 
\[
\rho_v(\zeta) 
= 
\begin{cases}
\lambda_v(\zeta), & v \neq u; \\
-\frac{\lambda}{y}, & v = u.
\end{cases}
\]
Hence, we deduce from \cref{thm:zero-free} that 
\[
-\frac{\lambda}{y} \notin \DD( [0,(1-\delta)\lambda_c(\Delta)], \eps ).
\]
In particular,
\[
-\frac{\lambda}{y} \notin \DD(0,\eps)
\quad\Longrightarrow\quad y \in \overline{\DD}\left(0,\frac{\lambda}{\eps}\right). 
\]
The claim then follows. 
\end{proof}

Let $B=\lambda^2/(\eps R_{G,\lambda}(u))$, and let $g(z) = \frac{1}{B} f(\eps z)$.
We can now conclude the proof of (the weaker version of) \cref{lem:plan} as follows:
\begin{align*}
\sum_{v \in V} |\Psi_{G,\lambda}(u \sra v)| &= 
f'(0) 
& \mbox{by \cref{clm:derivative-f}}
\\
&= 
\frac{B}{\eps} g'(0) 
& \mbox{by the definition of $g(z)$}
\\
& \le \frac{B}{\eps} 
& \mbox{by \cref{lem:Schwarz}}
\\
& = 
\frac{\lambda^2}{\eps^2 R_{G,\lambda}(u)} 
& \mbox{by the definition of $B$}
\\ &
= O(\lambda/\eps^2), 
\end{align*}
where the last line follows from $R_{G,\lambda}(u) \ge \lambda/(1+\lambda)^\Delta = \Omega(\lambda)$ when $\lambda = O(1/\Delta)$, which is the case when $\lambda$ is in the tree-uniqueness regime. 
This completes the proof of (the weaker version of) \cref{lem:plan} as intended.
\end{proof}

\begin{remark}
Note the above proof establishes $\eta$-spectral independence for $\eta=O(\lambda/\eps^2)$ where $\eps=\eps(\delta)$ is the zero-free radius that appears in \cref{thm:zero-free}.
\end{remark}

\subsection{Spectral Independence via Coupling Independence}
\label{sec:CI}

The coupling independence approach was introduced in Chen and Zhang~\cite{CZ23}, though it was influenced by earlier works utilizing disagreement percolation to establish Gibbs uniqueness~\cite{vS94} and recursive coupling to prove strong spatial mixing~\cite{GMP05}.
Since its introduction, coupling independence has been widely applied due to its simplicity and convenience.
Roughly speaking, it aims to construct a coupling between two conditional Gibbs distributions $\mu_{G,\lambda}^u$ where we fix a vertex $u$ to be occupied, and $\mu_{G,\lambda}^{\bar{u}}$ where we fix $u$ to be unoccupied, such that the expected Hamming distance of the two samples is as small as possible.

For $\sigma,\tau \in \{0,1\}^V$, the Hamming distance between them is
\begin{align*}
    d_{\mathrm{H}}(\sigma,\tau) := \sum_{v \in V} \ind\{\sigma(v) \neq \tau(v)\} = \norm{\sigma-\tau}_1.
\end{align*}

The Wasserstein distance between two distributions $\mu,\nu$, denoted as $W_1(\mu,\nu)$, is the expected Hamming distance under the optimal coupling between $\mu$ and $\nu$.  Hence, to upper bound the Wasserstein distance it suffices to provide a coupling and upper bound the expected Hamming distance under this coupling.

\begin{definition}[Wasserstein Distance $W_1$]
    For two distributions $\mu,\nu$ over $\{0,1\}^V$, the \emph{$1$-Wasserstein distance} between $\mu$ and $\nu$ with respect to the Hamming distance is defined as
    \begin{align*}
        W_1(\mu,\nu) := \inf_{\pi \in \PP(\mu,\nu)} \E_{(X,Y) \sim \pi}\left[ d_{\mathrm{H}}(X,Y) \right],
    \end{align*}
    where $\PP(\mu,\nu)$ denotes the family of all couplings of $\mu$ and $\nu$.
\end{definition}

We can now define coupling independence. 
\begin{definition}[Coupling Independence (without pinning)]
    Let $\mu$ be a distribution over $\{0,1\}^V$ where $V$ is a finite set.
    For each $u \in V$, let $\mu^u(\cdot) = \mu(\cdot \mid \sigma(u) = 1)$ be the conditional distribution where $\sigma(u)$ is fixed to be $1$ and $\mu^{\bar{u}}(\cdot) = \mu(\cdot \mid \sigma(u) = 0)$ be the conditional distribution where $\sigma(u)$ is fixed to be $0$. 
    We say $\mu$ is $C_{\mathsf{CI}}$-coupling independent if for every $u \in V$ such that $\min\{\mu(u),\mu(\bar{u})\} > 0$, it holds
    \begin{align}\label{eq:CI-def}
        W_1\left( \mu^u, \mu^{\bar{u}} \right) \le C_{\mathsf{CI}}.
    \end{align}
\end{definition}

\begin{remark}
    Rigorously, just like spectral independence, one needs to define coupling independence under arbitrary pinnings, that is, \cref{eq:CI-def} holds true for any conditional Gibbs distribution $\mu_\tau$ 
    where $\tau: \Lambda \to \{0,1\}$ for $\Lambda\subset V$ is a valid pinning. For simplicity, we only consider the version without pinning in this section, which makes the connection clearer between coupling independence and the maximum eigenvalue of the influence matrix.
\end{remark}

It is immediate that coupling independence implies spectral independence, as can be seen by the following lemma.

\begin{lemma}\label{lem:CI-SI}
    Let $\mu$ be a distribution over $\{0,1\}^V$ where $V$ is a finite set.
    If $\mu$ is $C_{\mathsf{CI}}$-coupling independent, then 
    \begin{align*}
        \lambda_{\max}(\Psi_\mu) \le C_{\mathsf{CI}}.
    \end{align*}
\end{lemma}
\begin{proof}
    Fix $u \in V$. Since $V$ is finite, there is a coupling $\pi$ achieving $W_1\left( \mu^u, \mu^{\bar{u}} \right)$, such that
    \begin{align*}
        \E_{(X,Y) \sim \pi}\left[ d_{\mathrm{H}}(X,Y) \right] \le C_{\mathsf{CI}}.
    \end{align*}
    It follows that
    \begin{align*}
        \sum_{v \in V} |\Psi_\mu(u \sra v)|
        &= \sum_{v \in V} |\mu^u(v) - \mu^{\bar{u}}(v)| \\
        &= \sum_{v \in V} \left| \E_{\pi}[X(v)] - \E_{\pi}[Y(v)] \right| \\
        &\le \sum_{v \in V} \E_{\pi}\left[ \left| X(v) - Y(v) \right| \right] \\
        &= \E_\pi\left[ d_{\mathrm{H}}(X,Y) \right] \le C_{\mathsf{CI}}.
    \end{align*}
    Hence, we conclude that $\lambda_{\max}(\Psi_\mu) \le \norm{\Psi_\mu}_\infty \le C_{\mathsf{CI}}$ by \cref{lem:rowsum}.
\end{proof}

Thus, to establish spectral independence, it suffices to construct a coupling between $\mu^u$ and $\mu^{\bar{u}}$ which has small Hamming distance in expectation.

For the hard-core model, Chen and Feng~\cite{ChenFeng} showed that coupling independence holds up to the uniqueness threshold, and hence yields an alternative proof of~\cref{lem:plan}. 

\begin{theorem}[\cite{ChenFeng}]
Let $\Delta \ge 3$ and $\delta \in (0,1)$. 
For any graph $G=(V,E)$ of maximum degree at most $\Delta$, any vertex $u \in V$, and any $\lambda \le (1-\delta) \lambda_c(\Delta)$, it holds
\begin{equation*}
W_1\left( \mu_{G,\lambda}^u, \mu_{G,\lambda}^{\bar{u}} \right) = O\left( \frac{1}{\delta} \right).
\end{equation*}
\end{theorem}

The above result of \cite{ChenFeng} utilizes correlation-decay arguments based on self-avoiding walk trees, similar to what is done in~\cref{sec:Weitz}.

Since its introduction, coupling independence has seen great success in proving spectral independence for general multi-spin systems due to its simple combinatorial nature. Besides the hard-core model, 
example applications also include the ferromagnetic Ising model \cite{CZ23}, $k$-SAT \cite{CGGGHMM24}, $b$-matchings \cite{CG24}, vertex colorings \cite{CLMM23,ChenFeng,CFGZZ25}, and edge colorings \cite{CWZZ25}.

In addition to implying spectral independence and hence optimal mixing of the Glauber dynamics, coupling independence has also found applications in deterministic approximate counting algorithms~\cite{CFGZZ25}. 
In particular, Chen, Feng, Guo, Zhang, and Zou~\cite[Theorem 4]{CFGZZ25} showed that if $C$-coupling independence (under arbitrary pinnings) and $b$-marginal boundedness hold then there is an $\fptas$ for the partition function where the polynomial exponent in the running time depends on the constants $C$ and $b$ and also on the maximum degree~$\Delta$ of the input graph.

In the following theorem, we prove coupling independence holds for the hard-core model for sufficiently small $\lambda$ as an illustrative example. 
Unlike \cref{lem:plan} which we proved earlier, the following \cref{lem:CI-HC} works for fugacity strictly below the tree-uniqueness threshold $\lambda_c(\Delta)$.

\begin{theorem}
\label{lem:CI-HC}
Let $\Delta \ge 3$ and $\delta \in (0,1)$.
For any graph $G=(V,E)$ of maximum degree at most $\Delta$, any vertex $u \in V$, and any $\lambda \le (1-\delta)/(\Delta-2)$, it holds
\begin{equation*}
W_1\left( \mu_{G,\lambda}^u, \mu_{G,\lambda}^{\bar{u}} \right) \le \frac{1+2\lambda}{\delta}.
\end{equation*}
\end{theorem}

\begin{proof}   
    Throughout the proof, we omit $\lambda$ in the subscript of the Gibbs distribution for simplicity.
    We first prove that for any graph $G$ of maximum degree at most $\Delta$ and any vertex $u \in V$ of degree $\le \Delta-1$, it holds
    \begin{equation}\label{eq:deg-Delta-1}
        W_1\left( \mu_G^u, \mu_G^{\bar{u}} \right) \le \frac{1+\lambda}{\delta}.
    \end{equation}
    We prove \cref{eq:deg-Delta-1} by induction on the number of vertices.
    If $|V|=1$ then \cref{eq:deg-Delta-1} is trivial.
    Now suppose \cref{eq:deg-Delta-1} holds true for any graph $G$ of maximum degree at most $\Delta$ and size at most $n-1$.
    Fix an $n$-vertex graph $G$ of maximum degree at most $\Delta$ and a vertex $u \in V$ of degree $\le \Delta-1$.
    Notice that if $\sigma(u)=1$ then $\sigma(v)=0$ for all $v \in N$ where $N = N(u)=\{v\in V: \{u,v\} \in E(G)\}$ is the set of neighbors of $u$. Meanwhile, if $\sigma(u)=0$ then we can decompose the conditional distribution $\mu_G^{\bar{u}}$ as 
    \begin{align*}
        \mu_G^{\bar{u}} (\cdot) = \sum_{\tau \in \{0,1\}^N} \mu_G^{\bar{u}} (\tau) \mu_G^{\bar{u},\tau} (\cdot),
    \end{align*}
    where $\mu_G^{\bar{u}} (\tau)$ is the probability that the neighborhood $N$ of $u$ receives the configuration $\tau$, and $\mu_G^{\bar{u},\tau} (\cdot)$ is the conditional Gibbs distribution where we fix $\sigma(u)=0$ and $\sigma(N) = \tau$.
    Let $H=G\setminus \{u\}$ denote the induced subgraph obtained by removing the vertex $u$.
    We then deduce that
    \begin{align*}
        W_1\left( \mu_G^u, \mu_G^{\bar{u}} \right)
        &\le \sum_{\tau \in \{0,1\}^N} \mu_G^{\bar{u}} (\tau) \cdot W_1\left( \mu_G^{u,\mathbf{0}}, \mu_G^{\bar{u},\tau} \right)\\
        &= 1 + \sum_{\tau \in \{0,1\}^N} \mu_G^{\bar{u}} (\tau) \cdot W_1\left( \mu_H^{\mathbf{0}}, \mu_H^{\tau} \right),
    \end{align*}
    where we write $\mu_G^{u,\mathbf{0}}=\mu_G^u$ to emphasize that the neighborhood $N$ receives the all-zero configuration~$\mathbf{0}$.
    Now for every feasible $\tau \in \{0,1\}^N$, we construct a sequence of configurations on $N$ denoted as $\mathbf{0}=\tau^{(0)},\tau^{(1)},\dots,\tau^{(k)} = \tau$ by flipping $0$'s to $1$'s one at a time, where $k=d_{\mathrm{H}}(\mathbf{0},\tau) = |\tau|$. Hence, we get
    \begin{align*}
        W_1\left( \mu_H^{\mathbf{0}}, \mu_H^{\tau} \right) 
        \le \sum_{i=1}^k W_1\left( \mu_H^{\tau^{(i-1)}}, \mu_H^{\tau^{(i)}} \right)
        \le \frac{(1+\lambda)k}{\delta},
    \end{align*}
    where the last inequality follows from the induction hypothesis since $\tau^{(i-1)}$ and $\tau^{(i)}$ differ by one vertex of degree $\le\Delta-1$ in the subgraph $H=G\setminus \{u\}$.
    Therefore, we conclude that
    \begin{align*}
        W_1\left( \mu_G^u, \mu_G^{\bar{u}} \right)
        \le 1 + \sum_{\tau \in \{0,1\}^N} \mu_G^{\bar{u}} (\tau) \cdot \frac{(1+\lambda)|\tau|}{\delta} 
        \le 1 + \frac{1+\lambda}{\delta} \cdot (\Delta-1) \cdot \frac{\lambda}{1+\lambda} \le \frac{1+\lambda}{\delta}.
    \end{align*}
    Finally, if the degree of $u$ is $\Delta$, then the same argument as above shows
    \begin{align*}
        W_1\left( \mu_G^u, \mu_G^{\bar{u}} \right)
        \le 1 + \frac{1+\lambda}{\delta} \sum_{\tau \in \{0,1\}^N} \mu_G^{\bar{u}} (\tau) \cdot |\tau|
        \le 1 + \frac{1+\lambda}{\delta} \cdot \Delta \cdot \frac{\lambda}{1+\lambda} 
        \le \frac{1+2\lambda}{\delta},
    \end{align*}
    as claimed.
\end{proof}


We end this subsection by showing that if there exists a contractive coupling of a Markov chain, then coupling independence and thus spectral independence hold.
The following lemma states that for any Markov chain, not necessarily the Glauber dynamics, a contractive coupling implies an upper bound on the Wasserstein distance.

\begin{lemma}[{\cite[Lemma 19]{CFGZZ25}}]
\label{lem:cc-CI}
    Let $\mu,\nu$ be distributions supported on $\Omega \subseteq [q]^V$ where $q \in \N^+$ and $V$ is finite. 
    Let $\rho>0$ 
    and $\kappa \in (0,1)$ be reals.
    Denote by $P,Q \in \R^{|\Omega| \times |\Omega|}$ the transition matrices of (not necessarily irreducible) Markov chains with stationary distributions $\mu,\nu$ respectively.
    Suppose the following two conditions are satisfied:
    \begin{enumerate}
        \item For any $\sigma \in \Omega$, it holds 
        \begin{align}\label{eq:cc-CI-cond1}
            W_1\left( P(\sigma,\cdot), Q(\sigma,\cdot) \right) \le \rho;
        \end{align}
        
        \item ($\kappa$-contractive coupling of $Q$) For any $\sigma,\tau \in \Omega$, it holds
        \begin{align}\label{eq:cc-CI-cond2}
            W_1\left( Q(\sigma,\cdot), Q(\tau,\cdot) \right) \le (1-\kappa) d_{\mathrm{H}}(\sigma,\tau).
        \end{align}
    \end{enumerate}
    Then, we have
    \begin{align*}
        W_1\left( \mu, \nu \right) \le \frac{\rho}{\kappa}.
    \end{align*}
\end{lemma}

\begin{proof}
    Since the Wasserstein distance is a metric, we deduce from the triangle inequality that
    \begin{align}\label{eq:cc-CI-term0}
         W_1\left( \mu, \nu \right)
         =  W_1\left( \mu P, \nu Q \right)
         \le  W_1\left( \mu P, \mu Q \right) +  W_1\left( \mu Q, \nu Q \right).
    \end{align}
    For the first term, we observe that
    \begin{align}\label{eq:cc-CI-term1}
        W_1\left( \mu P, \mu Q \right) \le \sum_{\sigma \in \Omega} \mu(\sigma) W_1\left( P(\sigma,\cdot), Q(\sigma,\cdot) \right) \le \rho,
    \end{align}
    which follows from \cref{eq:cc-CI-cond1}.
    For the second term, $\kappa$-contractive coupling \cref{eq:cc-CI-cond2} implies that
    \begin{align*}
        W_1\left( \mu Q, \nu Q \right)
        \le \sum_{\sigma,\tau} \pi(\sigma,\tau) W_1\left( Q(\sigma,\cdot), Q(\tau,\cdot) \right)
        \le (1-\kappa) \sum_{\sigma,\tau} \pi(\sigma,\tau) d_{\mathrm{H}}(\sigma,\tau),
    \end{align*}
    where $\pi$ is an arbitrary coupling of $\mu$ and $\nu$.
    Taking infimum over $\pi$, we thus obtain
    \begin{align}\label{eq:cc-CI-term2}
        W_1\left( \mu Q, \nu Q \right) \le (1-\kappa) W_1\left( \mu, \nu \right).
    \end{align}
    Combining \cref{eq:cc-CI-term0,eq:cc-CI-term1,eq:cc-CI-term2}, the lemma then follows.
\end{proof}

We remark that \cref{lem:cc-CI} holds more generally for any metric space $(\Omega,d)$ with the Wasserstein distance defined with respect to the metric $d$; see \cite{CFGZZ25}.


We provide here a simple illustration of the above techniques.  We show that there is a contractive coupling for the hard-core model when $\lambda$ is sufficiently small, and then we can apply \cref{lem:cc-CI} to conclude coupling independence.


\begin{lemma}
\label{lem:hc-small-lambda-coupling}
    Let $\Delta \ge 3$ and $\delta \in (0,1)$.
    For any $n$-vertex graph $G=(V,E)$ of maximum degree at most $\Delta$ and any $\lambda \le (1-\delta)/(\Delta-1)$, there is a $\kappa$-contractive coupling of the Glauber dynamics with stationary distribution $\mu_{G,\lambda}$ where $\kappa = \delta/((1+\lambda)n)$. Specifically, for any two independent sets $\sigma,\tau$ of $G$ it holds
    \begin{equation*}
    W_1\left( P_{\textsc{GD}}(\sigma,\cdot), P_{\textsc{GD}}(\tau,\cdot) \right) \le \left( 1 - \frac{\delta}{(1+\lambda)n} \right) d_{\mathrm{H}}(\sigma,\tau).
    \end{equation*}
\end{lemma}

\begin{proof}
    We use path coupling to establish a contractive coupling; we refer the reader to \cite[Chapter 4]{Jerrum:notes} and \cite[Chapter 14.2]{LevinPeresWilmer} for an introduction to path coupling.

    We first consider the case where $d_{\mathrm{H}}(\sigma,\tau)=1$, i.e., $\sigma$ and $\tau$ differ at only one vertex which we denote as $z$.
    We construct an identity coupling $\pi$ for the transition $(\sigma,\tau) \to (X,Y)$ where $X \sim P_{\textsc{GD}}(\sigma,\cdot)$ and $Y \sim P_{\textsc{GD}}(\tau,\cdot)$. 
    More specifically, from $(\sigma,\tau)$, we choose a random vertex $v$ for update in both chains, and then with probability $\lambda/(1+\lambda)$ both chains attempt to add $v$ to the current independent set (it might succeed in both chains, only one of the chains, or in neither chain), and with probability $1/(1+\lambda)$ we remove $v$ from both chains.

Note that if the updated vertex $v=z$ then $X=Y$ with probability 1, since $\sigma(w)=\tau(w)$ for all $w\in N(z)$. 
If $v \in V \setminus (\{z\} \cup N(z))$, then similarly $d_{\mathrm{H}}(X,Y) = 1$ with probability 1.
Finally, if $v\in N(z)$ then with probability $\leq\lambda/(1+\lambda)$ the Hamming distance increases by 1, and otherwise remains the same. 
Therefore, under this identity coupling $\pi$ we deduce that
\[
W_1\left( P_{\textsc{GD}}(\sigma,\cdot), P_{\textsc{GD}}(\tau,\cdot) \right) - 1
\leq \frac{1}{n}\left(\frac{\Delta\lambda}{1+\lambda}-1\right) = \frac{(\Delta-1)\lambda-1}{(1+\lambda)n} 
\leq -\frac{\delta}{(1+\lambda)n},
\]
which establishes a $\delta/((1+\lambda)n)$-contraction when $d_{\mathrm{H}}(\sigma,\tau)=1$.

For the general case, we construct a sequence of independent sets $\sigma = \sigma^{(0)}, \sigma^{(1)}, \dots, \sigma^{(k)} = \tau$ where $k = d_{\mathrm{H}}(\sigma,\tau)$, obtained by first removing vertices in $\sigma \setminus \tau$ one by one to reach $\sigma \cap \tau$, and then adding vertices in $\tau \setminus \sigma$ one by one.
Note that $d_{\mathrm{H}}(\sigma^{(i-1)},\sigma^{(i)}) = 1$ for each $i$.
It follows from the triangle inequality that
\begin{align*}
    W_1\left( P_{\textsc{GD}}(\sigma,\cdot), P_{\textsc{GD}}(\tau,\cdot) \right)
    \le \sum_{i=1}^k W_1\left( P_{\textsc{GD}}(\sigma^{(i-1)},\cdot), P_{\textsc{GD}}(\sigma^{(i)},\cdot) \right)
    \le \sum_{i=1}^k (1-\kappa) 
    = (1-\kappa) d_{\mathrm{H}}(\sigma,\tau),
\end{align*}
as claimed.
\end{proof}

By \cref{lem:cc-CI,lem:hc-small-lambda-coupling}, we deduce the following coupling independence result, where the regime of $\lambda$ is slightly 
worse than \cref{lem:CI-HC}.
\begin{theorem}
\label{lem:CI-HC-2}
Let $\Delta \ge 3$ and $\delta \in (0,1)$.
For any graph $G=(V,E)$ of maximum degree at most $\Delta$, any vertex $u \in V$, and any $\lambda \le (1-\delta)/(\Delta-1)$, it holds
\begin{equation*}
W_1\left( \mu_{G,\lambda}^u, \mu_{G,\lambda}^{\bar{u}} \right) \le \frac{1+\lambda}{\delta}.
\end{equation*}
\end{theorem}

\begin{proof}
    Let $\Omega$ denote the set of all independent sets of $G$. 
    Let $P,Q \in \R^{\Omega\times\Omega}$ be the transition matrices of the Glauber dynamics with stationary distributions $\mu_{G,\lambda}^u$ and $\mu_{G,\lambda}^{\bar{u}}$ respectively.
    Note that $P,Q$ are reducible; their transition rules are the same as the Glauber dynamics for $\mu_{G,\lambda}$ when updating a vertex $v \neq u$, while $P$ forces $\sigma(u)=1$ and $Q$ forces $\sigma(u)=0$ when $u$ is picked to be updated.
    Therefore, for any independent set $\sigma \in \Omega$, we have $W_1(P(\sigma,\cdot), Q(\sigma,\cdot)) \le 1/n$, establishing \cref{eq:cc-CI-cond1}. Meanwhile, \cref{lem:hc-small-lambda-coupling} shows a $\kappa$-contractive coupling \cref{eq:cc-CI-cond2} for $Q$ with $\kappa = \delta/((1+\lambda)n)$. 
    The theorem then follows from \cref{lem:cc-CI}.
\end{proof}

\subsection{Optimal relaxation time implies SI}
\label{sub:coupling-SI}

In this section, we show the result of Anari, Jain, Koehler, Pham, and Vuong~\cite{relax-optimal} that an optimal relaxation time for the Glauber dynamics, namely $O(n)$ relaxation time, implies spectral independence; this was stated in \cref{lem:opt-relax-SI} of  \cref{sub:intro-methods}, which we restate now for convenience.

\optrelaxSI*

\begin{proof}
Recall (see equation~\cref{cov-SI-connection}) that $(C-1)$-spectral independence is equivalent to the following inequality:
\begin{equation}\label{eq:specin}
 \Cov_\mu\preceq C D,
\end{equation}
where $D$ is diagonal matrix with $D(i,i)=\mu(\sigma(i)=1)\mu(\sigma(i)=0)$. That is, for any $y\in\R^n$
we need to show $y^T \Cov_\mu y \leq C y^T D y$.

Recall (see~\cref{defn:approx-tensorization-var} and~\cref{lem:AT-gap}) that relaxation time being bounded by $Cn$ implies that for any $f:\{0,1\}^n\rightarrow\R$ we have
\begin{equation}\label{eq:variancefact}
\Var_\mu(f) \leq C\sum_{v\in V} \Exp_{\mu}[\Var_v(f)].
\end{equation}
Let $f(\sigma) = \sum_{v\in V} y_v \sigma(v)$. Note that we have 
\begin{equation}
    \label{var-cov}
\Var_\mu(f)  = y^T\Cov_\mu y
\end{equation}
and
$$\Var_v(f) = y_v^2\,\Exp_\mu[ \chi( \Exp[X_v | X_1,\dots,X_{v-1},X_{v+1},\dots, X_n]) ],$$
where $\chi(x)=x(1-x)$. Since $x(1-x)$ is concave we have
\begin{align*}\Exp_\mu[ \chi( \Exp[X_v | X_1,\dots,X_{v-1},X_{v+1},\dots, X_n] ]&\leq
\chi(\Exp_\mu[ \Exp[X_v | X_1,\dots,X_{v-1},X_{v+1},\dots, X_n] ])\\
& = \mu(\sigma(v)=1)\mu(\sigma(v)=0),
\end{align*}
and hence $\Exp_\mu[\Var_v(f)] \leq y_v^2\mu(\sigma(v)=1)\mu(\sigma(v)=0)$. Plugging this into~\eqref{eq:variancefact} (and applying~\cref{var-cov}) we obtain $y^T\Cov_\mu y\leq C y^T D y$. Since we can take an arbitrary 
$y$, we established~\eqref{eq:specin}, and hence $C$-spectral independence. 
\end{proof}

\subsubsection{Contractive coupling implies relaxation time}

Here, we show an example application of \cref{lem:opt-relax-SI}.
We show that a contractive coupling of an arbitrary Markov chain not only implies an upper bound on the mixing time but also provides a lower bound on the spectral gap.  Consequently, for a local Markov chain one obtains spectral independence, which provides an alternative proof approach compared with \cref{lem:cc-CI} which utilizes coupling independence.

\begin{restatable}{lemma}{couplingrelax}
\label{lem:coupling-relax}
Suppose that for a Markov chain on $\Omega$ with transition matrix $P$, we have a contractive coupling with contraction factor $\kappa \in (0,1)$; that is, we have a distance function $d:\Omega\times\Omega\rightarrow\R_{\geq 0}$ such that $d(\sigma,\tau)=0$ iff $\sigma=\tau$ and, for any $\sigma,\tau\in\Omega$, a coupling $\pi$ of $P(\sigma,\cdot)$ and $P(\tau,\cdot)$ such that 
\begin{equation}
    \label{contr-coupling-cond}
\Exp_{(X,Y) \sim \pi}[ d(X,Y) ] \leq (1-\kappa) d(\sigma,\tau).
\end{equation}
Then the absolute spectral gap of $P$ satisfies $\gamma=1-\lambda_*\geq \kappa$, and hence the relaxation time of $P$ satisfies $\Trelax\leq 1/\kappa$.
\end{restatable}

For the Glauber dynamics we typically obtain $\kappa = \Omega(1/n)$.  If a Markov chain is local in the sense that the transitions update $O(1)$ vertices in each step then optimal relaxation time for this local chain implies optimal relaxation time for the Glauber dynamics, see~\cref{rem:comparison} for further discussion and associated references.

\begin{proof}
Let $\varphi:\Omega\rightarrow\R$ be the eigenvector corresponding to eigenvalue $\lambda$, which is the second largest in absolute value and hence corresponds to $\lambda_*$. 
Denote
\begin{align*}
    L = \max_{\substack{\sigma,\tau \in \Omega \\ \sigma \neq \tau}}
    \frac{|\varphi(\sigma)-\varphi(\tau)|}{d(\sigma,\tau)},
\end{align*}
and let $\sigma,\tau$ be the maximizers; hence, $|\varphi(\sigma)-\varphi(\tau)| = L \cdot d(\sigma,\tau)$. 
Therefore, we have the following:
\begin{equation}
    \label{cond-eigenfunction}
|\varphi(X) - \varphi(Y)| \leq L \cdot d(X,Y).
\end{equation}
 Note that since $\phi$ is an eigenvector with eigenvalue $\lambda$, we have
 \begin{align*}
     \Exp[\varphi(X)] 
     = \sum_{\xi\in\Omega} P(\sigma,\xi) \varphi(\xi) 
     = (P\varphi) (\sigma)
     = \lambda\varphi(\sigma),
 \end{align*}
and hence 
\begin{align*}
|\lambda| \cdot|\varphi(\sigma)-\varphi(\tau)| 
&= \big| \Exp [ \varphi(X) - \varphi(Y) ] \big|
\\
&\leq  \Exp\left[ \big| \varphi(X) - \varphi(Y) \big| \right] 
\\
& \leq L \cdot \Exp[d(X,Y)] & \mbox{by \cref{cond-eigenfunction}}
\\
&\leq L \cdot (1-\kappa) d(\sigma,\tau) & \mbox{by \cref{contr-coupling-cond}}
\\
&= (1-\kappa) |\varphi(\sigma)-\varphi(\tau)|,
\end{align*}
and thus $|\lambda|\leq 1-\kappa$ which proves the lemma.
\end{proof}

\begin{remark}\label{rem:connection-mixing-to-relax}
As noted above, for the Glauber dynamics, in the context of \cref{lem:coupling-relax} we typically obtain $\kappa = \alpha/n$ for some constant $\alpha>0$; see e.g. \cref{lem:hc-small-lambda-coupling}.  Note, a contractive coupling with such a contraction factor of $\kappa = \alpha/n$ implies a mixing time bound of $\Tmix(\eps)=O(n\log(n/\eps))$.  
The more general statement is that $\Tmix(\eps)=O(n\log(n/\eps))$ implies $\Trelax=O(n)$ holds; we refer the interested reader to Levin and Peres~\cite[Chapter 14 notes]{LevinPeresWilmer} for a proof of the more general implication.
\end{remark}

\begin{remark}
\label{rem:comparison}
    Recall, \cref{lem:coupling-relax} applies to any Markov chain.
If the Markov chain is local in the sense that only $O(1)$ vertices are updated at each step, then $O(n)$ relaxation time for this local chain implies $O(n)$ relaxation time for the Glauber dynamics, see~\cite[Section 4.3]{BCCPSV22} for further details.  By then applying~\cref{lem:opt-relax-SI} we obtain spectral independence and hence $O(n\log{n})$ mixing time for the Glauber dynamics for constant $\Delta$, via~\cref{thm:SI-mixing}.  This improves upon standard comparison arguments (comparing Dirichlet forms) which use $O(n)$ relaxation time of a local Markov chain to conclude $O(n^2)$ mixing time of the Glauber dynamics.
\end{remark}

\section{Critical Point Analysis using the Field Dynamics}
\label{sec:critical-point}

In this section we prove that the Glauber dynamics has polynomial mixing time at the critical point $\lambda=\lambda_c(\Delta)$ as stated in \cref{thm:hard-core-critical}.  This theorem was established by Chen, Chen, Yin, and Zhang \cite[Theorem 1.1]{CCYZ-critical-hard-core} who proved that the exponent satisfies $C\leq 2+4e + 4e/(\Delta-2)$; we simply show there exists a universal constant.

We restate \cref{thm:hard-core-critical} for convenience.

\criticalpoint*

The above result follows as a corollary of the following theorem, which says that to establish polynomial mixing at a particular $\lambda^*$ we need two properties: (1) for all $\alpha<1$, $C_{SI}(\alpha)$-spectral independence holds for $\lambda=\alpha\lambda^*$, and (2) fast mixing at some $\lambda'=\theta\lambda$ for some choice of $\theta<1$; then one obtains relaxation time $O(K(\theta)n)$ where $K(\theta)$ is defined in \cref{defn:K} below.  Note to obtain a polynomial upper bound on $K(\theta)$ and deduce \cref{thm:hard-core-critical}, we further use that $C_{SI}(\alpha)=O(1/\alpha)$.

\begin{theorem}
\label{thm:main-critical}
    Let $\lambda^*>0$. If the following hold:
    \begin{enumerate}
        \item For all \ $0\leq\alpha <1$, there exists $C_{SI}(\alpha)$ such that the measure $\mu_{(1-\alpha)\lambda^*}$ is $C_{SI}(\alpha)$-spectrally independent,
        \label{assume111}
        \item there exists $0\leq \theta < 1$ where the Glauber dynamics for the measure $\mu_{(1-\theta)\lambda^*}$ has relaxation time $\leq C_{Gl}n$ (on any graph with maximum degree $\Delta$),
        \label{assume222}
        \end{enumerate}
        then the relaxation time of the Glauber dynamics for the hard-core model at fugacity $\lambda^*$ for any graph of maximum degree $\Delta$ satisfies
\[
\Trelax \leq     \left(K(\theta)C_{Gl}\right)n, 
\]
where    
\begin{equation}
    \label{defn:K}
    K(\theta)= \exp\left(\int_0^\theta \frac{C_{SI}(\alpha)}{1-\alpha}d\alpha\right).
\end{equation}
\end{theorem}

We will use \cref{thm:main-critical} to prove the main theorem (\cref{thm:hard-core-critical}) using \cref{thm:main-critical}.
In \cref{sub:FD} we will give intuition for \cref{thm:main-critical} and then we will dive into its proof.

\begin{proof}[Proof of \cref{thm:hard-core-critical}]
Applying \cref{lem:hc-small-lambda-coupling} with \cref{lem:AT-gap} we obtain \cref{assume222} in the hypotheses of \cref{thm:main-critical} where~$\theta$ is a constant such
that $\theta\lambda_c(\Delta) = (1/2)/(\Delta-1)$ and use $\delta=1/2$ in \cref{lem:hc-small-lambda-coupling}. 
For \cref{assume111} we use \cref{lem:plan} in \cref{sec:Weitz}, which yields 
the following bound 
$$C_{SI}(\alpha)\leq \min\Big\{n,\frac{c^*}{\alpha}\Big\},$$
where $c^*=16\sqrt{2}$ (see \cref{lem:plan}) and the trivial upper bound of $n$ follows from the maximum row sum of the influence matrix. Plugging in~\cref{defn:K} we obtain the following bound
\begin{align*}
\log K(\theta) = \int_0^\theta \frac{C_{SI}(\alpha)}{1-\alpha}d\alpha & \leq 
\int_0^{1/n} \frac{n}{1-\alpha}d\alpha + 
\int_{1/n}^\theta \frac{c^*}{\alpha(1-\alpha)}d\alpha \\
& \leq \frac{1}{1-\theta} + \frac{c^*}{1-\theta} \ln n,
\end{align*}
and hence $K(\theta) = O(n^C)$ where $C=\frac{c^*}{1-\theta}$. The result now follows from~\cref{thm:main-critical}.
\end{proof}

\begin{remark}
\label{rem:big-degree}
    Note when $\lambda<(1-\delta)\lambda_c(\Delta)$ for $\delta>0$ then, using \cref{thm:main-critical}, one obtains $\Trelax\leq C(\delta)n$, hence the relaxation time is independent of $\Delta$.
    Let $\lambda^*=(1-\delta)\lambda_c(\Delta)$. Let $C_{SI}$ be the bound on spectral independence for $\lambda^*=\lambda_c$ and $C'_{SI}$ be the bound on spectral independence for $\lambda^*=(1-\delta)\lambda_c$.
    In the application of \cref{thm:main-critical} we have 
    \[ C'_{SI}(\alpha)=C_{SI}(1-(1-\alpha)(1-\delta)) \leq \frac{c^*}{\delta+\alpha-\alpha\delta}.
    \]
    And then we obtain:
    \begin{align*}
\log K(\theta) = \int_0^\theta \frac{C'_{SI}(\alpha)}{1-\alpha}d\alpha & \leq C' \log(1/\delta),
\end{align*}
for some constant $C'>0$.  This yields $\Trelax \leq C(\delta)n$ where $C(\delta)=\poly(1/\delta)$.
\end{remark}

\subsection{Field Dynamics}
\label{sub:FD}

In the remainder of \cref{sec:critical-point} we will prove \cref{thm:main-critical}.  To establish mixing at our target fugacity $\lambda^*$ we will use a bootstrapping approach, which utilizes \cref{assume222} in the hypotheses of \cref{thm:main-critical} which says that we have fast mixing of the Glauber dynamics at $\lambda'=\theta\lambda^*$ for $0<\theta<1$; we satisfy this hypothesis by using an easy argument such as the coupling argument in \cref{lem:hc-small-lambda-coupling}.  Before formally defining the field dynamics, which is an important tool in the proof of \cref{thm:main-critical}, we first attempt to motivate its definition and how it enables a bootstrapping approach.

Given fast mixing of the Glauber dynamics at $\lambda'=\theta\lambda^*$ we want to deduce fast mixing at $\lambda^*$.  Additionally we know that spectral independence holds for all $\lambda<\lambda^*$, by
\cref{assume111} of the hypotheses of \cref{thm:main-critical}.
As in the proof of \cref{thm:SI-constant-relax} we look at the down-up walk.  For fixed $\alpha>0$, the $\alpha n$-block dynamics at fugacity $\lambda^*$, which was used in the proof of \cref{thm:SI-constant-relax}, operates in the following manner: we first apply $\alpha n$ steps of the down-walk in which at each step we remove the configuration at a randomly chosen vertex $v$, then we apply $\alpha n$-steps of the up-walk which for each of these chosen vertices, we resample the configuration at $v$ from the Gibbs distribution at $\lambda^*$ conditional on the current configuration; this later resampling can equivalently be done simultaneously (for all $\alpha n$ vertices) or one-by-one.

We want to modify the above block dynamics so that in the resampling stage (up-walk steps) we are resampling from the Gibbs distribution at fugacity $\lambda'=(1-\theta)\lambda^*$ 
for $\theta<1$.  To this end, in the corresponding up-walk steps we will condition on a set of vertices which are occupied (in the independent set) and we will resample all of the remaining vertices (those dropped in the down-walk steps and those which are unoccupied initially).  Consequently in the corresponding up-walk steps we will sample from the smaller $\lambda'$ as we have conditioned on a set of occupied vertices (and have not conditioned on any unoccupied vertices) so we are biasing the distribution.  The following Markov chain is known as the {\em field dynamics} and was introduced by Chen, Feng, Yin, and Zhang~\cite{CFYZ22}.

In words, the field dynamics operates as follows. Consider an initial independent set $Y_t$; note, we are considering $Y_t\subseteq V$, we do not consider $Y_t$ to be in $\{0,1\}^V$.  We first remove each $v\in Y_t$ with probability $(1-\theta)$.  Let $Y'$ be the  remaining vertices, since $Y'\subseteq Y_t$ thus $Y'$ is an independent set.  We then resample the configuration on $V\setminus Y'$ from the Gibbs distribution at fugacity $\lambda'=(1-\theta)\lambda^*$ conditional on all vertices in $Y'$ being in the independent set.  In the boundary case when $\theta=1$ then $Y'=Y_{t}$ and in the resampling step we have $\lambda'=0$, whereas when $\theta=0$ then $Y'=\emptyset$ and we resample with $\lambda'=\lambda^*$.

We now formally define the field dynamics following \cite[Example 2.1]{CFYZ22}.  

\begin{definition}\label{def:field}
The (discrete-time) field dynamics with parameters $(\lambda,\theta)$ where $\lambda>0$ and $0\leq \theta\leq 1$ has the following transitions $Y_t\rightarrow Y_{t+1}$.   
    From an independent set $Y_t$, 
    \begin{enumerate}
    \item Let $Y'=\emptyset$. 
    \item 
  Independently, for each $v\in Y_t$, with probability $\theta$ add $v$ to set $Y'$.
\item 
    Sample an independent set $\widehat{Y}$ on the set $V\setminus (Y'\cup N(Y'))$ from the hard-core measure with fugacity $\lambda'=(1-\theta)\lambda$.
    \\
    (This is equivalent to conditioning on $Y'$ being in the independent set and sampling the remaining configuration from $\mu_{(1-\theta)\lambda}$.)
    \item Let $Y_{t+1}=Y'\cup \widehat{Y}$.
      \end{enumerate}
\end{definition}

We first state the observation that the field dynamics has the desired stationary distribution.

\begin{lemma}
\label{lem:FD-reversible}
    The field dynamics is reversible with respect to $\mu_{\lambda}$.
\end{lemma}

    We will prove the above lemma momentarily.  

    Chen et al. \cite{CFYZ22} utilize the field dynamics to establish a boosting (or inductive) statement, which expresses
the spectral gap of the Glauber dynamics at $\lambda^*$ as (roughly) the product of spectral gap of the field dynamics and the spectral gap of the Glauber dynamics at $\lambda'$; the first term (gap of the field dynamics) can be bounded using \cref{assume111} of \cref{thm:main-critical} and the second term (gap of the Glauber dynamics at $\lambda'$) is given by \cref{assume222}.  

To obtain mixing at the critical point we need a more refined approach.  We want to consider the mixing properties of the field dynamics at an infinitesimally small time and then we can integrate over time.  To enable us to take the derivative with respect to time we need to introduce a 
continuous-time analog of the field dynamics.

We now introduce two continuous-time processes, a down process and an up process, which together generalize the discrete-time field dynamics.   In the following our inital configuration will be an independent set $X_1$.  For $\theta$ where $0\leq\theta\leq 1$ we will define the down process $X_1\rightarrow X_\theta$ which is a generalization of the process $Y_t\rightarrow Y'$ in the first two steps of the field dynamics.  We then define the up process $X_\theta\rightarrow X'_1$ which generalizes $Y'\rightarrow Y_{t+1}$.

\begin{definition}\label{def:down} Let $X_1$ be from $\mu_\lambda$.
\begin{enumerate}
    \item 
     Independently, for every vertex $v$ choose a uniformly random number $\theta_v$ in $[0,1]$.
\item For every $\theta\in [0,1]$, let $X_\theta$ be the set of vertices in $X$ that have $\theta_x < \theta$.
\end{enumerate}

Let $0\leq\theta\leq\alpha\leq 1$. 
The distribution of $X_\theta$ conditioned on $X_\alpha$ is denoted by $D_{\alpha\rightarrow\theta}(X_{\alpha},\cdot)$ and is referred to as the down process. The distribution of $X_\alpha$ conditioned on $X_\theta$ is denoted by $U_{\theta\rightarrow\alpha}(X_{\theta},\cdot)$ and is referred to as the up process.
\end{definition}

Note that the down process $D_{1\rightarrow\theta}$ is the following. For $Y'\subseteq Y$ 
\begin{align*}
\Pr(X_\theta = Y' | X_1 = Y ) & \propto \Pr\big(\forall v\in Y\setminus Y', \theta_v \in [\theta,1]\big) 
\times \Pr\big(\forall v\in Y', \theta_v \in [0,\theta]\big) \\
& = (1-\theta)^{|Y\setminus Y'|} \theta^{|Y'|},
\end{align*}
that is we drop each element of $Y$ independently with probability $1-\theta$.  Hence, in the discrete-time field dynamics, starting from $Y_t=Y$ then the distribution of $Y'$ is identical to $D_{1\rightarrow\theta}(Y,\cdot)$.

Similarly, for the up process, we have the following: 
\begin{align*}
\Pr(X_1=Y'\cup\widehat{Y} | X_\theta = Y') & = \frac{\Pr(X_1=Y'\cup\widehat{Y} \wedge X_\theta = Y') }{\Pr(X_\theta = Y')}\\
& \propto \lambda^{|Y'|} \Pr\big(\forall v\in\widehat{Y}, \theta_v \in [\theta,1]\big) \Pr\big(\forall v\in Y', \theta_v \in [0,\theta]\big) \\
& \propto (\lambda(1-\theta)) ^{|\widehat{Y}|}.
\end{align*}
Therefore, if we consider the discrete-time field dynamics, given $Y'$ at the end of step 2 (after dropping each vertex $v\in Y_t$ with probability $1-\theta$), then the distribution of $Y_{t+1}$ for the discrete-time field dynamics is identical to $X_1$ conditional on $X_\theta = Y'$.  

In summary, a transition of the discrete-time field dynamics is a step of the $D_{1\rightarrow\theta}$ process followed by a step of the $U_{\theta\rightarrow 1}$ process (starting the discrete-time process from $Y_t$ we condition on $X_1=Y_t$ and use the down process to obtain $X_\theta$, then we 
resample $X'_1$ using the up process from $X'_\theta=X_\theta$, and let $Y_{t+1}=X'_1$).

As promised earlier, the proof of \cref{lem:FD-reversible} is now straightforward to establish. 

\begin{proof}[Proof of \cref{lem:FD-reversible}]

Let $P_{FD}$ denote the transition matrix of the discrete-time field dynamics.  We have the ergodic flow from $x$ 
to $y$ is
\begin{align*}
\mu_\lambda(x)P_{FD}(x,y) & = \mu_\lambda(x) (D_{1\rightarrow\theta} U_{\theta\rightarrow 1})(x,y) 
\\
& = \sum_S \Pr(X_1 = x) \Pr(X_\theta = S | X_1=x ) \Pr(X_1 = y | X_\theta = S) \\
& = \sum_S \frac{1}{ \Pr(X_\theta = S) } \Pr(X_\theta = S \wedge X_1=x ) \Pr(X_1 = y \wedge X_\theta = S).
\end{align*}
Note the last expression does not change if we swap $x$ and $y$ and hence the ergodic flow from $y$ to $x$ is the same.
\end{proof}

\subsection{SI implies Fast Mixing of Field Dynamics}

Recall, the definition of $C$-approximate tensorization of variance.
Here we present the analog for the field dynamics, known as $K$-approximate conservation of variance.  Whereas $C$-approximate tensorization of variance implies that the spectral gap of the Glauber dynamics is $\geq 1/C$, similarly, $K$-approximate conservation of variance implies that the field dynamics has spectral gap $\geq 1/K$.  

The following definition is from \cite[Definiton 3.10 with $\phi=x^2$]{CCYZ-critical-hard-core}.

\begin{definition} Let $\lambda>0$, $0\leq\theta\leq 1$ and $K>0$.
Let $\mu=\mu_{\lambda}$ and let $\mu'$ denote the distribution of $X_{\theta}$ from the down-walk $D_{1\rightarrow\theta}$ where $X_1\sim\mu$.
We say that {\em $K$-approximate conservation of variance up to time~$\theta$} holds if for every $f$ we have for 
\begin{align}
\label{defn:var-conservation}
\Var_{\mu}(f) 
& \leq K \Exp_{\sigma\sim\mu'}\left[ \Var_{U_{\theta\rightarrow 1}(\sigma,\cdot)}(f) \right]
\end{align}
\end{definition}
Let us decipher the right-hand side of \cref{defn:var-conservation}.  Note, that
\[
\Exp_{\sigma\sim\mu'}\left[ \Var_{U_{\theta\rightarrow 1}(\sigma,\cdot)}(f) \right]
 = 
\frac{1}{2}\sum_{\sigma\in\Omega} \mu'(\sigma)\sum_{\eta_1,\eta_2\in\Omega}
U_{\theta\rightarrow 1}(\sigma,\eta_1)U_{\theta\rightarrow 1}(\sigma,\eta_2)\left(f(\eta_1)-f(\eta_2)\right)^2.
\]
In words, we first take a sample $\sigma$ from the down-walk of the field dynamics, then we consider the variance of $f$ from the up-walk starting from $X_\theta=\sigma$.  

Let
$$
f_\theta(y) := \E[f(X_1) | X_\theta=y],
$$
where the expectation is with respect to the up-walk $U_{\theta\rightarrow 1}(y,\cdot)$.
Another possible way of writing~\cref{defn:var-conservation} is 
\begin{equation}
\label{conservation-alternative}
\Var_{\mu}(f(X_1))  \leq K \Exp_{X_\theta} \left[ \Var[f(X_1) | X_\theta] \right].
\end{equation}

Our main technical result in this section is the following theorem.

\begin{theorem}
\label{thm:FD-critical}
    For $\lambda>0$, if for all $0\leq\alpha\leq 1$, the measure $\mu_{(1-\alpha)\lambda}$ is spectrally independent with constant $C(\alpha)$ then for any $0\leq\theta\leq 1$, variance conservation holds for the field dynamics at parameters $(\lambda,\theta)$ with constant $K=K(\theta)$ defined in \cref{defn:K}.
\end{theorem}

\begin{proof}
Let 
$$
Q(\alpha):=\E[\Var[f(X_1)|X_\alpha]],
$$
that is sample $S$ from $X_{\alpha}$ then look at the variance of $f(X_1)$ where $X_1\sim U_{\alpha\rightarrow 1}(S,\cdot)$.
Note, $Q(1)=0$, and $Q(0)=\Var_\mu(f)$ (since $X_0=\emptyset$).
To prove conservation of variance, recall \cref{conservation-alternative}.  To establish \cref{conservation-alternative} the rough intuition is that we will show that $Q(\alpha)/Q(0)$ is large for all $\alpha<1$ and hence $K(\alpha)$-approximate conservation of variance holds with small $K(\alpha)$ (this is an increasing function as $\alpha\rightarrow 1$).  Hence, we will prove that the log-derivative is large:
\begin{equation}
    \label{eqn:Q-derivative}
(\log Q(\alpha))' =   \frac{Q'(\alpha)}{Q(\alpha)} \geq -\frac{C(\alpha)}{1-\alpha}.
\end{equation}
We will now prove the theorem assuming \cref{eqn:Q-derivative} and then we will go back to prove \cref{eqn:Q-derivative}.

Fix $0\leq\theta\leq 1$.
\begin{align*}
    (\log Q(\theta)) & = (\log Q(0)) + \int_0^\theta (\log Q(\alpha))' \,{\mathrm d}\alpha
 \geq (\log Q(0)) - \int_0^\theta \frac{C(\alpha)}{1-\alpha} \,{\mathrm d}\alpha.
\end{align*}
After exponentiating both sides, and using the fact that $Q(0)=\Var_\mu(f)$ we then have:
$$
\Var_\mu(f) \leq K Q(\theta),
$$
with $K$ given by \cref{defn:K}.  This completes the proof of the theorem modulo the proof of \cref{eqn:Q-derivative} which we now establish.

Recall, $f_\alpha(y) := \E[f(X_1) | X_\alpha=y]$. Note that we have $f_1 = f$.
For $0\leq s\leq 1$, let $\nu_s$ denote the distribution of $X_{s}$ from the down-walk $D_{1\rightarrow s}$ where $X_1\sim\mu$.
For $0\leq \alpha < s \leq 1$, by the law of total variance we have:
\begin{equation}
\label{eq:total-general}
\Var_{\nu_s}[f_s] = 
\Var_{\nu_\alpha}[f_\alpha]
+
\E_{\nu_\alpha}[\Var_{U_{\alpha\rightarrow s}}[f_s]]
\end{equation}


From \cref{eq:total-general} for $s=1$ we have:
\begin{equation}
    \label{total-variance}
    Q(\alpha) = \Var_\mu[f] - \Var[\E[f(X_1)|X_{\alpha}]],
\end{equation}
For $h>0$, from \cref{eq:total-general} we have:
\[
\Var_{\nu_{\alpha+h}} [f_{\alpha+h}] = \Var_{\nu_\alpha}[f_{\alpha}] +
\E_{S\sim\nu_\alpha}[\Var_{U_{\alpha\rightarrow \alpha+h}(S,\cdot)}[f_{\alpha+h}]].
\]
Note that $\E_{T\sim U_{\alpha\rightarrow\alpha+h}(S,\cdot)}[f_{\alpha + h}(T) ] = f_{\alpha} (S)$ and hence for any $S\in\Omega$:
\begin{align}
\nonumber
   \Var_{U_{\alpha\rightarrow\alpha+h}(S,\cdot)}[f_{\alpha+h}] 
    & = \sum_{T:T\supseteq S} U_{\alpha\rightarrow\alpha+h}(S,T)\left( f_{\alpha+h}(T)-f_\alpha(S)\right)^2
    \\
    \label{limit-derivative}
    & = 
    \frac{h}{1-\alpha}\sum_{v\in V\setminus S} \Pr_{R\sim U_{\alpha\rightarrow 1}(S,\cdot)}(v\in R)\left(f_{\alpha}(S\cup \{v\})-f_\alpha(S)\right)^2 + O(h^2),
\end{align}
where in \cref{limit-derivative} the factor $h/(1-\alpha)$ follows from the observation that each $v\in R$ is added at a uniformly chosen time in the interval $[\alpha,1]$ (and more than one vertex is added with probability $O(h^2)$), and the substitution of $f_{\alpha+h}(T)$ by $f_\alpha(T)$ with $T=S\cup\{v\}$ follows from  
$f_{\alpha+h}(T)= f_\alpha(T) + O(h)$ (see also \cite[(81) in Appendix A.1]{CCYZ-critical-hard-core}).

Recall, $\mu=\mu_{\lambda}$.  
Taking the derivative of variance in \cref{total-variance} we obtain:
\begin{align*}
-Q'&(\alpha) 
 = \frac{\partial}{\partial \alpha} \Var[f_{\alpha}]  
\\
\nonumber
&=
\lim_{h\rightarrow 0} \frac{1}{h}
\left[\Var_{\nu_{\alpha + h}}[f_{\alpha+h}]
- 
\Var_{\nu_\alpha}[f_{\alpha}]\right]
\\
\nonumber
&=
\lim_{h\rightarrow 0} \frac{1}{h}
\E_{S\sim\nu_\alpha}\left[\Var_{U_{\alpha\rightarrow\alpha + h}(S,\cdot)}[f_{\alpha+h}]\right], 
& \mbox{by \cref{eq:total-general},}
\\
\nonumber
&=
\frac{1}{1-\alpha}\E_{S\sim\nu_\alpha}\left[\sum_{v\in V\setminus S} \Pr_{R\sim U_{\alpha\rightarrow 1}(S,\cdot)}(v\in R)\left(f_{\alpha}(S\cup \{v\})-f_\alpha(S)\right)^2\right]
& \mbox{by \cref{limit-derivative}}
\\
\nonumber
&\leq 
\frac{1}{1-\alpha}\E_{S\sim\nu_\alpha}\left[\sum_{v\in V\setminus S}  \Var [ \E  [f | X_1[v]\, ]  | X_\alpha = S]\right],  
& \mbox{by \cref{eq:below} below,}
\\
&\leq
\nonumber
\frac{C(\alpha)}{1-\alpha} \E_{S\sim\nu_\alpha}\left[\Var [ f | X_\alpha = S]\right],
& \mbox{by  \cref{eq:subadditivity} and $C(\alpha)$-SI.}
\\
&\leq
\nonumber
\frac{C(\alpha)}{1-\alpha} Q(\alpha),
\end{align*}
which establishes \cref{eqn:Q-derivative}.

We will argue that the following is true (for any $v$, $S$, and $\alpha$):
\begin{equation}\label{eq:below}
\Pr_{R\sim U_{\alpha\rightarrow 1}}(v\in R)\left(f_{\alpha}(S\cup \{v\})-f_\alpha(S)\right)^2\leq \Var [ \E  [f | X_1[v]\, ]  | X_\alpha = S]. 
\end{equation}
To simplify the notation let $A=\E  [f | v\in X_1\, X_\alpha = S]$, $B=\E  [f | v\not\in X_1, X_\alpha = S\, ]$, and 
$p=\Pr_{R\sim U_{\alpha\rightarrow 1}(S,\cdot)}(v\in R)$. Equation~\eqref{eq:below} follows since the LHS is
$$
p( A - (pA + (1-p)B) )^2 = p(1-p)^2 (A-B)^2,
$$
and the RHS is 
\begin{align*}
    p(1-p)(A-B)^2. &\qedhere
\end{align*}
\end{proof}



 \subsection{Fast Mixing of Glauber Dynamics from Field Dynamics}
 \label{sub:proof-critical}

\begin{proof}[Proof of \cref{thm:main-critical}]
From \cref{thm:FD-critical}, we have $K$-approximate conservation of variance which states the following:
\begin{equation}
\label{A1A1}
\Var_{\mu}[f]\leq K\, \E_{\nu_\theta} [\Var[f(X_1) | X_\theta]] 
= K\, \E_{\nu_\theta}[\Var_{U_{\theta\rightarrow 1}(S,\cdot)}[f(X_1) | X_\theta = S]].
\end{equation}
For $i\in [n]$, let $\mathcal{P}_i = X_1[ [n]\setminus \{i\} ]$ be the configuration on all vertices but $i$. We want to prove approximate tensorization of variance for $\mu$ with constant $C_{Gl}K$, that is, establish the following inequality:
\begin{equation}\label{eq:variancetoprove}
\Var_{\mu}[f]\leq C_{Gl}K\, \sum_{i=1}^n\E_{\mu}[\Var[f(X_1) | \mathcal{P}_i]].
\end{equation}

By \cref{assume222} in the hypothesis of \cref{thm:main-critical} combined with~\cref{lem:AT-gap}, we know approximate tensorization of variance holds with constant $C_{Gl}$ for distribution $\mu'=\mu_{\lambda'}$ where $\lambda'=(1-\theta)\lambda$ and $\theta>0$; hence we know the following:
\begin{equation}
    \label{eq:at-lambdaprime}
    \Var_{\mu'}[f]\leq C_{Gl}\sum_{i=1}^n\E_{\mu'}[\Var[f(X_1) | \mathcal{P}_i]].
\end{equation}
Moreover, \cref{eq:at-lambdaprime} holds for any induced subgraph $G'$ of $G$.   
For any $S\subseteq V$, considering the subgraph $G'$ induced by conditioning on the independent set $S$ then 
$\mu'$ on $G'$ is identical to the distribution $U_{\theta\rightarrow 1}(S,\cdot)$.
Hence, we have the following, for any $S\subseteq V$,
\begin{equation}
\Var_{U_{\theta\rightarrow 1}(S,\cdot)}[f(X_1) | X_\theta = S] \leq C_{Gl} \sum_{i=1}^n 
\E_{U_{\theta\rightarrow 1}(S,\cdot)}[\Var[f(X_1) | X_\theta = S, \mathcal{P}_i]].
\end{equation}
Taking $S\sim\nu_\theta$, we have:
\begin{equation}
\label{B1B1}
\E_{S\sim\nu_\theta}[\Var_{U_{\theta\rightarrow 1}(S,\cdot)}[f(X_1) | X_\theta = S]] \leq C_{Gl} \sum_{i=1}^n 
\E_{S\sim\nu_\theta}[\E_{U_{\theta\rightarrow 1}}[\Var[f(X_1) | X_\theta = S, \mathcal{P}_i]]].
\end{equation}

We also need the following observation:
\begin{align}
\nonumber
\lefteqn{
\E_{S\sim\nu_\theta}[\E_{U_{\theta\rightarrow 1}(S,\cdot)}[\Var[f(X_1) | X_\theta = S, \mathcal{P}_i]]] 
}
\hspace{1.5in}
\\
\label{interBCBC}
& = 
\E_{\mu}\Big[\E_{S\sim D_{1\rightarrow\theta}}\Big[\Var[f(X_1) | X_\theta = S, \mathcal{P}_i]]\Big|\mathcal{P}_i\Big]\Big]
\\
\nonumber
&\leq \E_{\mu}\Big[\E_{S\sim D_{1\rightarrow\theta}}\Big[\Var[f(X_1) | X_\theta = S, \mathcal{P}_i]]\Big|\mathcal{P}_i\Big]\Big]
\\
\nonumber
& \hspace{1in}
+
\E_{\mu}\Big[\Var_{S\sim D_{1\rightarrow\theta}}\Big[\E[f(X_1) | X_\theta = S, \mathcal{P}_i]]\Big|\mathcal{P}_i\Big]\Big]
\\
\label{C1C1}
& = 
\E_{\mu}\Big[\Var[f(X_1) | \mathcal{P}_i]]\Big],
\end{align}
where \cref{interBCBC} is from the observation that sampling $S$ from $\mu_\theta$ and then applying the up-walk $U_{\theta\rightarrow 1}$ is the same as sampling
from $\mu$ and then applying the down-walk $D_{1\rightarrow\theta}$, and \cref{C1C1} is by the law of total variance.

Putting it all together we have:
\begin{align*}
    \Var_{\mu}[f]
    &\leq 
    K\, \E_{\nu_\theta}[\Var_{U_{\theta\rightarrow 1}(S,\cdot)}[f(X_1) | X_\theta = S]]
   &  \mbox{by \cref{A1A1}}
    \\
    & \leq 
    C_{Gl} K \sum_{i=1}^n 
\E_{S\sim\nu_\theta}[\E_{U_{\theta\rightarrow 1}(S,\cdot)}[\Var[f(X_1) | X_\theta = S, \mathcal{P}_i]]]
& \mbox{by \cref{B1B1}}
    \\
    &\leq 
        C_{Gl}K 
\sum_{i=1}^n        \E_{\mu}\Big[\Var[f(X_1) | \mathcal{P}_i]]\Big]
        & \mbox{by \cref{C1C1}},
\end{align*}
which establishes \cref{eq:variancetoprove} and completes the proof of the theorem.
\end{proof}

\section{Random Regular Graphs via Field Dynamics}
\label{sec:hard-core-random}

In this section we prove the result of Chen, Chen, Chen, Yin, and Zhang~\cite{CCCYZ-random-regular} establishing fast mixing of the Glauber dynamics on random $\Delta$-regular graphs beyond the uniqueness threshold.  We restate the formal statement of their result for convenience.

\randomregularthm*


\subsection{Proof Overview}

The proof of \cref{thm:hard-core-random-graphs} uses a very nice connection for the hard-core model between spectral independence and the minimum eigenvalue of the adjacency matrix $A(G)$ for sufficiently small $\lambda$.  
Let $f(\lambda)$ be the optimal constant $\eta$ for $\eta$-weak spectral independence of $\mu_\lambda$. Namely, we will show:  
\begin{equation}\label{eq:local}
f(\lambda) \leq 1 + \lambda\,|\lambda_{\min}(A_G+I)| + o(\lambda).
\end{equation}
We then use the following lemma, which is a continuous analog of trickle-down, to conclude the bound on spectral independence for all $\lambda<1/|\lambda_{\min}(A_G+I)|$. 
\begin{lemma}\label{lem:continuoustrickle}
For any $\lambda>0$ we have
\begin{equation}\label{eq:a1}
D^- f(\lambda) \leq \frac{f(\lambda^2) - f(\lambda)}{\lambda},
\end{equation}
where $D^- f$ is the left derivative of $f$.
\end{lemma}
Intuitively, the lemma allows us
to extend the bound on $f(\lambda)$ from~\cref{eq:local} to larger values of~$\lambda$. More precisely, we will show that:
$$f(\lambda)\leq \frac{1}{1-\lambda|\lambda_{\min}(A_G+I)|}$$ 
when $\lambda<1/|\lambda_{\min}(A_G+I)|$. We then utilize the classical result of Friedman~\cite{Friedman} that, with high probability, a random $\Delta$-regular 
graph $G$ has $\lambda_{\min}(A_G)\geq -2\sqrt{\Delta}$.

We first prove \cref{eq:local} and then we utilize \cref{eq:local} and \cref{lem:continuoustrickle} to prove \cref{thm:hard-core-random-graphs}; we then prove \cref{lem:continuoustrickle} in \cref{sub:continuoustrickle}.

Let us define $f(\lambda)$ more formally.
For $\lambda>0$, let $\widetilde{D}_\lambda={\mathrm{diag}}(\Exp(\mu_\lambda))$ be the diagonal matrix with entries  $\widetilde{D}_\lambda(u,u)=\mu_\lambda(\sigma(u)=1)$ for $u\in V$.
Recall from \cref{sub:CorrelationMatrix}, $\eta$-weak spectral independence is defined as $\lambda_{\max}(\widetilde{\Psi_\lambda})\leq 1+\eta$, which is equivalent to:
\begin{equation}
    \tag{\ref{CM-semidefinite}}
    \Cov_{\mu_\lambda}\preceq (1+\eta) \widetilde{D}_\lambda.
    \end{equation}

 For a PSD matrix $A$, let $\rho(A)=\lambda_{\max}(A)$ be the spectral radius of $A$, and then  for $\lambda>0$, let
\begin{equation}\label{eq:definef}
    f(\lambda) = \max_{\tau} \rho (\widetilde{D}_\lambda^{-1/2}\Cov_{\lambda} \widetilde{D}_\lambda^{-1/2})
\end{equation} 
  where $\tau$ is a pinning.  Note,  \cref{CM-semidefinite} holds with $\eta=f(\lambda)-1$ for all pinnings $\tau$ and with respect to $\mu_\lambda^\tau$.

\begin{proof}[Proof of \cref{eq:local}]

Fix the graph $G$ under consideration, and hence for notational convenience let $\mu_\lambda = \mu_{G,\lambda}$.
To establish $f(\lambda)$-weak spectral independence, recall from \cref{sub:CorrelationMatrix}, we need to show $\Cov_{\mu_\lambda}\preceq f(\lambda) \widetilde{D}_\lambda$
where $\widetilde{D}_\lambda$ is the diagonal matrix with entries  $\widetilde{D}_\lambda(u,u)=\mu_\lambda(\sigma(u)=1)$ for $u\in V$.

To establish the bound on $f(\lambda)$ note that for $\lambda$ sufficiently small (e.g., $\lambda<<1/n^2$) then for $X_1\sim\mu_{\lambda}$ we have $\Prob{|X_1| \geq 3}=o(\lambda^2)$.  Hence, we have the following for any $u\neq v\in V$:
\begin{align*}
\Cov_{\mu_\lambda}(u,v) 
&= 
        \Pr[\{u,v\}\subseteq X_1]
        - \Pr[u\in X_1]\cdot\Pr[v\in X_1]
        \\
        & =
        (\Pr[X_1=\{u,v\}] + o(\lambda^2))  - (\Pr[X_1=\{u\}]+o(\lambda))(\Pr[X_1=\{v\}] + o(\lambda)) 
         \\  & =
        \Pr[X_1=\{u,v\}]  - \lambda^2 + o(\lambda^2).
        \end{align*}
Since $X_1$ is an independent set weighted by $\lambda^{|X_1|}$ we have:
\[\Pr[X_1=\{u,v\}]
 =         \begin{cases}
             0
             & \mbox{if $\{u,v\}\in E$} 
             \\
           \lambda^2 + o(\lambda^2) 
               & \mbox{if $\{u,v\}\notin E$} 
         \end{cases}
  \]
  In other words,
  \[ \Pr[X_1=\{u,v\}] -\lambda^2 = -\lambda^2A_G(u,v) + o(\lambda^2).
  \]
  Now for the diagonal entries of the covariance matrix we have:
  \[ \Cov_{\mu_\lambda}(u,u) = \Pr[u\in X_1]
        - \Pr[u\in X_1]^2 = \Pr[u\in X_1] - \lambda^2 + o(\lambda^2).
  \]
  Since $\widetilde{D}_{\mu_\lambda}(u,u)=\Pr[u\in X_1]$, we have: 
  \begin{equation}
      \label{cov-new}
\Cov_{\mu_\lambda} - \diag( \E[X_1] ) = -\lambda^2(A_G+I) + o(\lambda^2).
    \end{equation}

  Combining the above bounds we obtain the following:
\begin{align}
\label{cov-last2}
\Cov_{\mu_\lambda} & = \Cov[ X_1 ]  
& \mbox{since $X_1\sim\mu_{\lambda}$}
\\
\nonumber
&  =  \diag( \E[X_1] ) -\lambda^2 (A_G+I) + o(\lambda^2)   
& \mbox{by \cref{cov-new}}\\ 
\nonumber
& \preceq
\diag( \E[X_1] )  + \lambda^2 I |\lambda_{\min} (A_G+I)| + o(\lambda^2) 
& \mbox{by \cref{A-preceq-111}}
\\
\nonumber
& = \diag( \E[X_1] ) (1 + \lambda\,|\lambda_{\min} (A_G+I)| + o(\lambda)) &
\mbox{since }\diag(\E[X_1] ) = \lambda I + o(\lambda)
\\
& = 
\widetilde{D}_\lambda(1 + \lambda\,|\lambda_{\min}(A_G+I)| + o(\lambda)).
\label{last-line-cov-111}
\end{align}

This establishes the bound on $f(\lambda)$ in the case when there is no pinning $\tau$.
For a pinning $\tau$, consider the corresponding induced subgraph $H$.
Note, $\lambda_{\min}(A_G)\geq -2\sqrt{\Delta}$ is equivalent to 
$$
A_G \succeq
\lambda_{\min}(A_G)I,$$
which implies:
\begin{equation}
\label{A-preceq-111}
    -A_G \preceq |\lambda_{\min}(A_G)|I
\end{equation}
Note, for any induced subgraph $H$ of $G$ we also have:
\begin{equation}
\label{HtoG-111}
    -A_H \preceq |\lambda_{\min}(A_G)|I,
\end{equation}
this follows from $\lambda_{\min}(A_G)\leq \lambda_{\min}(A_H)$. 
Now we apply \cref{HtoG-111} in \cref{last-line-cov-111} and we obtain:
\begin{align*}
f(\lambda) \leq 1 + \lambda\,|\lambda_{\min}(A_G+I)| + o(\lambda). &\qedhere
\end{align*}
\end{proof}

We can now prove the main result \cref{thm:hard-core-random-graphs}.

\begin{proof}[Proof of \cref{thm:hard-core-random-graphs}]
Note that for any $c$ for $g=\frac{1}{1-\lambda c}$ we have
\begin{equation}\label{eq:a2}
g'(\lambda) =  D^- g(\lambda) = \frac{g(\lambda^2) - g(\lambda)}{\lambda}.
\end{equation}
Hence if $f(\lambda')\geq \frac{1}{1-\lambda' c}$ for some $c$ and $\lambda'<1/c$  then from~\eqref{eq:a1} and~\eqref{eq:a2} we have that 
\begin{equation}\label{eq:difeq}
f(\lambda)\geq \frac{1}{1-\lambda c}\ \mbox{for all}\ \lambda\in [0,\lambda'].
\end{equation}

Let $c:=|\lambda_{\min}(A_G+I)|$.  If there exists $\lambda'<1/c$ where $f(\lambda')>1/(1-\lambda'c)$ then for some $c'>c$ we have $f(\lambda')>1/(1-\lambda'c')$, and by \cref{eq:difeq}, for all 
$\lambda\leq\lambda'$ we have $f(\lambda)\geq 1/(1-\lambda c')$.  But \cref{eq:local} states that $f(\lambda)\leq 1+\lambda c + o(\lambda)$ and hence we need that
$1/(1-\lambda c')\leq 1+\lambda c + o(\lambda)$ which is false for all sufficiently small $\lambda$, and thus we have a contradiction for the assumption that $f(\lambda')>1/(1-\lambda' c)$ for some $\lambda'<1/c$.  Therefore, for all $\lambda < 1/|\lambda_{\min}(A_G+I)|$, we have:
\begin{equation}
    \label{last-contradict}
f(\lambda) \leq \frac{1}{1-\lambda |\lambda_{\min}(A_G+I)| }.
\end{equation}

Now we will apply \cref{thm:SI-constant-mix} to finish off the proof. 
By \cite{Friedman}, we know that for random $\Delta$-regular graphs, with probability $1-o(1)$ over the choice of the graph $G$, we have that $\lambda_{\min}(A_G)\geq -2\sqrt{\Delta}$.

Let $\lambda^*(\Delta):=\frac{1}{3\sqrt{\Delta}}$.  From \cref{last-contradict}, we have that the measure
$\mu_{\lambda^*}$ is $f(\lambda^*)$-spectrally independent where 
\[ f(\lambda^*)=1/(1-\lambda^* |\lambda_{\min}(A_G+I)| )
\leq 1/(1-2\lambda^*\sqrt{\Delta}) \leq  3.
\]
Applying \cref{thm:SI-constant-mix} (where $b$-marginal boundedness is easily satisfied with $b=b(\Delta)$ since $\Delta$ is a constant), establishes the theorem.
\end{proof}

\subsection{Proof of the trickle-down for the field dynamics}
\label{sub:continuoustrickle}

Here we prove the main technical lemma which is a trickle-down result for the field dynamics.

\begin{proof}[Proof of \cref{lem:continuoustrickle}]
Let $X_1\sim\mu_{\lambda}$.
We first consider the first order expansion of $\Cov$ for the field dynamics, following \cite[Proposition 4.1]{CCCYZ-random-regular}.
Recall the definition of the field dynamics in \cref{sub:FD}, and, in particular, the definition of $X_\theta$ for $0\leq\theta\leq 1$ in \cref{def:down}.  For $h>0$, we have the following, which we will prove at the end of this section:
\begin{equation}
    \label{cov-expansion}
\E[\Cov[ X_1|X_h ]] = \Cov [X_1] - h \Cov [X_1] \diag(\E[X_1])^{-1} \Cov [X_1] + o(h).
\end{equation}
By the definition of $f$ and the fact that $X_1$ conditioned on $X_h$ is according to the up-process $U_{h\rightarrow 1}(X_{h},\cdot)$ which is the hard-core model with fugacity $(1-h)\lambda$, hence we have for any $0<h<1$:
\begin{equation}
\label{inter-cov}
\Cov[ X_1|X_h ] \preceq f((1-h)\lambda) \diag( \E[X_1|X_h] - \ind_{X_h}),
\end{equation}
where $1_{X_h}$ is one for coordinates $i\in X_h$ and zero otherwise. Taking the expectation over $X_h$ for both sides in \cref{inter-cov} yields:
\begin{align}
\E[\Cov[ X_1|X_h ]]\ 
& \preceq f((1-h)\lambda) \diag( \E[\E[X_1|X_h]]  - h \E[X_1]) 
\\
& =  f((1-h)\lambda) (1-h) \diag( \E[X_1]).\label{eq:ineq}
\end{align}
Let $\tau$ be the maximizer in~\cref{eq:definef} for $\mu_\lambda$. Let $v$ be the eigenvector in~\eqref{eq:definef} for eigenvalue $f(\lambda)$; w.l.o.g. assume that $\|v\|_2=1$. We can rewrite~\cref{cov-expansion} as
\begin{equation}\label{eq:rew}
 \Cov [X_1] = \E[\Cov[ X_1|X_h ]] +  h \Cov [X_1] \diag(\E[X_1])^{-1} \Cov [X_1] + o(h)
\end{equation}
We have (combining~\eqref{eq:rew} and~\eqref{eq:ineq})
\begin{align}
f(\lambda) & = v^T \, \diag(\E[X_1])^{-1/2}\Cov [X_1]\diag(\E[X_1])^{-1/2} v
\\ &\leq 
f((1-h)\lambda) (1-h) \|v\|_2^2 + v^T (\diag(\E[X_1])^{-1/2}\Cov [X_1]\diag(\E[X_1])^{-1/2})^2 v + o(h).
\end{align}
Since $\|v\|_2=1$ and $\diag(\E[X_1])^{-1/2}\Cov [X_1]\diag(\E[X_1])^{-1/2}) v = f(\lambda) v$ (because $v$ is the eigenvector in \cref{eq:definef} with eigenvalue $f(\lambda)$), and hence we have
\begin{equation}\label{eq:fff}
f(\lambda) \leq (1-h) f( (1-h)\lambda ) + f(\lambda)^2 h +o(h).
\end{equation}
Equation~\eqref{eq:fff} implies the following lower bound on the left derivative of $f$ as stated in \cref{lem:continuoustrickle}.
\end{proof}

Finally, it remains to prove \cref{cov-expansion}.  This corresponds to  \cite[Proposition 4.1]{CCCYZ-random-regular}, which is a special case of more general localization schemes presented in \cite{AKV24}.  We present a more pedestrian proof here.

\begin{proof}[Proof of \cref{cov-expansion}]
With probability $1-O(h^2)$ we have $|X_h|\leq 1$ and hence, in order to obtain an expression with error $O(h^2)$, we only need to consider the options $|X_h|=0$ and $|X_h|=1$. We
have 
$$\Pr(X_h=\{\ell\}) = h \Pr(\ell\in X_1) + O(h^2) = h \Exp[X_1]_\ell  + O(h^2).$$
For any $S\subset V$ we have
$$
\Pr( X_1 = S | X_h=\{\ell\} ) = \frac{ \Pr( X_1 = S, \ell\in  X_1) }{ \Pr(\ell\in X_1) } + O(h). 
$$
Hence for $i,j$ we have
\begin{align}
\Cov[ X_1|X_h=\{\ell\} ]_{i,j} 
& = \frac{\Pr(\{i,j,\ell\}\subseteq X_1)}{\Pr(\ell\in X_1)} - \frac{\Pr(\{i,\ell\}\subseteq X_1)}{\Pr(\ell\in X_1)}\frac{\Pr(\{j,\ell\}\subseteq X_1)}{\Pr(\ell\in X_1)} + O(h).\label{ecov}
\end{align}
Aggregating~\eqref{ecov} over $\ell$ weighted with probability $\Pr(X_h = \{\ell\}) = h \Pr(\ell\in X_1) + O(h^2)$ we obtain
\begin{align}
& \sum_{\ell} \Pr(X_h=\{\ell\}) \Cov[ X_1|X_h=\{\ell\} ]_{i,j} \nonumber \\
& = h \sum_\ell \Pr(\{i,j,\ell\}\subseteq X_1)  - h \sum_{\ell} \Pr(\{i,\ell\}\subseteq X_1)\frac{1}{\Pr(\ell\in X_1)} \Pr(\{j,\ell\}\subseteq X_1) + O(h^2)\label{esum}.
\end{align}
We also have
$$
\Pr(X_1=S | X_h = \emptyset) = \Pr(X_1 = S) \Big(1 + h\sum_\ell \Pr(\ell\in X_1) - h |S| \Big) + O(h^2).
$$
Note that we can write
$$
\sum_S \Pr(X_1 = S, i,j\in X_1) |S| = \sum_{\ell} \Pr(\{i,j,\ell\}\subseteq X_1). 
$$
and hence
\begin{equation}\label{eepro2}
\Pr(\{i,j\}\subseteq X_1|X_h=\emptyset) = \Pr( \{i,j\}\subseteq X_1) \Big(1 + h \sum_{\ell} \Pr(\ell\in X_1)\Big) - h \sum_{\ell} \Pr( \{i,j\}\subseteq X_1) + O(h^2). 
\end{equation}
Taking the analogous expressions to~\eqref{eepro2} for $\Pr(k\in X_1|X_h=\emptyset)$ for $k=i,j$ we obtain the following
\begin{align}
\Cov[ X_1|X_h=\emptyset ]_{i,j} & = \Pr(\{i,j\}\subseteq X_1) \left(1+ h \sum_\ell \Pr(\ell\in X_1)\right)\label{ecova} \\
&- \Pr(i\in X_1)\Pr(j\in X_2) \left(1 + 2h \sum_\ell \Pr(\ell\in X_1)\right)\nonumber \\
& + h \sum_\ell \left(\Pr(i\in X_1) \Pr(\{j,\ell\}\subseteq X_1) + \Pr(j\in X_1) \Pr(\{i,\ell\}\subseteq X_1) \right) +O(h^2).\nonumber
\label{ecova}
\end{align}
Note that
\begin{equation}\label{epzero}
\Pr(X_h=\emptyset) = 1 - h \sum_{\ell} \Pr(\ell\in X_1) + O(h^2).
\end{equation}
Combining~\eqref{esum}, \eqref{ecova}, and \eqref{epzero} we obtain
$$
\E[\Cov[ X_1|X_h ]]_{i,j}  =  \Cov [X_1]_{i,j} - h B_{i,j}  + O(h^2),
$$
where 
\begin{align*}
B_{i,j} = & \sum_{\ell} \frac{\Pr(\{i,\ell\}\subseteq X_1)\Pr(\{j,\ell\}\subseteq X_1)}{\Pr(\ell\in X_1)}  - \sum_\ell \Pr(i\in X_1) \Pr(\{\ell,j\}\subseteq X_1) \\
&- \sum_{\ell} \Pr(j\in X_1) \Pr(\{\ell,i\}\subseteq X_1) + \sum_{\ell} \Pr(j\in X_1) \Pr(i\in X_1) \Pr(\ell\in X_1).
\end{align*}
Finally, we note that $B_{i,j}=(\Cov[X_1] \diag(E[X_1])^{-1} \Cov[X_1])_{i,j}$.
\end{proof}

\section{Extensions to Other Spin Systems: Ising Model and Colorings}
\label{sec:other-models}

Here we detail the extension of the results to other models beyond the hard-core model on weighted independent sets.  In \cref{sub:Ising} we survey results for the Ising model, and in \cref{sub:colorings} we survey results for colorings and edge colorings.  The results stated in this section are not proved in this monograph.

\subsection{Ising Model}
\label{sub:Ising}

The Ising model is the classical model from statistical physics. 
Here we state the Ising model analogs of the hard-core model results from \cref{sub:hard-core-results}.

The Ising model is defined on a graph $G=(V,E)$ with a parameter $\beta\in\R$ which is referred to as the inverse temperature.  The state space of the model is $\Omega=\{+1,-1\}^V$.  
For a configuration $\sigma\in\Omega$, let $m(\sigma)=|\{\{v,w\}\in E: \sigma(v)=\sigma(w)\}|$ denote the number of monochromatic edges, and define its weight as 
\[ w(\sigma) = \exp(2\beta m(\sigma))=\exp\left(\beta\sum_{\{v,w\}\in E} \left(\sigma(v)\sigma(w)+1\right)\right) = \exp(\beta |E|) \exp\left(\beta\sum_{\{v,w\}\in E} \sigma(v)\sigma(w)\right).
\]
The Gibbs distribution is defined on $\Omega$ where for $\sigma\in\Omega$ let $\mu(\sigma)=w(\sigma)/Z$ and $Z=\sum_{\eta\in\Omega}w(\eta)$ is the partition function. 

The model is ferromagnetic (sometimes referred to as attractive) when $\beta\geq 0$ as neighboring spins prefer to align, and the model is antiferromagnetic (repulsive) when $\beta<0$.  The generalization to $q\geq 3$ spins is known as the Potts model.  The results stated in this section only hold for the Ising model (i.e., when $q=2$).  

Recall, the tree uniqueness threshold for the hard-core model occurs at $\lambda_c(\Delta)=(\Delta-1)^{\Delta-1}/(\Delta-2)^\Delta$.  In the Ising model the corresponding tree uniqueness threshold occurs at $\beta_c=\beta_c(\Delta)$ satisfying one of the following equivalent conditions:
\[ \exp(2|\beta_c|) = \frac{\Delta}{\Delta-2} \iff \tanh(|\beta_c|)=\frac{1}{\Delta-1}.
\]

Recall, the phase transition on $\Delta$-regular trees in the hard-core model corresponds to even-height versus odd-height trees.  In contrast, in the Ising model the phase transition on $\Delta$-regular trees corresponds to fixing all leaves to spin $+1$ versus fixing all leaves to spin $-1$.  For integer $\ell\geq 1$, recall $T_\ell$ denotes the complete $\Delta$-regular tree of height $\ell$.  For $s\in\{+1,-1\}$, let $\mu_\ell^s$ denote the Gibbs distribution on $T_\ell$ conditional on all leaves having the fixed spin $s$, and let $p_\ell^s$ denote the marginal probability that the root receives spin $+1$ in the conditional Gibbs distribution $\mu^s_\ell$.  

When $\beta$ satisfies $\exp(2|\beta|)\leq\frac{\Delta}{\Delta-2}$ then 
\begin{equation}
\label{uniq-condition}
    \lim_{\ell\rightarrow\infty} p^{+1}_{2\ell} = \lim_{\ell\rightarrow\infty} p^{-1}_{2\ell}
\end{equation} 
and we say the model is in the tree uniqueness region, whereas when $\exp(2|\beta|)>\frac{\Delta}{\Delta-2}$ then the limits are different and we say the model is in the tree non-uniqueness region (as there are multiple Gibbs measures on the infinite $\Delta$-regular tree).
Notice that in \cref{uniq-condition} we only considered trees of even height;  for the ferromagnetic Ising model this is unnecessary, we can 
consider $\lim_{\ell\rightarrow\infty} p^{+1}_{\ell} = \lim_{\ell\rightarrow\infty} p^{-1}_{\ell}$ to obtain the same threshold.  However, for the antiferromagnetic Ising model we need to restrict attention to even height trees; alternatively, we can consider even height vs. odd height trees (both with all $+1$-leaf configuration) and obtain the same threshold.

The analog of \cref{thm:hard-core-constant-degree} establishing fast mixing of the Glauber dynamics on any graph of maximum degree $\Delta$ in the tree uniqueness region was proved in Mossel and Sly~\cite{MS} for the ferromagnetic Ising model and in Chen, Liu, and Vigoda~\cite{CLV21} for the antiferromagnetic Ising model.

\begin{theorem}
    \label{thm:Ising-constant-degree-tree-uniqueness}
    For all $\Delta\geq 3$, all $\delta>0$, there exists $C(\Delta,\delta)$, for all $\beta$ where $\exp(|\beta|)\leq \frac{\Delta-\delta}{\Delta-2+\delta}$, for any $n$-vertex graph $G=(V,E)$ of maximum degree $\Delta$, the mixing time of the Glauber dynamics for the Ising model satisfies for all $\eps>0$,
    \[   \Tmix(\eps)\leq C(\Delta,\delta)n\log(n/\epsilon).
    \]
\end{theorem}

    As in the case of the hard-core model, the independent works of Chen, Feng, Yin, and Zhang \cite{CFYZ22} and Chen and Eldan \cite{CE22} improved \cref{thm:Ising-constant-degree-tree-uniqueness} to obtain $\Tmix(\epsilon) \leq C(\delta)n\log(n/\eps)$ and thus extend the theorem to classes of graphs where the maximum degree $\Delta$ grows as a function of $n$.

At the critical point, Bauerschmidt and Dagallier~\cite{BD-critical-Ising} proved polynomial mixing time of the Glauber dynamics.  The upper bound on the critical exponent was improved to $3+O(1/\Delta)$ by Chen, Chen, Yin, and Zhang~\cite{CCYZ-critical-hard-core} as stated in the following theorem; they also established a lower bound on the critical exponent of $3/2$.  More general results were subsequently established by Caputo, Chen, and Parisi~\cite{CCP}.


 \begin{theorem}
         For all $\Delta\geq 3$, for any graph $G=(V,E)$ of maximum degree $\Delta$, for the Glauber dynamics for the Ising model with $\beta=\beta_c(\Delta)$ or $\beta=-\beta_c(\Delta)$, at fugacity $\lambda=\lambda_c(\Delta)$ the mixing time satisfies
    \[
    \Tmix = O(n^{3+O(1/\Delta)}\log{n}).
    \]
 \end{theorem}

For the antiferromagnetic Ising model, a computational hardness result in the tree non-uniqueness region (analogous to \cref{thm:hard-core-hardness}) was established in \cite{SlySun,GSV16}.  In contrast for the ferromagnetic Ising model there is a polynomial-time algorithm for approximating the partition function for every graph $G=(V,E)$ and every $\beta\geq 0$ due to Jerrum and Sinclair~\cite{JSising}, which utilizes the so-called high-temperature expansion of the Ising model to consider a computationally equivalent model over even-degree subgraphs; an efficient sampler was obtained via various methods~\cite{RandallWilson,GJ09,GJ17}.

The analog of \cref{thm:hard-core-random-graphs} for the antiferromagnetic Ising model was proved by Anari, Koehler, and Vuong~\cite[Corollary 104]{AKV24} establishing fast mixing of the Glauber dynamics on random $\Delta$-regular graphs beyond the uniqueness threshold.

\subsection{Multispin Systems: Colorings and Edge Colorings}
\label{sec:multispin}
\label{sub:colorings}

In this section we outline the extensions of spectral independence to multispin systems; these are models where each vertex has $q\geq 3$ possible labels.

A prominent example of multispin systems is the colorings problem, namely proper vertex $k$-colorings.  For simplicitly, we focus attention on the colorings problem instead of considering arbitrary multispin systems.

Given a graph $G=(V,E)$ of maximum degree $\Delta$, and an integer $k\geq 2$, let $\Omega$ denote the collection of proper vertex $k$-colorings of $G$; more formally, let
\[
\Omega = \{\sigma\in [k]^V: \mbox{for all }\{v,w\}\in E, \sigma(v)\neq\sigma(w)\},
\]
where $[k]=\{1,\dots,k\}$.
Then the Gibbs distribution $\mu$ is the uniform distribution over $\Omega$ and thus for all $\sigma\in\Omega, \mu(\sigma)=1/|\Omega|$.

The transitions of the Glauber dynamics for colorings are as follows.  From a coloring $X_t\in\Omega$, we choose a vertex $v$ uniformly at random from $V$, and a color $c$ uniformly at random from $[k]$.  If $c\notin X_t(N(v))$ then we set $X_{t+1}(v)=c$ and otherwise we set $X_{t+1}(v)=X_t(v)$.  For all $w\neq v$, we set $X_{t+1}(w)=X_t(w)$.  In words, we choose a random vertex $v$ for (attempted) recoloring and we choose a random color $c$, if we can recolor $v$ to color $c$ while maintaining a proper coloring then we do so, and otherwise we stay at the current coloring.

There are two natural generalizations of spectral influence to colorings which were introduced in the independent works of Chen, Galanis, Stefankovic, and Vigoda~\cite{CGSV21} and Feng, Guo, Yin, and Zhang~\cite{FGYZ21}; both of these approaches extend to aribtrary multispin systems.  

The two notions of spectral independence differ in their definition of the influence matrix.  Both approaches generalize the defintion of the influence matrix in \cref{defn:inf-matrix} for the 
special case of $q=2$ spins.

In \cite{CGSV21} they consider a $qn\times qn$ modified influence matrix, which we denote as $\widehat{\Psi}$; it generalizes the definition of the modified influence matrix in \cref{defn:correlation-matrix}.  For each pair vertices $v,w$, and pair of colors $c_1,c_2\in[k]$, 
let
\begin{equation}
    \label{defn:multi-Linfty}
\widetilde{\Psi}((v,c_1),(w,c_2)) = 
\mu[\sigma(w) = c_2 | \sigma(v)=c_1] - \mu[\sigma(w)=c_2 ],
\end{equation}
denote the influence of $(\mathrm{vertex},\mathrm{color})$ pair $(v,c_1)$ on $(w,c_2)$.

In contrast, \cite{FGYZ21} considers a $n\times n$ influence matrix $\Psi$, which generalizes the definition of the influence matrix in \cref{defn:inf-matrix}.  For each pair of vertices $v,w$, let
\begin{equation}
\label{defn:multi-L1}
\Psi(v,w) = 
\max_{c,c'\in[k]}
\|\mu[\sigma(w)=\cdot| \sigma(v)=c] -\mu[\sigma(w)=\cdot | \sigma(v)=c']\|_{\mathrm{TV}},
\end{equation}
denote the total variation distance for the marginal distribution at $w$ conditional on the worst-case pair of colors at $v$.

We say a pinning is a fixed assignment $\tau$ to a subset of vertices $S\subset V$ (i.e., $\tau:S\rightarrow [k]$) such that there is a valid coloring $\sigma$ of the entire graph $G$ which is the same as $\tau$ on $S$.  We then generalize the above definitions as $\widetilde{\Psi_\tau}$ and $\Psi_\tau$ where $\mu$ is replaced by $\mu_\tau$, which is the Gibbs measure conditional on the fixed assignment $\tau$. 

We say that $\eta$-weak spectral independence holds if  we have $\lambda_{\max}(\widetilde{\Psi_\tau})\leq 1+\eta$ for all pinnings $\tau$, and $\eta$-spectral independence holds if  we have $\lambda_{\max}(\Psi_\tau)\leq 1+\eta$ for all pinnings $\tau$.

As in the case of two-spin systems (see \cref{rem:weak-vs-nonweak}), spectral independence implies weak spectral independence.

\begin{remark}
\label{rem:multi-weak-vs-nonweak}
The notion of spectral independence (which is defined with respect to the influence matrix $\Psi$) is stronger than weak spectral independence (which is defined with respect to the modified influence matrix $\widetilde{\Psi}$).  In particular,
$\eta$-spectral independence implies $(k\eta)$-weak spectral independence.
To see this, first note that the entries in $\Psi$ are all positive and, hence by the Perron-Frobenius theorem, the largest eigenvalue of $\Psi$ is real and corresponds to a positive eigenvector. 
The largest eigenvalue of $\widetilde{\Psi}$ is bounded by the largest eigenvalue of $|\widetilde{\Psi}|$ -- one way to see this is to use the Rayleigh quotients to obtain the largest eigenvalues as we detail now.  More precisely, note that the largest eigenvalue of $\widetilde{\Psi}$ is given by 
the following Rayleigh quotient $\max \frac{x^T D^{1/2} \widetilde{\Psi} D^{-1/2} x}{x^T x}$, where $D$ is the diagonal matrix with $D_{(v,c),(v,c)}=\mu[\sigma(v)=c]$.  Then note that 
$\max \frac{x^T D^{1/2} \widetilde{\Psi} D^{-1/2} x}{x^T x} \leq \max \frac{x^T D^{1/2} |\widetilde{\Psi}| D^{-1/2} x}{x^T x}$, where the RHS gives the largest eigenvalue of $|\widetilde{\Psi}|$. 

We also have the following:
\[ |\mu[\sigma(w) = c_2 | \sigma(v)=c_1] - \mu[\sigma(w)=c_2 ]| \leq \|\mu[\sigma(w)=\cdot| \sigma(v)=c] - \mu[\sigma(w)=\cdot | \sigma(v)=c']\|_{\mathrm{TV}}
\]
Therefore, if one transforms the $n\times n$ matrix $\Psi$ into a $kn\times kn$ matrix $\Psi_k$ where each entry $\Psi(v,w)$ is replaced by a $k\times k$ block in $\Psi_k$ with all entries in this block equal to $\Psi(v,w)$, then we obtain:
\[
\lambda_{\max}(\widetilde{\Psi})\leq 
\lambda_{\max}(|\widetilde{\Psi}|)
\leq \lambda_{\max}(\Psi_k) = k \lambda_{\max}(\Psi),
\]
which proves that $\eta$-spectral independence (with respect to $\Psi$) implies $(k\eta)$-weak spectral independence (with respect to $\widetilde{\Psi}$).

\end{remark}

Chen, Liu, and Vigoda~\cite{CLV21} established the analog of \cref{thm:SI-constant-mix}: spectral independence with respect to $\widetilde{\Psi}$ and marginal boundedness implies optimal mixing time bounds for the Glauber dynamics.

The big conjecture in the area is that whenever $k\geq\Delta+2$ then the mixing time of the Glauber dynamics is $O(n\log{n})$.  We review the main results for colorings.

\begin{theorem}[\cite{Jerrum}] For $k\geq 2\Delta+1$, for any graph $G=(V,E)$ of maximum degree $\Delta$, the mixing time of the Glauber dynamics $\Tmix\leq kn\log(n/\eps)$.
\end{theorem}
Note, the maximum degree $\Delta$ can be a function of $n$ in the above result of Jerrum~\cite{Jerrum}.

This was improved to $k>(11/6)\Delta$ by Vigoda~\cite{Vig00} using the following chain known as the flip dynamics.  A transition $X_t\rightarrow X_{t+1}$ of the flip dynamics starts in the same manner as the Glauber dynamics.  From $X_t\in\Omega$, we choose a vertex $v$ uniformly at random from $V$, and a color $c$ uniformly at random from all colors $[k]$.  Let $S(v,c)$ denote the maximal $(c,X_t(v))$-colored component in $X_t$ which contains $v$, and let $\ell=|S(v,c)|$ denote its size.  With probability $p_\ell$ we ``flip'' $S(v,c)$ to obtain $X_{t+1}$, where flipping $S(v,c)$ means interchanging colors $c$ and $X_t(c)$; and with probability $1-p_\ell$ we set $X_{t+1}=X_t$.  Note, when $p_1=1$ and $p_j=0$ for all $j\geq 2$ then the flip dynamics is equivalent to the Glauber dynamics.

Vigoda~\cite{Vig00} introduced a set of flip probabilities with the following three properties:  (i) $p_1=1$, (ii) $p_i\geq p_{i+1}$ for all $i\geq 1$, and (iii) $p_i=0$ for all $i\geq 7$. Later improvements~\cite{CDMPP19,CarlsonVigoda25} used slightly different flip probabilities but still maintained these three properties (i)-(iii).  
Under property (iii), the flip dynamics is local in that at most 6 vertices are recolored in every step, and hence the approach of \cref{rem:comparison} applies.  Therefore, $O(n\log(n/\eps))$ mixing time of the flip dynamics implies $O(n)$ relaxation time of the Glauber dynamics for constant $\Delta$, and the generalization of \cref{lem:opt-relax-SI} implies $O(n\log{n})$ mixing time of the Glauber dynamics.

The result of Vigoda~\cite{Vig00} was improved by Chen, Delcourt, Moitra, Perarnau, and Postle~\cite{CDMPP19} to $k>(11/6-\epsilon)\Delta$ for $\epsilon\approx 10^{-5}$.  This was further improved by Carlson and Vigoda~\cite{CarlsonVigoda25} to $k\geq 1.809\Delta$ as stated in the following result.

\begin{theorem}[\cite{CarlsonVigoda25}]
\label{thm:CV-colorings}
For $k\geq 1.809\Delta$ and $\Delta\geq 125$, the mixing time of the flip dynamics chain is $O(n\log{n})$ under a setting of the flip probabilities where $p_i=0$ for all $i\geq 7$.  For constant $\Delta\geq 125$, using the approach of \cref{lem:opt-relax-SI} and \cref{rem:comparison}, we obtain $O(n\log{n})$ mixing time of the Glauber dynamics.
\end{theorem}

In \cite{CGSV21,FGYZ21} it was shown that there is a constant $\eta>0$ where $\eta$-spectral independence holds when $k>1.763\dots\Delta$ for any triangle-free graph with maximum degree $\Delta$, where the constant $1.763$ derives from the root of $x=\exp(1/x)$.

\begin{theorem}[\cite{CGSV21,FGYZ21}]
\label{colorings-triangle-free}
    Let $\alpha\approx 1.763\dots$ denote the solution of $x=\exp(1/x)$.  For all $\alpha>\alpha^*$ and $\Delta\geq 3$, there exists $C=C(\alpha,\Delta)>0$ such that for any $k\geq \alpha\Delta+1$ the Glauber dynamics on any triangle-free graph $G=(V,E)$ of maximum degree $\Delta$ has mixing time $\leq Cn\log{n}$.
\end{theorem}

Chen and Feng~\cite{ChenFeng} extended \cref{thm:CV-colorings,colorings-triangle-free} for optimal relaxation time for graphs with non-constant $\Delta$.

In contrast to \cref{thm:CV-colorings,colorings-triangle-free}, the following result of Chen, Liu, Matni, and Moitra~\cite{CLMM23} holds down to $k\geq\Delta+3$, which essentially matches the conjectured optimal bound of $k\geq\Delta+2$, but at the expense of a lower bound on the girth.

\begin{theorem}[{\cite[Theorem 1.5]{CLMM23}}]
For all $\Delta\geq 3$, there exists $g=g(\Delta),C(\Delta)>0$ such that for any graph $G=(V,E)$ of maximum degree $\Delta$ and girth $\geq g$, then for any $k\geq\Delta+3$ the Glauber dynamics has mixing time $\leq Cn\log{n}$.
\end{theorem}

It is unclear how large the lower bound on the girth is in general, however in the case when $k>(1+\eps)\Delta$ for $\eps>0$ then the girth requirement is roughly $g=C'(\eps)\log^2{\Delta})$ for some constant $C'(\eps)>0$, see \cite[Remark 5.16]{CLMM23}.

For edge colorings, which correspond to colorings of line graphs, recent work of Wang, Zhang, and Zhang~\cite{WZZ24-edge-colorings} achieves close to the threshold of twice the maximum degree, since each edge shares an endpoint with $\leq 2(\Delta-1)$ edges so the  degree of the line graph is $\leq 2(\Delta-1)$.

\begin{theorem}[{\cite[Theorem 1]{WZZ24-edge-colorings}}]
\label{edge-colorings}
For all $\Delta\geq 3$, for any $k\geq (2+o_{\Delta}(1))\Delta$, for any $G=(V,E)$ with maximum degree $\Delta$ the Glauber dynamics for $k$-edge-colorings has mixing time $O(n\log{n})$.
\end{theorem}

In \cite{WZZ24-edge-colorings} the $o_{\Delta}(1)$-term in the bound on $k$ is more precisely quantified as $O(\Delta/\log{\Delta})$.
They utilize a technique known as matrix trickle-down which was introduced in \cite{ALG-edge-colorings}.

By Vizing's Theorem, we can construct a $k$-edge-coloring for all $k\geq\Delta+1$.  It is a fascinating open problem to determine the computational complexity of approximating the number of $k$-edge colorings for $\Delta+1\leq k\leq2\Delta$.  More precisely, an interesting open problem is to improve \cref{edge-colorings} by establishing an $\fpras$ for $k>(2-\eps_0)\Delta$ for a fixed $\eps_0>0$, or to establish computational hardness for $k>(1+\eps_1)\Delta$ for a fixed $\eps_1>0$.

\section*{Acknowledgements}

This monograph originated from lectures given at the UCSB Summer School on ``New tools for optimal mixing of Markov chains: Spectral independence and entropy decay'' in August, 2022. 
The summer school lectures were presented by Nima Anari, Pietro Caputo, Zongchen Chen, Heng Guo, Tali Kaufman, and Kuikui Liu, and the associated lecture notes were initially prepared by the following students: Yuzhou Gu, Tushant Mittal, Amanda Priestley, and Juspreet Singh Sandhu.  
 For more information on the summer school, including lecture videos and lecture notes, see:
\href{https://sites.cs.ucsb.edu/~vigoda/School/}{https://sites.cs.ucsb.edu/$\sim$vigoda/School/}.
We also thank 
Catherine Greenhill and Daniel Zhang for useful comments on an earlier version of this text.

\newpage


\newcommand{\etalchar}[1]{$^{#1}$}

\end{document}